\def\thetitle{Vortices, Painlev\'e integrability and projective geometry}
\def\theauthor{Felipe Contatto}
\def\theyear{March 2018}
\newcommand{\koniec}{\begin{flushright}  $\Box $ \end{flushright}}
\newtheorem{defn}{Definition}[chapter]
\newtheorem{theo}{Theorem}[chapter] 
\newtheorem{prop}[theo]{Proposition}  
\newtheorem{lemma}[theo]{Lemma}
\newtheorem{coll}[theo]{Corollary}
\newcounter{mnotecount}[section]
\numberwithin{equation}{chapter} 
\renewcommand{\themnotecount}{\thesection.\arabic{mnotecount}}
\newcommand{\mnote}[1]
{\protect{\stepcounter{mnotecount}}$^{\mbox{\footnotesize
$
\bullet$\themnotecount}}$ \marginpar{
\raggedright\tiny\em
$\!\!\!\!\!\!\,\bullet$\themnotecount: #1} }
\newcommand{\CP}{\mathbb{CP}}
\newcommand{\C}{\mathbb{C}}
\newcommand{\R}{\mathbb{R}}
\newcommand{\HH}{\mathbb{H}}
\newcommand{\Rho}{\mathrm{P}}
\def\p{\partial}
\def\be{\begin{equation}}
\def\ee{\end{equation}}
\def\bea{\begin{eqnarray}}
\def\eea{\end{eqnarray}}
\def\ov{\overline}
\newcommand{\dzbar}{\partial_{\bar z}}
\DeclareMathOperator{\tr}{Tr}
\DeclareMathOperator{\arctanh}{arctanh}
\begin{document}

\pagenumbering{roman}
\large\newlength{\oldparskip}\setlength\oldparskip{\parskip}\parskip=.3in
\thispagestyle{empty}
\begin{center}
{\Huge\textbf\thetitle}

\vspace{0.8in}

{\huge\theauthor}

\vspace{-0.15in}

Queens' College
 
\end{center}
\vspace{0.1in}
\begin{figure}[H]
\begin{center}
\includegraphics[scale=1]{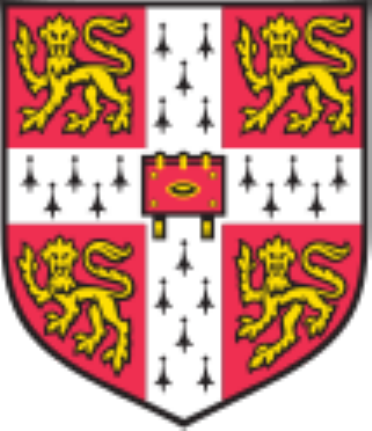}
\end{center}
\end{figure}

\noindent
\begin{center}
\textit {This dissertation is submitted for the degree of Doctor of Philosophy.}
\end{center}

\begin{center}
Department of Applied Mathematics and Theoretical Physics

\vspace{-0.1in}

University of Cambridge

Supervisor: Dr. Maciej Dunajski
\end{center}

\begin{center}
\theyear
\end{center}


\bigskip



%

\pagenumbering{gobble}
\newpage

\newpage
\vspace*{\fill}
\newpage

\begin{center}
 \textbf{Vortices, Painlev\'e integrability and projective geometry}\\
\end{center}

\begin{center} Felipe Contatto \end{center}

The first half of the thesis concerns Abelian vortices and Yang--Mills theory. It is proved that the $5$ types of vortices recently proposed by Manton are actually symmetry reductions of (anti-)self-dual Yang--Mills equations with suitable gauge groups and symmetry groups acting as isometries in a $4$-manifold. As a consequence, the twistor integrability results of such vortices can be derived. It is presented a natural definition of their kinetic energy and thus the metric of the moduli space was calculated by the Samols' localisation method. Then, a modified version of the Abelian--Higgs model is proposed in such a way that spontaneous symmetry breaking and the Bogomolny argument still hold. The Painlev\'e test, when applied to its soliton equations, reveals a complete list of its integrable cases. The corresponding solutions are given in terms of third Painlev\'e transcendents and can be interpreted as original vortices on surfaces with conical singularity. 

The last two chapters present the following results in projective differential geometry and Hamiltonians of hydrodynamic-type systems. It is shown that the projective structures defined by the Painlev\'e equations are not metrisable unless either the corresponding equations admit first integrals quadratic in first derivatives or they define projectively flat structures. The corresponding first integrals can be derived from Killing vectors associated to the metrics that solve the metrisability problem. Secondly, it is given a complete set of necessary and sufficient conditions for an arbitrary affine connection in $2D$ to admit, locally, $0$, $1$, $2$ or $3$ Killing forms. These conditions are tensorial and simpler than the ones in previous literature. By defining suitable affine connections, it is shown that the problem of existence of Killing forms is equivalent to the conditions of the existence of Hamiltonian structures for hydro\-dynamic-type systems of two components.

\noindent



\newpage
\vspace*{\fill}
\newpage

\chapter*{Declaration}

This thesis is the result of my own work and includes nothing which is the outcome of work done in collaboration except as specified in the text. It is not substantially the same as any work that I have submitted, or, is being concurrently submitted for a degree or diploma or other qualification at the University of Cambridge or any other university or similar institution except as specified in the text. I further state that no substantial part of my thesis has already been submitted, or, is being concurrently submitted for any such degree, diploma or other qualification at the University of Cambridge or any other university or similar institution except as specified in the text.

Chapters \ref{chapSDYM}, \ref{chapPainMet} and \ref{chapterKilling} are based on (but are not the same as) my papers in collaboration with my supervisor \cite{ConDun2017,CD16,FCMD2015}, respectively, except for Section \ref{secDegSol}. Chapter \ref{chapVorLike} is essentially the same as my individual paper \cite{Contatto2017}.

\newpage
\vspace*{\fill}
\newpage


\textbf{Acknowledgements}. I am very grateful for the friendship and support of my close friends Fiona Doherty, Fernando Oliveira, Felipe Pellison, Taynan Ferreira, Joshua Shutter, Fr. Mark Langham, Fr. Bruno Clifton and the Fisher House community in general. I appreciate very much the patience and helpful hands of my parents, Jacir and Inês, and of my brother Fábio while I am away during my PhD and pursuing future plans. For the past year, I have been living in the Adams Road community. I am grateful for the company I enjoyed with them. In particular, I am thinking of Robin Cunnah and Helen Holmes. My special thanks go to Rosemary Summers, who passed away in the beginning of last year and had allowed me to live there, and to Geoffrey Roughton, her brother, for keeping the community alive. My officemates Tom Cridge, Sebastián C\'espedes and Kai Roehrig made my time at DAMTP much more enjoyable, especially at coffee time.

I am grateful to Prof. Nick Manton, Prof. Jenya Ferapontov, Dr. David Stuart and Dr. Daniele Dorigoni for the support they have given for my career and the reference letters they were so kind to write for me. I would also like to thank Prof. Nick Manton and his PhD student, Edward Walton, for numerous interesting discussions about Abelian vortices. Also, it is fair to say that they independently derived the moduli space metric described in Section \ref{secEnMod} at the same time as me. I thank Prof. Robert Bryant for the useful discussions about the Liouville invariants and for bringing reference \cite{Liouville1889} to our attention.

Finally, I want to show my most sincere gratitude to my supervisor Dr. Maciej Dunajski, for his high quality supervision, patience and attention, besides being so kind to revise this thesis. His encouragement and support during the most difficult times of my career is something I will never forget. Moreover, I would like to thank him for the various mathematical ideas he shared with me leading to the academic production we have had together. In particular, it is fair to attribute solely to him the ideas and proofs of Theorem \ref{theo_2} and Proposition \ref{prop_rank1}.

\tableofcontents

\newpage
\pagenumbering{arabic}
\include{introdept}
\include{back}
\include{finitedept}
\include{infinitedept}

\chapter{Introduction}

%
%

This thesis approaches two areas related to each other by Painlev\'e integrability. The first half deals with Abelian vortices and the second with differential and projective geometry. Each of the chapters is self-contained and has its own notation clearly explained. In general, parenthesis $(\; )$ and brackets $[\; ]$ around tensorial indices mean normalised symmetrisation and antisymmetrisation, respectively. The symbol $\mathcal L$ means lie derivative, except in section \ref{sec_mec}. Equalities expressing definitions are denoted with $\equiv$. Finally, tensorial indices within an unspecified coordinate system correspond to the abstract index notation \cite{PenRind1984}. 

Abelian vortices are topological solitons in $2$ dimensions arising from the Abelian Higgs model. This is a relativistic $U(1)$-gauge field theory with a $\phi^4$-potential whose vacuum manifold $S^1$ is given by the Higgs fields that annihilate the potential, that is to say, satisfying $|\phi|^2=1$. Its non-trivial topology provides stability to the vortices. In fact, the topological charge is interpreted as the winding number of the Higgs field at infinity or, equivalently, as the number of zeros of the Higgs field, counting multiplicities. Moreover, since vortices are a result of the Bogomolny argument, they are given as solutions to a system of two first order equations that also satisfy the Euler--Lagrange equations. In practice, vortices are given as solutions to the so called ``Taubes equation" \cite{Taubes1980} subject to boundary conditions, a single second order equation as opposed to the two first order ones. The Taubes equation is not easy to solve and is usually tackled numerically. The first analytic solution was given by Witten \cite{Witten1977} and was interpreted as an $SU(2)$-instanton with rotational symmetry. After that, two more explicit solutions were found by Dunajski \cite{Dunajski2012} involving Painlev\'e transcendents.

Another four similar types of solitons were proposed by Manton \cite{Manton2017} as a generalisation of the Taubes equation which includes the Popov vortices \cite{Popov2012}, a result of the symmetry reduction of anti-self-dual Yang--Mills (ASDYM) equations with gauge group $SU(1,1)$ and symmetry group $SU(1,1)$ acting as isometries on a $4$-manifold. The other three types are known as Jackiw--Pi, Ambj{\o}rn--Olesen and Bradlow vortices.

It is a natural question to ask whether these last three vortices arise as symmetry reductions of ASDYM equations, like the Taubes and Popov vortices. In Chapter \ref{chapSDYM} we give an affirmative answer to this problem written as Theorem \ref{main_theo}. In fact, Jackiw--Pi vortices arise as symmetry reduction of the ASDYM with gauge group $SU(1,1)$ and symmetry group being the Euclidean group $E(2)$, Ambj{\o}rn--Olesen vortices come from ASDYM with gauge group $SU(1,1)$ and symmetry group $SU(2)$ and finally, Bradlow vortices arise from ASDYM with gauge group $E(2)$ and symmetry group $SU(2)$. The symmetry groups are all isometry groups of surfaces of constant curvature. Theorem \ref{main_theo}	relies on the construction of an ansatz of the gauge field that is equivariant  (invariant up to gauge transformations) under the symmetry group. The procedure for such construction is summarised in Appendix \ref{secConstSym}. The interpretation of these vortices as instantons allows us to determine, in Section \ref{subsecInt}, on which surfaces they are integrable under the twistor point of view. In this case, the vortex equations reduce to a Liouville equation in a similar way as in Witten's work. In Section \ref{sec_sup}, the notion of superposition of vortices is explained, generalising the superposition of Taubes vortices originally derived by Baptista \cite{Baptista2014}. In Section \ref{secEnMod}, we use the result of Theorem \ref{main_theo} to derive a stationary energy functional for these vortices from the non-Abelian pure Yang--Mills energy. A suitable extension of the $4$-dimensional theory to $5$ dimensions allows us to derive a time-dependent theory for the vortices, so that they can be interpreted as stationary solutions to first order equations that also solve the dynamical Euler--Lagrange equations. This dynamical theory provides a natural definition of kinetic energy and thus the moduli space metric can be calculated using Samols' localisation method \cite{Samols1992}.

The problem approached in Chapter \ref{chapVorLike} relates to the other two integrable Taubes vortices described in \cite{Dunajski2012}. A systematic method in deriving these vortices is presented by modifying the original Ginzburg--Landau Lagrangian, which ultimately amounts to allowing the surface's metric to depend on the modulus of the Higgs field. This modified model gives rise to accordingly modified vortex equations whose integrability can be studied by Painlev\'e analysis. The modified model considered here is based on the introduction of a power of the modulus of the Higgs field as a factor of the pure gauge part of the Lagrangian and a suitable modification of the $|\phi|^4$ potential so that the Bogomolny argument \cite{Bogomolny1976} still holds. Under such class of models, all Painlev\'e-integrable cases are derived and it is shown that there are three integrable cases in total, two of them being those originally found in \cite{Dunajski2012}. All of them can be interpreted as ordinary Taubes vortices on surfaces with conical singularity. 

The second half of the thesis concerns projective geometry. Mainly two problems were studied in $2$ dimensions, the metrisability of projective structures and the existence of first integrals of the geodesic equations of a general affine connection. The latter problem is related to the existence of Hamiltonian formulations of hydrodynamic-type systems of two components.

It is well known that, given a metric on a manifold, there exist curves, called geodesics, such that their length locally minimises the distance between $2$ given points. The metrisability problem consists in studying the converse. Given a set of curves on a manifold, when is there a metric whose geodesics are precisely these curves? Here, only the local problem is considered. The metrisability problem is solved by finding a non-degenerate solution to the so called \textit{metrisability equations} \cite{East2008}. These equations are linear and tensorial for $2$-dimensional symmetric contravariant tensors. If a solution is non-degenerate (interpreting it as a symmetric matrix), then a metric that solves the problem can be constructed and such solutions are comprehensive.

In $2$ dimensions, the set of curves that can be interpreted as geodesics of affine connections is given by solutions of second order ODEs polynomial in first derivatives and involving only cubic powers or less, as explained in Section \ref{secMetProjSt}. Classic examples of such ODEs are those admitting the Painlev\'e property, among which there are the six Painlev\'e equations. In Chapter \ref{chapPainMet}, we study the metrisability problem of the projective structure defined by all Painlev\'e equations. It turns out that, the only Painlev\'e equations that define a metrisable projective structure are those that either admit a first integral that is quadradic in first derivatives or define a projectively flat structure. As explained in Section \ref{secmetP}, the only Painlev\'e equations for which there is a choice of constants such that this happens are III, V and VI. The corresponding metrics all admit a Killing vector, which gives rise to a linear first integral to the parametrised geodesic equations, while the geodesic Hamiltonian gives rise to a quadratic constant of the geodesic flow. By eliminating the affine parameter between them or, equivalently, using Theorem \ref{thm1stintegral}, we recover the above-mentioned quadratic first integrals of the Painlev\'e equations. In Section \ref{SecCoalescence}, we show that the metrics associated to the metrisable cases of equations III and V are related to each other in the same way as III can be recovered from V by coalescence.

It is known that the maximal dimension of the solution space of the metrisability equations in $n$ dimensions is $\frac{(n+1)(n+2)}{2}$ \cite{East2008}. It is conjectured that the maximal dimension of the solution space for degenerate solutions (imposing the constraint that the determinant of the $2$-tensor is zero) is $\frac{n(n-1)}{2}$. For $n=2$ dimensions this was proved in Lemma 4.3 of \cite{Bryant2009}. In Section \ref{secDegSol} we provide some evidence for this conjecture and derive some consequences thereof. In Theorem \ref{conjecture}, this conjecture is proved under the condition that the kernel of the solution to the metrisability equations (which is non-trivial by hypothesis) is fixed. It is also shown, in the same Section, that this maximal dimension is attained by a Newtonian projective structure. An intriguing feature of the kernel of a degenerate solution is presented in Theorem \ref{thmintdist}, where it is proven that any such kernel defines an integrable distribution and that the corresponding integral manifolds are totally geodesic, naturally defining an induced projective structure on them. Equivalently, the image of any degenerate solution (interpreted as a symmetric matrix) defines an integrable distribution and spans the integral manifold.

Chapter \ref{chapterKilling} concerns the problem of local existence of Killing $1$-forms for a given $2$-dimensional affine connection and its relation with hydrodynamic-type systems of two components. In Theorem \ref{theo_1} it is shown, using Frobenius Theorem, that the necessary and sufficient conditions for an affine connection to admit a Killing $1$-form are the vanishing of two scalar quantities. The existence of precisely two Killing forms involves the non-vanishing of the anti-symmetric part of the Ricci tensor and the vanishing of a $2$-tensor and, finally, the existence of three Killing forms (which is the maximal number) is equivalent to the projective flatness of the projective structure defined by the connection along with the symmetry of its Ricci tensor. For special connections, that is to say with symmetric Ricci tensor, the scalar conditions become the Liouville invariants $\nu_5$ and $w_1$. We are then able to find a geometric proof of a result of Liouville \cite{Liouville1886} that says that if $\nu_5=w_1=0$, then there exists a change of coordinates such that the ODE associated to the projective structure does not involve first derivatives (c.f. Theorem \ref{thmLiouInv}). The proof is essentially the determination of coordinates such that the affine connection defined by Thomas symbols is exact. As explained in Corollary \ref{corMet}, the derivation of the scalar conditions and the fact that degenerate solutions to the metrisability problem give rise to Killing forms of special connections, along with the invariants in \cite{Bryant2009}, gives a complete list of invariants that determine whether a projective structure is metrisable, without needing to calculate its solutions. Proposition \ref{prop_rank1} gives the form of all possible connections admitting exactly two first integrals up to change of coordinates.

Theorem \ref{theo_2} relates the existence of Killing forms of particular types of affine connections to the existence of Hamiltonian structures of hydro\-dynamic-type systems. This relation, along with the scalars and $2$-tensor mentioned in the previous paragraph gives a solution to the problem of finding necessary and sufficient conditions for a hydrodynamic-type system of two components to admit $1$, $2$ or $3$ Hamiltonian structures. 

\chapter{Vortex equations from self-duality}\label{chapSDYM}


The Abelian Higgs model at critical coupling admits static solutions  modeling vortices which neither attract nor repel each other \cite{ManSutbook}. The mathematical  content of the model consists of a Hermitian complex line bundle $L$ over a Riemannian surface $(\Sigma, g_\Sigma)$, together with a $U(1)$ connection $a$ and a complex Higgs field $\phi$ satisfying the Bogomolny
equations
\[
\overline{\mathcal D}\phi=0, \quad f=\omega_\Sigma(1-|\phi|^2).
\]
Here $\omega_\Sigma$ is the K\"ahler form on $\Sigma$, 
$\overline{\mathcal D}=\overline{\p}-ia^{(0,1)}$ is the covariant $\overline{\p}$-operator (the anti-holomorphic part of the covariant derivative $D$ defined by $a$), $f=da$ is the Abelian Maxwell field and $a^{(0,1)}$ is the anti-holomorphic part of $a$. Setting $|\phi|^2=\exp{(h)}$, and solving the first Bogomolny equation for the one-form $a$ and substituting it into the second one -- as in the end of the proof of Proposition \ref{prop2} -- reduces the second equation to the Taubes equation \cite{Taubes1980}
\be
\label{taubes}
\triangle h+2(1-e^h)=0,
\ee
which is valid outside small discs enclosing the logarithmic singularities of $h$ -- the locations
of vortices on $\Sigma$. Here $\triangle$ is the Laplace operator on $(\Sigma, g_\Sigma)$.

In \cite{Manton2017} Manton has considered a two-parameter generalisation of the Taubes equation
\be
\label{manton_eq}
\triangle h-2(C_0-Ce^h)=0.
\ee
The constants $C_0$ and $C$ can be rescaled to $0, 1$ or $-1$, and Manton has argued
that only five combinations lead to non-singular vortex solutions:
\begin{itemize}
\item $C=C_0=-1$ corresponds to the Taubes equation \cite{Taubes1980}. The magnetic field $f$ decays to zero away from vortex  center.
\item $C=C_0=1$ is the Popov equation \cite{Popov2009, Popov2012}.
\item $C=0, C_0=-1$ corresponds to the magnetic field with constant strength equal to $1$. In \cite{Manton2017} this was called the Bradlow vortex.
\item $C=1, C_0=-1$ is the Ambj\o rn--Olesen vortex. The magnetic field is enhanced away from the position of the vortex \cite{AmbOl1988, Manton2017}.
\item $C=1, C_0=0$ is the Jackiw--Pi vortex equation, which arises in a first order Chern--Simons theory \cite{JackPi1990, HorvZhang2009}. In this case $|\phi|^2$ tends to zero at the position of the vortex and (on non-compact surfaces) at $\infty$.
\end{itemize}

The aim of this chapter is to show (Theorem \ref{main_theo} in Section \ref{sec_ec}) that Manton's equation (\ref{manton_eq}) for all values of $C_0, C$ arises as the symmetry reduction of the anti-self-dual Yang--Mills equations (ASDYM) on a four-manifold $M=\Sigma\times N$, where $N$ is a surface of constant curvature, and the symmetry group is the group of local isometries of $N$. The value of $C_0$ in (\ref{manton_eq}) is determined by the curvature of $N$ (in fact, the Gauss curvature of $N$ is $-C_0$) and $C$ depends on the choice of the gauge group $G_C$. We shall demonstrate that $N=S^2$ if $C_0=-1$, $N=\HH^2$ if $C_0=1$, and $N=\R^2$ if $C_0=0$. The gauge group is $SU(2)$ if $C=-1$, $SU(1, 1)$ if $C=1$ and the Euclidean group $E(2)$ if $C=0$. In the integrable cases the Gaussian curvatures of $\Sigma$ and $N$ add up to zero. It is worth pointing out that similar symmetry reductions were performed in \cite{MasonWoodBook}, although not in the context of vortices.

In Section \ref{sec_sup} we shall show how the five vortex equations are related by a superposition principle which leads to a construction of vortices with higher vortex numbers.

The four-dimensional perspective allows us to derive a canonical expression for the resulting energy of vortices. By considering the kinetic energy of the dynamical Yang--Mills theory on $\R\times M$ in Section \ref{sec_mec}, we shall derive integral expressions for moduli space metrics corresponding to various choices of constants in (\ref{manton_eq}). If the gauge group is non-compact, then the kinetic energy and the resulting moduli space metric are not positive definite and the moduli space may contain surfaces where they identically vanish. In the integrable cases of equation (\ref{manton_eq}), the moduli space metric takes a simple form that generalises the one for integrable Taubes vortices described in \cite{Strachan1992}. These integrable cases correspond to $M$ being Ricci-flat -- that is to say, the Gauss curvature of $\Sigma$ being $C_0$ -- in which case the ASDYM equations are integrable under the twistorial point of view. 

\section{Equivariant anti-self-dual connections and symmetry reduction}
In this Section we shall formulate the main theorem. Let us first introduce
some notation.
\subsection{The group $G_C$} \label{secgroup}
The key role will be played by a Lie group $G_C\subset SL(2, \C)$ which consists of matrices $K$ such that
\[
K\left(\begin{array}{c c}
  1 & 0 \\ 
  0 & -C
 \end{array}\right)
K^\dagger=
\left(\begin{array}{c c}
  1 & 0 \\ 
  0 & -C
\end{array}\right), \quad\mbox{where}\quad C\in\R,
\]
or, equivalently,
\be
\label{group_c}
G_C=\left\{K=\left(\begin{array}{c c}
  k_1 & k_2 \\ 
  C\overline{k_2} & \overline{k_1}
 \end{array}\right);k_1, k_2\in\mathbb C,\quad
 \text{and}\quad |k_1|^2-C|k_2|^2=1\right\}.
\ee
Therefore $G_{-1}=SU(2), G_{1}=SU(1, 1)$ and $G_0=E(2)$ -- the Euclidean group. The generators of the corresponding Lie algebra $\mathfrak{g}_C$,  
\be
\label{lie_alg_rep}
J_1=\frac{1}{2}\left(\begin{array}{c c}
  0 & i \\ 
  -Ci & 0
 \end{array}\right),\quad 	
 J_2=\frac{1}{2}\left(\begin{array}{c c}
  0 & -1 \\ 
  -C & 0
 \end{array}\right),\quad J_3=\frac{1}{2}\left(\begin{array}{c c}
  i & 0 \\ 
  0 & -i
 \end{array}\right),
\ee
satisfy commutation relations
\[
\left[J_1 ,J_2 \right]=-C J_3\; ,\quad \left[J_2, J_3\right]=J_1\; ,\quad \left[J_3 ,J_1 \right]=
J_2.
\]
The explicit parametrisation of $G_C$ as well as the left-invariant one-forms are constructed in Appendix \ref{appeGroup}.	

In what follows, $G_{C_0}$ will denote the Lie group defined in the same way, but changing $C$ into $C_0$.

\subsection{The four-manifold} 
Let $M$ be the Riemannian four-manifold  given by the Cartesian product $\Sigma\times N$ with 
a product metric
\be
\label{metric_on_m}
g=g_\Sigma+g_N,
\ee
where $(\Sigma, g_\Sigma)$ is the Riemann surface introduced in the beginning of the chapter, and $(N, g_N)$ is a surface of constant Gaussian curvature $-C_0$. Let $w$ be a local holomorphic coordinate on $\Sigma$, and $z$ be a local holomorphic coordinate on $N$ so that
\[
g_\Sigma=\Omega dw d\ov{w}, \quad \mbox{and}\quad
g_N=\frac{4 dzd\ov{z}}{(1-C_0|z|^2)^2},
\]
where $\Omega=\Omega(w, \ov{w})$ is the conformal factor on $\Sigma$.
The  K\"ahler forms $\omega_\Sigma$ on $\Sigma$  and  $\omega_N$ on $N$ are given by
$$
\omega_\Sigma= \frac{i}{2}\Omega dw\wedge d\ov{w}  , \quad
\omega_N=\frac{2i dz\wedge d\ov{z}}{(1-C_0|z|^2)^2}=2id\beta,
$$
where
\be\label{eqbeta}
\beta=\frac{zd\ov{z}-\ov{z}dz}{2(1-C_0|z|^2)}.
\ee
We shall choose an orientation on $M$ by fixing the volume form $\mbox{vol}_M=\omega_\Sigma\wedge\omega_N$.

\subsection{Equivariance}
\label{sec_ec}
 Let $G_C$ and $G_{C_0}$ be Lie groups corresponding, via (\ref{group_c}), to two real constants  $C$ and $C_0$. In the theorem below, $G_C$ will play a role of a gauge group -- $G$ in Appendix \ref{secConstSym} -- and $G_{C_0}$ will be the symmetry group -- $S$ in the same Appendix.

Let $\mathcal E\rightarrow M$ be a vector bundle with a connection which, in a local trivialisation, is represented by a Lie-algebra valued one-form $A\in\mathfrak{g}_C\otimes\Lambda^{1}(M)$. The Lie group $G_{C_0}$ is a subgroup of the conformal group on $(M, g)$, and acts on $M$ isometrically by
\begin{equation}\label{actionC}
(w, z)\mapsto \Big(w, \frac{k_1 z+k_2}{C_0\overline{k_2} z+\overline{k_1}}\Big),
\end{equation}
where $\left(\begin{array}{c c}
  k_1 & k_2 \\ 
  C_0\overline{k_2} & \overline{k_1}
 \end{array}\right)$ is supposed to belong to $G_{C_0}$.

This relation will allow us to relate the holomorphic coordinates $z,\overline z$ of $N$ to the coordinates of $G_{C_0}/U(1)$ as described in Appendix \ref{appeGroup} (c.f. equation (\ref{zk_trans})). 

We shall impose the symmetry equivariance condition (\ref{equivdef}) on $A$. Applied more specifically to our case,
\begin{equation}\label{equivariance_onA}
{\mathcal L}_X A= DW,
\end{equation}
where $X$ is any vector field on $M$ generating the action (\ref{actionC}) and $DW\equiv dW-[A, W]$ is the covariant derivative of a $\mathfrak{g}_C$-valued function on $M$.	

In the coordinates $(w, \ov{w}, z, \ov{z})$ introduced above we have
$$
A=A_\Sigma+A_N, \quad \mbox{where}\quad A_{\Sigma}=A_wdw+A_{\ov{w}}d\ov{w}
\quad\mbox{and}\quad A_{N}=A_zdz+A_{\ov{z}}d\ov{z}.
$$

\begin{theo}
\label{main_theo}
Let $A\in \Lambda^{1}(M)\otimes \mathfrak{g}_C$ be $G_{C_0}$-equivariant. Then
\begin{enumerate}
\item There exists a gauge and a choice of complex structure in $M$ such that
\be
\label{Aansatz}
A=\left(\begin{array}{c c}
-C_0\beta+\frac{i}{2} a & -\frac{i}{1-C_0 z \overline z}\phi d\overline z\\ 
\frac{iC}{1-C_0 z \overline z}\overline\phi dz & C_0\beta-\frac{i}{2} a
\end{array}\right),
\ee
where $\beta$ is defined in (\ref{eqbeta}), $a$ is a $\mathfrak{u}(1)$-valued one-form, and $\phi$ is a complex Higgs field
on $\Sigma$.
\item The ASDYM equations on $(M, g)$ are
\be
\label{gen_bogomolny}
\overline{\mathcal D}\phi=0, \quad f+\omega_{\Sigma}(C_0-C|\phi|^2)=0
\ee
where $f=da$. Equivalently, setting $|\phi|^2=e^h$, 
\be
\label{manton_eq_thm}
\Delta_0 h-2\Omega (C_0-Ce^h)=0, \quad \mbox{where} \quad \Delta_0=4\p_w\p_{\ov{w}}.
\ee
\end{enumerate}
\end{theo}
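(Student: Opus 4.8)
The plan is to carry out the symmetry reduction explicitly using the equivariant ansatz machinery of Appendix \ref{secConstSym}. First I would parametrise the symmetry group $G_{C_0}$ and its quotient $G_{C_0}/U(1)\cong N$ by the holomorphic coordinate $z$ via \eqref{actionC}, and write down the three vector fields $X_1,X_2,X_3$ on $M$ generating the isometric action \eqref{actionC} (these are the infinitesimal Möbius-type generators dual to the left-invariant one-forms constructed in Appendix \ref{appeGroup}), together with the stabiliser generator $J_3$ of the $U(1)$ isotropy at $z=0$. Since the action is transitive on $N$, it suffices to construct $A$ at the point $z=0$ of each fibre $\{w\}\times N$ and then spread it over $N$ by the group action; the value of $A$ on $\Sigma$ directions descends to the reduced data $a$, and the value on $N$ directions descends to $\phi$.

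Next, the equivariance condition \eqref{equivariance_onA}, $\mathcal L_{X}A = DW$, is imposed for each generator. The translation-type and the rotation-type generators, combined with the residual gauge freedom, force $A$ into the stated form: the diagonal entries must be built from the canonical connection $\beta$ on $N$ (the entry $-C_0\beta$ is exactly the $\mathfrak{g}_C$-pullback of the Levi-Civita/canonical connection on the homogeneous space $N=G_{C_0}/U(1)$, which is why the curvature of the diagonal part reproduces $\omega_N$) plus an arbitrary $\mathfrak u(1)$-form $a$ pulled back from $\Sigma$; and the off-diagonal entries, transforming in the isotropy representation of weight $\pm1$, must be $z$-independent multiples (after the $(1-C_0 z\bar z)$ normalisation dictated by the Kähler structure on $N$) of a single function $\phi(w,\bar w)$ and its conjugate, paired with $d\bar z$ and $dz$ respectively. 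Checking that the reality/structure constraint $|k_1|^2 - C|k_2|^2 = 1$ of \eqref{group_c} is respected pins down the relative coefficient $iC$ versus $-i$ in the two off-diagonal slots. This establishes part (1).

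For part (2), I would compute the curvature $F = dA - A\wedge A$ of the ansatz \eqref{Aansatz} block by block. The $dw\wedge d\bar w$ component of the diagonal entry is $f = da$; the $dz\wedge d\bar z$ component of the diagonal entry picks up $-C_0\, d\beta = \tfrac{i}{2}C_0\,(\text{area form of }N)$ from $d(-C_0\beta)$ together with the $-A\wedge A$ term $\propto C|\phi|^2$ times the area form of $N$; the mixed $dw\wedge d\bar z$ (and $d\bar w\wedge dz$) components produce $\overline{\p}\phi - i a^{(0,1)}\phi = \overline{\mathcal D}\phi$ and its conjugate. Then I impose anti-self-duality with respect to the orientation $\mathrm{vol}_M = \omega_\Sigma\wedge\omega_N$: on a product of Riemann surfaces the ASD condition says $F^{(2,0)}=F^{(0,2)}=0$ and $F\wedge(\omega_\Sigma+\omega_N)=0$, i.e.\ the trace of $F$ against the Kähler form vanishes. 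The $(0,2)$ part vanishing is $\overline{\mathcal D}\phi = 0$; the trace condition, after using $d\beta = \tfrac12\,\omega_N/(2i)\cdot(\text{const})$ and the explicit conformal factors, becomes $f + \omega_\Sigma(C_0 - C|\phi|^2) = 0$. Substituting $|\phi|^2 = e^h$ and solving the first equation for $a$ (exactly as flagged in the remark pointing to the proof of Proposition \ref{prop2}, where $a^{(1,0)} = -i\p h$ on the support of $\phi$) turns the second equation into $\Delta_0 h - 2\Omega(C_0 - C e^h) = 0$ with $\Delta_0 = 4\p_w\p_{\bar w}$, which is \eqref{manton_eq_thm}.

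The main obstacle I anticipate is part (1): producing the equivariant ansatz in closed form over all of $N$, not just at the basepoint. The delicate points are getting the $z$-dependence of the off-diagonal entries correct — this requires carefully intertwining the isotropy $U(1)$-action on the fibre with the weight-$\pm1$ transformation of $\phi$, which is where the factor $1/(1-C_0 z\bar z)$ and the one-forms $d\bar z$, $dz$ enter — and simultaneously exhausting the gauge freedom so that $a$ and $\phi$ genuinely live on $\Sigma$ (depend only on $w,\bar w$). The curvature computation in part (2) is then essentially bookkeeping, the only subtlety being the correct normalisation relating $d\beta$ to $\omega_N$ and keeping track of signs in the ASD/Kähler-trace condition under the chosen orientation.
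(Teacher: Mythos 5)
Your plan is correct and follows essentially the same route as the paper, which splits the theorem into Proposition \ref{prop1} (constructing the equivariant ansatz via the constraint-equation machinery of Appendix \ref{secConstSym}, solving for the Higgs fields $\Phi_m$ on $G_{C_0}$, and gauging away the fibre dependence to descend to $G_{C_0}/U(1)\cong N$) and Proposition \ref{prop2} (computing $F_{\mu\nu}$ block by block and imposing anti-self-duality in the basis $\Re(dw\wedge dz)$, $\Im(dw\wedge dz)$, $\omega_\Sigma+\omega_N$, then eliminating $a$ via $a_{\ov w}=-i\p_{\ov w}\ln\phi$ to obtain (\ref{manton_eq_thm})). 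Your basepoint-plus-isotropy-weight description of part (1) is the conceptual content of that same Forg\'acs--Manton procedure, and your identification of the delicate points (the $z$-dependence of the off-diagonal entries and the normalisation of $d\beta$ against $\omega_N$) matches where the paper's explicit computation does the work.
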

We shall split the proof of this theorem into two Propositions
\begin{prop}
\label{prop1}
Let $G_{C_0}$ be the maximal group of isometries of $(N, g_N)$, where $N=\R^2, S^2$ or $\HH^2$. The most general $G_{C_0}$-equivariant $G_C$-connection is gauge equivalent to (\ref{Aansatz}), up to a choice of complex structure in $M$.
\end{prop}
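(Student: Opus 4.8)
The plan is to make the equivariance condition (\ref{equivariance_onA}) completely explicit by writing out the Lie derivatives along a basis of the three infinitesimal generators of the $G_{C_0}$-action (\ref{actionC}) on $M=\Sigma\times N$, and then to use the freedom of $G_C$-valued gauge transformations (together with the choice of complex structure promised in the statement) to put the resulting system into the normal form (\ref{Aansatz}). First I would identify the three vector fields $X_1,X_2,X_3$ on $M$ generated by the $\mathfrak{g}_{C_0}$-action; since the action leaves $w,\ov w$ untouched, each $X_a$ is tangent to the $N$-factor and its $z$-component is the holomorphic vector field corresponding to $J_a$ under (\ref{actionC}). This is precisely the setup of Appendix \ref{secConstSym}, so I would invoke the general construction there: an $S$-equivariant $G$-connection on a bundle over $\Sigma\times (S/U(1))$ is, after a gauge fixing, equivalent to the data of (i) a $U(1)\subset G$ reduction encoded in a connection $a$ on $\Sigma$ and (ii) a Higgs field $\phi$ transforming in the appropriate associated bundle, the $N$-legs of $A$ being rigidly determined by the canonical invariant connection on $S/U(1)=N$. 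Concretely, the ``vertical'' part $A_N$ is forced to be the pullback of the canonical $\mathfrak{u}(1)$-connection on $N$, which in the coordinates chosen is exactly the $\beta$-terms on the diagonal, while the off-diagonal entries of $A$ are the components of $\phi$ contracted with $dz,d\ov z$, with the explicit $z$-dependent prefactors $1/(1-C_0z\ov z)$ coming from the conformal factor of $g_N$ written in the stereographic coordinate.

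The key computational steps, in order, are: (1) parametrise $G_{C_0}$ and write down its left-invariant one-forms and the vector fields $X_a$, using Appendix \ref{appeGroup} and the identification (\ref{zk_trans}) between $z,\ov z$ and the coset coordinates of $G_{C_0}/U(1)$; (2) write the most general $A=A_\Sigma+A_N$ with coefficients depending on all four coordinates, impose $\mathcal L_{X_a}A=DW_a$ for three independent $W_a\in\mathfrak g_C$, and solve the resulting linear PDE system — the $w$-dependence drops out of the constraint and one is left with an over-determined system in $z,\ov z$ whose general solution is a finite-dimensional family parametrised by functions of $w,\ov w$ alone (these will become $a$ and $\phi$); (3) use residual gauge transformations valued in the stabiliser to eliminate the remaining redundancy and fix the complex structure so that $\phi$ is genuinely a section of a line bundle over $\Sigma$ (i.e. holomorphic trivialisation on $N$), arriving at (\ref{Aansatz}). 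I would treat the three cases $C_0=-1,0,1$ uniformly by keeping $C_0$ as a parameter throughout, since the generators (\ref{lie_alg_rep}) and the coset geometry depend on it only polynomially; this is the advantage of the $G_C$ formalism set up in Section \ref{secgroup}.

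The main obstacle I expect is step (2)/(3): showing that the equivariance equations really do force the off-diagonal blocks to have the precise rigid form $\mp\frac{i}{1-C_0z\ov z}\phi\,d\ov z$ (resp. its $C$-twisted conjugate) with no extra $z$-dependent freedom, and that the diagonal is pinned to $\pm(-C_0\beta+\tfrac i2 a)$. This amounts to proving that the space of invariant connections on the homogeneous bundle over $N=G_{C_0}/U(1)$ is exactly one-dimensional (parametrised by the isotropy representation), i.e. a Wang-type theorem, and then checking that the $SU(1,1)$ gauge structure is compatible with the reality condition $|k_1|^2-C|k_2|^2=1$ defining $G_C$; the non-compact cases $C=1$ or $C_0=1$ require care because some ``norms'' are indefinite, but the algebraic identities are the same. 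Once the ansatz (\ref{Aansatz}) is established, substituting it into the ASDYM equations and reading off (\ref{gen_bogomolny}) is the content of the companion Proposition \ref{prop2}, so Proposition \ref{prop1} is complete at that point.
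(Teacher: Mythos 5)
Your proposal is correct and follows essentially the same route as the paper: both rest on the equivariant gauge-field construction of Appendix \ref{secConstSym}, reducing the Lie-derivative conditions to the algebraic constraint equations on $G_{C_0}$ (your ``Wang-type'' classification of invariant connections on $G_{C_0}/U(1)$), then using the residual $U(1)$ gauge freedom and the $\epsilon=\pm 1$ choice of complex structure to reach (\ref{Aansatz}). The only point you gloss over that the paper treats explicitly is the diagonalisation of $\Phi_3$ inside the non-compact gauge groups $G_C$ with $C\geq 0$, which requires an explicit gauge transformation rather than the unitary diagonalisation available for $C=-1$ — but you do flag the indefinite-norm issue, so this is a matter of detail rather than a gap.
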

\noindent
and
\begin{prop} 
\label{prop2}
The ASDYM equations on (\ref{Aansatz}) are equivalent to (\ref{gen_bogomolny}) or (\ref{manton_eq_thm}).
\end{prop}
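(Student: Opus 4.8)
The plan is to prove Proposition \ref{prop2} by a direct computation of the curvature $F = dA - A\wedge A$ of the ansatz (\ref{Aansatz}) and then extracting the anti-self-duality conditions with respect to the product metric $g = g_\Sigma + g_N$ and orientation $\mathrm{vol}_M = \omega_\Sigma \wedge \omega_N$. First I would recall that on an oriented Riemannian four-manifold the space of two-forms splits as $\Lambda^2 = \Lambda^2_+ \oplus \Lambda^2_-$, and that for a product of two Riemann surfaces the self-dual forms are spanned by $\omega_\Sigma + \omega_N$, $dw\wedge dz$ and $d\ov w \wedge d\ov z$ (up to the relevant conformal factors), while the anti-self-dual forms are spanned by $\omega_\Sigma - \omega_N$, $dw\wedge d\ov z$ and $d\ov w \wedge dz$. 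Hence ASDYM, $F_+ = 0$, amounts to three equations: the vanishing of the $(dw\wedge dz)$- and $(d\ov w\wedge d\ov z)$-components of $F$, and the vanishing of the trace part of $F$ along $\omega_\Sigma + \omega_N$.

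Next I would compute $F$ blockwise. Write the ansatz as $A = \mathrm{diag}(\alpha, -\alpha) + \text{off-diagonal}$, where $\alpha = -C_0\beta + \tfrac{i}{2}a$ and the off-diagonal entries involve $\phi\, d\ov z$ and $\ov\phi\, dz$. The diagonal part of $F$ will produce $d\alpha = -C_0 d\beta + \tfrac{i}{2}f$ together with a contribution from the wedge of the two off-diagonal blocks proportional to $|\phi|^2\, dz\wedge d\ov z$; using $2i\,d\beta = \omega_N$ and the explicit form of $\omega_N$, this diagonal equation becomes exactly $f + \omega_\Sigma(C_0 - C|\phi|^2) = 0$ after matching the $\omega_N$ terms (which cancel identically, reflecting that $N$ has curvature $-C_0$) and the $\omega_\Sigma$ terms. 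The off-diagonal part of $F$ contains $d(\phi\, d\ov z)$ and the cross terms $[\text{diagonal}, \text{off-diagonal}]$; collecting the coefficient of $d\ov w\wedge d\ov z$ (resp. $dw\wedge dz$) and setting it to zero yields $\overline{\mathcal D}\phi = \ov\partial\phi - i a^{(0,1)}\phi = 0$ (resp. its conjugate, which is automatic since $\phi$ lives on $\Sigma$ only and the bundle is Hermitian). This establishes the equivalence of ASDYM with (\ref{gen_bogomolny}).

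For the last sentence of the proposition I would convert (\ref{gen_bogomolny}) into the scalar equation (\ref{manton_eq_thm}): solve $\overline{\mathcal D}\phi = 0$ locally for $a^{(0,1)}$ in terms of $\phi$, i.e. $a^{(0,1)} = -i\,\ov\partial \log\phi$ away from the zeros of $\phi$, so that $a = -\tfrac{i}{2}(\partial - \ov\partial)\log|\phi|^2 + (\text{exact real part})$; then $f = da$ contributes $\tfrac{i}{2}\partial\ov\partial h$ with $h = \log|\phi|^2$, and comparing with $\omega_\Sigma = \tfrac{i}{2}\Omega\, dw\wedge d\ov w$ in the second Bogomolny equation gives $\partial_w\partial_{\ov w} h = \tfrac{1}{2}\Omega(C_0 - C e^h)$, which is (\ref{manton_eq_thm}) with $\Delta_0 = 4\partial_w\partial_{\ov w}$. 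I would note this is exactly the reduction alluded to in the chapter's introduction and in the statement of Theorem \ref{main_theo}.

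The main obstacle I anticipate is bookkeeping rather than conceptual: correctly normalising the self-dual/anti-self-dual decomposition for the non-unit conformal factors $\Omega$ and $(1-C_0|z|^2)^{-2}$, and carefully tracking the matrix commutators so that the factors of $C$ and $C_0$ land in the right places (the asymmetry $C$ vs.\ $C_0$ between gauge group and base curvature is where sign errors are easiest to make). A secondary subtlety is justifying that the $dw\wedge dz$-component of $F$ vanishes identically given the ansatz — this relies on $\phi$ and $a$ depending only on $(w,\ov w)$ and $\beta$ depending only on $(z,\ov z)$, so the mixed holomorphic-holomorphic terms simply are not present — and checking that the residual gauge freedom used in Proposition \ref{prop1} to reach (\ref{Aansatz}) has not been over-spent, i.e. that the reality condition making $a$ genuinely $\mathfrak{u}(1)$-valued is consistent with all three ASDYM equations. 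Once the decomposition is fixed, the computation is a few lines of $2\times 2$ matrix algebra.
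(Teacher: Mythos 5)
Your proposal follows essentially the same route as the paper: identify the self-dual two-forms $\omega_\Sigma+\omega_N$, $\Re(dw\wedge dz)$, $\Im(dw\wedge dz)$ of the product metric, compute the curvature of the ansatz componentwise so that ASDYM reduces to $\overline{\mathcal D}\phi=0$ together with $f+\omega_\Sigma(C_0-C|\phi|^2)=0$, and then eliminate $a$ via $a_{\ov w}=-i\p_{\ov w}\ln\phi$ (and its conjugate) to obtain the scalar equation (\ref{manton_eq_thm}). One small correction to your closing paragraph: the $dw\wedge dz$-component of $F$ does \emph{not} vanish identically for the ansatz --- it is proportional to $D_w\ov\phi\,\sigma_-$, so its vanishing is the complex conjugate of the condition $\overline{\mathcal D}\phi=0$ (exactly as you state earlier in your second paragraph), rather than an identity following from the coordinate dependence of $\phi$, $a$ and $\beta$.
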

\noindent
{\bf Proof of Proposition \ref{prop1}.}

In this proof, we use the coordinates $(\kappa_1,\kappa_2,\kappa_3)$ of $G_{C_0}$ as described in Appendix \ref{appeGroup} and the technique described in Appendix \ref{secConstSym}.

First, let us introduce $\mathfrak{g}_C$-valued Higgs fields $\Phi_1$, $\Phi_2$ and $\Phi_3$, that can be interpreted as the  should satisfy the constraint equations (\ref{AppeConstEq}),
\begin{align}
&[A_w,\Phi_3]=0,\label{consteq1} \\
&[A_{\overline w},\Phi_3]=0 \label{consteq2}\\
&f_{m3p}^{C_0}\Phi_p^s+f_{rqs}\Phi_m\Phi_3^q=0, \label{consteq3}
\end{align}
where $\Phi_p^s$ are the components of the Higgs fields defined by $\Phi_m=\Phi_m^s J_s$ and only depend on $(y,\overline y)$ and $A_w$ and $A_{\overline w}$ are the $\Sigma$-components of the gauge field and only depend on $(w,\overline w)$. The $\Phi_3$ components are assumed to be constants.

To solve these equations, choose a gauge such that $\Phi_3^1=\Phi_3^2=0$, we will justify why this is possible below. Then these equations imply $\left(\Phi_3^3\right)^2=1$, that is to say $\Phi_3^3=\epsilon$, where $\epsilon=\pm 1$. Then, if we define $\phi_1\equiv\Phi_1^1$ and $\phi_2\equiv\Phi_1^2$, we have the following solution
\begin{align*}
\Phi_1&=\phi_1 J_1+\phi_2 J_2 \\
\Phi_2&=\epsilon\phi_2 J_1-\epsilon\phi_1 J_2 \\
\Phi_3&=\epsilon J_3.
\end{align*}

Let us show that we can indeed always choose $\Phi_3$ to be diagonal. First, if $C=-1$ then this is always possible because hermitian matrices are diagonalisable by unitary change of bases. If $C\geq 0$, begin by performing a gauge transformation to set $\Phi_3^1=0$ and $\Phi_3^2\geq 0$. The solution to the constraint equations will give $\Phi_3^3=\pm \sqrt{1+C \left(\Phi_3^2\right)^2}$. Then perform a gauge transformation by $k_1=-\frac{i}{\sqrt 2} \sqrt{1+\sqrt{1+C \left(\Phi_3^2\right)^2}}$, $k_2=\pm\frac{\Phi_3^2}{\sqrt 2\sqrt{1+\sqrt{1+C \left(\Phi_3^2\right)^2}}}$, where we are using the description of $G_C$ as in (\ref{group_c}). This transformation will make $\Phi_3$ diagonal.

The constraint equations (\ref{consteq1}) and (\ref{consteq2}) mean that $A_y$ and $A_{\overline y}$ are gauge potentials with gauge group being the little group of $\Phi_3=J_3$, which is the $U(1)$ generated by $J_3$ itself. Now, the only type of gauge transformations allowed are those in this $U(1)$ subgroup. The solution to those equations gives the $w$-components of the gauge potential $A_w=a_w J_3$, $A_{\overline w}=a_{\overline w} J_3$, for some functions $a_w$ and $a_{\overline w}$ which will transform as the components of a $U(1)$-gauge field on $\Sigma$.

With the Higgs fields above, we define a gauge potential on $G_{C_0}$ by
\begin{align*}
A_{\kappa_3}&=\Phi_m\xi_{m\,\kappa_3}=-\epsilon J_3\\
A_{\kappa_1}&=\Phi_m\xi_{m\,\kappa_1}=(\phi_1\sin\kappa_3-\epsilon\phi_2\cos\kappa_3)J_1+(\phi_2\sin\kappa_3+\epsilon\phi_1\cos\kappa_3)J_2\\
A_{\kappa_2}&=\Phi_m\xi_{m\,\kappa_2}=\left[1/\sqrt{-C_0}\sin(\sqrt{-C_0} \kappa_1)\cos\kappa_3\phi_1+\right.\\
&\left.+\epsilon/\sqrt{-C_0}\sin(\sqrt{-C_0} \kappa_1)\sin\kappa_3\phi_2\right]J_1+\nonumber \\
&+\left[1/\sqrt{-C_0}\sin(\sqrt{-C_0} \kappa_1)\cos\kappa_3\phi_2-\right.\\
&\left.-\epsilon/\sqrt{-C_0}\sin(\sqrt{-C_0} \kappa_1)\sin\kappa_3\phi_1\right]J_2+\epsilon\cos(\sqrt{-C_0}\kappa_1)J_3.
\end{align*}

In these expressions, we can set $A_{\kappa_3}=0$ and $\kappa_3=0$ by performing a gauge transformation $g A_\mu g^{-1}+\partial_\mu g g^{-1}$ with $g=diag(e^{\frac{i\epsilon}{2}\kappa_3},e^{-\frac{i\epsilon}{2}\kappa_3})$. We thus recover a gauge field defined now on the quotient $G_{C_0}/U(1)$,
\begin{align*}
A_{\kappa_1}&=-\epsilon\phi_2 J_1+\epsilon\phi_1 J_2\\
A_{\kappa_2}&=\frac{1}{\sqrt{-C_0}}\sin(\sqrt{-C_0} \kappa_1)\phi_1 J_1+\frac{1}{\sqrt{-C_0}}\sin(\sqrt{-C_0} \kappa_1)\phi_2 J_2+\\
&+\epsilon\cos(\sqrt{-C_0}\kappa_1)J_3.
\end{align*}

A direct calculation shows that this gauge potential is indeed $G_{C_0}$-equivariant. Indeed, it satisfies $\mathcal L_{\eta_i}A_\mu=D_\mu W_i$, $i=1,2,3$ where the $\eta_i$'s are left-invariant vector fields on $G_{C_0}$ defined in Appendix \ref{appeGroup} and
\begin{align*}
W_1&=-\epsilon\frac{\sqrt{-C_0}}{\sin(\sqrt{-C_0}\kappa_1)}\cos\kappa_2 J_3^{C},\\
W_2&=-\epsilon\frac{\sqrt{-C_0}}{\sin(\sqrt{-C_0}\kappa_1)}\sin\kappa_2 J_3^{C},\\
W_3&	=0.
\end{align*}

In order to obtain a neater final expression, perform another gauge transformation by $diag(e^{-\frac{i}{2}\epsilon\kappa_2},e^{\frac{i}{2}\epsilon\kappa_2})$ and change into $z$-coordinates to find the $G_{C_0}$-equivariant $G_C$-gauge potential (\ref{Aansatz}) on $M$,
where $\phi=\phi_2-i\phi_1$ is a $U(1)$-Higgs field and we have chosen $\epsilon=1$.

If we had chosen $\epsilon=-1$, then the resulting form of $A$ would be
$$
A=\left(\begin{array}{c c}
  C_0\beta+\frac{i}{2}a & -\frac{i}{1-C_0 z \overline z}\phi dz\\ 
  \frac{iC}{1-C_0 z \overline z}\overline\phi d\overline z & -C_0\beta-\frac{i}{2}a
 \end{array}\right),
$$
which is equal to (\ref{Aansatz}) up to a change $z\leftrightarrow \overline z, w\leftrightarrow \overline w$, which corresponds to a change in the complex structure. Here, as before, $\beta$ is defined in (\ref{eqbeta}).
\koniec
\noindent
{\bf Proof of Proposition \ref{prop2}.}
Let $(w, z)$ be holomorphic coordinates on $M$. The basis of self-dual two forms
is spanned by \[
\Re(dw\wedge dz),\quad \Im(dw\wedge dz), \quad\omega_N+\omega_\Sigma.
\]
The ASDYM equations $(\omega_N+\omega_\Sigma)\wedge F=0$, and
$dw\wedge dz\wedge F=0$ take the form
\be
\label{asdym}
F_{wz}=0, \quad F_{\ov{wz}}=0, \quad \Omega^{-1} F_{w\ov{w}}+\frac{(1-C_0|z|^2)^2}{4}F_{z\ov{z}}=0.
\ee
The components of the gauge field $F_{\mu\nu}=\partial_\mu A_\nu-\partial_\nu A_\mu - [A_\mu, A_\nu]$ are given by
\begin{align*}
F_{z\overline z}&=\frac{-C_0+C\phi \overline\phi}{(1-C_0 z \overline z)^2}\,\sigma_3\,, \;\;\;&\;\;\; 
F_{w\overline w}&=\frac{i}{2}f_{w\overline w}\,\sigma_3 \,, \nonumber\\
F_{\overline z\overline w}&=\frac{i}{1-C_0 z \overline z}D_{\overline w}\phi\,\sigma_+\,,  \;\;\;&\;\;\
F_{z\overline w}&=-\frac{i}{1-C_0 z\overline z}D_{\overline w}\overline \phi\,\sigma_-\,, \nonumber\\
F_{zw}&=-\frac{i}{1-C_0 z \overline z}D_{y}\overline\phi\,\sigma_- \,,\;\;\;&\;\;\ 
F_{\overline z w}&=\frac{i}{1-C_0 z \overline z}D_{w}\phi\,\sigma_+\,,
\end{align*}
where $\phi=\phi_2-i\phi_1$, $f_{w \overline w}=\partial_w a_{\overline w}-\partial_{\overline w}a_w$, $D$ is the covariant derivative with respect to the $U(1)$-connection $a$ and
$$
\sigma_3=\left(\begin{array}{cc} 1 & 0 \\ 0  & -1 \end{array} \right)\,,\;\;\;\; \sigma_+=\left(\begin{array}{cc} 0 & 1 \\ 0  & 0 \end{array} \right)\,,\;\;\;\; \sigma_-=\left(\begin{array}{cc} 0 & 0 \\ C  & 0 \end{array} \right).
$$
Set
\[
D\phi =d\phi -i a\phi, \quad D\ov{\phi}=d\ov{\phi}+ia\ov{\phi}
\]
and
\[
\mathcal D=dw\otimes(\p_w-ia_w), \quad \ov{\mathcal D}=d\ov{w}\otimes(\p_{\ov{w}}-ia_{\ov{w}}), \quad
\mbox{so that} \quad D=\mathcal D+\ov{\mathcal D}.
\]
The ASDYM equations (\ref{asdym})
yield vortex-type equations
\begin{align}
D_{\overline w}\phi=\partial_{\overline w}\phi-ia_{\overline w}\phi=0,\label{vortex1}\\
\frac{i}{2} f_{w\overline w}+\frac{\Omega}{4}(-C_0+C \phi \overline\phi)=0.\label{vortex2}
\end{align}
This system of non-linear PDEs can be reduced to a single second order equation
for one scalar function. In fact, solve the first equation (\ref{vortex1}) for $a_{\overline w}$ so that $a_{\overline w}=-i\partial_{\overline w}\ln(\phi)$ and, using the reality of $a$, $a_w=i\p_{w}\ln(\ov{\phi})$. Using these expressions for the components of $a$, calculate the Abelian Maxwell field $f_{w\overline w}$ and the second equation (\ref{vortex2}) yields (\ref{manton_eq_thm}).
\koniec

\subsection{Integrable cases}\label{subsecInt}

Following the integrability dogma \cite{MasonWoodBook,DunajskiBook,Cald2014}, a symmetry reduction of ASDYM is integrable if the ASDYM equations are defined on a background $(M, g)$ with anti-self-dual Weyl curvature. Computing the Weyl tensor of (\ref{metric_on_m}) shows that conformal anti-self-duality is equivalent to the vanishing of  the scalar curvature of $g$. Thus in the integrable cases the Riemann surface $(\Sigma, g_\Sigma)$ on which the vortex equations are defined must have constant Gaussian curvature equal to minus the Gaussian curvature of $(N, g_N)$, i.e. locally,
\be\label{metricsM}
g_\Sigma=\frac{4 dwd\ov{w}}{(1+C_0|w|^2)^2}, \quad
g=\frac{4 dwd\ov{w}}{(1+C_0|w|^2)^2} + \frac{4 dzd\ov{z}}{(1-C_0|z|^2)^2}.
\ee
The local solutions of integrable vortex equations are given explicitly, in a suitable gauge, by \cite{Manton2017}
\be
\label{int_vortex}
\phi=\frac{1+C_0|w|^2}{1+C|s(w)|^2}\frac{ds}{dw},
\ee
where $s=s(w)$ is a holomorphic map from $\Sigma$ to a surface of curvature $C$. The vortices are located at zeros of $\phi$, which are the zeros of $ds/dw$ and the poles\footnote{Notice that $|\phi|^2$ is invariant under $s\mapsto 1/s$.} of $s$ of order at least $2$.

The integrable cases on simply-connected Riemann surfaces under the anti-self-duality framework are the following:
\begin{itemize}
\item The Taubes vortex ($C=C_0=-1$) is integrable on $\HH^2$, in which case it is a symmetry reduction of ASDYM on $\HH^2\times S^2$. In this case, $s$ is a Blaschke function \[
s(w)=\frac{(w-c_0)\dots(w-c_{\bf N})}{(w-\overline c_0)\dots(w-\overline c_{\bf N})},\] 
where $|c_k|<0$. This is the original integrable reduction of Witten \cite{Witten1977}.
\item The Popov vortex ($C=C_0=1$) is integrable on $S^2$, in which case it is a symmetry reduction from $S^2\times \HH^2$. In this case, $s:\CP^1\rightarrow\CP^1$ is a rational function $p(w)/q(w)$, where $p$ and $q$ are polynomials of the same degree with no common root.
\item The Bradlow vortex ($C=0$, $C_0=-1$) is integrable on $\HH^2$, in which case it is a symmetry reduction from $\HH^2\times S^2$.
\item The Ambj\o rn--Olesen vortex ($C=1$, $C_0=-1$) is integrable on $\HH^2$, in which case it is a symmetry reduction from $\HH^2\times S^2$.
\item The Jackiw--Pi vortex ($C=1$, $C_0=0$) is integrable on $\mathbb R^2$, in which case it is a symmetry reduction from $\mathbb R^2\times \mathbb R^2$. 
\end{itemize}
In each case the symmetry group is $G_{C_0}$ and the gauge group is $G_C$.

These integrable cases of (\ref{manton_eq}) do not exhaust the list of all integrable vortices: there are other integrable cases related to the sinh-Gordon and the Tzitzeica equations \cite{Dunajski2012, ConDor2015, Contatto2017}.
\subsection{Superposition of vortices}
\label{sec_sup}
Given a solution to the vortex equation (\ref{manton_eq}) define the vortex number to be
\be
\label{vortex_number}
{\bf N}=\frac{1}{2\pi}\int_{\Sigma} f.
\ee
This is an integer equal to the first Chern number of the vortex line bundle $L\rightarrow \Sigma$,
and we shall assume that this integer is non-negative. 

Let us  now explain why there exist only five vortex equations among the nine possible combinations of values of $C$ and $C_0$. Equation (\ref{vortex1}) implies that 
the vortex number ${\bf N}$ coincides with the number of zeros of $\phi$ counted with multiplicities \cite{ManSutbook}. Since $\phi$ is holomorphic, ${\bf N}$ is necessarily non-negative. Since 
${\bf N}$ is proportional to the magnetic flux, the magnetic field $B\equiv-2if_{w\overline w}/\Omega$ cannot be negative everywhere on $\Sigma$, but  (\ref{vortex2}) implies that this is only possible for the choice of constants 
\[(C,C_0)=(-1,-1), (0,-1), (1,-1), (1,0), (1,1), (0, 0),
\]
where the last possibility means that the magnetic field is null everywhere away from the zero of the Higgs field and (\ref{manton_eq}) is the Laplace equation. We shall call this the 
Laplace vortex.

The resulting six equations are not disconnected from one another. We shall  show that it is possible to construct higher order vortex solutions of one type by superposing two other types of vortices.  
Let $h$ be a vortex solution on $\Sigma$ with vortex number ${\bf N}$ satisfying 
\be
\label{bap_eq1}
\Delta_0 h +2\Omega (-C_0+C_1 e^h)=0
\ee
away from the vortex locations, so that $|\phi|^2=e^h$ has ${\bf N}$ isolated zeros, counting multiplicities. 
We say that this vortex is of type $(C_1, C_0)$.

Consider a metric
on $\Sigma$
\[
\tilde{g}_{\Sigma}=e^h g_{\Sigma}
\]
which has conical singularities at zeros of $|\phi|^2$.
Let  $\tilde h$ be a vortex solution with vortex number ${\bf \widetilde N}$ on 
$(\Sigma, \tilde{g}_\Sigma)$
satisfying 
\be
\label{bap_eq2}
\Delta_0 \tilde h +2 e^h \Omega (-C_1+C e^{\tilde h})=0,
\ee
so that  $|\phi|^2=e^{\tilde h}$ has ${\bf \widetilde N}$ zeros. We shall 
say that this vortex is of type $(C, C_1)$ with a rescaled metric. Adding both 
PDEs (\ref{bap_eq1}) and (\ref{bap_eq2}), we find
that 
\[
\Delta_0 (h+\tilde h) +2\Omega (-C_0+C e^{h+\tilde h})=0
\]
and that $e^{h+\tilde h}$ has ${\bf N}+\bf {\widetilde N}$ zeros. 
Therefore the superposition of a vortex of type $(C, C_1)$ with a rescaled metric on a vortex of type $(C_1, C_0)$ gives rise to a vortex of type $(C, C_0)$ with vortex number being the sum of the first two vortex numbers,
${\bf N}+{\bf \widetilde{N}}$. 
In the case $C=C_0=-1$ this is the Baptista superposition rule \cite{Baptista2014}.

Taking into account the six possible vortex equations, we make a list of all possible superpositions
\begin{align*}
\text{Taubes}+\text{Taubes}&=\text{Taubes}\\
\text{Bradlow}+\text{Taubes}&=\text{Bradlow}\\
\text{Laplace}+\text{Bradlow}&=\text{Bradlow}\\
\text{Ambj\o rn--Olesen}+\text{Taubes}&=\text{Ambj\o rn--Olesen}\\
\text{Jackiw--Pi}+\text{Bradlow}&=\text{Ambj\o rn--Olesen}\\
\text{Popov}+\text{Ambj\o rn--Olesen}&=\text{Ambj\o rn--Olesen}\\
\text{Popov}+\text{Jackiw-Pi}&=\text{Jackiw--Pi}\\
\text{Jackiw--Pi}+\text{Laplace}&=\text{Jackiw--Pi}\\
\text{Popov}+\text{Popov}&=\text{Popov},
\end{align*}
where the non-commutative summation $+$ means that the first vortex (of type $(C, C_1)$ with a rescaled metric) is superposed on the second one (of type $(C_1, C_0)$) to result in the vortex on the right hand side of the equality (of type $(C, C_0)$) with higher vortex number.
\section{Energy and moduli space metric}\label{secEnMod}
The energy functional $E$ of pure Yang--Mills theory on $M$ can be reduced to the energy function of an Abelian-Higgs model on $\Sigma$ using the ansatz (\ref{Aansatz}). This can be done by direct calculation,
\begin{align*}
E&=-\frac{1}{8\pi^2}\int_{M}\tr(F\wedge \star_g F)\\
&=\frac{1}{4\pi^2}\int_{N} \omega_N\int_\Sigma
\left[ \frac{1}{4}\left(B^2+(C_0-C\phi\overline\phi)^2\right)-\frac{C}{\Omega}
\left(\left|\mathcal D\phi\right|^2+\left|\overline{\mathcal D}\phi\right|^2\right)\right]\omega_\Sigma
\end{align*}
where \[
B=-2if_{w\ov{w}}/\Omega
\] 
is the magnetic field on $\Sigma$. This expression for the energy is proportional to the one given in \cite{Manton2017}.

If we assume that $N$ is compact\footnote{Notice that this is already the case if $C_0=-1$, as $N=S^2$. Otherwise we assume that the corresponding surfaces are quotiented out by a discrete subgroup of $G_{C_0}$. If $C_0=0$, $N$ is a two-torus $\mathbb T^2$ ($\mathbb R^2$ quotiented out by a lattice) and if $C_0=1$ $N$ is a compact Riemann surface of genus ${\tt g}>1$ ($\mathbb H^2$ quotiented out by a Fuchsian group). The ansatz (\ref{Aansatz}) must then admit
this further discrete symmetry.} then the  first integral is the area of $N$ given by
$$
\mbox{Area}_N=\int_{N} \omega_N=\begin{cases}
      \frac{4\pi}{-C_0}, & \text{if}\ C_0<0 \\
      4\pi, & \text{if}\ C_0=0 \\
      \frac{4\pi}{C_0}({\tt g}-1) & \text{if}\ C_0>0,
    \end{cases}
$$ 
where ${\tt g}$ is the genus of $N$ and we normalised the area of the torus ($C_0=0$) to $4\pi$.
Thus the energy can be written, using the Bogomolny argument along with $[D_w,D_{\overline w}]\phi=-if_{w\overline w}\phi	$ and the integration by parts
(with an additional boundary condition $D\phi=0$ if $\Sigma$ is not compact) as
$$
E=\frac{\mbox{Area}_N}{16\pi^2}\int_\Sigma\left[\left(B+C_0-C|\phi|^2\right)^2-\frac{8C}{\Omega}\left|\overline{\mathcal D}\phi\right|^2\right]\omega_\Sigma-C_0\frac{\mbox{Area}_N}{4\pi}{\bf N},
$$
where ${\bf N}$
is the vortex number defined by (\ref{vortex_number}).
If the vortex equations (\ref{vortex1}) and (\ref{vortex2}) are satisfied, then the energy is proportional to the vortex number, characterising a non-interacting theory, and this value is the global minimum of $E$ if $C\leq 0$. However, equation (\ref{vortex1}) cannot be naturally derived from this argument if $C=0$. In fact the theory corresponding to this energy functional does not involve any Higgs field in this case even though the symmetry reduction of ASDYM necessarily gives rise to a holomorphic Higgs field satisfying (\ref{vortex1}). This is a counter-example to the principle of symmetric criticality, proved under certain conditions in \cite{Palais1979}.

If $N$ is not compact we can still make sense of {\it energy density} (or energy per unit of area of $N$).

Originally, $E$ is the energy functional of pure Yang--Mills theory in four dimensions, which under the ASDYM condition is 
$$
E=\frac{1}{8\pi^2}\int_{M}\tr\left(F\wedge F\right)\equiv k,
$$
where $k$ is the instanton number. 
Comparing both expressions for the energy, we derive a relation between the vortex and instanton numbers,
$$
k=-C_0\frac{\mbox{Area}_N}{4\pi}{\bf N}=(1-{\tt{g}}){\bf N}, \quad {\tt g}=0,1,2,\dots\,.
$$
\subsection{Dynamical theory}
\label{sec_mec}
Yang--Mills instantons on $M$ can be regarded as static solitons on $\R\times M$ with a product
metric $-dt^2+g$. Implementing the symmetry reduction of Theorem \ref{main_theo}, but from five dimensions, leads to vortices on $\Sigma$ interpreted as stationary solitons in a dynamical theory on 
$\R\times\Sigma$. We shall use this approach to find the kinetic term in the total energy 
functional on $\R\times\Sigma$, and use it to read-off the metric on the moduli space
of static vortices. Let $\mathcal F$ be a $\mathfrak{g}_C$-valued Yang--Mills field
on $\R\times M$.
The action functional of pure YM theory with $t$-dependence is
$$
S=-\frac{1}{8\pi^2}\int_{\mathbb R\times M}\tr(\mathcal F\wedge\star_5\mathcal F)=\int_{\mathbb R}\mathcal L dt,
$$
where $\mathcal L$ is defined by the second equality and involves the integral on $M$ alone. Under the symmetry reduction of Theorem \ref{main_theo}, $\mathcal L$ becomes a Lagrangian on $\R\times\Sigma$,
\begin{align*}
\mathcal L= -\frac{\mbox{Area}_N}{4\pi^2}&	\int_\Sigma\Big(\frac{1}{4}(B^2+(C_0-C\phi\overline\phi)^2)-\frac{C}{\Omega}(D_{\overline w}\overline\phi D_w\phi+D_{\overline w}\phi D_w\overline\phi) -\\
&-\frac{1}{\Omega} f_{0w}f_{0\ov{w}}+\frac{C}{2}|D_0\phi|^2\Big)\omega_\Sigma.
\end{align*}
The Euler--Lagrange equations, resulting from calculating the variation with respect to $\phi, a_w$ 
and $a_0$, are, respectively,
\begin{align}
D_wD_{\overline w}\overline\phi+D_{\overline w}D_{w}\overline\phi-\frac{\Omega}{2}D_0D_0\overline\phi+\frac{\Omega}{2}(-C_0+C \phi\overline\phi)\overline\phi=0,\label{EL1}\\
-2\partial_{\overline w}\left(\frac{1}{\Omega}f_{w\overline w}\right)+\partial_0f_{0\overline w}-iC(D_{\overline w}\overline\phi \phi-D_{\overline w}\phi \overline\phi)=0,\label{EL2}\\
\partial_wf_{0\overline w}+\partial_{\overline w}f_{0w}+\frac{iC\Omega}{2}\left(\phi D_0\overline\phi-\overline\phi D_0\phi\right)=0.\label{EL3}
\end{align}
The equations resulting from varying $\overline\phi$ and $a_{\overline w}$ are the complex conjugate of equations (\ref{EL1}) and (\ref{EL2}), respectively. The third equation is usually referred to as Gauss' law.

This system of  second order dynamical equations is satisfied by static solutions to the first order vortex equations (\ref{vortex1}--\ref{vortex2}) in the temporal gauge $a_0=0$. In fact, Gauss' law (\ref{EL3}) is automatically satisfied. To see that (\ref{EL1}) is satisfied, use (the complex conjugate of) (\ref{vortex1}) to write $D_wD_{\overline w}\overline\phi=[D_w, D_{\overline w}]\overline\phi=if_{w\overline w}\overline\phi$ and eliminate $f_{w\overline w}$ with (\ref{vortex2}). Finally, equation (\ref{EL2}) is satisfied upon eliminating $f_{w\overline w}$ in the same way and using $\partial_{\overline w}\left(\phi\overline\phi\right)=
(D_{\overline w}\phi)\overline\phi+\phi D_{\overline w}\overline\phi=\phi D_{\overline w}\overline\phi$.

The kinetic energy $T$ can be read off from $\mathcal L$. In the  temporal gauge
when $a_0=0$ it takes the form
\be
\label{kin_en}
T=\frac{\mbox{Area}_N}{8\pi^2}\int_\Sigma\left[\frac{2}{\Omega}\dot a_w \dot a_{\overline w}-C\dot\phi \dot{\overline\phi}\right]\omega_\Sigma,
\ee
where the dots denote $t$-derivatives. This generalises the known kinetic energy for the Taubes vortex ($C,C_0<0$) \cite{Strachan1992}.

In the usual Abelian Higgs model in the critical coupling (yielding Taubes vortices), there is a moduli space ${\mathcal M}_N$ 
of static ${\bf N}$ vortex solutions. All these 
solutions have the same potential energy, so there are no static forces. The moduli space
acquires a metric from the kinetic energy of the theory, and the geodesics of this metric 
model slow motion of ${\bf N}$-vortices. There are several ways to obtain the metric for both 
flat and curved backgrounds \cite{ManSutbook,DorDunMan2013}. One way to proceed 
for the integrable vortices is to assume that the vortex positions 
depend on time, and substitute the explicit solution (\ref{int_vortex}) into the kinetic energy (\ref{kin_en}).
This, when quotiented out by the gauge equivalence (which is equivalent to imposing Gauss' law \cite{ManSutbook}), gives a quadratic form on ${\mathcal M}_N$. In case of Taubes vortices the resulting metric is positive definite, but we see that (\ref{kin_en}) is not positive definite if $C>0$, which is the case for Jackiw--Pi, Popov and Ambj\o rn--Olesen vortices. 

Samols derived a localised expression for the metric of the moduli space of Taubes vortices \cite{Samols1992} (see also \cite{Strachan1992} and \cite{KruSpeight2010} for the metric of moduli space of hyperbolic vortices), where the moduli are the vortex positions (or zeros of the Higgs field). The moduli space metric of Taubes vortices with simple zeros $\{W_1,\dots,W_{\bf N}\}$ is, from (\ref{kin_en}),
$$
\frac{Area_N}{8\pi}\sum_{i,j=1}^{\bf N}\left(\Omega(W_k)\delta_{ij}+2\partial_{W_i}b_j\right)dW_i d \overline W_j,
$$
where $b_j=\partial_{\overline w}(h-2\log|w-W_j|)|_{w=W_j,\overline w=\overline W_j}$ and $\Omega(W_i)$ means that the conformal factor $\Omega$ is being evaluated at the point $(w,\overline w)=(W_i,\ov W_i)$.

A similar calculation as the one performed by Samols (cf. also \cite{ManSutbook}) adapted to vortices defined by (\ref{manton_eq}) gives the following result for the metric on the moduli subspace associated to the simple zeros of the Higgs field
\be\label{metrticmoduli}
\frac{Area_N}{8\pi}\sum_{i,j=1}^{\bf N}\left(-C_0\Omega(W_k)\delta_{ij}+2\partial_{W_i}b_j\right)dW_i d\overline W_j,
\ee
where the constant $C_0$ appears because the calculation involves the use of equation (\ref{manton_eq}).

In the integrable case, the background metric is locally given by $g_\Sigma$ in (\ref{metricsM}) and the Higgs field is (\ref{int_vortex}) in a particular gauge. As above, we assume that $|\phi|$ admits only simple zeros and that each of them is as zero of $ds/dw$ and not a pole, which is always the case up to performing the transformation $s\mapsto 1/s$ (which leaves $|\phi|$ invariant). In this case, $b_j$ can be calculated directly and is given by
$$
b_j=C_0 \frac{W_j}{1+C_0\left|W_j\right|^2}+\frac{3}{2}\overline\beta_j,
$$
where $\beta_j=\frac{s_3^{(j)}}{s_2^{(j)}}$ and $s^{(j)}_k=k!\dfrac{d^ks}{dw^k}\big|_{w=W_j}$, $k=0,1,2,3,\dots$, that is to say, the  $s^{(j)}_k$'s are defined by
$$
s(w)=s_0^{(j)}+s_2^{(j)}(w-W_j)^2+s_3^{(j)}(w-W_j)^3+\cdots,
$$
where the linear term is absent because we assumed that $W_j$ is a simple zero of $\dfrac{ds}{dw}$.

Now, the moduli space metric (\ref{metrticmoduli}) of simple vortices can be calculated and is given by
\be\label{metricmoduliint}
\frac{3\, Area_N}{16\pi}\sum_{i,j=1}^{\bf N}\dfrac{\partial \overline\beta_j}{\partial W_i} dW_i d \overline W_j.
\ee
This is in agreement with the formula derived for integrable Taubes vortices in \cite{Strachan1992}. In particular, (\ref{metricmoduliint}) tells us that the metric is zero when $s$ depends only holomorphically on the vortex positions, which is the case for some Popov and Jackiw--Pi vortices. In fact, this is what happens for the ${\bf N}=2$ Popov vortex on $\Sigma=S^2$ corresponding to $s=(w-W_1)^2/(w-W_2)^2$, and to the ${\bf N}=1$ Jackiw--Pi vortex on $\Sigma=\mathbb R^2$ corresponding to $s=(w-W_1)^2$.

\chapter{Integrable Abelian vortex-like solitons}\label{chapVorLike}


The existence and analyticity properties of Abelian vortices on the plane were largely studied in particular in \cite{JafTaubes,Taubes1980}. Generalised vortices were proposed by Lohe \cite{Lohe1981}, whose model admits other types of potentials at the expense of minimal coupling between the Higgs and gauge fields. In fact, Lohe's model modifies the kinetic term of the Higgs field as well as the potential in such a way that the Bogomolny argument still holds. The existence of generalised vortices under an analytical point of view was established in \cite{LoheVDH1983}. 

Integrability of the Abelian-Higgs model is well known on a hyperbolic background where the general solution can be explicitly described in terms of holomorphic maps \cite{Witten1977,Strachan1992,ManRink2010,MalMan2015} (see also the previous chapter). Moreover, Painlev\'e analysis shows that these are the only cases in which vortices are integrable in the Painlev\'e sense \cite{Shiff1991}. However two more isolated integrable cases of Abelian vortices were found by allowing the background metric to depend on the Higgs field, allowing the Bogomolny equations to be written as sinh-Gordon and Tzitzeica equations \cite{Dunajski2012}, albeit they do not arise from a variational principle approach from the Ginzburg--Landau model. 

In this chapter we present another modified version of the Abelian-Higgs model by coupling the Yang--Mills term of the Lagrangian with the Higgs field through a continuous function denoted by $G(|\phi|)$. Under mild conditions on $G$ and upon a suitable modification of the potential energy, the model admits vortex-like topological solitons from the Bogomolny argument. A choice of coupling function of the form $G(|\phi|)=|\phi|^{q+1}$, where $q\in\mathbb R$, will give rise to a model that includes the usual Abelian-Higgs vortices as a particular case ($q=-1$), but admits further Painlev\'e-integrable cases, including those described in \cite{Dunajski2012}, providing a variational approach to them. We determine all the possible values of $q$ and all possible background metrics yielding integrable models using the Painlev\'e test. The main result is that the only integrable cases with relevance to vortex theory, both for the ODE and for the PDE, correspond to $q=-\frac{1}{3},0,\frac{1}{3}$ on a flat surface and $q=-1$ on a hyberbolic surface.	

An ODE has the \textit{Painlev\'e property} if its general solution does not possess movable critical points. Critical points are singularities around which a function displays multivaluedness. This definition is motivated by the possibility of defining meromorphic functions from ODE's on the complex plane (c.f. \cite{Conte2013} for a short survey on the subject). Non-linear second order ODE's of the form $y''=R(x,y,y')$, where $R(x,y,y')$ is a rational function of $(y,y')$ with coefficients analytic in $x$, that possess the Painlev\'e property were classified \cite{Painleve1900,Painleve1902,Gambier1910,Ince} and there are $50$ of them up to the change of variables
$$
y\mapsto \frac{a(x)+b(x) y}{c(x)+d(x) y},
$$
for analytic functions $a,b,c,d$. This transformation preserves the Painle\-v\'e property. Among these $50$ equations, $44$ of them admit general solutions given by classical functions while the other $6$ are the Painlev\'e equations. The Painlev\'e test is a technique to find necessary conditions for an ODE to have the Painlev\'e property and will be described, in our case, in the following sections.

In section \ref{secModifiedVortices} we present the modified Ginzburg--Landau Lagrangian and the corresponding Bogomolny equations, which can actually be reduced to a single PDE that will be referred to as the modified Taubes equation. On surfaces of revolution, this equation admits a symmetry reduction to an ODE by rotational symmetry around the origin. On a surface of revolution, if the PDE passes the Painlev\'e test then so does the reduced ODE, but nothing guarantees that the converse is true.
Thus we perform a separate analysis of the reduced ODE in section \ref{secODE} and of the PDE in Section \ref{secPDE}. It turns out that the analysis of the ODE does not reveal any more integrable cases than that of the PDE.

Within the class of models considered, there can only be integrable cases on a hyperbolic background with Gauss curvature $-1/2$ and on a flat surface. On a hyperbolic surface, the only integrable case corresponds to $q=-1$, which is the usual Abelian-Higgs model. The converse was already known \cite{Shiff1991}. On a flat background, there are three integrable models, corresponding to $q=1/3$, $q=0$ and $q=-1/3$. The first two values are equivalent to the isolated integrable vortices of \cite{Dunajski2012} while $q=-1/3$ gives rise to a new solution and completes the list of integrable models under the class we consider. The Bogomolny equation for this case can be written as the Tzitzeica equation. In order to write explicit soliton solutions with finite energy, some boundary conditions apply. It is not obvious how to apply these boundary conditions to solutions of the Tzitzeica equation, but if we restrict to rotationally symmetric solutions and reduce the Tzitzeica PDE to a Painlev\'e III ODE, whose solutions are well known in the asymptotics \cite{Kit1989}, we can impose the right boundary conditions by fixing some parameters of the third Painlev\'e transcendents. Besides its integrability features, these solutions can be interpreted \cite{Dunajski2012} as usual Abelian-Higgs vortices on backgrounds with conical and curvature singularities at the origin. It is worth noticing that even though our model includes the usual Abelian-Higgs model, there are other types of similar integrable vortex equations on different backgrounds \cite{Popov2012,Manton2013,Manton2017} that it does not cover.

\section{Modified Abelian-Higgs model}\label{secModifiedVortices}

We start with the Ginzburg--Landau theory with a modified Lagrangian on a smooth manifold $\mathbb R\times \Sigma$ with Lorentzian  metric $ds^2=dt^2-\Omega(dx^2+dy^2)$,
\begin{equation}\label{modLag}
L=\int\left(-\frac{G(|\phi|)^2}{4}F_{\mu\nu}F^{\mu\nu}+\frac{1}{2}\overline{D_\mu\phi} D^\mu\phi-V(|\phi|)\right)\Omega\, d^2x,
\end{equation}
where $G$ is a continuous function of $|\phi|$ on its domain of definition, $D_\mu=\partial_\mu-i a_\mu$ is the covariant derivative, $F_{\mu\nu}=\partial_\mu a_\nu-\partial_\nu a_\mu$ is the curvature $2$-form of the $U(1)$-connection $a$ and $\Omega=\Omega(x,y)$ is the conformal factor of the metric on $\Sigma$. The space indices will be denoted by $i,j,k\dots$ and range from $1$ to $2$ as $(x^1,x^2)=(x,y)$. We will also use complex coordinates $z=x+iy$ and polar coordinates $z=r e^{i\theta}$, whenever it is convenient.

If the potential is $V(|\phi|)=\frac{1}{8 G(|\phi|)^2}\left(1-|\phi|^2\right)^2$, which differs from the Ginzburg--Landau $\phi^4$ theory by the factor $G(|\phi|)^2$ in the denominator and spontaneously breaks symmetry, then the usual Bogomolny argument can be applied.  In fact, the modified energy functional is
\begin{align}\label{modEn}
E&=\frac{1}{2}\int\left(\frac{G(|\phi|)^2}{\Omega^2}B^2+\overline{D_i\phi} D^i\phi+\frac{1}{4G(|\phi|)^2}\left(1-|\phi|^2\right)^2\right)\Omega d^2x \nonumber\\
&=\frac{1}{2}\int\left[\frac{G(|\phi|)^2}{\Omega}\left(B-\frac{\Omega}{2G(|\phi|)^2}(1-|\phi|^2)\right)^2+|D_{\bar z}\phi|^2+B-\right.\nonumber\\
&\left.-i\left(\partial_1(\bar \phi D_2\phi)-\partial_2(\bar \phi D_1\phi)\right)\right] d^2x\nonumber\\
&=\frac{1}{2}\int\left[ \frac{G(|\phi|)^2}{\Omega}\left(B-\frac{\Omega}{2G(|\phi|)^2}(1-|\phi|^2)\right)^2+|D_{\bar z}\phi|^2\right] d^2x+\pi N,
\end{align}
where $B$ denotes the component $F_{12}$ and $N\equiv\frac{1}{2\pi}\int_\Sigma B$ is supposed to be positive, as the case $N<0$ is analogous. We have used the boundary conditions $|\phi|\to 1$ and $D_i\phi\to 0$ as $z$ approaches the boundary of $\Sigma$ in the last equality. We assume that all terms in the energy functional are integrable so that it is well defined, which is true for the  integrable cases analysed here.

Thus the modified Bogomolny equations are
\begin{align}
&D_{\bar z}\phi\equiv\dzbar\phi-ia_{\bar z}\phi=0\label{modBog1} \\
&B=\frac{\Omega}{2G(|\phi|)^2}(1-|\phi|^2).\label{modBog2}
\end{align}
Eliminating $a_{\bar z}$ in the second equation using the first one, remembering that $a_{\bar z}=\overline{a_z}$, gives the modified Taubes equation,
\begin{equation}\label{modTaubes}
\Delta_0 h+\frac{\Omega}{G(e^{h/2})^2}\left(1-e^h\right)=0,
\end{equation}
where $h=\ln |\phi|^2$ and $\Delta_0=\partial_x^2+\partial_y^2$ is the Laplacian operator. Solving (\ref{modTaubes}) and imposing the boundary conditions above gives rise to vortex-like topological solitons on the surface $\Sigma$ defined by constant time slices with metric
$$
g=\Omega (dr^2+r^2d\theta^2),
$$
and Gauss curvature given by
$$
K_\Sigma=-\frac{1}{2\Omega}\Delta_0\ln\Omega.
$$

Notice that equation (\ref{modTaubes}) should be modified in case $\phi$ has zeros, as this implies the presence of logarithmic singularities for $h$ and the term $\Delta_0 h$ would generate delta functions. In fact, the usual Taubes equation ($G=1$) is often corrected with delta function sources added by hand to take these singularities into account \cite{ManSutbook}, which may occur at a general point and are the coordinates of the moduli space of vortices. This means that these log-singularities are movable in general and we will bypass them in our Painlev\'e analysis using an exponential change of variables $\chi=e^h$ in the following sections.

From now on, as explained in the beginning of the chapter, we will assume that $G(e^{h/2})^2=e^{(q+1)h/2}$, for a general $q\in\mathbb R$ and, in the next sections, study the integrability features of equation (\ref{modTaubes}). With this choice, we are going to impose another two conditions to the Higgs field. First, we require that the Higgs field is non-vanishing except on a finite number $l$ of distinct points $z_1,\dots,z_l$ and secondly that in a neighbourhood of each point $z_i$, there exists $n_i\in\mathbb N^*$ such that 
\begin{equation}\label{eqBCond}
\phi=(z-z_i)^{n_i}\psi_i(z,\bar z),
\end{equation}
where $\psi_i$ is a nowhere vanishing continuous function on the neighbourhood that is differentiable everywhere except possibly at $z_i$.

These conditions are the most natural ones to impose when seeking a generalisation of the Abelian Higgs model. In fact, they are immediately satisfied for the Abelian Higgs model, which can be proved from the existence of smooth solutions to the Bogomolny equations \cite{JafTaubes}; however, smoothness is an excessively strong condition to impose on the solutions of (\ref{modBog1}-\ref{modBog2}) in general. We will justify these conditions in Section \ref{secODE} in order to rule out solutions on smooth surfaces that do not have a similar behaviour as Abelian vortices (c.f. (\ref{solLiou1}-\ref{solLiou3})).

To begin with, we will suppose that $\Sigma$ is a surface of revolution so that the conformal factor is only a function of the radial coordinate, $\Omega=\Omega(r)$, as well as the modulus of the Higgs field, i.e., $h=h(r)$. This reduces (\ref{modTaubes}) to an ODE, that will be analysed in section \ref{secODE}. Then, in section \ref{secPDE}, we perform the analysis of the PDE (\ref{modTaubes}) in general. 

\section{Painlev\'e analysis of the ODE}\label{secODE}
We apply Painlev\'e analysis \cite{ARS1980} to seek choices of $\Omega$ such that equation (\ref{modTaubes}) for $G(e^{h/2})^2=e^{(q+1)h/2}$ is integrable, assuming cylindrical symmetry, that is to say $\Omega=\Omega(r)$ and $h=h(r)$. Because of the logarithmic divergence of $h$ where the Higgs field vanishes, we look instead at the equation for $\chi=e^{h}$, for which the ODE reduced from (\ref{modTaubes}) is
\begin{equation}\label{modTaubeschi}
\chi''-\frac{\chi'^2}{\chi}+\frac{1}{r}\chi'+\frac{\Omega(r)}{\chi^{(q-1)/2}}\left(1-\chi\right)=0.
\end{equation}

In practise, the aim of the analysis is to determine in which cases the general solutions of the ODE can be locally written in the form $\chi=(r-r_0)^p\sum_{j=0}^\infty \chi_j (r-r_0)^j$, where $\chi_j$ are constants, $\chi_0\not=0$, $r_0 > 0$ is arbitrary and $p$ is assumed to be an integer, a hypothesis that will be justified in the next section. The arbitrary constant $r_0$ represents the position of a movable singularity (either of $\chi$ or $1/\chi$), that we expect not to be critical (or multivalued) for the Painlev\'e property to hold. We suppose that $r_0\not=0$ in order to avoid the coordinate singularity at $r=0$ of (\ref{modTaubeschi}). We look for the dominant behaviour by substituting $\chi\sim \chi_0\left(r-r_0\right)^p$ in (\ref{modTaubeschi}). We have to study it differently according to $p$ is positive or negative. In what follows, $\Omega(r_0)$ will need to be expanded in powers of $r_0$ up to the relevant order.

We start by supposing that $p>0$. Balancing of the dominant terms (1st, 2nd and 4th) requires $p=\frac{4}{1+q} > 0$ and $\chi_0^{2/p}=\frac{\Omega(r_0)}{p}$. Since we are dealing with a second order ODE, its general solution should involve two constants of integration. One of them is the arbitrary constant $r_0$ itself, the other one will be a $\chi_{s}$, for some $s\geq 0$. The order $p+s$ in which this second constant appears is called the order of resonance (or the Fuchs index). Upon substituting the above series in (\ref{modTaubeschi}) and balancing all the powers of $r-r_0$, the constants $\chi_j$ should in principle be determined in terms of $r_0$ and $\chi_{s}$. Notice that the constant $\chi_0$ was fixed above in terms of $r_0$, so we can already tell that $s>0$. Moreover, $\chi_{s}$ will be a free parameter if and only if the leading order in which it appears in the expansion of (\ref{modTaubeschi}) involves $\chi_j$ algebraically for some $j>s$, so that $\chi_j$ can be determined in terms of $r_0$ and $\chi_s$, for any value of $\chi_s$. This leading order will necessarily come from the dominant terms (1st, 2nd and 4th) of (\ref{modTaubeschi}). Therefore, we look for the order of the resonance as follows. Keep just these dominant terms and substitute $\chi=\left(\frac{\Omega(r_0)}{p}\right)^{p/2}(r-r_0)^p+\chi_{s}(r-r_0)^{p+s}$. Expand the resulting expression in powers of $r-r_0$, keeping only the leading order of terms involving $\chi_s$, which will clearly be linear in $\chi_s$ (because $s>0$):
$$
(r-r_0)^{p+s-2}(s^2-s-2)\chi_s,
$$
whose vanishing implies $s=-1$ or $s=2$. The second, positive root indicates a resonance at order $p+2$ of the expansion in $r-r_0$, which means that $\chi_2$ can only appear in the coefficient of order $(r-r_0)^{p+1}$ or higher in the expansion of (\ref{modTaubeschi}) when $\chi$ is replaced by the power series. But at these orders the coefficients $\chi_{j\geq 3}$ are present and thus $\chi_2$ is not fixed. 

We are interested in analysing the order of resonance, as this will provide constraints on the geometry of $\Sigma$. Thus, we write $\chi=\left(\frac{\Omega(r_0)}{p}\right)^{p/2}(r-r_0)^p+\chi_1(r-r_0)^{p+1}+\chi_2(r-r_0)^{p+2}$, substitute it in (\ref{modTaubeschi}) and divide by $(r-r_0)^p$ to get
\begin{align}\label{PAeq}
&(r-r_0)^p \left(\frac{-2 (p-1) p^{-\frac{p}{2}}  \Omega (r_0)^{p/2}\chi_1-\Omega (r_0)^{p-1} \Omega '(r_0)}{r-r_0}-\frac{p^{1-p} \Omega (r_0)^p}{(r-r_0)^2}\right)+\nonumber\\
&+\frac{1}{r-r_0}\left(p^{1-\frac{p}{2}} \Omega (r_0)^{\frac{p}{2}-1} \Omega '(r_0)+\frac{p^{1-\frac{p}{2}} \Omega (r_0)^{p/2}}{r_0}-2 \chi_1\right)+\nonumber\\
&+ (p-2)  \Omega(r_0)^{-1} \Omega '(r_0)\chi_1+\frac{p+1}{r_0}  \chi_1-2 (p-1) p^{p/2-1} \Omega (r_0)^{-p/2}\chi_1^2+\nonumber \\ 
&+\frac{p^{-p/2+1}}{2} \Omega(r_0)^{p/2-1} \Omega''(r_0)-\frac{p^{-p/2+1}}{r_0^2}  \Omega(r_0)^{p/2}=0,
\end{align}
up to order of a positive power of $r-r_0$.

Equating the coefficient of each order to zero, we find conditions on $\Omega''(r_0)$ and $\chi_1$. As mentioned above, had we written $\chi$ as an infinite series $\chi=\sum_{n\geq 0}\chi_n (r-r_0)^{p+n}$, we would have been able to calculate recursively $\chi_n\;(n\geq 3)$ in terms of $r_0$ and $\chi_2$. 
 
If $p\geq 2$, we calculate $\chi_1$ from the term of order $(r-r_0)^{-1}$ and then the term of order $(r-r_0)^0$ gives the following equation for the conformal factor
$$
\Delta_0 \ln \Omega(r_0)=0,
$$
which has to be valid for any $r_0$, yielding a differential equation whose solution is $\Omega(r)=c_1 r^{c_2}$, for some constants $c_1>0$ and $c_2>-2$, thus the surface $\Sigma$ is locally flat. We require that $c_2>-2$ since the origin $r=0$ would be at infinite distance from any other point otherwise. In fact, by performing the change of radial variable $R=\frac{2\sqrt c_1}{c_2+2}r^{\frac{c_2+2}{2}}$, the metric becomes $dR^2+\left(\frac{c_2+2}{2}\right)^2R^2d\theta^2$ so that we can set $\Omega=1$ in (\ref{modTaubeschi}) at the expense of introducing a deficit in the angular variable $\theta$, characterising a conical singularity at the origin. We suppose that the background is a smooth manifold and therefore we do not take into account these singularities and suppose $c_2=0$.

If $p=2$, the term of order $(r-r_0)^{p-2}$ in (\ref{PAeq}) contributes. Thus, the vanishing of (\ref{PAeq}) at orders $(r-r_0)^{-1}$ and $(r-r_0)^0$ implies $\chi_1= \frac{1}{2}\left(\Omega'(r_0)+\frac{\Omega(r_0)}{r_0}\right)$ and 
\begin{equation}\label{eqLiouvEqOmega}
\Delta_0 \ln \Omega(r_0)=\Omega(r_0).
\end{equation}
This equation means that $\Sigma$ has constant Gauss curvature $-1/2$. This is not surprising as for $p=2$, $q=1$, and thus equation (\ref{modTaubes}) is the usual Taubes equation, up to replacing $h$ by $-h$, whose Painlev\'e integrability was studied in \cite{Shiff1991}. Solutions to the modified Taubes equation in this case would involve a Blaschke product but from condition (\ref{eqBCond}) the magnetic field $B$ would not be integrable due to divergences where the Higgs field vanishes and thus we would not be able to define a magnetic flux and the energy would be infinite. Let us however point out that, choosing the solution to (\ref{eqLiouvEqOmega}) to be $\Omega(r)=\frac{4}{(1-r^2)^2}$, this case admits the following solutions for the squared modulus of the Higgs field
\begin{align}
\chi&=\frac{4 r^2 (\ln r)^2}{(1-r^2)^2}, \label{solLiou1}\\
\chi&=\frac{(r^{c+1}-r^{-c+1})^2}{c^2(1-r^2)^2}, \quad 0<c<1,\label{solLiou2}\\
\chi&=\frac{4 r^2 \sin^2(c \ln r)}{c^2(1-r^2)^2}, \quad c>0,\label{solLiou3}
\end{align}
which are not analogous to Abelian vortices on smooth surfaces and can be ruled out by the conditions imposed in the end of Section \ref{secModifiedVortices}. These solutions were obtained from results of \cite{Popov1993} (c.f. also section $5$ of \cite{ConDor2015}).

If $p=1$ then all the terms in (\ref{PAeq}) contribute and we find that the conformal factor should satisfy the following differential equation
\begin{equation}\label{conffactorpeq1}
\Omega''(r_0)-\frac{\Omega'(r_0)^2}{\Omega (r_0)}+\frac{\Omega '(r_0)}{r_0}-\Omega '(r_0) \sqrt{\Omega (r_0)}-\frac{2 \Omega (r_0)^{3/2}}{r_0}=0,
\end{equation}
which can be rewritten in terms of $F(r)=\sqrt{\Omega(r)}$ as
\begin{equation}\label{diffeq}
F''(r)-\frac{F'(r)^2}{F(r)}+\frac{F'(r)}{r}-\frac{F(r)^2}{r}-F(r) F'(r)=0.
\end{equation}
Its general solution is 
\begin{equation}\label{eq1v}
F(r)=\frac{C_1}{r} \frac{(C_2 r)^{C_1}}{1-(C_2 r)^{C_1}},
\end{equation}
where $C_1$ and $C_2$ are arbitrary positive constants, so that the origin $r=0$ is at finite distance from any other point. Under the change of variables $R=(C_2 r)^{C_1/2}$, equation (\ref{modTaubeschi}) becomes
\begin{equation}\label{modTaupeq1}
\dfrac{d^2\chi}{dR^2}-\frac{1}{\chi}\left(\dfrac{d\chi}{dR}\right)+\frac{1}{R}\dfrac{d\chi}{dR}+\frac{4 R^2}{(1-R^2)^2}\frac{1}{\chi}\left(1-\chi\right)=0,
\end{equation}
where we assume that $0\leq R<1$. A one parameter family of solutions to this equation was given in \cite{Shiff1991} (c.f. equations (2.15--2.16) of this reference).
It satisfies the necessary conditions for Painlev\'e property established so far, but its analysis is not finished yet. To complete the Painlev\'e test we follow the usual procedure. Expand $\chi=\chi_0 (R-R_0)+\chi_1 (R-R_0)^2+\chi_2 (R-R_0)^3+\cdots$, substitute it in (\ref{modTaupeq1}) and expand the left hand side in powers of $R-R_0$. The vanishing of the leading order implies
$$
\chi_0=\pm\frac{2 R_0}{1-R_0^2},
$$
where $0\leq R_0 <1$.
The case in which we choose the $+$ sign was already analysed above and led us to equation (\ref{conffactorpeq1}). Now, if we choose the $-$ sign, the vanishing of the new leading term implies 
$$
\chi_1=-3\frac{1+R_0^2}{(1-R_0^2)^2}.
$$
With these choices of $\chi_0$ and $\chi_1$, the left hand side of (\ref{modTaupeq1}) becomes
$$
-\frac{16 R_0}{(1-R_0^2)^3}(R-R_0)+O\left((R-R_0)^2\right),
$$
whose first term cannot be eliminated by any choice of $\chi_i$. This means that the expansion of $\chi$ should involve logarithmic terms of the form $\ln(R-R_0)$. Therefore, equation (\ref{modTaupeq1}) does not pass the Painlev\'e test. Another solution to (\ref{diffeq}) can be obtained by taking the limit $C_1\to 0$ in (\ref{eq1v}) and then equation (\ref{modTaubeschi}) becomes
$$
\dfrac{d^2\chi}{dR^2}-\frac{1}{\chi}\left(\dfrac{d\chi}{dR}\right)+\frac{1}{R}\dfrac{d\chi}{dR}+\frac{ R^2}{(\ln R)^2}\frac{1}{\chi}\left(1-\chi\right)=0,\quad R=C_2 r,
$$
which also fails the Painlev\'e test, as a similar calculation shows.

In the case $p<0$, a similar procedure will lead to the condition $p=-\frac{4}{1-q}$. The conditions for Painlev\'e integrability can be derived from the case $p>0$ above. In fact, under the change of variables $\chi\mapsto 1/\chi$ in equation (\ref{modTaubeschi}), $q$ is changed into $-q$ or, writing this equation in terms of $p$ using $q=\frac{4-|p|}{p}$, $p$ is changed into $-p$. Therefore, the conditions for Painlev\'e integrability in the cases $p=-1$, $p=-2$ and $p\leq -3$ are the same as in the cases $p=1$, $p=2$ and $p\geq 3$, respectively. Namely, for $p=-1$, there are no integrable soliton solutions, for $p=-2$, $\Sigma$ must be a hyperbolic space with constant curvature $-1/2$ and for $p<-2$, $\Sigma$ must be flat up to conical singularities. Even though from the integrability point of view cases $p=-2$ (or $q=-1$) and $p=2$ (or $q=1$) are the same, for $p=-2$ we obtain the ordinary Taubes equations of the Abelian Higgs model, which admits soliton solutions satisfying our conditions as opposed to the case $p=2$.

To complete the integrability analysis we need to have a closer look in the range $|p|\geq 3$, in which case $-1/3 \leq q \leq 1/3$. This is because for this range of $q$, we can find a $p_1>0$ and a $p_2<0$ such that $q=\frac{1}{p_i}\left(4-|p_i|\right),\; i=1,2$. But for Painlev\'e integrability to take place, the integrability conditions should hold for all possible choices of leading order $p$. Therefore, we have to solve 
$$
\frac{1}{p_1}\left(4-p_1\right)=q=\frac{1}{p_2}\left(4+p_2\right),
$$
for integers $p_1>0$ and $p_2<0$. There are exactly three solutions to this equation: $(p_1,p_2)=(6,-3),\;(4,-4)$ and $(3,-6)$, yielding $q=-\frac{1}{3},\;0$ and $\frac{1}{3}$, respectively. Since the cases $q=1/3$ and $q=0$ lead to the models studied in \cite{Dunajski2012}, we will present the explicit vortex solutions to the case $q=-1/3$ in Section \ref{solutions} (c.f. equation (\ref{solMik})), after the Painlev\'e analysis of the PDE.

\section{Painlevé analysis of the PDE}\label{secPDE}

We will find all possible choices of $G(e^{h/2})^2=e^{(q+1)h/2}$ and of background metric $\Omega$ such that equation (\ref{modTaubes}) admits the Painlev\'e property, now without imposing any symmetry to the PDE. We will do the analy\-sis using the method proposed by Weiss, Tabor and Carnevale \cite{WTC1983}, which is the analogue Painlev\'e test for PDEs. As in the previous section, in order to avoid the logarithmic singularities in the analysis we look instead at the equation for $\chi=e^{h}$,
\begin{equation}\label{PDEChi}
\Delta_0\chi-\frac{1}{\chi}|\nabla\chi|^2+\frac{\Omega(x,y)}{\chi^{(q-1)/2}}\left(1-\chi\right)=0,
\end{equation}
where $\nabla\chi=(\partial_x\chi,\partial_y\chi)$ is the gradient vector of $\chi$ and $|\nabla\chi|^2=\left(\partial_x\chi\right)^2+\left(\partial_y\chi\right)^2$ is its Euclidean norm.

We look for the dominant behaviour by setting 
$$
\chi\sim\chi_0(x,y)\varphi(x,y)^p,
$$
where $\chi_0$ is a non-zero function to be determined and $p$ is an integer, as justified below. Keeping the lowest order terms in $\varphi$, we find
\begin{equation}\label{eqloworders}
\chi_0^{(1-q)/2}\Omega\,(1-\chi_0\,\varphi^p)\, \varphi^{p(1-q)/2}-p\, \chi_0 \,|\nabla\varphi|^2\, \varphi^{p-2}=0.
\end{equation}
We then need to separate the analysis into two different cases, $p>0$ and $p<0$. 

If $p>0$ then the term in $\varphi^p$ in the parenthesis of (\ref{eqloworders}) is of higher order and can be neglected at this stage. Then we equate the powers of $\varphi$ for the remaining two terms, $p(1-q)/2=p-2$ which gives a relation between $q$ and $p$, which  will be convenient to be solved for $q$:
$$
q=\frac{4-p}{p}.
$$
We solve (\ref{eqloworders}) for $\chi_0$ to find 
$$
\chi_0=\left(\frac{\Omega}{p |\nabla\varphi|^2}\right)^{p/2}.
$$

Anticipating from the ODE analysis above that there will be a resonance at second order, we expand $\chi$ as $\chi=\varphi^p(\chi_0+\chi_1 \varphi+\chi_2 \varphi^2+\cdots)$, substitute it in (\ref{PDEChi}), divide by $\varphi^p$ and expand the whole expression in powers of $\varphi$ up to the first two lowest orders, which are $\varphi^{-1}$ and $\varphi^0$, keeping (if necessary) the terms of order $\varphi^{p-2}$ and $\varphi^{p-1}$. The terms of order $\varphi^{p-2}$ and $\varphi^{p-1}$ are 
\begin{equation}\label{termorderp}
-p^{1-p}\Omega^p |\nabla\varphi|^{2-2p}\varphi^{p-2}
\end{equation}
and
\begin{equation}\label{termorderp1}
-2(p-1)\left(\frac{\Omega}{p}\right)^{p/2} |\nabla\varphi|^{2-p}\chi_1\varphi^{p-1},
\end{equation}
respectively, which arise from the very last term in (\ref{PDEChi}).

These terms will not contribute to the analysis if $p>2$. Therefore, we will separate the analysis into the cases $p=1$, $p=2$ and $p\geq 3$.

If $p=1$, then the vanishing of the term of order $\varphi^{-1}$, which involves (\ref{termorderp}), gives rise to an algebraic equation for $\chi_1$ whose solution is
$$
\chi_1=\frac{\sqrt{\Omega(x,y)}}{2|\nabla\varphi|^3}\left(\Delta_0\varphi-\sqrt{\Omega(x,y)}|\nabla\varphi|\right).
$$
Then the term of order $\varphi^0$ will not depend on $\chi_2$, manifesting the resonance at this order predicted above. Instead, this term is a fairly big expression involving $\varphi$ and $\Omega$ (and their partial derivatives up to second order) that should vanish for any small function $\varphi$. Making the choices $\varphi=\pm\epsilon x$, $\varphi=\pm\epsilon y$ and $\varphi=\epsilon x y$, where $\epsilon$ is a small positive constant, we get differential equations for $\Omega$ that can only be solved by $\Omega=0$. This case is thus not interesting for our purposes.

If $p=2$ then $q=1$ and a change of variables of the form $\chi\mapsto\chi^{-1}$ will put (\ref{PDEChi}) in the form of the usual Taubes equation, whose Painlev\'e analysis requires $\Sigma$ to be a hyperbolic space of curvature $-1/2$ \cite{Shiff1991}. As for the ODE in the previous section, condition (\ref{eqBCond}) implies that the divergence of the magnetic field (\ref{modBog2}) at each zero of the Higgs field would make the magnetic flux infinite, and thus no solution would fit our requirements.

If $p\geq 3$ then the lowest order term is 
$$
\left(\frac{\Omega}{p|\nabla\varphi|^2}\right)^{p/2}\left[p\Delta_0\varphi-2\left(\frac{p}{\Omega}\right)^{p/2}|\nabla\varphi|^{p+2}\chi_1\right]\frac{1}{\varphi}.
$$
The term in brackets should vanish, resulting in an equation for $\chi_1$ which can be solved by
$$
\chi_1=\frac{p^2}{2\Omega}\left(\frac{\Omega}{p|\nabla\varphi|^2}\right)^{(p+2)/2}\Delta_0\varphi.
$$
This choice of $\chi_1$ annihilates the term of order $\varphi^{-1}$ and we are left with the term of order $\varphi^0$  which is
$$
-\frac{p}{2}\Delta_0\ln\Omega \left(\frac{\Omega}{p|\nabla\varphi|^2}\right)^{p/2}.
$$
We notice that it does not involve $\chi_2$, indicating the resonance anticipated earlier. The conformal factor $\Omega$ should then satisfy $\Delta_0\ln\Omega=0$. In other terms, the metric should be flat, up to possible conical singularities. Thus, we can choose local coordinates to set $\Omega=1$ under smoothness assumptions. 

As for the ODE, the conditions for Painlevé integrability in the cases $p=-1$, $p=-2$ and $p\leq -3$ are the same as for $p=1$, $p=2$ and $p\geq 3$, respectively, as we can go from $p$ to $-p$ by changing $\chi$ into $\chi^{-1}$. Therefore, the integrable cases for the PDE correspond to the same as for the ODE, that is to say either $\Sigma$ is a hyperbolic surface of curvature $-1/2$ and $|q|=1$ or $\Sigma$ is flat and $|q|=1/3$. Notice however that for $p=1$ (or $q=1$)  we did not have a soliton solution but for $p=-1$ (or $q=-1$) we find exactly the usual Abelian Higgs model on hyperbolic surfaces, whose solutions are well understood.

Here it is worth pausing to explain why we require $p$ to be an integer. If $p$ is not an integer then the PDE does not admit the Painlev\'e property, however it may be transformed into one having this property under a change of variables replacing $\chi$ by some power of $\chi$, which might reveal further integrability properties. However, once we substitute the series expansion $\chi=\varphi^p\sum_{k\geq 0}\chi_k\varphi^k$ in (\ref{PDEChi}) and divide the left hand side by $\varphi^p$, the resulting expression takes the form
$$
\left(\text{power series in }\varphi\right) -\Omega \varphi^{p-2}\left(\sum_{k\geq 0}\chi_k\varphi^k\right)^{2\frac{p-1}{p}}=0,
$$
and for the second term to vanish for $p$ non-integer while $\chi_0\not=0$ we would require that $\Omega=0$, which is not our interest.

We have done the Painlev\'e analysis by expanding the $\chi$ in power series of $\varphi$. We could have also used the ``reduced ansatz" proposed by M. Kruskal and explained in \cite{WTC1983} which consists in supposing that $\partial_x\varphi\not=0$ and expanding $\chi$ in power series of $x-\psi(y)$, where $\psi$ is a function such that $\varphi(\psi(y),y)=0$ that exists by the implicit function theorem. Even though this ansatz is clearly analogous to the Painlev\'e analysis for ODEs and can simplify calculations considerably, in our case we would have needed to expand $\Omega(x,y)$ in power series of $x-\psi(y)$ with respect to the first variable and thus we decided not to use it. Anyway, similar calculations with this ansatz yield the same results.

\subsection{Explicit solutions}\label{solutions}

For $\Omega=1$ and $q=0$, (\ref{PDEChi}) becomes the sinh-Gordon equation $\Delta_0 \frac{h}{2}=\sinh\frac{h}{2}$ while for $q=\pm \frac{1}{3}$, it becomes the Tzitzeica equation \cite{Mik1979,Mik1981,FordyGib1980}
\begin{equation}\label{Tzieq}
\Delta_0 u+\frac{1}{3}\left(e^{-2u}-e^u\right)=0,
\end{equation}
where $u=q h$. These equations were studied in the context of Abelian vortices in \cite{Dunajski2012}, where the cases considered correspond to $q=\frac{1}{3}$ and $q=0$ in our language. However, the analysis presented here points to a new solution in the case $q=-\frac{1}{3}$ and completes the list of integrable cases under the class of models considered. We will focus on the details of this new solution, bearing in mind that they are analogous for the other two cases.

We still need to apply the boundary conditions so that we can calculate physical quantities such as the energy, magnetic flux and vortex strength. We thus have to know the behaviour of the asymptotics of the solutions to (\ref{Tzieq}). If we apply the cylindrical symmetry reduction $u=u(r)$, supposing that $u$ is only a function of the radial coordinate, (\ref{Tzieq}) reduces to a Painlev\'e III equation with choice of parameters $(1,0,0,-1)$ under the change of variables $u(r)=\ln w(r)-\frac{1}{2}\ln r+\frac{1}{4}\ln \frac{27}{4},\; r=\frac{3\sqrt{3}}{2}\rho^{2/3}$:
$$
\dfrac{d^2 w}{d\rho^2}=\frac{1}{w}\left(\dfrac{dw}{d\rho}\right)^2-\frac{1}{\rho}\dfrac{dw}{d\rho}+\frac{w^2}{\rho}-\frac{1}{w}.
$$
The behaviour of its solutions in the asymptotics were studied in \cite{Kit1989}. We thus apply this reduction and equation (18) in \cite{Kit1989} with $g_1=g_2=0$, $g_3=1$, $\tau=\frac{r^2}{12}$ and $s=1+2\cos\left[\frac{\pi}{9}(6-2N)\right]$ to find
\begin{equation}\label{solMik}
h=-3u \sim_{r\to 0} -3\ln\left[\frac{2\alpha}{9}(N-3)^2 12^{\frac{N}{3}}\frac{r^{-\frac{2N}{3}}}{\left(1-\alpha 12^{\frac{1}{3}(N-3)}r^{-\frac{2}{3}(N-3)}\right)^2}\right],
\end{equation}
where
$$
\alpha=3^{\frac{2}{3}(N-3)}\frac{\Gamma\left(\frac{1}{3}\left(2+\frac{N}{3}\right)\right)\Gamma\left(\frac{1}{3}\left(1+\frac{2N}{3}\right)\right)}{\Gamma\left(\frac{1}{3}\left(4-\frac{N}{3}\right)\right)\Gamma\left(\frac{1}{3}\left(5-\frac{2N}{3}\right)\right)}
$$
and $N$ is the topological charge (or vortex number), which is allowed to take values $N=1$ and $N=2$.

The results below fig. 1 in the same reference gives the behaviour at $r\to\infty$,
\begin{equation}\label{solMikinfty}
h=-3u \sim_{r\to\infty}-\frac{3\sqrt{3}}{\pi}\left\{1+2 \cos\left[\frac{\pi}{9}\left(6-2N\right)\right] \right\}K_0(r),
\end{equation}
where $K_0(r)\sim_{r\to\infty}\sqrt{\frac{\pi}{2r}}e^{-r}$ is the modified Bessel function of second kind. The strength of the vortex can be read off from the coefficient before the Bessel function $K_0$ and takes approximate values $2.23$ and $4.19$ for $N=1$ and $2$, respectively. For comparison, these values are approximately $1.80$ and $1.45$ for the models with $q=0$ and $q=1/3$, respectively, for which only $N=1$ vortex solutions are allowed \cite{Dunajski2012}.

\vspace{-3.5cm}	
\begin{center}
  \begin{figure}[ht]
  \begin{tabular}{ll}
\hspace{-2.8cm}
\includegraphics[scale=0.5]{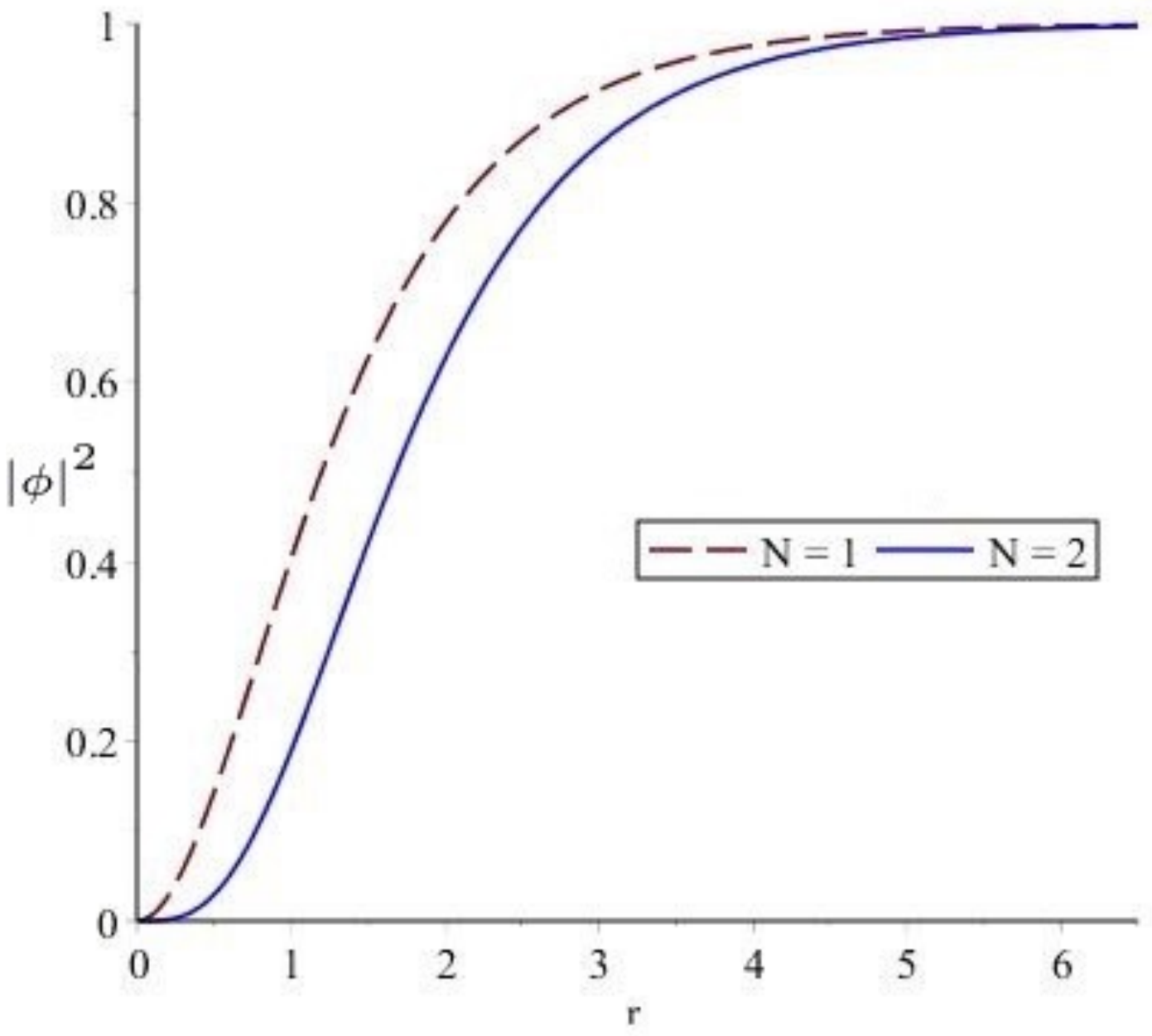} &\hspace{-2cm} \includegraphics[scale=0.5]{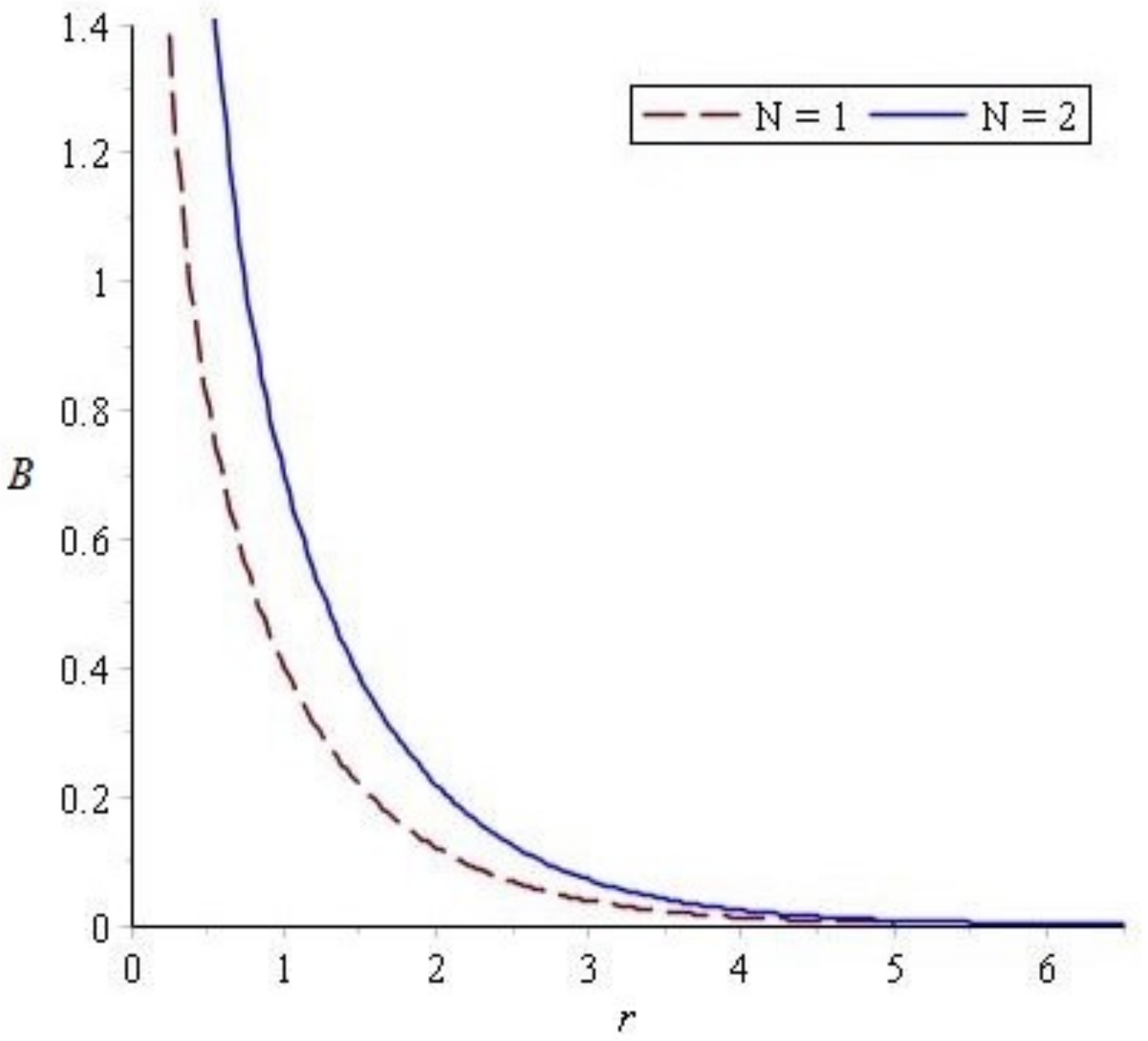}
  \end{tabular}
  \vspace{-4cm}
\caption{Square of the modulus of the Higgs field $|\phi|^2$ (left plot) and the magnetic field $B$ (right plot) as functions of $r$ for vortex number $N=1$ (dashed lines) and $N=2$ (full lines) for solution (\ref{solMik} --\ref{solMikinfty}).}
  \label{plots}
  \end{figure}
\end{center}
  \vspace{-1cm}
In figure \ref{plots} we plot the magnitude of the Higgs field square $|\phi|^2$ and the magnetic field $B$ as functions of $r$ associated to this solution for both vortex numbers. We notice, using equation (\ref{solMik}), that the magnetic field blows up at the origin as $B\sim r^{-2N/3}$ and would not be integrable for $N\geq 3$. This restricts $N$ to be $1$ or $2$, as mentioned above, in order to obtain a finite magnetic flux. In fact, a direct calculation shows that $\int_{\Sigma} B =2\pi N$ (c.f. also equation (\ref{modEn})). It can be done by using equations (\ref{modBog2}) and (\ref{modTaubes}) along with rotational symmetry to write $B=-\dfrac{1}{2r}\dfrac{d}{dr}\left(r\dfrac{dh}{dr}\right)$, then
$$
\frac{1}{2\pi}\int_{\Sigma}B=\frac{1}{2\pi}\int_0^{2\pi}\int_0^\infty-\frac{1}{2r}\dfrac{d}{dr}\left(r\dfrac{dh}{dr}\right) r dr d\theta=-\frac{1}{2}\left[r\dfrac{dh}{dr}\right]_{r=0}^{\infty}=N,
$$
where we have used the asymptotic expressions (\ref{solMik}) and (\ref{solMikinfty}) in the last equality.

The magnetic field for the models corresponding to $q=0$ and $q=1/3$ present a similar behaviour. At the origin they diverge as $B\sim r^{-1}$ and $B\sim r^{-4/3}$, respectively, while they monotonically tend to zero at infinity. Both give the same magnetic flux $2\pi$ corresponding to $N=1$ vortex solutions.

\newpage
\thispagestyle{empty}
\vspace*{\fill}
\newpage
\chapter{Metrisability of Painlev\'e equations and degenerate solutions}\label{chapPainMet}

In chapter \ref{chapVorLike}, the Painlev\'e test was presented as a means to find necessary conditions for a differential equation to admit the Painlev\'e property. We also mentioned that there exists a classification of second order ODE's admiting this property. These equations define $2$-dimensional projective structures whose metrisability can be analysed. The local metrisability problem in $2D$ was solved in \cite{Bryant2009}. It turns out that such projective structures are metisable if and only if their corresponding ODE's admit a first integral quadratic in $y'$. 
In particular, we conclude that the real solutions $y(x)$ of the Painlev\'e transcendents cannot be geodesic curves of a metric, except for the special case for PVI$\left(0,0,0,\frac{1}{2}\right)$, which admits the so called ``Picard solutions" in terms of the Weierstrass $\wp$ function and for which the projective structure is flat. Moreover, the metrics associated to the metrisable projective structures all admit a Killing vector, allowing us to recover the known quadratic first integrals, providing a geometrical interpretation to them.

In the second part of this chapter, we consider a problem raised in \cite{DunEast2016}, where it is shown that the Egorov and Newtonian projective structures in $3$ dimensions are not metrisable and admit a degenerate solution space of the metrisability equations of dimension at most $3$. We prove that in $n$ dimensions the solutions of rank $k<n$ for a given kernel form a vector space of dimension at most $\frac{k(k+1)}{2}$ (c.f. Theorem \ref{conjecture}) and that this bound is attained.
  

\section{Metrisability of projective structures}\label{secMetProjSt}

Consider the set of affine torsion-free connections on an $n$-dimensional simply-connected smooth orientable manifold $M$. We define the following equivalence relation: two connections $\Gamma$ and $\hat\Gamma$ are \textit{projectively equivalent} if they share the same unparametrised geodesics. We will also write $\nabla$ and $\hat\nabla$, referring to their covariant derivatives.
\begin{defn}\label{DefProjSt}
A \textit{projective structure} $\left[\Gamma \right]$ is the class of torsion-free connections that are projectively equivalent to $\Gamma$. We could also denote $\left[\Gamma\right]$ by its corresponding covariant derivative $\left[\nabla\right]$.
\end{defn}	 
There is an algebraic formulation of this equivalence relation given by
\begin{prop}
Two torsion free connections $\Gamma$ and $\hat\Gamma$ are projectively equivalent if and only if there exists a one form $\Upsilon_a$ such that
\begin{equation}\label{projequiv}
\hat\Gamma^a_{bc}=\Gamma^a_{bc}+\Upsilon_b\delta^a_c+\Upsilon_c\delta^a_b.
\end{equation}
\end{prop}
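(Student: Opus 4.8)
The plan is to prove the two implications separately; the forward direction ``$\Leftarrow$'' is a short substitution and the converse ``$\Rightarrow$'' carries the real content. For ``$\Leftarrow$'', assume $\hat\Gamma^a_{bc}=\Gamma^a_{bc}+\Upsilon_b\delta^a_c+\Upsilon_c\delta^a_b$ and let $x^a(t)$ be an affinely parametrised geodesic of $\Gamma$, so $\ddot x^a+\Gamma^a_{bc}\dot x^b\dot x^c=0$. Substituting the relation, I would compute
\[
\ddot x^a+\hat\Gamma^a_{bc}\dot x^b\dot x^c=\ddot x^a+\Gamma^a_{bc}\dot x^b\dot x^c+2(\Upsilon_b\dot x^b)\,\dot x^a=2(\Upsilon_b\dot x^b)\,\dot x^a,
\]
which is the equation of a $\hat\Gamma$-geodesic written in a non-affine parameter; a reparametrisation $t\mapsto\tau(t)$ kills the right-hand side, so the image of $x$ is an unparametrised geodesic of $\hat\Gamma$. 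Since $\Gamma^a_{bc}=\hat\Gamma^a_{bc}+(-\Upsilon_b)\delta^a_c+(-\Upsilon_c)\delta^a_b$ has exactly the same shape, the symmetric argument shows every $\hat\Gamma$-geodesic is also a $\Gamma$-geodesic, so the two projective structures coincide.

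For ``$\Rightarrow$'', I would work with the difference tensor $T^a_{bc}\equiv\hat\Gamma^a_{bc}-\Gamma^a_{bc}$, which is a genuine $(1,2)$-tensor field, symmetric in $b,c$ because both connections are torsion-free. Given $p\in M$ and $v\in T_pM$, take the $\Gamma$-geodesic through $p$ with velocity $v$; by hypothesis its image is a $\hat\Gamma$-geodesic, so in the $\Gamma$-affine parameter $\ddot x^a+\hat\Gamma^a_{bc}\dot x^b\dot x^c=\mu\,\dot x^a$ for some scalar $\mu$ at $p$. Subtracting the $\Gamma$-geodesic equation yields the pointwise identity $T^a_{bc}v^bv^c=\mu(v)\,v^a$ for all $v$, where $\mu(\,\cdot\,)$ is smooth on $T_pM\setminus\{0\}$ and homogeneous of degree $1$. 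The crucial step is to upgrade this to \emph{linearity} of $\mu$: differentiating the identity with respect to $v^e$ gives $2T^a_{be}v^b=\frac{\partial\mu}{\partial v^e}v^a+\mu\,\delta^a_e$, and contracting $a$ with $e$, together with Euler's homogeneity relation $v^a\,\partial\mu/\partial v^a=\mu$, gives $2T^a_{ba}v^b=(n+1)\mu(v)$. Hence $\mu(v)=2\Upsilon_b v^b$ with $\Upsilon_b\equiv\frac{1}{n+1}T^a_{ba}$, which is a smooth one-form on $M$. Substituting back, $T^a_{bc}v^bv^c=(\Upsilon_b\delta^a_c+\Upsilon_c\delta^a_b)v^bv^c$ for every $v$, and since both sides are symmetric in $b,c$, polarisation forces $T^a_{bc}=\Upsilon_b\delta^a_c+\Upsilon_c\delta^a_b$, which is exactly (\ref{projequiv}).

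The main obstacle I anticipate is precisely the linearity of $\mu$: a priori it is only a degree-one-homogeneous function of $v$, not obviously linear, and the trick is to promote it by differentiating in the fibre direction and taking a trace. Two minor points also need care: (i) ``sharing the same unparametrised geodesics'' is a statement about point sets, so one must move consistently between affine and arbitrary parametrisations; and (ii) one should record that $\Upsilon_b=\frac{1}{n+1}T^a_{ba}$ really transforms as a one-form, being a contraction of a tensor, and is defined globally on $M$. The case $n=1$ is trivial and consistent with the formula.
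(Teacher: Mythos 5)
Your proof is correct and follows essentially the same reduction as the paper: both arguments pass to the difference tensor $T^a_{bc}=\hat\Gamma^a_{bc}-\Gamma^a_{bc}$ and reduce the statement to the algebraic fact that a symmetric $(1,2)$-tensor satisfying $T^a_{bc}v^bv^c\propto v^a$ for all $v$ must equal $\Upsilon_b\delta^a_c+\Upsilon_c\delta^a_b$. The one place where you genuinely diverge is the key algebraic step: the paper simply asserts that this "can be easily seen by applying the condition with particular choices of $V^a$" (i.e.\ by testing against basis vectors and their sums), whereas you prove it by differentiating $T^a_{bc}v^bv^c=\mu(v)v^a$ in the fibre direction, contracting, and invoking Euler's relation for the degree-one-homogeneous function $\mu$. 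Your route is arguably preferable: it is coordinate-free, it produces the explicit formula $\Upsilon_b=\frac{1}{n+1}T^a_{ba}$ (making the tensoriality of $\Upsilon$ manifest), and it settles the linearity of $\mu$ — the point the paper glosses over — in a single stroke, at the modest cost of having to note that $\mu$ is smooth away from the zero section. Both directions of your argument are sound, including the polarisation at the end, which is legitimate because both sides are symmetric in $b,c$.
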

\begin{proof}
Let $V^a$ be a vector field tangent to a geodesic of the connection $\Gamma$. It means that there exists a function $v$ such that
\begin{equation}\label{propgeod1}
V^a\nabla_aV^b=v V^b.
\end{equation}
We would like to characterise all possible $\hat\Gamma$ such that
\begin{equation}\label{propgeod2}
V^a\hat\nabla_aV^b=\hat v V^b,
\end{equation}
for a function $\hat v$.

Subtract (\ref{propgeod2}) and (\ref{propgeod1}) to find
\begin{equation}\label{condgamma}
V^a V^c\tilde\Gamma^b_{ac}\propto V^b,
\end{equation}
where $\tilde\Gamma^b_{ac}=\hat \Gamma^b_{ac}-\Gamma^b_{ac}$.
Equation (\ref{condgamma}) should be satisfied for any geodesic and thus for any vector field $V^a$.

On the other hand, 
\begin{equation}\label{condprop1}
V^a V^c\tilde\Gamma^{[b}_{ac}V^{d]}=0,\; \forall V^a
\end{equation} 
if and only if 
\begin{equation}\label{condgammatilde}
\tilde\Gamma^{b}_{ac}=\Upsilon_a\delta^b_c+\Upsilon_c\delta^b_a
\end{equation} 
for some $1$-form $\Upsilon_a$. In fact, this can be easily seen by applying (\ref{condprop1}) with particular choices of $V^a$ to find restrictions on $\Gamma^{b}_{ac}$.

Therefore, (\ref{condgammatilde}) is equivalent to (\ref{condgamma}), and the proof is complete.
\end{proof}
Even though we will work out projectively invariant results, there is a natural choice of representative in each projective class as we now describe.

Since the manifold is orientable by hypothesis, there exists a nowhere vanishing volume form $\epsilon_{ab\cdots d}$, which is obviously unique up to a scale. Given a projective class $\left[\Gamma\right]$, we can choose a representative  $\nabla$ or, in other words, a $1$-form $\Upsilon_a$, making the volume form parallel, namely
\begin{equation}\label{specialcond}
\nabla_a\epsilon_{bc\cdots d}=0.
\end{equation}
In fact, if $\hat\nabla$ is a representative of the projective class, then $\nabla$ is associated to it via (\ref{projequiv}) with $\Upsilon_a=-\frac{1}{(n+1)\epsilon_{12\cdots n}}\hat\nabla_a\epsilon_{12\cdots n}$, which is a $1$-form.
\begin{defn}
A torsion-free affine connection $\nabla_a$ admitting a volume form $\epsilon_{ab\cdots c}$ satisfying (\ref{specialcond}) is called \textit{special}.
\end{defn}
Special connections will play a crucial role in setting up the metrisability problem in terms of a linear system of overdetermined PDE's (c.f. Theorem \ref{thmmeteq}). However, the property of being special is not invariant under a general change of connection (\ref{projequiv}) but only under those corresponding to exact $1$-forms $\Upsilon_a=\nabla_a f$, for a smooth function $f$. To understand why, we have to look at second order derivatives. Let $R$ be the Riemann tensor of $\nabla$, then
\begin{equation}\label{condspecial}
\left(\nabla_a\nabla_b-\nabla_b\nabla_a\right)\epsilon_{c\cdots d}=-R_{ab}{}^e{}_e\epsilon_{c\cdots d}
\end{equation}
which should vanish for special connections. Let us change the connection according to (\ref{projequiv}) and require that the transformed expression also vanish. Under (\ref{projequiv}), the contracted Riemann tensor transforms as $\hat R_{ab}{}^e{}_e=R_{ab}{}^e{}_e+2(n+1)\nabla_{[a}\Upsilon_{b]}$ and the transformed volume form $\hat\epsilon_{c\cdots d}$ must be proportional to $\epsilon_{c\cdots d}$ (as any volume form). Therefore, a necessary condition for the $\hat\nabla$ to be special is
$$
\left(\hat\nabla_a\hat\nabla_b-\hat\nabla_b\hat\nabla_a\right)\hat\epsilon_{c\cdots d}=-2(n+1)\nabla_{[a}\Upsilon_{b]}\hat\epsilon_{c\cdots d}=0
$$
implying that $\Upsilon$ is closed and therefore exact on a simply connected manifold. 

Now, under a change of connection of the form $\Upsilon_a=\nabla_a f$,
$$
\hat\nabla_a\epsilon_{b\cdots d}=\nabla_a\epsilon_{b\cdots d}-(n+1)\Upsilon_a \epsilon_{b\cdots d}.
$$
Hence $\hat\nabla_a\hat\epsilon_{b\cdots c}=e^{(n+1)f}\nabla_a\epsilon_{b\cdots c}=0$, where $\hat\epsilon_{b\cdots c}=e^{(n+1)f}\epsilon_{b\cdots c}$. We call such transformations \textit{special} and a corresponding equivalence class \textit{special projective class}. So, the volume form is rescaled by $e^{(n+1)f}$ under a special transformation $\Upsilon_a=\nabla_a f$ as a requirement for it to remain parallel.

The Riemann tensor of an affine connection can be decomposed as 
\begin{equation}\label{decRiemann}
R_{ab\;\;c}^{\;\;\;\, d}=W_{ab\;\;c}^{\;\;\; d}+\delta^d_a P_{bc}-\delta^d_b P_{ac}+\beta_{ab}\delta^c_d,
\end{equation}
were $\beta_{ab}=P_{ba}-P_{ab}$, $P_{ab}$ is the projective Schouten tensor; and $W_{ab\;\;c}^{\;\;\; d}$ is totally trace-free, has the same symmetries as $R_{ab\;\;c}^{\;\;\;\, d}$ (including the algebraic Bianchi identity) and is called the projective Weyl tensor. A connection is special if and only if $\beta_{ab}=0$, which corresponds to the vanishing of (\ref{condspecial}). In this case, $P_{ab}=\frac{1}{n-1}R_{ab}$ is symmetric.

Now we are able to formulate the problem in terms of a linear system of PDE's.

\begin{theo}\label{thmmeteq}
Let $\nabla$ be a special torsion-free connection and $\sigma^{ab}$ be a symmetric tensor such that $\det(\sigma)\equiv \epsilon_{a\cdots b}\epsilon_{c\cdots d}\sigma^{ac}\cdots\sigma^{bd}\not=0$ and
\begin{equation}\label{meteq}
\nabla_a\sigma^{bc}=\delta^b_a\mu^c+\delta^c_a\mu^b,
\end{equation}
for some vector field $\mu^a$. Then, $\nabla$ is projectively equivalent to the Levi-Civita connection of the metric defined by 
\begin{equation}\label{metcont}
g^{ab}=\det(\sigma) \sigma^{ab}.
\end{equation}
\end{theo}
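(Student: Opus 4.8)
The plan is to show that the metric $g^{ab}$ defined by (\ref{metcont}) has Levi-Civita connection lying in the projective class $[\nabla]$, by exhibiting an explicit one-form $\Upsilon_a$ for which the connection $\hat\nabla$ obtained from $\nabla$ via (\ref{projequiv}) is metric-compatible with $g$. First I would introduce the covariant metric $g_{ab}$ as the inverse of $g^{ab}$ (which exists since $\det(\sigma)\neq 0$), and set $\rho \equiv \det(\sigma)$, so that $g^{ab} = \rho\,\sigma^{ab}$ and $g_{ab} = \rho^{-1}\sigma_{ab}$ where $\sigma_{ab}$ is the inverse of $\sigma^{ab}$. The key computational input is the metrisability equation (\ref{meteq}): differentiating $\sigma^{bc}$ and tracing appropriately, one obtains that $\mu^a$ is essentially a gradient up to the volume-form factor, namely using $\nabla_a \epsilon_{b\cdots c} = 0$ one gets $\nabla_a \det(\sigma) = \epsilon\epsilon\,(\nabla_a\sigma)\sigma\cdots\sigma \cdot(\text{combinatorial factor})$, which by (\ref{meteq}) reduces to something proportional to $\mu_a \equiv \sigma_{ab}\mu^b$ (or more precisely $(n+1)\mu^a$ contracted suitably). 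This is the standard fact that $\nabla_a \log\det(\sigma) = (n+1)\sigma_{bc}\mu^{?}\ldots$; I would carry out this trace carefully to pin down the constant.

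The main steps in order: (1) compute $\nabla_a \rho$ in terms of $\mu$ and $\sigma$ using $\nabla_a\epsilon=0$ and (\ref{meteq}); (2) from this, compute $\nabla_a g^{bc} = (\nabla_a\rho)\sigma^{bc} + \rho\,\nabla_a\sigma^{bc}$ and express it purely in terms of $g$ and a single vector/one-form, showing it has the form $\nabla_a g^{bc} = \delta_a^b \nu^c + \delta_a^c \nu^b + (\text{trace term}) g^{bc}$ for an appropriate $\nu$; (3) convert to the covariant metric: $\nabla_a g_{bc} = -g_{bd}g_{ce}\nabla_a g^{de}$, obtaining $\nabla_a g_{bc}$ in the form $2\Upsilon_a g_{bc} + (\text{something symmetric in } b,c \text{ proportional to } \Upsilon_{(b}g_{c)a})$ — the precise shape dictated by (\ref{projequiv}); (4) define $\Upsilon_a$ to be exactly the one-form appearing there (a multiple of $\mu_a/\rho$ or of $\nabla_a\log\rho$), and check by direct substitution into $\hat\nabla_a g_{bc} = \nabla_a g_{bc} - \Upsilon_a g_{bc} - \Upsilon_b g_{ac} - \Upsilon_c g_{ab}$ — wait, more precisely $\hat\Gamma^d_{ab} = \Gamma^d_{ab} + \Upsilon_a\delta^d_b + \Upsilon_b\delta^d_a$ acting on $g_{bc}$ — that $\hat\nabla_a g_{bc} = 0$; (5) note $\hat\nabla$ is torsion-free (since $\nabla$ is and the correction in (\ref{projequiv}) is symmetric), hence $\hat\nabla$ is the Levi-Civita connection of $g$, and since $\hat\nabla \in [\nabla] = [\Gamma]$ we are done.

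The hard part, and the step I would be most careful with, is the bookkeeping in steps (1)–(3): getting the combinatorial constants right in $\nabla_a\det(\sigma)$ (the factor involving $n$ and $n+1$), and then verifying that after forming $\nabla_a g_{bc}$ the non-$\Upsilon_a g_{bc}$ terms assemble precisely into $\Upsilon_b g_{ac} + \Upsilon_c g_{ab}$ with the same $\Upsilon$ — i.e. that there is no leftover symmetric term obstructing metric compatibility of $\hat\nabla$. This is exactly where the specialness hypothesis ($\nabla_a\epsilon = 0$, equivalently $\beta_{ab}=0$) and the specific trace structure $\delta^b_a\mu^c + \delta^c_a\mu^b$ on the right of (\ref{meteq}) are used: a generic right-hand side would not produce a connection change of the pure projective form (\ref{projequiv}). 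I would also remark that non-degeneracy $\det(\sigma)\neq 0$ is used both to invert $\sigma^{ab}$ and to ensure $g^{ab}$ (hence $g_{ab}$) is a genuine metric, and that the construction is manifestly invariant under the special transformations described earlier, so the statement is consistent with $[\nabla]$ being a special projective class. A brief closing sentence would note that conversely every metric arises this way, so (\ref{meteq}) with $\det(\sigma)\neq 0$ is both necessary and sufficient for metrisability, matching the references \cite{East2008,Bryant2009}.
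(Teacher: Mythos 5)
Your plan is sound, and it is essentially the standard argument: the paper itself gives no proof of Theorem \ref{thmmeteq} beyond citing \cite{East2008}, and the direct verification you sketch is exactly what is carried out there. The bookkeeping you are worried about does close up: since $\nabla$ is special, differentiating $\det(\sigma)=\epsilon\epsilon\,\sigma\cdots\sigma$ only hits the $\sigma$'s, and (\ref{meteq}) gives $\nabla_a\det(\sigma)=2\det(\sigma)\,\sigma_{ab}\mu^b$ (with $\sigma_{ab}$ the matrix inverse of $\sigma^{ab}$, in every dimension $n$), whence $\nabla_a g^{bc}=2\mu_a g^{bc}+\delta^b_a\,\rho\mu^c+\delta^c_a\,\rho\mu^b$ and $\nabla_a g_{bc}=-2\mu_a g_{bc}-\mu_b g_{ac}-\mu_c g_{ab}$ with $\mu_a=\sigma_{ab}\mu^b$; taking $\Upsilon_a=-\mu_a=-\tfrac12\nabla_a\ln|\det\sigma|$ in (\ref{projequiv}) then makes $\hat\nabla_a g_{bc}$ vanish identically, with no leftover obstruction term. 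This is consistent with the choice $\Upsilon_a=\nabla_a\bigl(-\tfrac12\ln|\Delta|\bigr)$ the paper uses later in the proof of Theorem \ref{thm1stintegral} and with the two-dimensional formula (\ref{metrisable}). Your closing converse claim (every metric arises this way) is true but is not part of the statement and would need its own short argument if you include it.
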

\begin{proof}
See \cite{East2008}.
\end{proof}

By contracting indices in (\ref{meteq}) we see that $\mu^a=\frac{1}{n+1}\nabla_b\sigma^{ab}$. Then, (\ref{meteq}) can be rewritten as
\begin{equation}\label{eqmettracefree}
\left(\nabla_a\sigma^{bc}\right)_\circ\equiv\nabla_a\sigma^{bc}-\frac{1}{n+1}\delta^b_a\nabla_d\sigma^{cd}-\frac{1}{n+1}\delta^c_a\nabla_d\sigma^{bd}=0,
\end{equation}
where $\left(\nabla_a\sigma^{bc}\right)_\circ$ denotes the trace-free part of $\nabla_a\sigma^{bc}$. Equation (\ref{meteq}) is projectively invariant since it implies
$$
\hat\nabla_a\hat\sigma^{bc}=\delta^b_a\hat\mu^c+\delta^c_a\hat\mu^b,
$$
where $\hat\sigma^{bc}=e^{-2f}\sigma^{bc}$ and $\hat\mu^b=e^{-2f}\left(\mu^b+\Upsilon_d\sigma^{db}\right)$.
\subsection{Metrisability in $2$ dimensions}


In $2$ dimensions we can use the volume form to raise and lower indices according to $\epsilon^{ab} v_b=v^a$ and $ v^b\epsilon_{ba}=v_a$. Clearly, $\epsilon^{ab}\epsilon_{ac}=\delta^b_c$.

Since we are working with special connections, we can lower the indices $bc$ in (\ref{meteq}) and use the symmetry of $\sigma$ to conclude that, in two dimensions, (\ref{meteq}) is equivalent to a Killing equation
\begin{equation}\label{eqmetsym}
\nabla_{(a}\sigma_{bc)}=0.
\end{equation}
In fact, to show that (\ref{eqmetsym}) implies (\ref{meteq}), write (\ref{eqmetsym}) explicitly for the four choices of symmetrised indices $(abc)$ and raise indices by multiplying by $\epsilon^{12}$. Notice however that this equivalence is not true in higher dimensions.

Let us analyse (\ref{eqmetsym}) in more detail. Let us choose local coordinates $(x^1,x^2)\equiv(x,y)$ for $M$. Let $\psi_1=\sigma_{11}$, $\psi_2=\sigma_{12}$ and $\psi_3=\sigma_{22}$. Choose the following Thomas symbols \cite{Thomas1925} as representative of the projective structure: $\Pi^c_{ab}=\Gamma^c_{ab}-\frac{1}{3}\Gamma^d_{da}\delta^c_b-\frac{1}{3}\Gamma^d_{db}\delta^c_a$, which is independent of the initial choice $\Gamma^a_{bc}\in\left[\Gamma\right]$. Now, it is worth commenting that $\Gamma^d_{da}$ is not a 1-form, and thus $\Pi^c_{ab}$ does not transform as an affine connection in general, but only under coordinate transformations of constant Jacobian\footnote{The coordinate transformation of Thomas symbols are given in equation (\ref{transfThomas}).}. So once we choose this representative we can only apply this kind of coordinate transformations in (\ref{eqmetsym}). In particular, Thomas \cite{Thomas1925} introduced the terminologies ``equi-transformation" for coordinate changes preserving the volume (of Jacobian identically $1$), ``projective connection" for $\Pi^c_{ab}$ and ``equi-tensor" for entities such as $\Gamma^d_{da}$ transforming like tensors under equi-transformations. 

Equation (\ref{eqmetsym}), using the projective connection $\Pi^c_{ab}$, yields
\begin{align}
\dfrac{\partial \psi_1}{\partial x}&=\frac{2}{3} A_1\psi_1-2A_0\psi_2, \label{syseq1} \\
\dfrac{\partial \psi_3}{\partial y}&=2 A_3\psi_2-\frac{2}{3}A_2\psi_3, \label{syseq2}\\
\dfrac{\partial \psi_1}{\partial y}+2\dfrac{\partial \psi_2}{\partial x}&=\frac{4}{3} A_2\psi_1-\frac{2}{3}A_1\psi_2-2A_0 \psi_3,\label{syseq3} \\
\dfrac{\partial \psi_3}{\partial x}+2\dfrac{\partial \psi_2}{\partial y}&=2 A_3\psi_1-\frac{4}{3}A_1\psi_3+\frac{2}{3}A_2 \label{syseq4}\psi_2,
\end{align}
where
\begin{equation}\label{eqAs}
A_0=-\Gamma^2_{11}, \quad A_1=\Gamma^1_{11}-2\Gamma^2_{12}, \quad A_2=2\Gamma^1_{12}-\Gamma^2_{22},\quad A_3=\Gamma^1_{22}.
\end{equation}
The functions $A_i$ will reappear as coefficients of the unparametrised geodesic equations in (\ref{eqUmparGeo}) and are all independent of the representative in $\left[\Gamma\right]$.

According to Theorem \ref{thmmeteq}, if there exists a solution of (\ref{syseq1}--\ref{syseq4}) such that $\Delta\equiv\psi_1\psi_3-\psi_2^2\not=0$, then the corresponding projective structure admits a metric connection. The metric components (\ref{metcont}) with low indices read
\begin{equation}\label{metrisable}
E=\psi_1/\Delta^2,\quad F=\psi_2/\Delta^2,\quad G=\psi_3/\Delta^2, \quad \Delta=\psi_1\psi_3-\psi_2^2\not=0,
\end{equation}
where $E=g_{11}, F=g_{12}, G=g_{22}$.

\section{Hamiltonian description of geodesics}
Consider the metric
$$
g=E(x,y)dx^2+2F(x,y) dxdy+G(x,y) dy^2
$$
and the geodesic Hamiltonian
\begin{equation}\label{eqHamGeo}
H=\frac{1}{2}g_{ab}p^a p^b.
\end{equation}

After eliminating $p_a:=g_{ab}p^b$ in Hamilton's equations
\begin{align}
\dot x^a&=\dfrac{\partial H}{\partial p_a} \\
\dot p_a&=-\dfrac{\partial H}{\partial x^a},
\end{align}
where $\dot{ }$ indicates derivative with respect to $t$, we find the geodesic equations 
\begin{equation}\label{eqGeodesics}
\ddot x^a+\Gamma^a_{bc}\dot x^b \dot x^c=0.
\end{equation}
We have identified $(x,y)=(x^1,x^2)$.

Since we are working in two dimensions, there are two of those. The unparametrised geodesic equation can be found by eliminating $t$ in (\ref{eqGeodesics}). It is a second order ODE at most cubic in $y'(x)$,
\begin{equation}\label{eqUmparGeo}
y''=A_3(x,y) y'^3+A_2(x,y) y'^2+A_1(x,y) y'+A_0(x,y),
\end{equation}
where the $A_i$'s are defined in (\ref{eqAs}) and ${}^\prime$ means $\dfrac{d}{dx}=\dfrac{1}{\dot x}\dfrac{d}{dt}$, assuming $\dot x\not=0$. To derive this equation, it is useful to suppose that $\dot x$ and $\dot y$ are different from zero, divide the equations for $x$ and $y$ in (\ref{eqGeodesics}) by $\dot x$ and $\dot y$, respectively and subtract them. Equation (\ref{eqUmparGeo}) is obviously independent of the parameter of the geodesic equations and can be derived in the same way even if we do not start with an affine parameter. Here we assumed that we can invert the function $x=x(t)$, which is true in sufficiently small neighbourhoods of points $t$ where $\dot x(t)\not= 0$ as assumed above.

A particular interesting case for the integrability point of view is when the metric $g$ admits a Killing vector $K$. In this case the following quantity is conserved along geodesics
\begin{equation}\label{ConsKilling}
K_a \dot x^a=g_{ab}K^{b}\dot x^a.
\end{equation}

By construction, the Hamiltonian (\ref{eqHamGeo}) is also conserved along unparametrised geodesics. Thus, using (\ref{ConsKilling}) to eliminate $t$ in (\ref{eqHamGeo}) we find that
\begin{equation}\label{eq1stIntegral}
\frac{1}{(K_1+K_2 y')^2}\left(E(x,y)+2F(x,y)y'+G(x,y)y'^2\right)
\end{equation}
is also conserved along geodesics. In other words, (\ref{eq1stIntegral}) is a first integral of the unparametrised geodesic equation (\ref{eqUmparGeo}).

This result generalises to the following theorems.
\begin{theo}\label{thm1stintegral}
If a projective structure in $n=2$ dimensions admits at least two linearly independent solutions $\sigma^{(1)}$ and $\sigma^{(2)}$ to (\ref{eqmetsym}), then
$$
I(x,y,y'):=\frac{\sigma^{(1)}_{11}+2\sigma^{(1)}_{12}y'+\sigma^{(1)}_{22}y'^2}{\sigma^{(2)}_{11}+2\sigma^{(2)}_{12}y'+\sigma^{(2)}_{22}y'^2}
$$
is a first integral of the unparametrised geodesic equation (\ref{eqUmparGeo}).

If one of the solutions, say $\sigma^{(2)}$ is degenerate, then the projective structure is metrisable and the corresponding metric $g$ admits a Killing vector.
\end{theo}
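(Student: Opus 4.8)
For the first assertion the plan is to show that each individual solution of (\ref{eqmetsym}) produces, not quite a first integral, but a first integral up to a \emph{universal} multiplicative factor. To a solution $\sigma$, with components $\psi_1=\sigma_{11}$, $\psi_2=\sigma_{12}$, $\psi_3=\sigma_{22}$, associate the fibrewise quadratic
\[
Q_\sigma(x,y,y')=\psi_1+2\psi_2\,y'+\psi_3\,y'^2 .
\]
I would differentiate $Q_\sigma$ with respect to $x$ along a solution of the unparametrised geodesic equation (\ref{eqUmparGeo}), that is, substitute $y''=A_0+A_1y'+A_2y'^2+A_3y'^3$ and then replace every first derivative $\partial_x\psi_i,\partial_y\psi_i$ appearing on the right using the linear system (\ref{syseq1})--(\ref{syseq4}). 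A direct computation, which I expect to collapse, should give
\[
\frac{dQ_\sigma}{dx}=\lambda(x,y,y')\,Q_\sigma,\qquad \lambda=\tfrac{2}{3}A_1+\tfrac{4}{3}A_2\,y'+2A_3\,y'^2 ,
\]
with $\lambda$ depending only on the coefficients $A_i$ of (\ref{eqAs}) and not on $\sigma$. Granting this, for two solutions one gets $\frac{d}{dx}\big(Q_{\sigma^{(1)}}/Q_{\sigma^{(2)}}\big)=0$ wherever $Q_{\sigma^{(2)}}\neq0$, which is precisely the statement that $I$ is a first integral of (\ref{eqUmparGeo}); linear independence of $\sigma^{(1)}$ and $\sigma^{(2)}$ ensures that $Q_{\sigma^{(1)}}$ is not a constant multiple of $Q_{\sigma^{(2)}}$, so $I$ genuinely depends on $y'$.

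For the second assertion, assume $\sigma^{(2)}$ is degenerate, $\det(\sigma^{(2)}_{ab})\equiv 0$. Since (\ref{eqmetsym}) is linear, $\sigma_\lambda:=\sigma^{(1)}+\lambda\,\sigma^{(2)}$ is a solution for every $\lambda\in\R$, and $\det(\sigma_\lambda)_{ab}$ is, at each point, a quadratic polynomial in $\lambda$ whose leading coefficient is $\det(\sigma^{(2)}_{ab})=0$; hence $\det(\sigma_\lambda)_{ab}=\det(\sigma^{(1)}_{ab})+\lambda\,m$, affine in $\lambda$, with $m$ the mixed (polarised) determinant of $\sigma^{(1)}$ and $\sigma^{(2)}$. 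I would first exclude the possibility that this vanishes identically in $\lambda$ on a non-empty open set: there $\sigma^{(1)}$ is also of rank $\le 1$, and the vanishing of $m$ forces the two rank-one tensors to be pointwise proportional, $\sigma^{(1)}_{ab}=\rho\,\sigma^{(2)}_{ab}$; substituting this into (\ref{eqmetsym}) yields $\big(\partial_{(a}\rho\big)\sigma^{(2)}_{bc)}=0$, and contracting with vectors tangent to, respectively transverse to, the kernel line of $\sigma^{(2)}$ forces $\rho$ to be locally constant, contradicting linear independence (alternatively one may rule this out via the integrable kernel distributions of Theorem \ref{thmintdist}). Hence $\det(\sigma_\lambda)_{ab}\not\equiv 0$, so for a generic $\lambda_0$ the tensor $\sigma_{\lambda_0}$ is non-degenerate on a dense open set, and Theorem \ref{thmmeteq} applies to it: the projective structure is metrisable, with metric $g$ given by (\ref{metcont})--(\ref{metrisable}) built from $\sigma_{\lambda_0}$.

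It remains to extract the Killing vector. Because $\sigma^{(2)}$ is degenerate, the discriminant of $Q_{\sigma^{(2)}}$ equals $4(\psi_2^2-\psi_1\psi_3)=-4\det(\sigma^{(2)}_{ab})=0$, so $Q_{\sigma^{(2)}}$ is a perfect square: on the dense open set where $\sigma^{(2)}$ has rank one we may write $\sigma^{(2)}_{ab}=\varepsilon\,\kappa_a\kappa_b$ with $\varepsilon=\pm 1$ and $\kappa$ a nowhere-zero $1$-form, whence $Q_{\sigma^{(2)}}=\varepsilon\,(\kappa_a\hat u^a)^2$ with $\hat u^a=(1,y')$ the direction of the unparametrised geodesic. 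Using (\ref{metrisable}) for $\sigma_{\lambda_0}$ one also has $Q_{\sigma_{\lambda_0}}=\Delta_{\lambda_0}^{2}\,g_{ab}\hat u^a\hat u^b$ with $\Delta_{\lambda_0}=\det(\sigma_{\lambda_0})_{ab}$, so, along an affinely parametrised geodesic $\dot x^a$ of $g$ (for which $\dot x^a=\dot x\,\hat u^a$),
\[
\frac{Q_{\sigma_{\lambda_0}}}{Q_{\sigma^{(2)}}}=\frac{\Delta_{\lambda_0}^{2}\,g_{ab}\dot x^a\dot x^b}{\varepsilon\,(\kappa_a\dot x^a)^2}=\frac{2\,\Delta_{\lambda_0}^{2}}{\varepsilon}\,\frac{H}{(\kappa_a\dot x^a)^2},
\]
where $H=\tfrac12 g_{ab}\dot x^a\dot x^b$ is the geodesic Hamiltonian. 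By the first part the left-hand side is constant along the geodesic, and $H$ is constant along it as well; hence $(\kappa_a\dot x^a)^2/\Delta_{\lambda_0}^{2}$ is constant, i.e.\ $K_a\dot x^a$ with $K_a:=\kappa_a/\Delta_{\lambda_0}$ is conserved (up to sign) along every affinely parametrised geodesic of $g$. A function linear in the velocities that is conserved along all affine geodesics comes from a Killing $1$-form, so $\nabla_{(a}K_{b)}=0$ and $K^a=g^{ab}K_b$ is the required Killing vector, non-trivial since $\kappa$ is nowhere zero on a dense set; this step is the computation producing (\ref{eq1stIntegral}) run in reverse.

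The computation behind the universal multiplier $\lambda$ is routine, if slightly lengthy. The genuine obstacles lie in the second assertion: cleanly excluding the everywhere-degenerate pencil, without which Theorem \ref{thmmeteq} cannot be invoked at all, and the normalisation bookkeeping needed to turn the ``square root'' $\kappa$ of the degenerate solution, rescaled by $\det(\sigma_{\lambda_0})$, into an honest Killing form — where the joint conservation of $I$ and $H$ is what makes the identification transparent.
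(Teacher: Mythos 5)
Your proposal is correct, and both halves take routes that differ from the paper's. For the first claim the paper simply observes that each $\sigma^{(i)}$ is a Killing tensor, so $I^{(i)}=\sigma^{(i)}_{ab}\dot x^a\dot x^b$ is exactly conserved in an affine parametrisation and the ratio $I^{(1)}/I^{(2)}$ kills the parametrisation; you instead stay unparametrised and show each quadratic $Q_\sigma$ is conserved up to the universal factor $\lambda=\tfrac23A_1+\tfrac43A_2y'+2A_3y'^2$. This is the same fact in different clothes (your $\lambda$ equals $2(\Pi^1_{11}+2\Pi^1_{12}y'+\Pi^1_{22}y'^2)$ for the trace-free Thomas representative underlying (\ref{syseq1})--(\ref{syseq4}), i.e.\ it is $-2\ddot X/\dot X^2$ restated), and your substitution scheme is well posed because the four combinations of $\partial\psi_i$ occurring in $dQ_\sigma/dx$ are exactly the four left-hand sides of (\ref{syseq1})--(\ref{syseq4}); the paper's version is shorter, yours makes the "conserved up to a universal multiplier" mechanism explicit. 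For the second claim the paper outsources metrisability to Lemma 4.3 of \cite{Bryant2009}, whereas you give a self-contained pencil argument (affine-in-$\lambda$ determinant, polarised determinant $m=(u_1v_2-u_2v_1)^2$ detecting proportionality, and $\rho$ forced constant by the linear equation) — a genuine and correct replacement, modulo the usual restriction to a neighbourhood of a point where $\det\sigma_{\lambda_0}\neq0$, which the cited lemma also only delivers locally. Finally, for the Killing vector the paper applies the projective change $\Upsilon_a=\nabla_a(-\tfrac12\ln|\Delta|)$ to convert $\nabla_{(a}\omega_{b)}=0$ for the special connection into $\hat\nabla_{(a}\hat\omega_{b)}=0$ for the Levi-Civita connection with $\hat\omega=\omega/\Delta$; you reach the identical form $K=\kappa/\Delta$ by combining the conservation of $I$ with that of the geodesic Hamiltonian $H$ and invoking the standard converse that a velocity-linear conserved quantity is a Killing form (your continuity remark to pass from $(K_a\dot x^a)^2$ constant to $K_a\dot x^a$ constant is needed and suffices). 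What your route buys is independence from the projective transformation law for Killing equations; what the paper's buys is that it exhibits the rescaling $\omega\mapsto\omega/\Delta$ as an instance of projective invariance rather than as an a posteriori identification.
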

\begin{proof}
For the first part, it suffices to notice that equation (\ref{eqmetsym}) means that $\sigma^{(i)}$ $(i=1,2)$ are Killing tensors. Therefore $I^{(i)}(t):=\sigma^{(i)}_{11}\dot x^2+2\sigma^{(i)}_{12}\dot x\dot y+\sigma^{(i)}_{22}\dot y^2$ are conserved along geodesics, that is to say $\dot I^{(i)}\equiv 0$ if $(x(t),y(t))$ parametrise a geodesic with affine parameter $t$. Then
$$
\dfrac{d}{dx}I(x,y(x),y'(x))=\frac{1}{\dot x}\dfrac{d}{dt}\frac{I^{(1)}(t)}{I^{(2)}(t)}=0,
$$
where we have used $\dot y/\dot x=y'$ to write $I=I^{(1)}/I^{(2)}$.

For the second part, the projective structure is metrisable as a consequence of Lemma 4.3 in \cite{Bryant2009}. Now, if $\sigma^{(2)}$ is degenerate, then there exists a non-vanishing $1$-form $\omega$ such that $\sigma^{(2)}_{ab}=\omega_a\omega_b$. Then the metrisability equations (\ref{eqmetsym}) yield
$$
\nabla_{(a}\omega_{b)}=0.
$$
It does not yet mean that $g^{ab}\omega_b$ is a Killing vector of $g$ because $\nabla$ is not the Levi-Civita connection, which is obtained by applying a transformation (\ref{projequiv}) with $\Upsilon_a=\nabla_a \left(-\frac{1}{2}\ln\left|\Delta\right|\right)$, so that
$$
\hat\nabla_{(a}\hat\omega_{b)}=0,
$$
where $\hat\omega=\frac{\omega}{\Delta}$ and $\hat\nabla$ is the Levi-Civita connection of the metric $g=\sigma/\Delta^2$ (c.f. equation (\ref{metrisable})). This means that $g^{ab}\hat\omega_b$ is a Killing vector of $g$.
\end{proof}
Again, we stress that $\Upsilon_a=\nabla_a \left(-\frac{1}{2}\ln\left|\Delta\right|\right)$ above is not a tensor, but an equi-tensor (c.f. paragraphs below equation (\ref{eqmetsym})).

The proof of the following theorem is included in the previous one and gives a relation between Killing forms and degenerate solutions to the metrisability equations in two dimensions.

\begin{theo}\label{thmKFormsDeg}
A two-dimensional projective structure $\left[\nabla\right]$ has a special representative $\nabla$ admitting a Killing form $\omega_a$, i.e., $\nabla_{(a}\omega_{b)}=0$, if and only if it admits a degenerate solution $\sigma_{ab}=\omega_{(a}\omega_{b)}$ to its metrisability equations (\ref{eqmetsym}). Moreover, if this projective structure is metrisable, then its degree of mobility is at least $2$.
\end{theo}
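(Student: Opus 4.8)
The plan is to realise the equivalence through the correspondence $\omega_a\longleftrightarrow\sigma_{ab}=\omega_{(a}\omega_{b)}=\omega_a\omega_b$ between Killing $1$-forms of the special representative $\nabla$ and rank-one solutions of the metrisability equation~(\ref{eqmetsym}), and then to read off the last assertion from this correspondence together with Theorem~\ref{thmmeteq}. One works throughout with a single special representative $\nabla$ of the class $[\nabla]$, so that ``$\nabla_{(a}\omega_{b)}=0$'' and ``$\nabla_{(a}\sigma_{bc)}=0$'' refer to the same connection; since only the local problem is at stake, the domain may be shrunk freely, in particular to a neighbourhood on which the relevant tensor does not vanish. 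For the implication from a Killing form to a degenerate solution, put $\sigma_{ab}=\omega_a\omega_b$; applying the Leibniz rule, the two resulting terms coincide after full symmetrisation, so
\[
\nabla_{(a}\sigma_{bc)}=2\,\omega_{(a}\nabla_b\omega_{c)} .
\]
Thus $\nabla_{(b}\omega_{c)}=0$ forces $\nabla_{(a}\sigma_{bc)}=0$, and $\sigma_{ab}=\omega_a\omega_b$ has rank at most one, hence $\det(\sigma)=0$; this is the required degenerate solution of~(\ref{eqmetsym}).

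For the converse, let $\sigma_{ab}$ be a degenerate solution which is non-zero on the domain under consideration. In two dimensions a non-zero real symmetric tensor with vanishing determinant has rank one, hence equals $\pm\,\omega_a\omega_b$ for a nowhere-vanishing $1$-form $\omega$; replacing $\sigma$ by $-\sigma$ if necessary (which preserves the linear equation~(\ref{eqmetsym})), one may take $\sigma_{ab}=\omega_a\omega_b$. The displayed identity then gives $\omega_{(a}\nabla_b\omega_{c)}=0$, i.e.\ the full symmetrisation of $\omega_a S_{bc}$ vanishes, where $S_{bc}:=\nabla_{(b}\omega_{c)}$. Writing this as $\omega_aS_{bc}+\omega_bS_{ca}+\omega_cS_{ab}=0$ and either contracting with a vector $X^a$ with $\omega_aX^a=1$ or evaluating in a local coframe in which $\omega=dx^1$, one solves a triangular linear system and obtains $S_{bc}=0$; thus $\omega$ is a Killing form of $\nabla$. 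This is precisely the computation already carried out inside the proof of Theorem~\ref{thm1stintegral}.

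Finally, suppose $[\nabla]$ is in addition metrisable. By Theorem~\ref{thmmeteq} there is a non-degenerate solution $\sigma^{(0)}$ of~(\ref{meteq}), which in two dimensions is the same as a solution of~(\ref{eqmetsym}). Since a non-degenerate symmetric tensor cannot be a scalar multiple of the rank-one tensor $\omega_a\omega_b$, the solutions $\sigma^{(0)}$ and $\omega_a\omega_b$ of~(\ref{eqmetsym}) are linearly independent, so the solution space of~(\ref{eqmetsym}) is at least two-dimensional and the degree of mobility of $[\nabla]$ is at least $2$. The one genuinely delicate point I anticipate is the bookkeeping around the zero set of $\sigma$: the $1$-form $\omega$, and with it the identity $S_{bc}=0$, are produced only where $\sigma\neq0$. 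But $\sigma$ together with $\mu^a=\tfrac{1}{n+1}\nabla_b\sigma^{ab}$ satisfies a closed first-order linear (prolonged) system, so a non-trivial solution cannot vanish on any open set; hence $\{\sigma\neq0\}$ is open and dense, restricting to it loses nothing for the local statement, and the conclusion about the degree of mobility holds on every such neighbourhood.
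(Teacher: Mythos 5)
Your proof is correct and follows essentially the same route as the paper, which establishes the equivalence via the rank-one factorisation $\sigma_{ab}=\omega_a\omega_b$ and the symmetrised Leibniz identity $\nabla_{(a}\sigma_{bc)}=2\,\omega_{(a}\nabla_b\omega_{c)}$ inside the proof of Theorem \ref{thm1stintegral}, and gets the degree-of-mobility claim from the linear independence of a non-degenerate solution and a rank-one one. Your added care about the sign of $\sigma$, the triangular system forcing $S_{bc}=0$, and the zero set of $\sigma$ only makes explicit what the paper leaves implicit.
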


The converse of the last statement is not true. In fact, if the degree of mobility of a projective structure is greater than $1$, then the metrics do not necessarily admit a Killing vector. As a counterexample, consider the pair of metrics
$$
h_1=\left(X(x)-Y(y)\right)\left(dx^2+dy^2\right)
$$
and
$$
\quad h_2=\left(\frac{1}{Y(y)}-\frac{1}{X(x)}\right)\left(\frac{dx^2}{X(x)}+\frac{dy^2}{Y(y)}\right),
$$
which is projectively equivalent and whose unparametrised geodesic equation is
\begin{equation}\label{eqnoKilling}
y''+\frac{1}{2\left(X(x)-Y(y)\right)}\left(Y'+X' y'+Y' y'^2+X' y'^3\right)=0.
\end{equation}
These metrics do not admit a Killing vector in general, but it is clear that the projective structure has degree of mobility at least $2$.

From Theorem \ref{thm1stintegral}, a first integral of (\ref{eqnoKilling}) is 
$$
\left(\frac{\det h_1}{\det h_2}\right)^{2/3}\frac{(h_2)_{ab}\dot x^a\dot x^b}{(h_1)_{ab}\dot x^a\dot x^b}=\frac{Y(y)+X(x)y'^2}{1+y'^2}.
$$

In section \ref{secmetP}, we show that all Painlev\'e equations admit a special representative admitting a Killing form (following Theorem \ref{thmKFormsDeg}). The problem of existence of Killing forms for two-dimensional affine connections was solved in \cite{FCMD2015} (c.f. also Chapter \ref{chapterKilling}), where it was also shown that the semi-invariant $\nu_5$ defined in \cite{HietDry2002} necessarily  vanishes for projective structures having special representatives admitting Killing forms. Therefore, our results allow us to say that $\nu_5$ vanishes for all Painlev\'e equations \cite{HietDry2002}.

The simplest projective structure is the flat one, namely, whose unparametrised geodesic equation is $y''=0$ (up to point transformations). In this case, the general solution to equations (\ref{syseq1}--\ref{syseq4}) has $6$ constants of integration and the metric has constant curvature \cite{East2008}. Actually, these three properties are equivalent \cite{Bryant2009,East2008}.

Let us state a result known to R. Liouville giving necessary and sufficient conditions for the projective structure to be flat.
\begin{theo}\cite{Liouville1886}
A second order ODE $y''=\Lambda(x,y,y')$ is equivalent to $y''=0$ under a point transformation, i.e., the projective structure is flat, if and only if it is of the form (\ref{eqUmparGeo}) and the following quantities, called \textit{Liouville invariants}, vanish identically
\begin{align*}
L_1&=\frac{2}{3}\dfrac{\partial^2A_1}{\partial x\partial y}-\frac{1}{3}\dfrac{\partial^2A_2}{\partial x^2}-\dfrac{\partial^2A_0}{\partial y^2}+A_0\dfrac{\partial A_2}{\partial y}+A_2\dfrac{\partial A_0}{\partial y}-A_3\dfrac{\partial A_0}{\partial x}- \\
&-2 A_0 \dfrac{\partial A_3}{\partial x}-\frac{2}{3}A_1\dfrac{\partial A_1}{\partial y}+\frac{1}{3}A_1\dfrac{\partial A_2}{\partial x},
\end{align*}
\begin{align*}
L_2&=\frac{2}{3}\dfrac{\partial^2A_2}{\partial x\partial y}-\frac{1}{3}\dfrac{\partial^2A_1}{\partial x^2}-\dfrac{\partial^2A_3}{\partial x^2}-A_3\dfrac{\partial A_1}{\partial x}-A_1\dfrac{\partial A_3}{\partial x}+A_0\dfrac{\partial A_3}{\partial y}+\\
&+2 A_3 \dfrac{\partial A_0}{\partial y}+\frac{2}{3}A_2\dfrac{\partial A_2}{\partial x}-\frac{1}{3}A_2\dfrac{\partial A_1}{\partial y}.
\end{align*}
\end{theo}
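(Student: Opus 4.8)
The plan is to recast the statement as the vanishing of a single projectively invariant tensor built from the coefficients $A_0,\dots,A_3$ of the ODE, and then to recognise that tensor as the two-dimensional projective Cotton tensor. First I would dispose of the ``form'' requirement. A point transformation of the plane sends $y''=0$ to an equation whose right-hand side, after eliminating the parameter exactly as in the derivation of (\ref{eqUmparGeo}), is a polynomial of degree at most three in the new first derivative; hence the class of equations of the shape (\ref{eqUmparGeo}) is preserved by the point pseudogroup and contains $y''=0$, so only such equations can be point-equivalent to it. Conversely, every cubic equation (\ref{eqUmparGeo}) determines a torsion-free projective class via (\ref{eqAs}), and by Definition~\ref{DefProjSt} it is point-equivalent to $y''=0$ precisely when that projective structure is flat. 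So the real content is to pin down the invariants detecting projective flatness of the class attached to the $A_i$.

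To do this I would take as representative the Thomas symbols $\Pi^c_{ab}$ assembled from (\ref{eqAs}); being special, their trace part $\beta_{ab}$ in (\ref{decRiemann}) vanishes and the projective Schouten tensor $P_{ab}$ is symmetric. One then computes the curvature and extracts $P_{ab}$ as an explicit expression in the $A_i$ and their first derivatives. Since $n=2$, the projective Weyl tensor in (\ref{decRiemann}) vanishes identically (in two dimensions the curvature of any affine connection is $\epsilon_{ab}$ times a $(1,1)$-tensor, which leaves no totally trace-free remainder), so the only obstruction to projective flatness is the Cotton tensor $Y_{abc}:=2\nabla_{[a}P_{b]c}$, and precisely because the Weyl tensor is zero this $Y_{abc}$ is a genuine projective invariant. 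A flat connection forces $Y_{abc}=0$; conversely, the vanishing of the Weyl and Cotton tensors makes the normal projective (Cartan) connection flat, hence there are local coordinates in which $\Pi^c_{ab}=0$, i.e.\ the equation becomes $y''=0$. The remaining step is the identification: an antisymmetric-in-$ab$ tensor with symmetric $P$ in two dimensions has exactly two independent components ($Y_{abc}=\epsilon_{ab}Z_c$), and substituting the computed $P_{ab}$ shows that, up to constant factors, $Z_1$ and $Z_2$ are exactly $L_1$ and $L_2$ — matching the second-order terms $\partial^2_{xy}A_1,\ \partial^2_{xx}A_2,\ \partial^2_{yy}A_0,\dots$ and the quadratic terms $A_0\partial_yA_2,\ A_1\partial_xA_2,\dots$ in the statement. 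This bookkeeping is the main obstacle: it is a lengthy substitution, and one must keep in mind that $\Pi^c_{ab}$ transforms as a connection only under unit-Jacobian coordinate changes, so it is the projective invariance of $Y_{abc}$ in two dimensions that certifies $L_1,L_2$ transform as the components of a tensor under the full point pseudogroup, making their vanishing a well-posed condition on the equation.

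Finally I would assemble the two implications: if the structure is flat then $Y_{abc}=0$, hence $L_1=L_2=0$; and if $L_1=L_2=0$ then $Y_{abc}=0$, which together with the automatic vanishing of the Weyl tensor makes the structure projectively flat, i.e.\ point-equivalent to $y''=0$. For the second implication one can avoid invoking the Cartan equivalence machinery altogether: $Y_{abc}$ is the obstruction to the overdetermined linear system (\ref{syseq1}--\ref{syseq4}) admitting a six-dimensional solution space, a six-dimensional solution space is equivalent to projective flatness (see the remark after (\ref{metrisable}) and \cite{East2008,Bryant2009}), and any non-degenerate solution $\sigma^{ab}$ then produces, through Theorem~\ref{thmmeteq}, a constant-curvature metric whose geodesics are straight lines in normal coordinates, exhibiting the point transformation to $y''=0$ explicitly. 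This reproves the result of \cite{Liouville1886}.
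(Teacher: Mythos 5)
The paper does not actually prove this statement: it is quoted from \cite{Liouville1886} and used as a black box, with only the follow-up remark that $L_1$ and $L_2$ ``can be interpreted as the components of a $1$-form''. Your proposal supplies exactly the modern proof that underlies that remark, and it is the right one. The skeleton is sound: point transformations preserve the class of ODEs cubic in $y'$; the ODE determines a projective class via (\ref{eqAs}) and is point-equivalent to $y''=0$ iff that class is flat; in $n=2$ the projective Weyl part of (\ref{decRiemann}) vanishes identically, so the sole obstruction to flatness is the Cotton tensor $Y_{abc}=\nabla_{[a}P_{b]c}$, which is projectively invariant precisely because Weyl vanishes; and $Y_{abc}=\epsilon_{ab}Z_c$ with $Z_c\propto(L_1,L_2)$ when $P_{ab}$ is computed from the Thomas symbols. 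This is consistent with the machinery the paper itself sets up later (it defines $Y_{cdb}=\nabla_{[c}\Rho_{d]b}$ and the Cotton tensor $L_b=\epsilon^{cd}\nabla_c\Rho_{db}$ in Chapter \ref{chapterKilling}), so your route meshes well with the rest of the thesis.

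Two caveats. First, the entire identification $Z_c\propto(L_1,L_2)$ is asserted, not performed; since this is where all the specific coefficients $\tfrac{2}{3},\tfrac{1}{3},2,\dots$ in the statement come from, a complete proof must actually compute $P_{ab}$ from $\Pi^c_{ab}$ and carry out the substitution (including checking that the components come out with the right weight, which is why only the simultaneous vanishing of $L_1,L_2$ is point-invariant — you do flag this correctly). Second, your alternative converse via the metrisability system is slightly off in two places: a nondegenerate solution of (\ref{meteq}) yields a constant-curvature metric whose geodesics are straight lines in \emph{Beltrami (projective)} coordinates, not in normal coordinates (in normal coordinates only the geodesics through the origin are straight); and the claim that $Y_{abc}$ is the curvature obstruction to a six-dimensional solution space of (\ref{syseq1}--\ref{syseq4}) is itself a nontrivial prolongation computation from \cite{Bryant2009,East2008}, so this route does not really avoid the work — it relocates it. Neither issue undermines the main argument via the flatness of the normal Cartan connection.
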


The vanishing of both $L_1$ and $L_2$ is invariant under point transformation, but not the vanishing of each one individually. This is because $L_1$ and $L_2$ can be interpreted as the components of a $1$-form recovered in \cite{FCMD2015} (c.f. also Chapter \ref{chapterKilling}).

In what follows, we will study metrisability of projective structures defined from the Painlevé equations considered as unparametrised geo\-desic equations of the form (\ref{eqUmparGeo}). To calculate first integrals, when they exist, we will find a Killing vector and use (\ref{eq1stIntegral}), however we could also use Theorem \ref{thm1stintegral}.

\section{Metrisability of Painlevé equations}\label{secmetP}
In this section, we analyse the metrisability of projective structures originated from the Painlevé equations, which are of the form (\ref{eqUmparGeo}).  

The six Painlevé equations are:
\begin{align*}
y''&=6y^2+x &\text{(PI)}&\\
y''&=2y^3+xy+\alpha &\text{(PII)}&\\
y''&=\frac{1}{y}y'^2-\frac{1}{x}y'+\alpha \frac{y^2}{x}+\frac{\beta}{x}+\gamma y^3+\frac{\delta}{y} &\text{(PIII)}&\\
y''&=\frac{1}{2y}y'^2+\frac{3}{2}y^3+4xy^2+2(x^2-\alpha)y+\frac{\beta}{y} &\text{(PIV)}&\\
y''&=\left(\frac{1}{2y}+\frac{1}{y-1}\right)y'^2-\frac{1}{x}y'+\frac{(y-1)^2}{x^2}\left(\alpha y+\frac{\beta}{y}\right)+& & \\
&+\gamma\frac{y}{x}+\delta \frac{y(y+1)}{y-1} &\text{(PV)}& \\
y''&=\frac{1}{2}\left(\frac{1}{y}+\frac{1}{y-1}+\frac{1}{y-x} \right)y'^2-\left(\frac{1}{x}+\frac{1}{x-1}+\frac{1}{y-x}\right)y'+& & \\
&+\frac{y(y-1)(y-x)}{x^2(x-1)^2}\left[\alpha+\beta\frac{x}{y^2}+\gamma\frac{x-1}{(y-1)^2}+\delta\frac{ x(x-1)}{(y-x)^2} \right] &\text{(PVI)}&
\end{align*}
where $\alpha,\dots,\delta$ are constant parameters and the metrisability properties will strongly depend on their values. When convenient we will indicate them in parenthesis in front of the equation label, for instance: (PII)$(\alpha)$, (PIII)$(\alpha,\beta,\gamma,\delta)$ and so on.

These equations do not have a cubic term in $y'$ ($A_3=0$). A general (not unique!) approach to seek solutions to the metrisability problem of this kind of projective structure is the following: 
\begin{itemize}
\item[0.] (optional) calculate the invariants in \cite{East2008}. If they do not vanish identically, then there is no non-trivial solution to (\ref{syseq1}--\ref{syseq4}) and there is no point following the next steps;
\item[1.] solve equation (\ref{syseq2}) for $\psi_3$ (note that $A_3=0$);\\
\item[2.] substitute $\psi_3$ in (\ref{syseq4}) and solve it for $\psi_2$; \\
\item[3.] apply the integrability condition $\partial_x\partial_y\psi_1=\partial_y\partial_x\psi_1,\, \forall x,y,$ to the remaining equations (\ref{syseq1}) and (\ref{syseq3});\\
\item[4.] if step 3. is successful, solve equations (\ref{syseq1}) and (\ref{syseq3}).
\end{itemize}
Step 0. is optional because it is equivalent to step 3. After steps 1. and 2., in general, we end up with a solution for $\psi_2$ and $\psi_3$ depending on arbitrary functions of $x$ and $y$ originated from integration of PDEs. Step 3. is necessary to fix those functions up to constants of integration.

The above steps may be troublesome to be performed by hand, but they are easily implemented on the computer using softwares of symbolic calculus.

In \cite{Bryant2009} it is shown that (PI) is not metrisable. The same is true for (PII) and (PIV). The reason is that step 3. above implies $\psi_2=\psi_3=0$ and thus we cannot define a metric via (\ref{metrisable}). On the other hand, (PIII), (PV) and (PVI)-projective structures are metrisable for special values of parameters, as we discuss below. The values of the parameters can be found in step 3. from the condition $\partial_x\partial_y\psi_1=\partial_y\partial_x\psi_1$, which involves $\alpha,\dots,\delta$ and should hold for all values of $x$ and $y$ in their domain of definition.

For other choices of parameters, step 3. forces us to choose $\psi_2=\psi_3=0$ and get a degenerate solution. An obvious degenerate solution is the trivial one $\psi_i=0$. However, for the Painlev\'e equations, there always exist non-trivial ones spanning a $1$-dimensional space, which is the maximal dimension allowed to degenerate solutions (c.f. Corollary \ref{conjecture} below or Lemma 4.3 of \cite{Bryant2009}). To see this, set $\psi_2=\psi_3=0$. Then (\ref{syseq1}--\ref{syseq4}) reduce to a closed overdetermined system for $\psi_1$ which has a non-vanishing solution if and only if $\partial_y A_1=2\partial_x A_2$. It is straightforward that this condition is fulfilled by all equations (PI--PVI), that is why all invariants of \cite{Bryant2009} vanish for Painlev\'e equations. The degenerate solutions corresponding to each Painlev\'e equation are, up to a multiplicative constant, (PI,PII): $\psi_1=1$, (PIII): $\psi_1=\frac{y^{4/3}}{x^{2/3}}$, (PIV): $\psi_1= y^{2/3}$, (PV): $\psi_1=\frac{(1-y)^{4/3} y^{2/3}}{x^{2/3}}$ and (PVI): $\psi_1=(x-y)^{2/3}\left[\frac{(y-1) y}{(x-1) x}\right]^{2/3}$.

After analysing (PIII), (PV) and (PVI), we explain in Section \ref{SecCoalescence} how the metrisability results of (PIII) are related to those of (PV) by the method of coalescence of Painlev\'e equations.

\subsection{Painlevé III}
Applying steps 1. to 4. above lead us to conclude that (PIII) is metrisable only in the following cases: $\alpha=\gamma=0$ or $\beta=\delta=0$. If all parameters are zero, then the projective structure is flat. Actually, both cases are essentially the same since the change of coordinates $y\mapsto \frac{1}{y}$ induces (PIII)$(\alpha,\beta,\gamma,\delta)\to$ (PIII)$(-\beta,-\alpha,-\delta,-\gamma)$ and all results from one case can be recovered from the other through this map. Therefore, we only explicitly present the detailed results for $\beta=\delta=0$. We have two subcases to analyse, $(\alpha,\gamma)\not=(0,0)$ and $(\alpha,\gamma)=(0,0)$.

%
%
%

\subsubsection{Case $\beta=\delta=0$ and $(\alpha,\gamma)\not=(0,0)$}

If $\beta=\delta=0$ and $(\alpha,\gamma)\not=(0,0)$, we have a two-dimensional family of solutions giving rise to the metric
\begin{align}\label{metPIII1}
g=&\frac{B-Axy(2\alpha+\gamma xy)}{A^2x^2(A-B+2A\alpha xy+A\gamma x^2y^2)^2} dx^2+ \nonumber \\
+&\frac{2}{Axy(A-B+2A\alpha xy+A\gamma x^2y^2)^2}dx dy+\nonumber \\
+&\frac{1}{A y^2(A-B+2A\alpha xy+A\gamma x^2y^2)^2}dy^2,
\end{align}
where $A$ and $B$ are arbitrary constants. The metric admits a one-parameter family of isometries $(x,y)\mapsto (e^s x,e^{-s} y)$, generated by the Killing vector
$$
K=x\dfrac{\partial}{\partial x}-y\dfrac{\partial}{\partial y}.
$$
Then, if we define $\tilde B=\frac{B}{A}-1$,
$$
C_K=\frac{\dot x}{x\left(\tilde B-2\alpha xy-\gamma x^2 y^2 \right)}
$$
is conserved along geodesics and
$$
I=x^2 \left(\frac{y'}{y}\right)^2+2x\frac{y'}{y}-2\alpha xy-\gamma x^2 y^2
$$
is a first integral of (PIII) \cite{Gromak1999}. Alternatively, we could have used Theorem \ref{thm1stintegral} to find the same results, in fact notice that for $A=0$ we have a degenerate solution.


By defining new coordinates $r=xy$ and $s=\ln |x|$ the metric takes the form
$$
g=\frac{1}{A^3\left(-\tilde B+2\alpha r+\gamma r^2\right)^2r^2}dr^2-\frac{1}{A^3\left(-\tilde B+2\alpha r+\gamma r^2\right)}ds^2.
$$
By rescaling $r$ we can set either $\alpha$ to $1$ if $\alpha\not=0$ or $\gamma$ to $\gamma/{|\gamma|}$ if $\gamma\not=0$.

%

\subsubsection{Case $\alpha=\beta=\gamma=\delta=0$}

If $\alpha=\beta=\gamma=\delta=0$, we have a six-dimensional family of solutions, meaning that (PIII) gives rise to a projectively flat metric of constant curvature \cite{East2008}
\begin{multline*}
g=\left[-12 (C_1 + C_2 \ln x + C_3 (\ln x)^2) (3 C_6 - 2 C_5 \ln y  + 3 C_3 (\ln y)^2)+\right. \\ 
\left.+ (6 C_4 - 3 C_2 \ln y  + 2 \ln x (C_5 - 3 C_3 \ln y))^2\right]^{-2}\times\\
\times\left\{432 \left[3 C_6 - 2 C_5 \ln y + 3 C_3 (\ln y)^2\right]\left(\dfrac{dx}{x}\right)^2+ \right. \\
\left. +432 \left[6 C_4 - 3 C_2 \ln y + 2 \ln x (C_5 - 3 C_3 \ln y)\right]\dfrac{dxdy}{xy}+\right.\\
\left.+1296 \left[C_1 + C_2 \ln x + C_3 (\ln x)^2\right]\left(\dfrac{dy}{y}\right)^2 \right\},
\end{multline*}
where $C_i$, $i=1,\dots,6$ are arbitrary constants.

Since the projective structure is flat, (PIII)(0,0,0,0) can be put in the form $\dfrac{d^2Y}{dX^2}=0$ with $Y=e^y$ and $X=\ln x$. To see this, set $C_2=C_3=C_4=C_5=0$, then it is clear that $X,Y$ are flat coordinates for this metric.

\subsection{Painlevé V}

We have analogous results for (PV). The projective structure is metrisable if and only if $\gamma=\delta=0$, and it is projectively flat admitting a metric with constant curvature if and only if $\alpha=\beta=\gamma=\delta=0$.

\subsubsection{Case $\gamma=\delta=0$ and $(\alpha,\beta)\not=(0,0)$}

If $\gamma=\delta=0$ and $(\alpha,\beta)\not=(0,0)$, we have a two-dimensional family of solutions giving rise to the metric
\begin{equation}\label{metPV1}
g=\frac{y}{A^2x^2[By+2A(\beta -\alpha y^2)]} dx^2+\frac{y}{A(y-1)^2[By+2A(\beta -\alpha y^2)]^2}dy^2,
\end{equation}
which admits $(x,y)\mapsto (e^s x, y)$ as as one-parameter family of isometries, generated by the Killing vector
$$
K=x\dfrac{\partial}{\partial x}.
$$
Then, if we define $\tilde B=\frac{B}{A}$, the quantity
$$
C_K=\frac{y}{2\beta-\tilde B y-2\alpha y^2}\frac{\dot x}{x}
$$
is conserved along geodesics and
$$
I=\frac{1}{y}\left(\frac{xy'}{y-1}\right)^2+\frac{2\beta}{y}-2\alpha y
$$
is a first integral of PV (c.f. \cite{Gromak1999}).

By defining $\theta=\ln|x|$, the metric becomes
$$
g=\frac{y}{A^3[\tilde By+2(\beta -\alpha y^2)]} d\theta^2+\frac{y}{A^3(y-1)^2[\tilde By+2(\beta -\alpha y^2)]^2}dy^2.
$$
By redefining $A$, $B$ and $\theta$ we can set either $\beta$ to $\frac{\beta}{|\beta|}$ if $\beta\not=0$ or $\alpha$ to $\frac{\alpha}{|\alpha|}$ if $\alpha\not=0$.



\subsubsection{Case $\alpha=\beta=\gamma=\delta=0$}

If $\alpha=\beta=\gamma=\delta=0$, we have again a six-dimensional family of solutions, meaning that it gives rise to a projectively flat metric of constant curvature \cite{East2008}
\begin{multline*}
g=27\left[9 (C_1^2 - C_3 C_6) + 9 (C_4^2 - 4 C_3 C_5) (\arctanh \sqrt y)^2 +\right. \\
\left. + 6 \arctanh\sqrt y  (2 C_2 C_3 - 3 C_1 C_4 + (C_2 C_4 - 6 C_1 C_5) \ln x) +\right.\\ 
\left.+\ln x  (6 C_1 C_2 - 9 C_4 C_6 + (C_2^2 - 9 C_5 C_6) \ln x)\right]^{-2}\times\\
\times\left\{ \frac{ 3 C_6 - 4 C_2 \arctanh \sqrt y  + 12 C_5 \left(\arctanh\sqrt y\right)^2}{x^2}dx^2-\right.\\
\left.-2\frac{3 C_1 + C_2 \ln x - 3 \arctanh\sqrt y (C_4 + 2 C_5 \ln x)}{x (-1 + y) \sqrt y}dxdy+\right.\\
\left.+3\frac{C_3 + \ln x (C_4 + C_5\ln x)}{(-1 + y)^2 y}dy^2\right\},
\end{multline*}
where $C_i$, $i=1,\dots,6$ are arbitrary constants.

Equation (PV)$(0,0,0,0)$ can be put in the form $\dfrac{d^2Y}{dX^2}=0$ with $Y=\ln\left(\frac{1+\sqrt y}{\sqrt{1-y}}\right)$ and $X=\ln x$. These coordinates can be found by choosing $C_1=C_2=C_4=C_5=0$, then the metric is flat in the coordinates $X,Y$.

\subsection{Painlevé VI}

(PVI) is metrisable if and only if $\alpha=\beta=\gamma=0$, $\delta=\frac{1}{2}$. This choice of parameters is well-known. In fact, in this case (PVI) has a solution given in terms of the elliptic integral \cite{Gromak1999,casale07}
\begin{equation}\label{SolPVI}
\int^{y(x)}_0\frac{dw}{\sqrt{w(w-1)(w-x)}}=a\omega_1(x)+b\omega_2(x), 
\end{equation}
where the right hand side is the general solution of the Picard-Fuchs equation 
\begin{equation}\label{PFeq}
4x(x-1)\omega''(x)-4(2x-1)\omega'(x)-\omega(x)=0, 
\end{equation}
with $a$ and $b$ constants. Since the constants of integration appear linearly in (\ref{SolPVI}), the projective structure is flat\footnote{This is actually the definition of projective flatness used by Liouville \cite{Liouville1886}}. In fact, PVI$\left(0,0,0,\frac{1}{2}\right)$ is trivial in the variables $Y=\frac{1}{\omega_2(x)}\int^{y}_0\frac{dw}{\sqrt{w(w-1)(w-x)}},\; X=\frac{\omega_1(x)}{\omega_2(x)}$.

A ``first integral" is given by \cite{casale07}
\begin{equation}\label{integralPVI}
I=\frac{y'B(x)}{\sqrt{y(y-1)(y-x)}}+\int^y_0\left[A(x)+\frac{B(x)}{2(w-x)}\right]\frac{dw}{\sqrt{w(w-1)(w-x)}},
\end{equation}
where $A$ and $B$ are a solution to the Picard-Fuchs adjoint equations
\begin{align*}
\begin{cases}
A'(x)&=B(x)\frac{1}{4x(x-1)}\\
B'(x)&=-B(x)\frac{1-2x}{x(x-1)}-A(x)
\end{cases}
\end{align*}


\subsection{Coalescence}\label{SecCoalescence}
The first five Painlev\'e equations (PI--PV) can be derived from (PVI) by the process of coalescence \cite{Ince}. It is described by the following change of variables and constants, upon taking the limit $\epsilon\to 0$,
\begin{align*}
(PVI) \to (PV)&: x\mapsto 1+\epsilon x, \quad \delta\mapsto \frac{\delta}{\epsilon^2} ,\quad \gamma\mapsto-\frac{\delta}{\epsilon^2}+\frac{\gamma}{\epsilon};\\
(PV) \to (PIV)&: x\mapsto 1+\sqrt{2} \epsilon x,\quad y\mapsto \frac{1}{\sqrt 2}\epsilon y,\quad \alpha\mapsto \frac{1}{2 \epsilon^4}, \quad \beta\mapsto\frac{\beta}{4}, \\
 &\gamma\mapsto -\frac{1}{\epsilon^4},\quad \delta\mapsto -\frac{1}{2 \epsilon^4}+\frac{\alpha}{\epsilon^2};\\
(PV) \to (PIII)&:  x\mapsto x^2, \quad y\mapsto 1+\epsilon xy,\quad \alpha\mapsto\frac{\gamma}{8\epsilon^2}+\frac{\alpha}{4\epsilon},  \\
&\beta\mapsto -\frac{\gamma}{8\epsilon^2},\quad \gamma\mapsto \frac{\epsilon \beta}{4},\quad \delta\mapsto \frac{\epsilon^2\delta}{8};\\
(PIV) \to (PII)&: x\mapsto -\frac{1}{\epsilon^3}+\frac{\epsilon x}{2^{2/3}},\quad y\mapsto \frac{1}{\epsilon^3}+\frac{2^{2/3}y}{\epsilon},\\
& \alpha\mapsto -\frac{1}{2\epsilon^6}-2\alpha,\quad \beta\mapsto -\frac{1}{2\epsilon^{12}};\\
(PIII) \to (PII)&: x\mapsto 1+\epsilon^2 x,\quad y\mapsto 1+2\epsilon y,\quad \alpha\mapsto -\frac{1}{2\epsilon^6},\\
& \beta\mapsto \frac{1}{2\epsilon^6}+\frac{2\alpha}{\epsilon^3},\quad \gamma\mapsto \frac{1}{4\epsilon^6},\quad \delta\mapsto-\frac{1}{4\epsilon^6};\\
(PII) \to (PI)&: x\mapsto -\frac{6}{\epsilon^{10}}+\epsilon^2 x,\quad y\mapsto \frac{1}{\epsilon^5}+\epsilon y,\quad \alpha\mapsto \frac{4}{\epsilon^{15}}.
\end{align*}
We can use this process to recover a metric of (PIII)$(\alpha,0,\gamma,0)$ from a metric of (PV)$(\alpha,\beta,0,0)$. To do so, it is necessary to start with (\ref{metPV1}) with the constants of integration
$$
A=\left(\frac{4 \gamma}{2 \alpha \epsilon + \gamma}\right)^{\frac{2}{3}},\quad B=\frac{(-\alpha\epsilon + \gamma)(4 \alpha \epsilon + 2 \gamma)^{\frac{1}{3}}}{
 \epsilon^2  \gamma^{\frac{1}{3}}}.
$$
Then, in the limit $\epsilon\to 0$, we find the metric (\ref{metPIII1}) with $A_{III}=1$ and $B_{III}=1-\frac{4\alpha^2}{\gamma}$, where we have attached the index $III$ to indicate that these constants $A_{III}$ and $B_{III}$ correspond to the metric of (PIII)$(\alpha,0,\gamma,0)$.  This is valid of course only if $\gamma\not=0$. In the case $\gamma=0$ we need $A= 4^{2/3}A_{III}$ and $B=\frac{2 \alpha A_{III} + (-A_{III} +B_{III}) \epsilon}{2^{2/3} \epsilon}$.

Taking into account the transformation of the parameters $\alpha,\dots,\delta$ in the coalescence procedure, it is clear that we cannot recover metrisability of (PIII) and (PV) from (PVI).

\section{Degree of mobility of degenerate solutions}\label{secDegSol}

One of the conditions of Theorem \ref{thmmeteq} for the existence of a metric, is that $\det(\sigma)\not=0$. In this section we will be interested in solutions of (\ref{eqmettracefree}) such that $\det(\sigma)=0$. More precisely, we want to find bounds to the dimension of the solution space of (\ref{eqmettracefree}) under the restriction that $\sigma$ is a singular matrix. We show that this bound is $\frac{k(k+1)}{2}$ for rank-$k$ solutions once the kernel is fixed (c.f. \cite{Bryant2009} for the result in dimension $2$).

Let us write the metrisability equations (\ref{eqmettracefree}) under the condition $\det(\sigma)=0$. We suppose that there exists a non-vanishing $1$-form $\omega$ such that 
\begin{equation}\label{det0}
\sigma^{ab}\omega_b=0.
\end{equation}

Differentiate (\ref{det0}) to obtain
\begin{equation}\label{dereq}
\left(\nabla_c\sigma^{ab}\right)\omega_b+\sigma^{ab}\nabla_c\omega_b=0,
\end{equation}
use (\ref{eqmettracefree}),
\begin{equation}\label{eq1}
\sigma^{ab}\nabla_c\omega_b+\frac{\delta^a_c}{n+1}\omega_b\nabla_d\sigma^{bd}+\frac{\omega_c}{n+1}\nabla_d\sigma^{ad}=0
\end{equation}
contract with $\omega_a$, use (\ref{det0}) and (\ref{dereq}) to find
\begin{equation}\label{eq3}
\sigma^{da}\nabla_d\omega_a=\omega_a\nabla_d\sigma^{da}=0.
\end{equation}
Substitute it in (\ref{eq1})
\begin{equation}\label{eq2}
\omega_c \nabla_d\sigma^{ad}=-(n+1)\sigma^{ab}\nabla_c\omega_b.
\end{equation}

This allows us to rewrite (\ref{meteq}) in a closed form
\begin{equation}\label{closetens}
\omega_d\nabla_c\sigma^{ab}=-\left(\delta_c^a\sigma^{be}+\delta^b_c\sigma^{ae}\right)\nabla_d\omega_e.
\end{equation}
The condition (\ref{det0}) allows us to determine all partial derivatives of $\sigma^{ab}$ in terms of $\omega$ and $\sigma$ and no prolongation is needed in this system once the $\omega$'s are fixed. It is not clear how to fix the kernel of degenerate solutions, but some differential and algebraic restrictions can be derived.

Equation (\ref{closetens}) should be consistent for all values of the index $d$. The consistency conditions are algebraic in $\sigma$ and can be obtained either by contracting (\ref{closetens}) with $\omega_a$ and using (\ref{dereq}) or by multiplying (\ref{closetens}) by $\omega_f$ and anti-symmetrising $[df]$:
\begin{equation}\label{algcond}
\sigma^{eb}\left(\omega_d\nabla_c\omega_e-\omega_c\nabla_d\omega_e\right)=0.
\end{equation}

Further conditions, this time algebraic in both $\sigma$ and $\omega$, can be obtained as follows. Differentiate (\ref{algcond}) 
\begin{align*}
\nabla_f\left[\sigma^{be}\left(\omega_d\nabla_c\omega_e-\omega_c\nabla_d\omega_e\right)\right]=0,
\end{align*}
multiply it by $\omega_g$ and use (\ref{closetens}) to get
$$
\sigma^{be}\left[\left(\omega_c\nabla_d\omega_f-\omega_d\nabla_c\omega_f\right)\nabla_g\omega_e+\omega_g\nabla_f\left(\omega_d\nabla_c\omega_e-\omega_c\nabla_d\omega_e\right)\right]=0.
$$
In particular, by using (\ref{algcond}) in the last two terms and anti-symmetrising in $c,d,f$ we get $\sigma^{ae}\omega_{[c}\nabla_f\nabla_{d]}\omega_e=0$, which is an algebraic condition involving the Riemann tensor
\begin{equation}\label{algcondR}
\sigma^{ae}\omega_{[c}{W_{fd]}}^b{}_{e}\,\omega_b=\sigma^{ae}\omega_{[c} {R_{fd]}}^b{}_{e}\omega_b=0,
\end{equation}
where we used (\ref{decRiemann}) in the first equality.

An inductive argument involving the second Bianchi identity generalises this result as the following theorem.
\begin{theo}
If $\sigma$ is a degenerate solution to the metrisability equations (\ref{meteq}) and $\omega$ is a closed element in its kernel, then 
$$
\sigma^{ae}\omega_{[c}\nabla_{a_1}\cdots\nabla_{a_m]}\omega_e=0
$$ 
for $m=1,\dots,n-1$. Equivalently,
\begin{itemize}
\item $\sigma^{ae}\omega_{[c}\nabla_{\left.a_1\right]}\omega_e=0$, if $m=1$,
\item $\sigma^{ae}\omega_{\left[c\right.} {R_{\left.a_1 a_2\right]}}^h{}_{e}\,\omega_h=0$, if $m=2$,
\item $\sigma^{ae}\omega_{\left[c\right.} {R_{a_1 a_2}}^h{}_{\vert f\vert}{R_{a_3 a_4}}^f{}_{\vert :\vert}\cdots {R_{\left.a_{m-1}a_{m}\right]}}^:{}_{e}\,\omega_h=0$, if $m>2$ is even and
\item $\sigma^{ae}\omega_{\left[c\right.} {R_{a_1 a_2}}^h{}_{\vert f\vert}{R_{a_3 a_4}}^f{}_{\vert :\vert}\cdots {R_{a_{m-2}a_{m-1}}}^:{}_{\vert e\vert}\,\nabla_{\left.a_m\right]}\omega_h=0$, if $m>1$ is odd,
\end{itemize}
where the $:$ in an index position means that the corresponding index is contracted with another index in $\cdots$ and the vertical bars $\vert\quad \vert$ around indices mean that these indices do not take part in the antisymmetrisation.
\end{theo}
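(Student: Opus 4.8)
The plan is to run an induction on $m$, using the identities already extracted, namely the closed-form expression \eqref{closetens} for $\nabla_c\sigma^{ab}$, the first algebraic constraint \eqref{algcond}, and the differential-geometric Ricci identity (the commutator of two covariant derivatives acting on $\omega_e$, which introduces the Riemann tensor). The base cases $m=1$ and $m=2$ are exactly \eqref{algcond} (rewritten) and \eqref{algcondR}, so nothing new is needed there. For the inductive step, suppose the claimed identity holds for some $m$; we differentiate it once more with $\nabla_f$, multiply by $\omega_g$, and use \eqref{closetens} to replace every $\nabla_f\sigma^{\cdots}$ by terms that are algebraic in $\sigma$ and contain one extra $\nabla\omega$. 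By using the inductive hypothesis and \eqref{algcond} again, the ``unwanted'' terms (those where $\nabla_f$ lands on a factor other than the last $\omega_e$ or the last $\nabla\omega_e$) collapse after antisymmetrisation over the free covariant indices $c,a_1,\dots,a_m$. What survives is precisely $\sigma^{ae}\omega_{[c}\nabla_{a_1}\cdots\nabla_{a_{m+1}]}\omega_e=0$, and then the second Bianchi identity (together with the decomposition \eqref{decRiemann}) is invoked to rewrite the antisymmetrised string of covariant derivatives of $\omega_e$ as the stated contracted product of Riemann (equivalently Weyl) tensors, the parity of $m+1$ dictating whether the string ends in $\nabla_{a_{m+1}}\omega_h$ or in $\omega_h$ directly.

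More concretely, the two equivalent forms of the identity flagged in the statement come from two ways of closing up \eqref{closetens}: contracting the newly differentiated relation with $\omega_a$ and invoking \eqref{dereq} ($\sigma^{ab}\omega_b=0$ differentiated) eliminates $\sigma^{ab}\nabla_c\omega_b$ in favour of $\nabla_d\sigma^{\cdots}\omega_b$; alternatively multiplying \eqref{closetens} by $\omega_f$ and antisymmetrising $[df]$ isolates the $\nabla\omega$-bilinear obstruction. Both routes reduce the order of the prolongation by one at each stage, which is why no genuinely new information appears beyond $m=n-1$ (at that point the antisymmetrisation over $n$ indices in an $n$-dimensional manifold forces the next identity to be a consequence of Cramer-type linear dependence). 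I would organise the induction so that the parity split ($m$ even versus $m$ odd) is handled uniformly: write the antisymmetrised iterated derivative $\omega_{[c}\nabla_{a_1}\cdots\nabla_{a_m]}\omega_e$ and repeatedly commute neighbouring $\nabla$'s, each commutator producing a Riemann factor with indices forced into the antisymmetriser, until one is left with at most a single surviving $\nabla\omega$ when $m$ is odd and none when $m$ is even; the second Bianchi identity guarantees that the Riemann factors chain up index-to-index as written, with the vertical bars marking the contracted indices that sit outside the antisymmetrisation.

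The main obstacle I expect is bookkeeping rather than conceptual: showing rigorously that, after $\nabla_f$ hits the inductive identity and one multiplies by $\omega_g$ and substitutes \eqref{closetens}, every term in which the derivative fails to land on the ``last slot'' actually cancels under the full antisymmetrisation $[c\,a_1\cdots a_{m+1}]$. This requires carefully tracking which indices are free, which are contracted (the $:$ positions), and which are shielded by the bars $|\cdot|$, and then using \eqref{algcond} in the precise contracted form needed — $\sigma^{eb}(\omega_d\nabla_c\omega_e-\omega_c\nabla_d\omega_e)=0$ — to kill the cross terms. A secondary subtlety is making sure the closedness hypothesis on $\omega$ (i.e. $\nabla_{[a}\omega_{b]}=0$, so that $\nabla_a\omega_b$ is symmetric) is used consistently, since it is what allows the symmetrised Killing-type relation \eqref{eq3} to propagate and what keeps the commutator terms expressible purely through the Riemann tensor rather than through torsion- or $d\omega$-type corrections. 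Once these cancellations are verified at one generic stage of the induction, the whole family $m=1,\dots,n-1$ follows, and the reformulation in terms of Weyl via \eqref{decRiemann} is immediate because all the extra $\delta$-terms in \eqref{decRiemann} either vanish on antisymmetrisation or contract against $\sigma^{ae}\omega_e=0$.
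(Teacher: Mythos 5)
Your proposal follows essentially the same route as the paper: the paper proves the cases $m=1,2$ explicitly (equations (\ref{algcond}) and (\ref{algcondR}), obtained by differentiating, multiplying by $\omega$, and substituting (\ref{closetens})) and then simply asserts that ``an inductive argument involving the second Bianchi identity'' gives the general statement, which is exactly the induction you describe. Your sketch is in fact somewhat more explicit than the paper's about where the work lies (the cancellation of cross terms under the full antisymmetrisation, using (\ref{algcond}), the closedness of $\omega$, and the inductive hypothesis), but it introduces no new idea and omits no step that the paper supplies.
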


\subsubsection{Geometric properties of the kernel}

Let us now study some geometric properties of the kernel of a degenerate solution $\sigma$. Since equation (\ref{eq3}) is true for all $\omega\in\ker\sigma$, it is a matter of linear algebra to deduce that $\nabla_d\sigma^{ad}\in\operatorname{Im}\sigma$. Thus there exists a $1$-form $S_a$ such that
\begin{equation}\label{eqdefS}
-\mu^a=-\frac{1}{n+1}\nabla_d\sigma^{ad}=\sigma^{ab}S_b.
\end{equation}

\begin{theo}\label{thmintdist}
Let $\sigma$ be a degenerate solution to (\ref{eqmettracefree}). Then, its kernel is spanned by exact $1$-forms. Moreover, the integral manifold defined by the kernel is totally geodesic.
\end{theo}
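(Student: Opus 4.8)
The plan is to work on the open dense set on which $\sigma$ has locally constant rank $k<n$, so that $K:=\ker\sigma=\{\omega_a:\sigma^{ab}\omega_b=0\}$ is a rank-$(n-k)$ sub-bundle of $T^{*}M$ and $D:=\operatorname{Im}\sigma=\{\sigma^{ab}u_b\}$ is a rank-$k$ sub-bundle of $TM$. Because $\sigma$ is symmetric, $\omega_aV^a=0$ whenever $\omega\in K$ and $V\in D$, and a dimension count shows that $D$ is exactly the annihilator of the Pfaffian system $K$; hence the integral manifolds of $K$ are precisely the leaves $L$ of the distribution $D$ (with $TL=D|_L$), and by the Frobenius theorem $K$ is locally spanned by closed $1$-forms -- thus, by the Poincar\'e lemma, by exact ones -- if and only if $D$ is integrable. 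So it will suffice to prove that $D$ is integrable and that its leaves are totally geodesic for the special representative $\nabla$; the latter, being a reparametrisation-invariant property of geodesics, then descends to the whole projective class $[\nabla]$ and equips each leaf with an induced projective structure.

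The whole content is one short computation built on the relations for the kernel already obtained above. Let $\omega$ be a local section of $K$, so $\sigma^{ab}\omega_b=0$. Using $\nabla_d\sigma^{ad}=(n+1)\mu^a$, equation (\ref{eq2}) rewrites as the key identity $\sigma^{ab}\nabla_c\omega_b=-\mu^a\omega_c$. Now take sections $X,Y$ of $D$, say $X^a=\sigma^{ab}p_b$ and $Y^a=\sigma^{ac}q_c$. Since $\omega_aY^a=0$, differentiating gives $\omega_a(\nabla_XY)^a=X^b\omega_a\nabla_bY^a=-X^bY^a\nabla_b\omega_a$; substituting $Y^a=\sigma^{ca}q_c$ and the key identity yields $\omega_a(\nabla_XY)^a=-X^bq_c\,\sigma^{ca}\nabla_b\omega_a=(q_c\mu^c)\,X^b\omega_b=0$, because $X^b\omega_b=p_d\sigma^{bd}\omega_b=0$. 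Hence $\nabla_XY$ is again a section of $D$.

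Two consequences follow immediately. First, since $\nabla$ is torsion-free, $[X,Y]=\nabla_XY-\nabla_YX$ is a section of $D$, so $D$ obeys the Frobenius condition and is integrable; equivalently, $K$ is locally spanned by exact $1$-forms, which is the first assertion. Second, restricting $\nabla_XY\in D$ to a leaf $L$ -- on which sections of $D$ are tangent to $L$ -- shows that $\nabla_XY$ is tangent to $L$ whenever $X,Y$ are, i.e. the second fundamental form of $L$ vanishes and $L$ is totally geodesic for $\nabla$. As noted, this is a projective statement, so each leaf is totally geodesic for every connection in $[\nabla]$ and carries the induced projective structure. This proves Theorem \ref{thmintdist}.

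The computation itself is routine; the points that need care are purely organisational: the standing constant-rank hypothesis, without which $K$ and $D$ need not be sub-bundles (away from the maximal-rank locus one only gets the statement on an open dense set); the identification of $D$ with the annihilator of the Pfaffian system $K$ via the symmetry of $\sigma$; and the remark that ``totally geodesic'' must be read projectively so that the leaves genuinely inherit a projective structure. I do not anticipate any substantial obstacle beyond these.
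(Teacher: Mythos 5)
Your proof is correct, and it takes a genuinely different route from the one in the text. The paper splits the statement in two: it first introduces the $1$-form $S_a$ with $\sigma^{ab}S_b=-\mu^a$, rewrites (\ref{eq2}) as $\sigma^{ab}(\nabla_c\omega_b+\omega_c S_b)=0$, deduces that $d\omega$ lies in the ideal generated by $\ker\sigma$ and applies Frobenius to the Pfaffian system; it then passes to the adapted coordinates $y^a$ so produced and reads off $\Gamma^\alpha_{ij}=0$ directly from the components of the metrisability equations. You instead extract the single tensorial identity $\sigma^{ab}\nabla_c\omega_b=-\mu^a\omega_c$ and use it to show that $\nabla_XY\in\operatorname{Im}\sigma$ whenever $X,Y\in\operatorname{Im}\sigma$; torsion-freeness then gives $[X,Y]\in\operatorname{Im}\sigma$, hence integrability of the image distribution (equivalently of its annihilator $\ker\sigma$, your identification of which via the symmetry of $\sigma$ is correct), while the same inclusion is precisely the vanishing of the second fundamental form of the leaves. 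This is more economical and coordinate-free: one computation yields both conclusions, and it subsumes the paper's separate Corollary on the integrability of $\operatorname{Im}\sigma$ (whose alternative proof only establishes the weaker bracket-closure). What the paper's version buys in exchange is the explicit adapted coordinate system and the explicit vanishing of the Christoffel symbols $\Gamma^\alpha_{ij}$, which are reused verbatim in the proof of Theorem \ref{conjecture}. Your added remarks — the standing constant-rank hypothesis, and the observation that total geodesy is a projectively invariant condition (indeed $\hat\Gamma^\alpha_{ij}=\Gamma^\alpha_{ij}+\Upsilon_i\delta^\alpha_j+\Upsilon_j\delta^\alpha_i=\Gamma^\alpha_{ij}$ in adapted indices) — are both pertinent and left implicit in the paper.
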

\begin{proof}
Let $\omega\in\ker\sigma$ and let $s=n-\text{rank }\sigma$ be the dimension of $\ker\sigma$. From (\ref{eq2}) and (\ref{eqdefS}),
$$ 
\sigma^{ab}\left(\nabla_c\omega_b+\omega_c S_b\right)=0.
$$
Therefore, there exist $1$-forms $T^{(i)}$ such that
$$
\nabla_c\omega_b+\omega_c S_b=T^{(1)}_c\omega^{(1)}_b+\cdots+T^{(s)}_c\omega^{(s)}_b.
$$
By anti-symmetrising $[cb]$ we conclude that $d\omega$ belongs to the ideal algebraically generated by $\ker\sigma$. By Frobenius theorem, $\ker\sigma$ is spanned by exact $1$-forms and there exists a coordinate system $y^1,\dots,y^n$ such that $\ker\sigma=span\{dy^{n-s+1},\dots,dy^n\}$. 

For the second part, let us work in these $y$-coordinates. Let $i,j,k,\dots$ be indices taking values from $1$ to $n-s$, $\alpha,\beta,\gamma,\dots$ take values from $n-s+1$ to $n$ and $a,b,c,\dots$ take values from $1$ to $n$. In such coordinates, saying that the submanifolds $\{y^\alpha=const\}$ are totally geodesic means that geodesics starting with with initial velocities tangent to such submanifolds remain in these submanifolds. Explicitly, let the affine parametrised geodesics be given as solutions to the equations 
$$
\ddot y^a+\Gamma^a_{bc}\dot y^b \dot y^c=0,
$$
then, choosing $a>n-s$ we see that if $\dot y^\alpha=0$, $\alpha=n-s+1,\dots,n$ at some value for the affine parameter, then $\dot y^\alpha$ remains $0$ for the whole geodesic motion. Looking at the geodesic equations above for $a>n-s$, we conclude that this is the case if and only if $\Gamma^\alpha_{ij}=0$. Let us now show that all such $\Gamma^\alpha_{ij}$ symbols vanish because of the metrisability equation.

Choosing $\omega=dy^{\alpha}$, we conclude that $\sigma^{a\alpha}=\sigma^{\alpha a}=0$ and that the submatrix $\left(\sigma^{ij}\right)_{i,j=1}^{n-s}$ is non-singular. Therefore, writing (\ref{eqmettracefree}) for $a,b,c>n-s$ yields
$$
\delta^\beta_\alpha \Gamma_{ij}^\gamma \sigma^{ij}+\delta^\gamma_\alpha \Gamma_{ij}^\beta \sigma^{ij}=0,
$$
which implies that 
\begin{equation}\label{metcond1}
\sigma^{ij}\Gamma^\beta_{ij}=0.
\end{equation}
Now, writing (\ref{eqmettracefree}) for $a,b\leq n-s$, $c>n-s$,
$$
\Gamma^\gamma_{ji}\sigma^{ki}=\frac{1}{n+1}\delta^k_j\Gamma^\gamma_{il}\sigma^{il}=0,
$$
where we used (\ref{metcond1}) in the last equality. This implies that $\Gamma^\gamma_{ji}=0$ because the submatrix $\left(\sigma^{ki}\right)$ is invertible. This concludes the proof that the integral submanifolds of $\ker\sigma$ are totally geodesic.
\end{proof}

From basic linear algebra, the submanifolds defined by $\ker\sigma$ are integrable submanifolds of the distribution generated by the image of $\sigma$. Thus we deduce the following corollary.

\begin{coll}
Under the conditions of Theorem \ref{thmintdist}, the distribution defined by the image of $\sigma$ is integrable.
\end{coll}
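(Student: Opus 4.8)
The plan is to reuse the adapted coordinate system already produced in the proof of Theorem~\ref{thmintdist}. Recall that there one obtains local coordinates $y^1,\dots,y^n$ in which $\ker\sigma=\mathrm{span}\{dy^{n-s+1},\dots,dy^n\}$, where $s=n-\operatorname{rank}\sigma$, and the submatrix $(\sigma^{ij})_{i,j=1}^{n-s}$ is non-singular. So the cleanest route is: (i) show that in these coordinates the image of $\sigma$ — viewed as the bundle map $T^*M\to TM$, $\omega_b\mapsto\sigma^{ab}\omega_b$ — equals $\mathrm{span}\{\partial/\partial y^1,\dots,\partial/\partial y^{n-s}\}$ at each point; and (ii) observe that this distribution is tangent to the foliation by the level sets $\{y^{n-s+1}=\mathrm{const},\dots,y^n=\mathrm{const}\}$, hence integrable.

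For step (i) the key point is the pointwise linear-algebra fact that, for a symmetric tensor $\sigma^{ab}$, the image of the associated map is the annihilator of its kernel. Concretely, since each $dy^\alpha$ with $\alpha>n-s$ lies in $\ker\sigma$, contracting it into $\sigma$ gives $\sigma^{a\alpha}=0$ for all $a$; thus the only possibly non-vanishing components are $\sigma^{ij}$ with $i,j\le n-s$, and, by the proof of Theorem~\ref{thmintdist}, $(\sigma^{ij})_{i,j=1}^{n-s}$ is invertible. Together these say that $\sigma^{ab}\omega_b$ always lies in $\mathrm{span}\{\partial/\partial y^i:i\le n-s\}$ and that every such vector is attained, which is exactly the claim of step (i). Alternatively one can phrase this invariantly: $\operatorname{Im}\sigma$ is pointwise the annihilator in $TM$ of the codistribution $\ker\sigma\subset T^*M$.

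Step (ii) is then immediate: the level set $\{y^\alpha=\mathrm{const}:\alpha>n-s\}$ has tangent space spanned by $\partial/\partial y^1,\dots,\partial/\partial y^{n-s}$ at each of its points, so these level sets are integral manifolds of $\operatorname{Im}\sigma$; they coincide with the integral manifolds furnished for the Frobenius system $\ker\sigma$ in Theorem~\ref{thmintdist}, and by that theorem they are totally geodesic. I do not expect a genuine obstacle here: the substantive work — constructing the coordinates, invertibility of the submatrix, the Frobenius argument — is already contained in Theorem~\ref{thmintdist}, and the only thing to be careful about is the identification of $\operatorname{Im}\sigma$ with the annihilator of $\ker\sigma$, which becomes transparent once the adapted coordinates are in hand.
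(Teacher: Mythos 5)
Your argument is correct, and it is essentially the justification the paper itself gives in the sentence immediately preceding the corollary (``From basic linear algebra, the submanifolds defined by $\ker\sigma$ are integrable submanifolds of the distribution generated by the image of $\sigma$''): in the adapted coordinates of Theorem \ref{thmintdist} one has $\sigma^{a\alpha}=0$ and $(\sigma^{ij})$ invertible, so $\operatorname{Im}\sigma$ is the annihilator of $\ker\sigma$, i.e.\ the span of $\partial/\partial y^1,\dots,\partial/\partial y^{n-s}$, which is tangent to the leaves $\{y^\alpha=\mathrm{const}\}$. The paper's displayed proof, however, deliberately takes a different route: for two sections $V_i^a=\sigma^{ab}\Omega^i_b$ of the image it computes the Lie bracket directly and uses the metrisability equation (\ref{meteq}) to show $[V_1,V_2]^c=2\,\sigma^{cb}\bigl(\Omega^{[1}_b\Omega^{2]}_d\mu^d+\sigma^{ae}\Omega^{[1}_e\nabla_a\Omega^{2]}_b\bigr)\in\operatorname{Im}\sigma$, then invokes Frobenius. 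That computation is coordinate-free and needs only equation (\ref{meteq}) itself, not the prior construction of the foliation by $\ker\sigma$; your version buys brevity by leaning on the full strength of Theorem \ref{thmintdist}, while the paper's bracket computation is self-contained and would survive even if one only knew the algebraic form of the metrisability equations. Both are valid; there is no gap in your argument.
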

\begin{proof}
We write here an alternative proof to the one given by the previous paragraph.

Let $V^a_i=\sigma^{ab}\,\Omega^i_b,\, i=1,2,$ be two arbitrary vectors in the image of arbitrary $1$-forms $\Omega^i$ by $\sigma$. We want to show that $\left[V_1,V_2\right]\in\operatorname{Im}\sigma$. In fact, using (\ref{meteq}),
$$
\left[V_1,V_2\right]^c=2\,\sigma^{cb}\left(\Omega^{[1}_b\Omega^{2]}_d\mu^d+\sigma^{ae}\Omega^{[1}_e\nabla_a \Omega^{2]}_b\right)\in\operatorname{Im}\sigma.
$$
By Frobenius integrability, the image of $\sigma$ is an integrable distribution.

\end{proof}

Theorem \ref{thmintdist} allows us to make a suitable choice of coordinates to prove the final result of this section.
\begin{theo}\label{conjecture}
The maximal dimension of the solution space of (\ref{meteq}) under the condition $k\equiv\operatorname{rank}(\sigma)<n$ for a given $\ker\sigma$ is $\frac{k(k+1)}{2}$, where $k=0,\cdots n-1$.
\end{theo}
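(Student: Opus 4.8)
The plan is to observe that, once the kernel is prescribed, the metrisability equations for a degenerate $\sigma$ become a \emph{closed} linear first-order system — all first derivatives of $\sigma$ are expressed algebraically through $\sigma$ and the fixed kernel data — so that a solution is pinned down by its value at a single point, and then to count the admissible values at a point.

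Concretely, I would first fix the $(n-k)$-dimensional kernel. By Theorem \ref{thmintdist} one may take the adapted coordinates $y^1,\dots,y^n$ of that theorem, in which $\ker\sigma=\mathrm{span}\{dy^{k+1},\dots,dy^n\}$; the $1$-forms $\omega^{(\alpha)}=dy^{\alpha}$, $\alpha=k+1,\dots,n$, are then fixed sections of the kernel and $\nabla_d\omega^{(\alpha)}_e=-\Gamma^{\alpha}_{de}$ depends only on the connection. Retracing the derivation of (\ref{det0})--(\ref{closetens}), any solution $\sigma$ of (\ref{meteq}) whose kernel contains this distribution satisfies
\[
\omega_d\,\nabla_c\sigma^{ab}=-\bigl(\delta^a_c\sigma^{be}+\delta^b_c\sigma^{ae}\bigr)\nabla_d\omega_e
\]
for every such $\omega$. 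Taking $\omega=dy^{k+1}$ and $d=k+1$ (so that $\omega_d=1$) solves this for each component of $\nabla_c\sigma^{ab}$, hence of $\partial_c\sigma^{ab}$, as a linear function of the unknowns $\sigma^{ab}$ with coefficients assembled from $\Gamma$: the system is closed and no prolongation is needed. Consistency of this expression under the other choices of kernel section and of the free index $d$ is precisely the family of algebraic identities (\ref{algcond}) established above, and those same identities ensure that the subspace of symmetric tensors $\tau^{ab}$ at a point with $\tau^{ab}\omega_b=0$ for all $\omega$ in the fixed kernel is preserved under parallel transport by this system.

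Given closedness the argument finishes quickly. Along any path in a connected coordinate neighbourhood, $\sigma$ solves a linear ODE, so the evaluation map $\sigma\mapsto\sigma(p)$ is an injective linear map from the space of solutions of (\ref{meteq}) with the prescribed kernel into the space of symmetric contravariant $2$-tensors at $p$ annihilating that kernel; in the adapted coordinates the latter is just the space of symmetric arrays $(\sigma^{ij})_{i,j=1}^{k}$, of dimension $\frac{k(k+1)}{2}$. This yields the bound. For sharpness it suffices to display one projective structure attaining it: for the flat structure $\Gamma\equiv0$ (and, as shown in this section, for the Newtonian projective structure) the closed system forces $\sigma$ to be constant, and the constant symmetric tensors with $\sigma^{a\alpha}=0$ for $\alpha>k$ form a $\frac{k(k+1)}{2}$-dimensional family of solutions whose generic member has rank exactly $k$ and kernel exactly the prescribed distribution.

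The step I expect to require real work is the one just described: checking that, after the kernel is fixed, (\ref{closetens}) genuinely determines $\nabla_c\sigma^{ab}$ unambiguously — independently of the chosen kernel section and of the free index $d$ — which is where the algebraic conditions (\ref{algcond}) and the adapted coordinates of Theorem \ref{thmintdist} are essential. Once closedness is secured, both the dimension count and the verification of the attainment example are routine.
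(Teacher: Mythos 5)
Your proof is correct and follows essentially the same route as the paper's: fix the kernel, pass to the adapted coordinates of Theorem \ref{thmintdist} so that $\sigma^{a\alpha}=0$, observe that (\ref{closetens}) closes the first-order linear system on the $\frac{k(k+1)}{2}$ remaining unknowns $\sigma^{ij}$, and conclude by counting. The extra detail you supply --- solving (\ref{closetens}) by choosing $d$ with $\omega_d=1$, checking consistency via (\ref{algcond}), and exhibiting the attainment example --- merely fills in steps the paper leaves implicit.
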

\begin{proof}
Once the kernel of $\sigma$ is fixed, using the results and the notation of the proof of Theorem \ref{thmintdist}, we can choose the kernel of $\sigma$ to be spanned by $\omega^{(i)}=dy^{n-s+i}$, $i=1,\dots,s$. This implies that $\sigma^{\alpha\beta}=0$. There are another $\frac{k(k+1)}{2}$ unknowns $\sigma^{ij}$. Equations (\ref{closetens}) give rise to a closed system of linear PDEs for $\sigma^{ij}$ and thus the maximal dimension of the solution space is $\frac{k(k+1)}{2}$.
\end{proof}

Let us give an example of a projective structure that is not metrisable and saturates the maximal bound $\frac{n(n-1)}{2}$. Consider, in local coordinates $x^1,\dots, x^n$, the connection given by $\Gamma^1_{nn}=-\frac{x^2}{2}$, $\Gamma^2_{nn}=-\frac{x^1}{2}$ and all the other symbols equal to zero. This is a particular case of the Newtonian projective structure \cite{DunEast2016, DunGun2016} describing a classical particle moving in the $(n-1)$-dimensional space $(x^1,\dots,x^{n-1})$, $x^n$ playing the role of time, under the potential $V=x^1x^2$. This Newtonian connection admits $dx^1\wedge\cdots\wedge dx^n$ as a parallel volume form, so it is special. A careful analysis of the metrisability equations (\ref{meteq}) gives the solution $\sigma^{an}=0$, for $a=1,\dots,n$ and $\sigma^{kl}=C_{kl}$, for $k,l=1,\cdots,n-1$, where $C_{kl}=C_{lk}$ are arbitrary constants. Thus, the projective structure of this connection saturates the bound $n(n-1)/2$ for the dimension of the solution space.

\newpage
\thispagestyle{empty}
\vspace*{\fill}
\newpage
\chapter{First integrals of affine connections and Hamiltonian systems of hydrodynamic type}\label{chapterKilling}

The existence of a first integral of a geodesic flow of an affine connection\footnote{Since the Killing equations are invariant under special change of connections in the sense of (\ref{projequiv}), the existence of linear integrals is a property of special projective classes (c.f. above equation (\ref{decRiemann})), although not of the whole projective structure.} puts restrictions on the form of the connection. A generic connection admits no first integrals. If the connection arises from a metric, and the first integral is linear in velocities, then the metric admits a one-parameter group of isometries generated by a Killing vector field. Characterising metrics which admit Killing vectors by local tensor obstructions is a classical problem which goes back at least to Darboux \cite{Darboux1887}, and can be solved completely in two dimensions. The analogous characterisation of non-metric affine connections has not been carried over in full\footnote{The remarkable exception is the paper of Levine \cite{Levine1948} and its extension \cite{Thompson1999} where the necessary condition for the existence of a first integral was found, albeit not in a form involving the Schouten and Cotton tensors. The sufficient conditions found in \cite{Levine1948} are not all independent. Levine gives seven tensor conditions, where in fact two scalar conditions suffice.}. It is given in Theorem \ref{theo_1}, where we construct two invariant scalar obstructions to the existence of a linear first integral. A non-metric connection can (unlike a Levi-Civita connection) admit precisely two independent linear local first integrals. This case will also be characterised
by a tensor obstruction.

As an application of our results we shall, in Section \ref{sec_hydro}, characterise one-dimensional systems of hydrodynamic type which admit a Hamiltonian formulation of the Dubrovin--Novikov type \cite{DubNov1983}. The existence of such formulation leads to an overdetermined system of PDEs, and  we shall show (Theorem \ref{theo_2}) that this system is equivalent to a condition that  a certain non-metric affine connection admits a linear first integral. This, together with Theorem \ref{theo_1} will lead to a characterisation of Hamiltonian, bihamiltonian and trihamiltonian systems of hydrodynamic type.
In Section \ref{sec_ex} we shall give examples of connections resulting from hydrodynamic type systems.

Let us state the main results of this chapter. Let $\nabla$ be a torsion-free  affine connection of differentiability class $C^4$ on a simply connected orientable surface $\Sigma$ (so we require the transition functions of $\Sigma$ to be of class $C^6$). A curve $\gamma:\R\rightarrow \Sigma$ is an affinely parametrised geodesic if $\nabla_{\dot{\gamma}}\dot{\gamma}=0$, or equivalently if
\be
\label{flow}
\ddot{X}^a+\Gamma^{a}_{bc}\dot{X}^b\dot{X}^c=0, \quad a, b, c =1, 2
\ee
where $X^{a}=X^{a}(\tau)$ is the curve $\gamma$ expressed in local coordinates $X^a$ on an open set $U\subset \Sigma$, $\tau$ is an affine parameter, $\Gamma_{ab}^c$ are the Christoffel symbols of $\nabla$, and we use the summation convention. A linear function on $T\Sigma$ given by $\kappa=K_a(X)\dot{X}^a$ is called a {\em first integral} if $d\kappa/d\tau=0$ when (\ref{flow}) holds, or 
equivalently if
\be
\label{killing}
\nabla_{(a} K_{b)}=0.
\ee
The following theorem gives local necessary and sufficient conditions for a connection to admit one, two or three linearly independent solutions to the Killing equation (\ref{killing}). The necessary conditions involve vanishing of obstructions $I_N$ and $T$ given by (\ref{i_n}) and (\ref{tensor_obs}) -- for these to make sense the connection needs to be at least three times differentiable. 

\begin{theo}\label{theo_1}
The necessary condition for a $C^4$ torsion-free affine connection $\nabla$ on  a surface $\Sigma$ to admit a linear first integral is the vanishing, on $\Sigma$, of scalar invariants $I_N$ and $I_S$ given by (\ref{i_n}) and (\ref{i_s}), respectively. For any point $p\in \Sigma$ there exists a neighbourhood $U\subset \Sigma$ of $p$ such that conditions $I_N=I_S=0$ on $U$ are sufficient for the existence of a first integral on $U$. There exist precisely two independent linear first integrals on $U$ 
if and only if the tensor  $T$ given by (\ref{tensor_obs}) vanishes and the skew part of the Ricci tensor of $\nabla$ is non-zero on $U$. There exist three independent first integrals on $U$  if and only if the connection is projectively flat and its Ricci tensor is symmetric.
\end{theo}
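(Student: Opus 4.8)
The plan is to prolong the Killing equation (\ref{killing}) to a closed first-order linear system — a connection $\tilde\nabla$ on a rank-$3$ bundle — and then read off the dimension of its space of parallel sections from the curvature, by a Frobenius/holonomy argument. The three cases in the theorem will then correspond to the generic rank of the curvature (together with its first covariant derivative) being $0$, $1$, or $2$.

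\textbf{Step 1: prolongation.} Since $\nabla_{(a}K_{b)}=0$, the tensor $\nabla_a K_b$ is skew, so in two dimensions $\nabla_a K_b=\mu\,\epsilon_{ab}$ for a scalar $\mu$ (after fixing a volume form $\epsilon$, equivalently working with the special representative / Thomas symbols of the class). Differentiating and skew-symmetrising, and using the Ricci identity together with the two-dimensional decomposition (\ref{decRiemann}) of the Riemann tensor (with $W\equiv0$), one expresses $\nabla_a\mu$ as a universal linear combination of $K_b$ — with coefficients built from the projective Schouten tensor $P_{ab}$ and from $\beta_{ab}=P_{ba}-P_{ab}$ — and of $\mu$. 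A short linear-algebra check (the map $\nabla_a\mu\mapsto(\nabla_a\mu)\epsilon_{bc}-(\nabla_b\mu)\epsilon_{ac}$ is injective in dimension two) shows that no further integrability condition is generated at this order, so the system closes: one gets a linear connection $\tilde\nabla$ on a rank-$3$ bundle $E\to\Sigma$ with fibre coordinates $(K_1,K_2,\mu)$, whose parallel sections are exactly the solutions of (\ref{killing}).

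\textbf{Step 2: counting parallel sections.} A section of $E$ is $\tilde\nabla$-parallel iff it is annihilated by the curvature $\tilde R$ and all its iterated covariant derivatives; on a simply connected $\Sigma$ with the stated differentiability, the dimension of this space equals $3$ minus the generic rank of the sub-bundle of $\mathrm{End}(E)$ generated by $\tilde R,\tilde\nabla\tilde R,\dots$ (Frobenius), and this also supplies the local sufficiency. In two dimensions $\Lambda^2$ is a line, so $\tilde R=\epsilon\otimes\Phi$ for a single endomorphism-valued function $\Phi$ (a $3\times3$ matrix), and the next obstruction is its reduced covariant derivative $\tilde\nabla\Phi$; one verifies that no further derivatives are needed. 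Computing $\Phi$ explicitly shows its entries are built from $\beta$ and from the Cotton-type tensor $\nabla_{[a}P_{b]c}$, whose trace-free part is, up to normalisation, the $1$-form carrying the Liouville data $L_1,L_2$. The three cases then are: (i) three solutions $\Leftrightarrow\ \Phi\equiv0$, which unpacks to vanishing of the Cotton tensor together with $\beta\equiv0$, i.e. projective flatness (as $W\equiv0$ automatically in $2$D) plus symmetry of the Ricci tensor — cross-checked against Liouville's criterion $L_1=L_2=0$; (ii) exactly two solutions $\Leftrightarrow$ $\Phi$ and $\tilde\nabla\Phi$ have a common kernel of dimension exactly $2$ everywhere, the $\ge2$-dimensionality being a tensorial condition that, after removing redundancies, is precisely $T=0$ with $T$ as in (\ref{tensor_obs}); to see this is the "two" case and not the "three" case one needs $\beta\neq0$, via a lemma showing that when $\beta\equiv0$ the endomorphism $\Phi$ is either $0$ or of rank $\ge2$ (reflecting that a metric surface has isometry algebra of dimension $0,1$ or $3$, never $2$); (iii) at least one solution $\Leftrightarrow$ $\Phi,\tilde\nabla\Phi$ (and a priori one further derivative) have a common null vector, which, using the explicit $\Phi$, collapses to the vanishing of exactly two scalar invariants, namely $I_N$ and $I_S$ of (\ref{i_n})--(\ref{i_s}); conversely $I_N=I_S=0$ makes the generic rank of $\{\Phi,\tilde\nabla\Phi\}$ at most $2$ on a neighbourhood, producing a parallel section by Frobenius.

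\textbf{Main obstacle.} The routine part is the prolongation of Step 1; the substantive work is threefold. First, the explicit computation of $\Phi$ and $\tilde\nabla\Phi$, and the linear algebra showing that "$\{\Phi,\tilde\nabla\Phi\}$ has a nontrivial common kernel" reduces to \emph{exactly} two scalar conditions (neither fewer nor more) which moreover assemble into honest scalar invariants $I_N,I_S$ independent of the choice of volume form within the special projective class. Second, the lemma that a connection with symmetric Ricci ($\beta\equiv0$) has Killing-prolongation curvature of rank $0$ or $\ge2$ — this is what pins down the appearance of the skew Ricci condition in the "exactly two" statement, and establishing it cleanly is the most delicate logical point. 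Third, the differentiability bookkeeping: $P_{ab}$ costs two derivatives of $\nabla$, the Cotton tensor one more, and the final stabilisation step in the Frobenius argument one more, which is exactly why $C^4$ is required, together with the local-existence half of the equivalences coming from the Frobenius theorem on a suitable neighbourhood $U$.
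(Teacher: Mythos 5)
Your proposal is correct and follows essentially the same route as the paper: prolong (\ref{killing}) to the rank-three connection of Proposition \ref{tractor_prop}, then apply a Frobenius/rank analysis to the curvature row $V=(F^1,F^2,\beta)$ and its covariant derivatives (your $\Phi$ and $\tilde\nabla\Phi$ are exactly the rows of the paper's matrix $\mathcal M$), with the three cases corresponding to stabilised rank $2$, $1$ and $0$. The delicate points you flag — that sufficiency genuinely requires the second derivatives $D_aD_bV$, that the "common kernel" conditions collapse to the two scalars $I_N$, $I_S$, and that rank one forces $\beta\neq 0$ — are precisely the ones the paper's explicit computation of $\mathcal M$, $W_{abc}$ and $T_a{}^b$ resolves.
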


This theorem will be established by constructing (Proposition \ref{tractor_prop}) a prolongation connection $D$ on a rank-3 vector bundle $\Lambda^1(\Sigma)\oplus\Lambda^2(\Sigma)$ for the overdetermined system (\ref{killing}), and restricting the holonomy of its curvature when one, two or three parallel sections exist.  In Proposition \ref{prop_rank1} we shall find all local normal forms of connections from Theorem \ref{theo_1} which admit precisely two linear first integrals. 

Finally we shall consider one-dimensional systems of hydrodynamic type. Any such system with two dependent variables $(X^1, X^2)$ and two independent variables $(x, t)$ can be written in the so-called Riemann invariants as
\be
\label{hydro_system}
\frac{\p X^1}{\p t}=\lambda^{1}(X^1, X^2)\frac{\p X^1}{\p x}, \quad 
\frac{\p X^2}{\p t}=\lambda^{2}(X^1, X^2)\frac{\p X^2}{\p x},
\ee
where $\lambda^1\neq \lambda^2$ at a generic point. This system admits a Hamiltonian formulation of the Dubrovin--Novikov type, if it can be written as
\be
\label{hami}
\frac{\p X^a}{\p t}=\Omega^{ab}\frac{\delta H}{\delta X^b},
\ee
where $H[X^1, X^2]=\int_{\R} {\mathcal H}(X^1, X^2) dx$ is the Hamiltonian of hydrodynamic type, and the Poisson structure on the space of maps $\mbox{Map}(\R, U)$ is given by
\begin{equation}\label{eqSympSt}
\Omega^{ab}=g^{ab}\frac{\p}{\p x}+{b^{ab}_c}\frac{\p X^c}{\p x}.
\end{equation}
The Jacobi identity imposes severe constraints on $g(X^a)$ and $b(X^a)$ -- see Section \ref{sec_hydro} for details.
We shall prove
\begin{theo}\label{theo_2}
The hydrodynamic-type system (\ref{hydro_system}) admits one, two or three Hamiltonian formulations with hydrodynamic Hamiltonians if and only if the affine torsion-free connection $\nabla$ defined by its non-zero components
\begin{align}
\label{connection_{AB}}
\Gamma_{11}^1&=\partial_1 \ln{A}-2B, \quad &\Gamma_{22}^2&=\partial_2\ln{B}-2A,\nonumber\\
\Gamma_{12}^1&=-\Big(\frac{1}{2}\partial_2 \ln{A}+A\Big), \quad &\Gamma_{12}^2&=-\Big(\frac{1}{2}\partial_1 \ln{B}+B\Big),
\end{align}
where
\begin{equation}\label{functionsAB}
A=\frac{\p_2 \lambda^1}{\lambda^2-\lambda^1}, \quad B=\frac{\p_1 \lambda^2}{\lambda^1-\lambda^2}, \quad\mbox{and}\quad \partial_a=\dfrac{\partial}{\partial X^a}
\end{equation}
admits one, two or three  independent linear first integrals, respectively.
\end{theo}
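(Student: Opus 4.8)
<br>

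The plan is to reduce Theorem~\ref{theo_2} to Theorem~\ref{theo_1} by showing that the existence of a Hamiltonian formulation of Dubrovin--Novikov type for the system (\ref{hydro_system}) is \emph{equivalent} to the existence of a linear first integral of the connection (\ref{connection_{AB}}). Once this equivalence is established, the counting statement (one, two or three Hamiltonian structures corresponding to one, two or three independent solutions of the Killing equation) follows immediately. So the core of the proof is an algebraic-differential bookkeeping exercise, not a holonomy argument; the holonomy work has already been done in Theorem~\ref{theo_1}.

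First I would write out the Jacobi-identity constraints on the pair $(g^{ab}, b^{ab}_c)$ in (\ref{eqSympSt}). The classical Dubrovin--Novikov result says that when $g^{ab}$ is non-degenerate the bracket (\ref{eqSympSt}) is Poisson iff $g^{ab}$ (with lowered indices) is a flat metric and $b^{ab}_c = -g^{ad}\Gamma^b_{dc}$ where $\Gamma$ is its Levi-Civita connection; but here I do \emph{not} want to assume non-degeneracy, and more importantly I want to phrase things so that the metric $g^{ab}$ plays the role of a \emph{solution to a Killing-type equation for a fixed connection} rather than the connection being determined by the metric. The key observation is that for a diagonal system written in Riemann invariants, any hydrodynamic Hamiltonian operator is automatically diagonal, $g^{ab}=0$ for $a\ne b$, so the unknowns collapse to two functions $g^{11}, g^{22}$. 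Imposing the hydrodynamic compatibility between the operator $\Omega^{ab}$ and the evolution (\ref{hydro_system}) — i.e. that (\ref{hami}) reproduces (\ref{hydro_system}) for \emph{some} Hamiltonian density $\mathcal H$ — together with the Jacobi identity, yields a first-order linear overdetermined system for $(g^{11}, g^{22})$ whose coefficients are exactly the functions $A, B$ of (\ref{functionsAB}) and their derivatives. I would then check by direct computation that this system is precisely $\nabla_{(a}K_{b)}=0$ for the connection (\ref{connection_{AB}}), under the identification $K_a \leftrightarrow$ (the components of the diagonal metric, suitably weighted). This matching of coefficients is the step most likely to be tedious, and it is where one must be careful with factors of $2$, with whether $g^{ab}$ or $g_{ab}$ or $g^{ab}\sqrt{\det g}$ is the natural object satisfying the Killing equation, and with the fact that the connection (\ref{connection_{AB}}) is \emph{not} metric and not even special in general, so "Killing $1$-form" must be read in the sense of (\ref{killing}) rather than as an isometry generator.

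The remaining points are bookkeeping. One must confirm that, given a solution $K_a$ of the Killing equation for $\nabla$, one can reconstruct a genuine hydrodynamic Hamiltonian operator: this requires producing the flat metric, checking non-degeneracy generically, and exhibiting the Hamiltonian density $\mathcal H$ by integrating a closed $1$-form (here simple connectedness of the domain, already assumed in the setup of Theorem~\ref{theo_1}, is used). Conversely every Hamiltonian operator gives a Killing $1$-form by running the computation backwards. Since the correspondence $K_a \leftrightarrow \Omega^{ab}$ is linear and injective, the dimension of the space of linear first integrals equals the number of independent Hamiltonian structures, which is at most $3$ by Theorem~\ref{theo_1}; this recovers the trihamiltonian bound. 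I would close by invoking Theorem~\ref{theo_1} verbatim to translate "one, two or three first integrals'' into the tensorial conditions $I_N=I_S=0$, $T=0$ with skew Ricci nonzero, and projective flatness with symmetric Ricci, respectively, thereby obtaining the advertised classification of Hamiltonian, bihamiltonian and trihamiltonian systems.

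The main obstacle I anticipate is purely computational stamina rather than conceptual difficulty: deriving the Jacobi-identity constraints for a \emph{possibly degenerate} $g^{ab}$ in Riemann-invariant coordinates and then recognising the resulting PDE system as the symmetrised covariant-derivative equation for the specific non-metric connection (\ref{connection_{AB}}). A secondary subtlety is ensuring the equivalence is genuinely "iff'' at points where $\lambda^1=\lambda^2$ or where a candidate metric degenerates — but the hypothesis $\lambda^1\ne\lambda^2$ at a generic point, together with working locally on the open set $U$ of Theorem~\ref{theo_1}, should dispose of this.
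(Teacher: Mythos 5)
Your proposal follows essentially the same route as the paper: the paper likewise reduces Theorem \ref{theo_2} to Theorem \ref{theo_1} by quoting the Dubrovin--Novikov characterisation (existence of a flat diagonal metric $g=k^{-1}d(X^1)^2+f^{-1}d(X^2)^2$ satisfying $\p_2 k+2Ak=0$, $\p_1 f+2Bf=0$ together with the flatness condition) and then verifying by direct computation that these three equations coincide with $\nabla_{(a}K_{b)}=0$ for the connection (\ref{connection_{AB}}) under the weighted identification $K_1=Af$, $K_2=Bk$. The only cosmetic difference is that you propose to rederive the Jacobi-identity constraints rather than cite them, but the linear correspondence between Killing forms and Hamiltonian operators, and the appeal to Theorem \ref{theo_1} for the counting, are exactly as in the paper.
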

This Theorem, together with Theorem \ref{theo_1} leads to explicit
obstructions for the existence of a Hamiltonian formulation (\ref{hami}).

\section{Killing operator for affine connection}
Given an affine connection $\nabla$ on a surface $\Sigma$, its curvature  is defined by
\[
[\nabla_a, \nabla_b]X^c={{R_{ab}}^c}{}_d X^d.
\]
In two dimensions the projective Weyl tensor vanishes, and the curvature  can be  uniquely decomposed as
\be
\label{formula_for_rho}
{{R_{ab}}^c}{}_d= 
\delta_a{}^c\Rho_{bd}-\delta_b{}^c\Rho_{ad} +B_{ab}\delta_d{}^c,
\ee
where $\Rho_{ab}$ is the Schouten tensor related to the Ricci tensor $R_{ab}={{R_{ca}}^c}_b$ of $\nabla$ by $\Rho_{ab}=(2/3)R_{ab}+(1/3)R_{ba}$, and $B_{ab}=\Rho_{ba}-\Rho_{ab}=-2\Rho_{[ab]}$. We shall assume that $\Sigma$ is orientable, and choose a volume form\footnote{All the results are independent of the choice of volume form. In fact, we could have avoided introducing $\epsilon_{ab}$ had we chosen to use tensor densities.} $\epsilon_{ab}$ on $\Sigma$. We shall also introduce $\epsilon^{ab}$ such that $\epsilon^{ab}\epsilon_{cb}=\delta_c^a$. These skew-symmetric tensors are used to raise and lower indices according to $V^a=\epsilon^{ab}V_b$ and $V_a=\epsilon_{ba}V^b$. Then
\[
\nabla_a\epsilon_{bc}=\theta_a\epsilon_{bc},
\]
where $\theta_a=(1/2)\epsilon^{bc}\nabla_a\epsilon_{bc}$. 
Set $\beta=B_{ab}\epsilon^{ab}$.
\begin{prop}
\label{tractor_prop}
There is a one-to-one correspondence between solutions to the Killing equations (\ref{killing}), and parallel sections of the prolongation connection $D$ on a rank-three vector bundle $E=\Lambda^1(\Sigma)\oplus\Lambda^2(\Sigma)\rightarrow \Sigma$ defined by
\be
\label{tractor_con}
{\quad{{D}}_a \left(\begin{array}{c}
K_b\\ 
\mu
\end{array} \right)= 
\left(\begin{array}{c} \nabla_a K_b-\epsilon_{ab}\mu \\ 
\nabla_a\mu -\Big({\Rho^b}_a+\frac{1}{2}\beta{\delta^b}_a\Big)K_b+\mu\theta_a
\end{array} \right).}
\ee
\end{prop}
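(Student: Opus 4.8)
The statement is a prolongation result for the overdetermined finite-type system (\ref{killing}), and the plan is the standard one: I would show that a solution of the Killing equation determines, and is determined by, a section of $E$ that is parallel for the connection (\ref{tractor_con}), and that this correspondence is linear and bijective.

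One direction is immediate and I would dispose of it first. If $(K_b,\mu)$ is $D$-parallel, then the vanishing of the top component of (\ref{tractor_con}) reads $\nabla_aK_b=\epsilon_{ab}\mu$; since $\epsilon_{ab}$ is skew its symmetric part vanishes, so $\nabla_{(a}K_{b)}=0$. Conversely, starting from a solution $K_b$ of (\ref{killing}), I would use that in two dimensions every $2$-form is proportional to $\epsilon_{ab}$: the skew tensor $\nabla_{[a}K_{b]}$ can be written as $\mu\,\epsilon_{ab}$ for a unique scalar $\mu$, explicitly $\mu=\tfrac12\epsilon^{ab}\nabla_aK_b$ (recall $\epsilon^{ab}\epsilon_{cb}=\delta^a_c$, so $\epsilon^{ab}\epsilon_{ab}=2$). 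Because the symmetric part of $\nabla_aK_b$ vanishes, $\nabla_aK_b=\mu\,\epsilon_{ab}$, which is exactly the vanishing of the first row of $D$. It then remains to check that the second row vanishes automatically; this is the actual prolongation step.

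For that step I would differentiate $\nabla_bK_c=\mu\,\epsilon_{bc}$ and use $\nabla_a\epsilon_{bc}=\theta_a\epsilon_{bc}$ to get $\nabla_a\nabla_bK_c=(\nabla_a\mu)\epsilon_{bc}+\mu\,\theta_a\epsilon_{bc}$. Antisymmetrising in $[ab]$ and applying the Ricci identity $\nabla_{[a}\nabla_{b]}K_c=-\tfrac12{R_{ab}}^d{}_cK_d$ together with the curvature decomposition (\ref{formula_for_rho}) gives
\[
(\nabla_{[a}\mu)\epsilon_{b]c}+\mu\,\theta_{[a}\epsilon_{b]c}=-\tfrac12\big(\Rho_{bc}K_a-\Rho_{ac}K_b+B_{ab}K_c\big).
\]
Contracting this with $\epsilon^{bc}$ and using the two-dimensional identities $\epsilon^{bc}\epsilon_{ac}=\delta^b_a$, $B_{ab}=\tfrac12\beta\,\epsilon_{ab}$ and $\Rho_{[ab]}=-\tfrac12B_{ab}$ (whence $\epsilon^{bc}\Rho_{bc}=-\tfrac12\beta$) isolates $\nabla_a\mu$, and after collecting the $\beta$-terms one obtains precisely $\nabla_a\mu=\big({\Rho^b}_a+\tfrac12\beta\,{\delta^b}_a\big)K_b-\mu\,\theta_a$, i.e.\ the vanishing of the second row of (\ref{tractor_con}). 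Hence $(K_b,\mu)$ is $D$-parallel.

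Finally I would note that the maps $K_b\mapsto(K_b,\mu)$ and $(K_b,\mu)\mapsto K_b$ are mutually inverse: projection onto the first factor sends a parallel section to the Killing form it came from, because the first row of $D$ forces $\mu=\tfrac12\epsilon^{ab}\nabla_aK_b$, which is exactly how $\mu$ was reconstructed; and both maps are manifestly linear, so they give an isomorphism between the space of Killing forms and the space of $D$-parallel sections of $E$. The only place that needs genuine care is the contraction in the prolongation: one must retain the skew part of the Schouten tensor — which is the source of the $\tfrac12\beta\,{\delta^b}_a$ term — and track the $\theta_a$ contributions carefully through the $\epsilon$-identities; no further integrability conditions appear because $E$ has rank $3=\dim\Lambda^1(\Sigma)+\dim\Lambda^2(\Sigma)$, matching the number of unknowns $(K_1,K_2,\mu)$, so the system closes at first prolongation.
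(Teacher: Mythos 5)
Your proposal is correct and follows essentially the same route as the paper's own proof: drop the symmetrisation in (\ref{killing}) to introduce $\mu$ via $\nabla_aK_b=\mu\epsilon_{ab}$, then differentiate, skew-symmetrise, and use the curvature decomposition (\ref{formula_for_rho}) to obtain the second row of (\ref{tractor_con}). You have merely filled in the $\epsilon$-contractions and the bookkeeping of the $\beta$ and $\theta_a$ terms that the paper leaves implicit, and these check out.
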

{\bf Proof.}
Dropping the symmetrisation in (\ref{killing}) implies the existence of $\mu$
such that $\nabla_a K_b=\mu\epsilon_{ab}$. Differentiating this equation
covariantly, skew-symmetrising over all indices
and using  the curvature decomposition (\ref{formula_for_rho})
together with the Bianchi identity yields the statement of the Proposition.
\koniec
In the proof of Theorem \ref{theo_1} we shall find the integrability conditions for the existence of parallel sections of this connection. This will lead to a set of invariants of an affine connection $\nabla$.

\newpage
{\bf Proof of Theorem \ref{theo_1}.}
The integrability conditions $(\nabla_a\nabla_b-\nabla_b\nabla_a)\mu=0$ give the algebraic condition
\begin{equation}\label{obs}
F^aK_a+\beta\mu=0, \quad\mbox{where}\quad
F^a=\frac{1}{3}\epsilon^{ab}(L_b-\epsilon^{cd}\nabla_b B_{cd})
\end{equation}
and $L_b\equiv \epsilon^{cd}\nabla_c\Rho_{db}$ is the Cotton tensor of $\nabla$. Geometrically, the condition (\ref{obs}) means that the 
curvature of $D$, which is a matrix, has rank at most one, and annihilates a parallel section of $D$. Applying $\nabla_a$ to the condition (\ref{obs}), and using the vanishing of (\ref{tractor_con}) leads to two more algebraic conditions
\begin{equation}\label{obs22}
{\tilde M_a}{}^{b}K_b+\tilde N_a\mu=0,
\end{equation}
where
$$
\tilde{M}_a{}^b=\nabla_a F^b+\Big({\Rho^b}_a+\frac{1}{2}{\delta^b}_a\beta\Big)\beta, \quad \tilde N_a=-F_a+\nabla_a\beta-\beta\theta_a.
$$
Multiplying the equation (\ref{obs}) by $2\theta_a$, and adding the resulting expression to (\ref{obs22}) results in the changes ${\tilde M_a}{}^{b}\rightarrow {\tilde M_a}{}^{b}+2\theta_a F^b\equiv {M_a}{}^{b}$ and $\tilde N_a\rightarrow \tilde N_a+2\theta_a\beta\equiv N_a$. We can use this freedom to get rid of $\theta^a$ from in (\ref{obs22}), yielding
\begin{equation}
{M_a}^{b}K_b+N_a\mu=0,
\label{obs2}
\end{equation}
where
\begin{align}
\label{MandN}
{M_a}^b&=\frac{1}{3}\epsilon^{bc}\epsilon^{de}(\nabla_{a}Y_{dec}-\nabla_a\nabla_cB_{de})+\beta\;{\Rho^b}_a+\frac{1}{2}\beta^2{\delta^b}_a, \\
N_a&=-F_a+\epsilon^{bc}\nabla_a B_{bc},
\end{align}
and $Y_{cdb}=\nabla_{[c}\Rho_{d]b}$. Therefore a parallel section $\Psi\equiv(K_1, K_2, \mu)^T$ of $D$ must satisfy a system of three linear algebraic equations which we write in a matrix form as
\be
\label{matrixm}
{\mathcal M}\Psi\equiv
\left(\begin{array}{ccc}
F^1 & F^2 & \beta\\
{M_1}^{1} & {M_1}^{2} & N_1\\
{M_2}^{1} & {M_2}^{2} & N_2  
\end{array} \right)
\left(\begin{array}{c} K_1 \\ 
K_2\\
\mu
\end{array} \right)=0.
\ee
A necessary condition for the existence of a non-zero parallel section $\Psi$ is therefore the vanishing of the determinant of 
the matrix ${\mathcal M}$. This gives the first obstruction which we write
as a vanishing of the relative scalar invariant
\be
\label{i_n}
I_N=\epsilon_{cd}\epsilon^{be}{M_e}^c\Big(N_bF^d-\frac{1}{2}\beta {M_b}^d\Big).
\ee
This invariant has weight $-5$: if we replace $\epsilon_{ab}$ by
$e^{f}\epsilon_{ab}$, where $f:\Sigma\rightarrow \R$, then
$I_N\rightarrow e^{-5f} I_N$.
The vanishing of $I_N$ is not sufficient for the existence of a non-zero parallel section. To assure sufficiency assume 
that $I_N=0$. Rewrite (\ref{obs}) and (\ref{obs2}) as
\[
V^{\alpha}\Psi_\alpha=0,\quad (D_a V^{\alpha})\Psi_\alpha=0,\quad \alpha=1, \dots, 3
\]
where $V=(F^1, F^2, \beta)$ in the formula above is a section of the dual bundle $E^*$,
and $D_a$ is the dual connection inherited from (\ref{tractor_con}). We continue differentiating, and 
adding the linear equations on $\Psi$. The Frobenius theorem tells
us that the process terminates once a differentiation
does not add any additional independent equations, as then the rank of the  
matrix of equations on $\Psi$ stabilises and does not grow. The space of parallel sections of $D$ has dimension equal to $3$ (the rank of the bundle $E$) minus the number of independent equations on $\Psi$. 
Therefore the sufficient condition for the existence of a Killing form assuming that $I_N=0$ is
\begin{equation}
\label{6th}
{\mbox{rank}}{
\left(
\begin{array}{c}
{V}\\
{D_1 V}\\
{D_2 V}\\
{D_1D_1 V}\\
D_{(1}D_{2)}V\\
D_2D_2 V
\end{array}
\right )<3.}
\end{equation}
If $I_N=0$ and  $V\neq 0$, then 
\[
cV+c_1D_1V+c_2D_2V=0,
\]
where $(c, c_1, c_2)$ are some functions on $U$, and $(c_1, c_2)$ are  not both zero. This implies that the term $D_{(1}D_{2)}V$ in 
(\ref{6th}) is a linear combination of all other terms, and can be disregarded. Now, suppose
that $D_1V=0$. Then (\ref{6th}) 
becomes $\mbox{det}(V, D_2 V, D_2 D_2 V)=0$. Equivalently, if $\quad D_2V=0$ then
(\ref{6th})  becomes $\mbox{det}(V, D_1 V, D_1 D_1 V)=0$.
We conclude that (\ref{6th}) is equivalent to
\be
\label{i_ss}
(\mbox{det}(V, D_1 V, D_{(1} D_{1)} V),\mbox{det}(V, D_2 V, D_{(2} D_{2)} V))=(0,0),
\ee
as it is easy to show that the condition above implies (\ref{6th}). In fact, condition (\ref{i_ss}) above gives just one independent condition: if $c_2\neq 0$, then the sufficient condition is the vanishing of the first component and if $c_1\neq 0$, then it is the vanishing of the second one. The explicit tensor expression of the obstruction (\ref{i_ss}) is given by calculating $\det(V, D_a V, D_{(b} D_{c)} V)$, which is proportional to the tensor
\be
\label{w_tensor}
W_{abc}= F_e {M_a}^e V_{(bc)}-F_e U^e_{(bc)}N_a+\beta M_{ae}U^e_{(bc)}, \quad\mbox{where}
\ee
\begin{eqnarray*}
U^b_{ca}&=&\epsilon^{bd}\epsilon^{ef}[\frac{1}{3}(\nabla_c\nabla_a
Y_{efd}-\nabla_c\nabla_a\nabla_d B_{ef})+\nabla_c( B_{ef}P_{da})]\\
&+&\frac{1}{2}\epsilon^{ef}\epsilon^{gh}\nabla_c(B_{ef}
B_{gh})\delta^b_a+\epsilon^{bd}N_a(P_{dc}+\frac{1}{2}\beta \epsilon_{cd}),\quad\mbox{and}\\
V_{ca}&=&-M_{ac}-\frac{1}{3}\epsilon^{de}(\nabla_c\nabla_d
P_{ea}-\nabla_c\nabla_a B_{de}).
\end{eqnarray*}
Therefore, the sufficient condition for the existence of a Killing $1$-form, under $I_N=0$, is
\begin{equation}\label{i_s}
I_S\equiv(W_{111},W_{222})=(0,0).
\end{equation}

We shall now consider the case when there exist precisely two independent solutions to the Killing equation
(\ref{killing}) (note that this situation does not arise if  
$\nabla$ is a Levi-Civita connection of some metric, 
as then the number of Killing vectors can be
$0, 1$ or $3$ - the last case being projectively flat). Therefore the rank of the matrix ${\mathcal{M}}$ in (\ref{matrixm}) is equal to one. We find that this can happens  if and only if  $\beta\neq 0$
and
\be
\label{tensor_obs}
{T_a}^{b}=0, \quad\mbox{where}\quad
{T_a}^{b}\equiv N_aF^b-\beta {M_a}^{b}.
\ee
This condition
guarantees the vanishing of all two-by-two minors of ${\mathcal M}$.

Finally, there exist three independent parallel sections of $D$ iff
the curvature of $D$ vanishes, or equivalently if the matrix 
${\mathcal M}$ vanishes. This condition is equivalent to the projective
flatness of the connection $\nabla$ together with the condition $\beta=0$.

Let us now clarify why (\ref{i_s}) is actually a scalar condition. First, notice that interpreting the tensor $T$ in (\ref{tensor_obs}) as a $2\times 2$ matrix, we can calculate its determinant, which is given by
$$
\det T=\frac{1}{2}\epsilon^{ab}\epsilon_{cd}T_a{}^c T_b{}^d=\beta I_N.
$$
Thus, condition $I_N=0$ implies that $T$ is a degenerate matrix. In this case, let $P$ be a non-zero vector field such that $P^a T_a{}^b=0$ and choose coordinates such that $P^2=0$. In these coordinates, $T_1{}^b=0$, which means that $D_1 V\propto V$ (in fact, it is easy to show that the second line of $\mathcal M$ is proportional to the first one in this case). This implies that the only non-zero component of $W$ is $W_{222}$. Therefore, $W_{abc}$ is proportional to $P_a P_b P_c$ in these coordinates, hence in any other for these are tensorial quantities. This proves the following Lemma.
\begin{lemma}
If $I_N=0$, then there exists a non-zero $1$-form $P$ such that $W_{abc}=(\text{scalar}) P_a P_b P_c$ and $P^a T_{a}{}^b=0$.
\end{lemma}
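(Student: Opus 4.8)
The plan is to convert the vanishing of $I_N$ into a degeneracy statement for the endomorphism $T_a{}^b$ of (\ref{tensor_obs}), use its kernel to choose adapted coordinates, and read off the shape of $W_{abc}$ there. First I would record the pointwise algebraic identity
\[
\det T \;\equiv\; \tfrac12\,\epsilon^{ab}\epsilon_{cd}\,T_a{}^c T_b{}^d \;=\; \beta\, I_N .
\]
This is a one-line expansion of the determinant of $T_a{}^b = N_aF^b-\beta M_a{}^b$: the purely $(N,F)$ term $\tfrac12\epsilon^{ab}\epsilon_{cd}N_aF^cN_bF^d$ drops out because $\epsilon^{ab}N_aN_b=0$ (the determinant of a rank-one matrix), the term quadratic in $M$ gives $\beta^2\det M$, and the cross term reduces, after relabelling indices, to $-\beta\,\epsilon^{ab}\epsilon_{cd}N_aF^cM_b{}^d$, which matched against the two summands of the definition (\ref{i_n}) of $I_N$ yields the claim. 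Hence $I_N=0$ forces $\det T=0$, i.e. the $2\times2$ matrix $T_a{}^b$ is degenerate; when $\beta=0$ one has $B_{ab}=0$ and $T_a{}^b=N_aF^b$ is already rank $\le1$, so the identity is vacuous but the conclusion still holds.

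\textbf{Adapted coordinates.} Next I would pick a nowhere-vanishing $P^a$ in the left kernel of $T$, so $P^aT_a{}^b=0$, and set $P_a:=\epsilon_{ba}P^b$; this $1$-form $P$ is the one claimed in the Lemma, and $P^aT_a{}^b=0$ is one of its two assertions by construction. Since $P$ is nonvanishing I can choose local coordinates adapted to it with $P^2=0$, whence $P^1\neq0$ and the kernel relation collapses to $T_1{}^b=0$, that is $N_1F^b=\beta M_1{}^b$ for $b=1,2$. In the notation of the proof of Theorem \ref{theo_1}, where the three rows of the matrix $\mathcal M$ of (\ref{matrixm}) are $V=(F^1,F^2,\beta)$, $D_1V=(M_1{}^1,M_1{}^2,N_1)$ and $D_2V=(M_2{}^1,M_2{}^2,N_2)$, the relation $N_1F^b=\beta M_1{}^b$ together with the trivial $N_1=(N_1/\beta)\beta$ says precisely $D_1V=(N_1/\beta)\,V$, i.e. the second row of $\mathcal M$ is proportional to the first: $D_1V\propto V$.

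\textbf{Reading off $W$.} I would then feed $D_1V\propto V$ into the determinantal description $W_{abc}\propto\det\bigl(V,\,D_aV,\,D_{(b}D_{c)}V\bigr)$ of the obstruction (\ref{i_ss}). Any such determinant with $a=1$ has its first two rows proportional, so $W_{1bc}=0$. Writing $D_1V=\lambda V$ and iterating gives $D_1D_1V=(\partial_1\lambda+\lambda^2)\,V$, again proportional to $V$, which disposes of the components $W_{a11}$; and computing $D_{(1}D_{2)}V$ through the commutator $[D_1,D_2]$ and the rank-one structure of the curvature of $D$ (cf. (\ref{obs})) shows it lies in the plane spanned by $V$ and $D_2V$, which disposes of $W_{a12}$. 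Hence in the adapted chart $W_{abc}$ survives only for $a=b=c=2$; there $P_a=(0,P^1)$ is proportional to $(0,1)$, so $P_aP_bP_c$ has $222$ as its only nonzero component too, and $W_{abc}$ and $P_aP_bP_c$ agree up to a scalar factor in this chart, hence — all the objects involved being tensors — in every chart. This is the remaining assertion of the Lemma.

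\textbf{Main obstacle.} The genuinely computational step, and the one I expect to be hardest, is the last: killing the components of $W$ not immediately annihilated by the proportional rows $D_1V\propto V$ (concretely $W_{211}$ and $W_{212}$). This requires either carrying the commutator/curvature argument for $D$ far enough to prove $D_{(1}D_{2)}V\in\mathrm{span}(V,D_2V)$, or, less elegantly, substituting $N_1F^b=\beta M_1{}^b$ directly into the closed-form expressions for $W_{abc}$, $U^b_{ca}$ and $V_{ca}$ and tracking the cancellations; I would attempt the former. A minor routine loose end is to confirm the $\beta=0$ case separately, where $B_{ab}=0$ trivialises most of the tensors in play.
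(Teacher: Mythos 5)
Your proposal is correct and follows essentially the same route as the paper's: establish the identity $\det T=\tfrac12\epsilon^{ab}\epsilon_{cd}T_a{}^cT_b{}^d=\beta I_N$, pick $P$ in the kernel of the degenerate $T$, pass to coordinates with $P^2=0$ so that $T_1{}^b=0$ forces $D_1V\propto V$, conclude that only $W_{222}$ survives, and invoke tensoriality. The only difference is that you spell out details the paper leaves implicit — the verification of the determinant identity, the reason $W_{211}$ and $W_{212}$ (not just $W_{1bc}$) vanish via $D_1D_1V\propto V$ and the rank-one curvature of $D$, and the degenerate case $\beta=0$ — all of which are sound.
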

The vanishing of $(\text{scalar})$ in the above Lemma is the actual scalar condition given by (\ref{i_s}), even though this scalar does not have an explicit formula from our construction.
\koniec
{\bf Remarks}
\begin{itemize}
\item
If the connection $\nabla$ is special (i.e. the Ricci tensor is symmetric, or equivalently $\beta=0$) then $I_N=-3^{-3}\nu_5$, where 
\[
\nu_5\equiv L^aL^b\nabla_aL_b
\]
is the Liouville projective invariant \cite{Liouville1889, Bryant2009}, and the indices are raised with a parallel
volume form.  Note that, unlike $\nu_5$, the obstruction
$I_N$ is not invariant under the projective changes of connection (see 
eq. (\ref{projective_change}) in \S\ref{sec_hydro}). 

Moreover, since the Killing equations are projectively invariant, the sufficient condition (\ref{i_s}) is equivalent 
to the vanishing of the invariant $w_1$ constructed by Liouville for second order ODEs in 
\cite{Liouville1889}. In fact, Liouville's $w_1$ can be obtained from $W_{111}$ (if $L_1\not=0$) or from $W_{222}$ (if $L_2\not=0$) by using the connection in the same projective class annihilating the volume form given by $\epsilon_{12}=1$.
\item
Theorem \ref{theo_1} generalises a well known characterisation of metrics 
which admit a Killing vector. 
See \cite{Kruglikov2008} or \cite{DunajskiBook} where a 3 by 3 matrix
analogous to ${\mathcal M}$ has been constructed. In this case
$N=-F= \frac{1}{3}*dR$, where $R$ is the scalar curvature, and $*$ is the Hodge operator of the metric $g$. The invariant (\ref{i_n}) reduces 
to\footnote{The prolongation procedure in \cite{DunajskiBook} has been carried over in the Riemannian case. The additional subtlety in the 
Lorentzian signature arises if $\nabla R$ is a non-zero null vector. We claim that, in this case, if the metric admits a Killing vector, then it has constant curvature (and thus admits three Killing vectors). 
To see it, assume that a Lorentzian metric admits a Killing vector $K$. If $K$ is null, then the metric is flat with $R=0$.
Otherwise it  can locally be put in 
the form $dY^2-f(Y)^2dX^2$ for some $f=f(Y)$, which implies that $R$ does not depend on $X$. Imposing the condition 
$|\nabla R|^2\equiv 0$ leads to $R=$ const.}
\[
I_N:=*\frac{1}{432}dR\wedge d(|\nabla R|^2).
\]
\end{itemize}

From the first item in the above remarks, we deduce the following corollary.

\begin{coll}
A special connection that is not projectively flat admits a Killing $1$-form if and only if $\nu_5=w_1=0$.
\end{coll}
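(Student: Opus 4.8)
The plan is to read the corollary off directly from Theorem~\ref{theo_1} together with the first of the Remarks following its proof; no computation beyond what is already recorded there is needed. First I would invoke Theorem~\ref{theo_1}: a torsion-free $C^4$ connection $\nabla$ admits, locally, a solution of the Killing equation~(\ref{killing}) if and only if both scalar obstructions $I_N$ and $I_S$ of~(\ref{i_n}) and~(\ref{i_s}) vanish. For a \emph{special} connection the skew part of the Ricci tensor vanishes, i.e. $\beta=B_{ab}\epsilon^{ab}=0$ and $B_{ab}=0$, so $F^a=\frac{1}{3}\epsilon^{ab}(L_b-\epsilon^{cd}\nabla_bB_{cd})$ reduces to $F^a\propto\epsilon^{ab}L_b$. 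Under this specialisation the first Remark records the identity $I_N=-3^{-3}\nu_5$, with $\nu_5=L^aL^b\nabla_aL_b$ the Liouville invariant; hence $I_N\equiv 0$ if and only if $\nu_5\equiv 0$.

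Next I would identify the sufficiency obstruction $I_S$ with Liouville's $w_1$, and this is where the hypothesis that $\nabla$ is \emph{not} projectively flat is used: in two dimensions the relevant Liouville $1$-form — which for a special connection is essentially the Cotton tensor $L_b$ — cannot vanish on an open set, so $L\neq 0$ on a dense open subset of $\Sigma$, where at least one of the components $L_1,L_2$ is non-zero. On this subset the Lemma in the proof of Theorem~\ref{theo_1} writes $W_{abc}=(\text{scalar})\,P_aP_bP_c$ with $P^aT_a{}^b=0$, and, passing to the representative of $[\nabla]$ that annihilates the volume form with $\epsilon_{12}=1$, one checks — by comparing the prolongation output~(\ref{w_tensor}) with Liouville's formula — that this scalar is precisely $w_1$, recovered from $W_{111}$ where $L_1\neq 0$ or from $W_{222}$ where $L_2\neq 0$. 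Therefore $I_S\equiv 0$ if and only if $w_1\equiv 0$.

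Finally I would note that a special connection that is not projectively flat can admit neither three independent Killing $1$-forms (the corresponding case of Theorem~\ref{theo_1} forces projective flatness) nor exactly two of them (that case forces the skew part of the Ricci tensor to be non-zero, contrary to speciality). Hence for such a connection ``admitting a Killing $1$-form'' is the same as admitting exactly one, and, combining the two equivalences above, this happens precisely when $\nu_5=w_1=0$.

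The only genuinely delicate — but essentially mechanical — point I anticipate is the explicit identification of the scalar factor in $W_{abc}$ with Liouville's classical invariant $w_1$ of~\cite{Liouville1889}: this is a direct if lengthy coordinate calculation in the gauge $\epsilon_{12}=1$. One should also take the minor precaution of running the argument on the dense open set where $L\neq 0$ — so that $w_1$, extracted from $W_{111}$ or $W_{222}$, is well defined — and then using continuity of the relative invariants to extend the conclusion to all of $\Sigma$.
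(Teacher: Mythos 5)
Your proposal is correct and follows essentially the same route as the paper, which deduces the corollary directly from Theorem \ref{theo_1} together with the first remark after its proof (namely that $\beta=0$ for special connections gives $I_N=-3^{-3}\nu_5$, and that the sufficiency obstruction $I_S$ coincides with Liouville's $w_1$, extracted from $W_{111}$ or $W_{222}$ in the gauge $\epsilon_{12}=1$ wherever $L\neq 0$). Your added observations — that non-projective-flatness guarantees $L\neq 0$ so that $w_1$ is defined, and that the two- and three-integral cases of Theorem \ref{theo_1} are excluded for special, non-projectively-flat connections — are consistent with, and merely make explicit, what the paper leaves implicit.
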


This allows us to determine a complete method to tell whether a projective structure is metrisable, that is to say, if possible solutions to its metrisability equations are degenerate.

\begin{coll}\label{corMet}
A projective structure in $2$ dimensions is metrisable if and only if one of the following happens:
\begin{itemize}
\item its metrisability equations admit a solution space of dimension at least $2$;
\item its metrisability equations admit a unique linearly independent solution and $\nu_5$ and $w_1$ are not both zero.
\end{itemize}
\end{coll}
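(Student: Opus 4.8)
The plan is to deduce the Corollary from four facts already established: Theorem \ref{thmmeteq}, which turns a non-degenerate solution of the metrisability equations (\ref{eqmetsym}) into a metric; Theorem \ref{thmKFormsDeg}, which identifies the degenerate solutions of (\ref{eqmetsym}) with the special representatives of $[\nabla]$ admitting a Killing $1$-form; Lemma 4.3 of \cite{Bryant2009}, which says that in two dimensions any subspace of the solution space $\mathcal S$ of (\ref{eqmetsym}) consisting entirely of degenerate tensors has dimension at most $1$; and the Corollary immediately above, according to which a special connection that is not projectively flat admits a Killing $1$-form if and only if $\nu_5=w_1=0$. Throughout I use that every projective structure has a special representative, that a metrisable one admits the Levi-Civita connection of its metric as such, and hence that $\nu_5$ and $w_1$ are well-defined invariants of the special projective class.

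The argument hinges on separating off the projectively flat case. If $[\nabla]$ is projectively flat, then by \cite{East2008,Bryant2009} the solution space $\mathcal S$ is six-dimensional and $[\nabla]$ carries a constant-curvature metric, so it is metrisable and $\dim\mathcal S\ge 2$ --- the first bullet holds. If $[\nabla]$ is not projectively flat, then combining Theorem \ref{thmKFormsDeg} with the preceding Corollary shows that $[\nabla]$ admits a degenerate solution of (\ref{eqmetsym}) if and only if $\nu_5=w_1=0$.

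For the ``if'' direction: if $\dim\mathcal S\ge 2$, then $\mathcal S$ is not a space of purely degenerate tensors, so it contains a non-degenerate element and Theorem \ref{thmmeteq} yields metrisability. If instead $\dim\mathcal S=1$ with $\nu_5,w_1$ not both zero, then $[\nabla]$ cannot be projectively flat (that would force $\dim\mathcal S=6$), so by the dichotomy it has no degenerate solution; hence the unique solution up to scale is non-degenerate and Theorem \ref{thmmeteq} applies again. For the ``only if'' direction: metrisability gives via Theorem \ref{thmmeteq} a non-degenerate $\sigma\in\mathcal S$, so $\dim\mathcal S\ge 1$; if $\dim\mathcal S\ge 2$ the first bullet holds; and if $\dim\mathcal S=1$ I assume $\nu_5=w_1=0$ towards a contradiction, note that projective flatness is excluded as before, so the dichotomy produces a degenerate $\sigma'\in\mathcal S$ linearly independent from $\sigma$, contradicting $\dim\mathcal S=1$. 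Hence in that case $\nu_5,w_1$ are not both zero, which is the second bullet.

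The step I expect to demand the most care is matching the hypotheses of the quoted results, in particular the fact that the preceding Corollary and Theorem \ref{thmKFormsDeg} speak about special representatives and that the former requires non-projective-flatness: one must ensure that the ``unique solution'' scenario with $\nu_5=w_1=0$ is not secretly realised by a projectively flat structure, which it is not since flatness forces a six-dimensional solution space. Peeling off the projectively flat case at the very beginning, as above, makes all remaining invocations legitimate and reduces the rest to a short, mechanical chain of implications.
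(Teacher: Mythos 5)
Your proof is correct and follows essentially the same route as the paper's (two-sentence) argument: Lemma 4.3 of \cite{Bryant2009} handles the first bullet, and Theorem \ref{thmKFormsDeg} together with the preceding corollary shows that a unique solution is degenerate precisely when $\nu_5=w_1=0$. Your explicit separation of the projectively flat case is a worthwhile clarification the paper leaves implicit (it is needed for the preceding corollary, and for $w_1$, to apply), but it does not constitute a different approach.
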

\begin{proof}
The first item is a result of Lemma 4.3 of \cite{Bryant2009}. The second one comes from the fact that if the metrisability equations admit a unique linearly independent solution, then (from Theorem \ref{thmKFormsDeg} and the above corollary) it is degenerate if and only if $\nu_5=w_1=0$.
\end{proof}

It is natural to ask what can be said about the unparametrised geodesic equations when the special connections of the corresponding projective structure admits a Killing form. We will show that in this case (which corresponds to $\nu_5=w_1=0$) there exists a coordinate system in which the unparametrised geodesic equations take the form $y''=A_0(x,y)$ and these coordinates correspond to those in which the connection with Thomas symbols admits a closed Killing $1$-form. Actually, the fact that $\nu_5=w_1=0$ implied the existence of coordinates such that the unparametrised geodesic equations do not involve first derivatives was already known \cite{Liouville1889} (c.f. also \cite{BabBor99}), however we give here a different and independent proof using Theorem \ref{theo_1}, providing a geometric intuition to this choice of coordinates.

First, let us understand how Thomas symbols transform under change of coordinates. The following will apply in $n$ dimensions. Let $x^\alpha, x^\beta,$ $x^\sigma, \dots$ be local coordinates, where $\alpha,\beta,\sigma,\dots\in\{1,\dots,n\}$. Now, on an overlapping chart, let $y^i, y^j, y^k,\dots$ be another coordinate system, where $i,j,k,\dots \in\{1,\dots,n\}$. Then \cite{Thomas1925}
\begin{align}\label{transfThomas}
\Pi^\sigma_{\alpha\beta}&=\Pi^i_{jk} \dfrac{\partial y^j}{\partial x^\alpha}\dfrac{\partial y^k}{\partial x^\beta}\dfrac{\partial x^\sigma}{\partial y^i}+\dfrac{\partial x^\sigma}{\partial y^i}\dfrac{\partial^2 y^i}{\partial x^\alpha\partial x^\beta}-\frac{1}{n+1}\delta^\sigma_\beta \dfrac{\partial}{\partial x^\alpha}\ln \left|\dfrac{\partial y}{\partial x}\right|-\nonumber\\
&-\frac{1}{n+1}\delta^\sigma_\alpha \dfrac{\partial}{\partial x^\beta}\ln \left|\dfrac{\partial y}{\partial x}\right|,
\end{align}
where $\left|\dfrac{\partial y}{\partial x}\right|$ is the modulus of the Jacobian of the change of coordinates $y^i=y^i(x^\alpha)$.

Notice that, even though the property of Christoffel symbols being traceless is coordinate dependent (as $\Gamma^a_{ab}$ is not a $1$-form), the transformation (\ref{transfThomas}) assures that tracelessness is preserved for Thomas symbols. In fact, if $\Pi^i_{ji}=0$, then
$$
\Pi^\beta_{\alpha\beta}=\dfrac{\partial x^\beta}{\partial y^i}\dfrac{\partial^2 y^i}{\partial x^\alpha\partial x^\beta}-\dfrac{\partial}{\partial x^\alpha}\ln\left|\dfrac{\partial y}{\partial x}\right|=0.
$$

Now, let us see how the Killing equations $\nabla^\Pi_{(\alpha}K_{\beta)}=0$ transform. The symbol $\nabla^\Pi$ corresponds to the covariant derivative associated to traceless Christoffel symbols (i.e., the Thomas symbols). A direct calculation shows that
\begin{align*}
2\nabla^\Pi_{(\alpha}K_{\beta)}&\equiv\dfrac{\partial}{\partial x^\alpha}K_\beta+\dfrac{\partial}{\partial x^\beta}K_\alpha-2\Pi^\sigma_{\alpha\beta}K_\sigma=\\
&=\dfrac{\partial y^i}{\partial x^\alpha}\dfrac{\partial y^j}{\partial x^\beta}\left(\dfrac{\partial}{\partial y^i}K_j+\dfrac{\partial}{\partial y^j}K_i-2\Pi^k_{ij}K_k\right),
\end{align*}
if 
\begin{equation}\label{coordKillingThomas}
K_i=\left|\dfrac{\partial y}{\partial x}\right|^{\frac{2}{n+1}}\dfrac{\partial x^\alpha}{\partial y^i}K_\alpha.
\end{equation}

\begin{theo}\label{thmLiouInv}
The ODE $y''=A_0(x,y)+A_1(x,y) y'+A_2(x,y) (y')^2+A_3(x,y) (y')^3$ defining a non-flat projective structure admits coordinates $(X,Y)$ such that $Y_{XX}=f(X,Y)$ if and only if $\nu_5=w_1=0$. Moreover, this is also equivalent to the fact that the connection with Thomas symbols admits a Killing $1$-form given by $dX$.
\end{theo}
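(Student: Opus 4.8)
The plan is to split the triple equivalence into an essentially algebraic part — relating the first-derivative-free form of the ODE to the Thomas symbols and to $dX$ being a Killing form — and the genuinely substantive part — producing the coordinates from $\nu_5=w_1=0$ — and to handle the latter by straightening the Killing one-form of the Thomas connection into $dX$ using the density weight in the transformation law (\ref{coordKillingThomas}). First I would set up the dictionary between the coefficients $A_i$ and the Thomas symbols in a fixed chart. Since $\Pi^c_{ac}=0$ we have $\Pi^2_{12}=-\Pi^1_{11}$ and $\Pi^1_{12}=-\Pi^2_{22}$, and substituting these into the representative-independent formulas (\ref{eqAs}), evaluated on $\Pi$, gives $A_1=3\Pi^1_{11}$, $A_2=3\Pi^1_{12}$, $A_3=\Pi^1_{22}$ (and $A_0=-\Pi^2_{11}$). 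Hence in coordinates $(X,Y)$ the ODE takes the form $Y_{XX}=A_0(X,Y)$ if and only if $\Pi^1_{ab}=0$ for all $a,b$. On the other hand, for the exact one-form $\omega_a=(dX)_a=\delta^1_a$ one has $\nabla^{\Pi}_{(a}\omega_{b)}=-\Pi^1_{ab}$, so $dX$ solves the Killing equation of the Thomas connection in the chart $(X,Y)$ exactly when $\Pi^1_{ab}=0$. This proves that the last clause of the theorem is equivalent to the first, and reduces everything to the statement: the ODE can be brought by a point transformation to the form $Y_{XX}=f(X,Y)$ if and only if $\nu_5=w_1=0$.

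Next I would observe that the Thomas connection is special and invoke the corollary to Theorem~\ref{theo_1}. Being trace-free it has $R_{ab}{}^{c}{}_{c}=\partial_a\Pi^c_{bc}-\partial_b\Pi^c_{ac}=0$, hence $B_{ab}=0$, i.e. $\beta=0$; it also annihilates the constant volume form $\epsilon_{12}=1$, so $\theta_a=0$ in the working chart. Consequently the Remarks following Theorem~\ref{theo_1} give $I_N=-3^{-3}\nu_5$ and, since the Killing equations are projectively invariant, $I_S=0\iff w_1=0$; the Corollary stated after those Remarks then says that a non-flat special connection admits a Killing one-form if and only if $\nu_5=w_1=0$. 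This already yields one implication: if in some chart $Y_{XX}=f(X,Y)$, then $\Pi^1_{ab}=0$ there, so $dX$ is $\nabla^{\Pi}$-parallel, hence a Killing one-form, and non-flatness together with the Corollary force $\nu_5=w_1=0$.

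For the converse I would work near a point where the structure is non-flat, so that the Cotton tensor $L_b$ of the (special) Thomas connection is non-zero there. The Corollary produces a Killing one-form $K$ of $\Pi$; the first integrability obstruction (\ref{obs}) with $\beta=0$ reads $\epsilon^{ab}L_bK_a=0$, hence $K\propto L$ and $K$ is non-vanishing near the point. The line field $\ker K$ is automatically integrable; let $X$ be a first integral of it, so $K=\lambda\,dX$ with $\lambda$ nowhere zero (flip the sign so that $\lambda>0$). I then choose the complementary coordinate $Y$ so that the Killing one-form transformed according to (\ref{coordKillingThomas}) equals $dX$ on the nose: its second component vanishes automatically once $\partial/\partial Y\in\ker K$, which holds by the choice of $X$, and its first component equals $|J|^{-2/3}\lambda$ with $J=\det\!\big(\partial(x,y)/\partial(X,Y)\big)$, so the condition is $|J|^{-2/3}\lambda=1$, i.e.
\[
\partial_x X\,\partial_y Y-\partial_y X\,\partial_x Y=\lambda^{-3/2}.
\]
This is a single linear first-order PDE for $Y$: it prescribes the derivative of $Y$ along the skew-gradient of $X$ (which is tangent to the leaves of $\ker K$) to be the nowhere-zero function $\lambda^{-3/2}$, and it is solved by integrating along the leaves and assigning $Y$ freely on a transversal; the non-vanishing of the right-hand side is precisely $dX\wedge dY\neq 0$, so $(X,Y)$ is a genuine chart. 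In this chart the transformed Thomas connection has $dX$ as a Killing one-form, hence $\Pi^1_{ab}=0$ by the first step, hence $A_1=A_2=A_3=0$ and the ODE reads $Y_{XX}=f(X,Y)$. Assembling the implications closes the equivalence.

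I expect the construction in the third paragraph to be the crux. The Killing one-form $K$ of the Thomas connection need not be closed — it fails to be already in the flat model, where $K=(cy+a)\,dx+(b-cx)\,dy$ has $dK=-2c\,dx\wedge dy$ — so the naive idea of taking $X$ with $dX=K$ is unavailable; the key point is that the two-thirds density weight in (\ref{coordKillingThomas}) supplies exactly the one extra functional degree of freedom needed to normalise $K$ to $dX$, and the only thing that must be checked is solvability of the resulting first-order PDE for $Y$ together with $dX\wedge dY\neq 0$. The remaining bookkeeping — the identifications $I_N=-3^{-3}\nu_5$ and $I_S=0\iff w_1=0$ for the Thomas connection, needed to invoke the Corollary, and the fact that non-flatness localises to a point where $L\neq 0$ — is routine given the Remarks after Theorem~\ref{theo_1}.
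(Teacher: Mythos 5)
Your proof is correct and follows essentially the same route as the paper's: reduce the statement to the claim that the Thomas connection admits $dX$ as a Killing one-form, invoke the corollary of Theorem \ref{theo_1} for special non-flat connections to get a Killing form from $\nu_5=w_1=0$, and then normalise that form to $dX$ by exploiting the $2/3$-density weight in the transformation rule (\ref{coordKillingThomas}). Your uniform construction of $Y$ via the linear first-order PDE $\partial_x X\,\partial_y Y-\partial_y X\,\partial_x Y=\lambda^{-3/2}$ subsumes the paper's two cases (closed versus non-closed $K$) and its explicit formula $(X,Y)=\bigl(x,\int^y g(x,y')^{-3/2}dy'\bigr)$, which is just the particular solution of your PDE in the chart where $X=x$.
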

\begin{proof}
First, notice that if $Y_{XX}=f(X,Y)$, then the connection with Thomas symbols admits $dX$ as Killing form. Conversely, if this connection admits $dX$ as Killing form, then all the Thomas symbols are vanish except $\Pi^2_{11}$ and thus its projective structure is represented by an ODE in the required form.
The ``only if" part is immediate. For the ``if" part, choose coordinates such that $L_1\not=0$, which is possible since the projective structure is not flat. Then $\nu_5=w_1=0$ implies that special representatives of the projective structure admit a Killing $1$-form. In particular, the connection with Thomas symbols admits a Killing $1$-form $K_\alpha$. 

If the Killing form is closed, then it can be written as $dX$ under a volume-preserving transformation and this is the required coordinate system by the first part of the proof. 

If the Killing form is not closed, then it can be written in some coordinate system that we call again $(x,y)$ as $g(x,y)dx$, for some function $g$ such that $\partial_y g\not=0$. Then, the coordinate transformation $(X,Y)=(x,\int^y g(x,y')^{-3/2}dy')$ makes the Killing form closed, according to the rule (\ref{coordKillingThomas}). This, along with what we proved above, concludes the demonstration of the theorem.
\end{proof}

In the metric case, a Levi-Civita connection cannot admit precisely two local linear first integrals, as $\beta$ (which is proportional to the skew part of the Ricci tensor) vanishes. In the following proposition we shall explicitly find all local normal forms of non-metric affine connections which admit two first integrals.
\begin{prop}\label{prop_rank1}
Let $\nabla$ be an affine connection on a surface $\Sigma$ which admits exactly two non-proportional linear first integrals which are independent at some point $p\in \Sigma$. Local coordinates $(X, Y)$ can be chosen on an open set $U\subset \Sigma$ containing $p$ such that 
\begin{align}
\label{class_2}
\Gamma_{12}^1&=\Gamma_{21}^1= \frac{c}{2}, \quad \Gamma_{11}^2=\frac{P_X}{Q},\quad \Gamma_{12}^2=\Gamma_{21}^2 =\frac{P_Y+Q_X-cP}{2Q},\nonumber \\
\Gamma_{22}^2&=\frac{Q_Y}{Q},
\end{align}
and all other components vanish, where $c$ is a constant equal to $0$ or $1$,
and $(P, Q)$ are arbitrary functions of $(X, Y)$.
\end{prop}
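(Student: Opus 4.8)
The plan is to build the coordinates $(X,Y)$ directly out of the two Killing $1$-forms. By hypothesis the space of solutions of the Killing equation (\ref{killing}) is two-dimensional; fix two solutions $\kappa$ and $\tilde\kappa$ that are linearly independent at $p$, so that $\{\kappa,\tilde\kappa\}$ is a coframe on a neighbourhood of $p$. Recall from Proposition \ref{tractor_prop} that each Killing form satisfies $\nabla_a\kappa_b=\mu\,\epsilon_{ab}$ for its associated $\mu$, so $d\kappa$ is a multiple of the volume form and $\kappa$ is closed precisely when $\mu\equiv 0$. Note also that $\kappa$ and $\tilde\kappa$ cannot both be closed: two independent closed Killing forms are locally exact, $\kappa=dX$ and $\tilde\kappa=dY$, which by (\ref{killing}) forces every Christoffel symbol to vanish in the chart $(X,Y)$ and hence makes $\nabla$ flat, and a flat connection carries a three-dimensional space of Killing forms, contradicting the assumption of exactly two. (Alternatively one could invoke Theorem \ref{theo_1}, which gives $\beta\neq 0$, but this elementary remark is all that is needed.)

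The first step is to normalise $\kappa$. The line field $\ker\kappa$ is one-dimensional, hence integrable, so there are local coordinates $(u,v)$ near $p$ with $\ker\kappa=\ker dv$, i.e. $\kappa=\sigma\,dv$ for a nowhere-vanishing function $\sigma$. If $\kappa$ is closed then $\sigma_u=0$, and replacing $v$ by a primitive of $\sigma\,dv$ lets us take $\kappa=dX$; choosing any function $Y$ with $dY$ transverse to $dX$ at $p$ completes the chart, and we set $c=0$. If $\kappa$ is not closed then $\sigma_u\neq 0$, and putting $X=v$, $Y=\ln|\sigma|$ (a sign in front of $X$ being absorbed) gives $\kappa=e^{Y}\,dX$; here $(X,Y)$ is a genuine coordinate system near $p$ precisely because $d\kappa\neq 0$, and we set $c=1$. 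In all cases we have arranged $\kappa=e^{cY}\,dX$ with $c\in\{0,1\}$, i.e. $\kappa_a=(e^{cY},0)$ in the coordinates $(X,Y)=(x^1,x^2)$.

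The second step is purely algebraic. Substituting $\kappa_a=(e^{cY},0)$ into $\nabla_{(a}\kappa_{b)}=0$ immediately gives $\Gamma^1_{11}=\Gamma^1_{22}=0$ and $\Gamma^1_{12}=c/2$. Next write $\tilde\kappa=P\,dX+Q\,dY$, that is $P:=\tilde\kappa_1$, $Q:=\tilde\kappa_2$; since $\tilde\kappa$ is not proportional to $\kappa$ (hence not to $dX$) at $p$ we have $Q(p)\neq 0$, so $Q\neq 0$ after shrinking the neighbourhood. The three components of $\nabla_{(a}\tilde\kappa_{b)}=0$ are then linear equations for the remaining symbols and solve uniquely as $\Gamma^2_{11}=P_X/Q$, $\Gamma^2_{22}=Q_Y/Q$ and $\Gamma^2_{12}=(P_Y+Q_X-cP)/(2Q)$. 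These are all six components of the torsion-free connection, and they coincide exactly with (\ref{class_2}), with $(P,Q)$ arbitrary functions of $(X,Y)$.

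I expect the only non-formal point to be the normalisation $\kappa=e^{cY}\,dX$ in the non-closed case — i.e. arranging that $e^{-Y}\kappa$ be closed for a suitable $Y$ — which I reduce above to the elementary integrability of a one-dimensional distribution on a surface together with the explicit choice $Y=\ln|\sigma|$; everything after that is bookkeeping with the Killing equations. Two minor checks should be carried along: that $(X,Y)$ genuinely form a chart near $p$ (from $\kappa\wedge\tilde\kappa\neq 0$ in the closed case, and from $d\kappa\neq 0$ in the non-closed case), and that the value of $c$ may indeed be taken in $\{0,1\}$ — the form with $c=1$ can always be reached by choosing $\kappa$ non-closed, which exists by the first paragraph, while the simpler form with $c=0$ is available exactly when the pencil of Killing forms contains a closed member.
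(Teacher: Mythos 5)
Your proposal is correct and follows essentially the same route as the paper's own proof: normalise the first Killing form to $dX$ (closed case, $c=0$) or $e^{Y}dX$ (non-closed case, $c=1$), write the second as $P\,dX+Q\,dY$ with $Q\neq 0$ near $p$, and read off the Christoffel symbols from the Killing equations, which the paper phrases equivalently as $\frac{d}{d\tau}(P\dot X+Q\dot Y)=0$. Your additional observations — that the two independent Killing forms cannot both be closed, and the explicit construction $Y=\ln|\sigma|$ with the check that $(X,Y)$ is a genuine chart — merely fill in details the paper leaves implicit.
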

{\bf Proof.}
Let the one-forms $K$ and $L$ be two solutions to the Killing equation.
If $K$ is closed, then there exist local coordinates $(X, Y)$ on $U$ such
that $K=dX$, and the corresponding first integral is $\dot{X}$. Therefore
$\ddot{X}=0$ and the connection components $\Gamma^1_{ab}$ vanish. Let the 
second solution of the Killing equation be of the form $L=PdX+QdY$ for some functions
$(P, Q)$. Imposing
\[
\frac{d}{d\tau}(P\dot{X}+Q\dot{Y})=0
\]
yields the non-zero components of the connection given by (\ref{class_2}) with $c=0$. If $dK\neq 0$, then coordinates $(X, Y)$ can be chosen so that $K=e^Y dX$.
The condition $d/d\tau(e^Y\dot{X})=0$ gives 
 $\Gamma^{1}_{12}=1/2$. Imposing the existence of the second integral
$(P\dot{X}+Q\dot{Y})$ yields the connection (\ref{class_2}) with $c=1$.
\koniec
Note that in both cases the ODEs for the unparametrised geodesics 
also admit a first integral,
given by $e^{-cY}({P+Y'Q})$, where $'=d/dX$. Conversely if a 2nd order ODE cubic in $Y'$ representing projective equivalence class\footnote{See Definition \ref{DefProjSt}.} $[\nabla]$ of affine connections admits a first integral linear in $Y'$, then $[\nabla]$ contains a connection of the form (\ref{class_2}) with $c=0$. To see it consider a second order  ODE of the form $(P+Y'Q)'=0$, where $(P, Q)$ are arbitrary functions of $(X, Y)$ and write it in the form
\be
\label{projective_ode}
Y''=\Gamma_{22}^1 (Y')^3+(2\Gamma_{12}^1-\Gamma_{22}^2) (Y')^2+(\Gamma_{11}^1-
2\Gamma_{12}^2)Y'-\Gamma_{11}^2.
\ee
Equation (\ref{projective_ode}) arises from eliminating the affine parameter 
$\tau$ between the two ODEs (\ref{flow}). Thus its integral curves are unparametrised geodesics of the affine connection $\nabla$.

\section{Hamiltonian systems of hydrodynamic type}
\label{sec_hydro}
An $n$-component $(1+1)$ system of hydrodynamic type has the form $\p_t u^a={v^a}_b(u)\p_xu^b$, where $u^a=u^a(x, t)$ and $a, b=1, \dots, n$. From now on we shall assume that $n=2$ and that the matrix $v$ is diagonalisable at some point with distinct eigenvalues, in which case there always exists (in a neighbourhood of this point) two distinct functions (called the Riemann invariants) $X^1$ and $X^2$ of $(u^1, u^2)$ such that the system is diagonal, i.e. takes the form (\ref{hydro_system}) for some $\lambda^a(X^b)$. The existence of Riemann invariants is shown in Section \ref{AppeRiemInv} of the appendix.

The hydrodynamic type system is said to admit a local Hamiltonian formulation with a Hamiltonian of hydrodynamic type \cite{DubNov1983, Ferapontov1991}, if there exists a functional $H[X^1, X^2]=\int_\R {\mathcal H}(X^1, X^2) dx$, where the density ${\mathcal H}$ does not depend on the derivatives of $X^a$ and  such that  (\ref{hami}) holds
for some functions $g^{ab}(X)$ and $b^{ab}_c(X)$.
If the matrix $g^{ab}$  is non-degenerate, then the Poisson bracket
\[
\{F, G\}=\int_\R \frac{\delta F}{\delta X^a} 
\Big(g^{ab}\frac{\p}{\p x}+{ b^{ab}_c}\frac{\p X^c}{\p x}\Big)
\frac{\delta G}{\delta X^b}dx 
\]
is skew-symmetric if $g^{ab}$ is symmetric and the metric
$g=g_{ab} dX^a dX^b$, where $g_{ab}g^{bc}={\delta_a}^c$, is parallel with respect to the connection with Christoffel symbols $\gamma_{ab}^c$ defined by $b^{ab}_c=-g^{ad}\gamma^b_{dc}$. The Jacobi identity then holds iff
the metric $g$ is flat, and the connection  defined by 
$\gamma_{ab}^c$ is torsion-free. The hydrodynamic type systems which 
admit a Hamiltonian of hydrodynamic type possess infinitely many
Poisson commuting first integrals, and are integrable in the Arnold--Liouville sense \cite{Tsarev1985}.

{\bf Proof of Theorem \ref{theo_2}.}
It was shown in \cite{DubNov1983} that a hydrodynamic 
type system in Riemann invariants is Hamiltonian in the sense defined above
if and only if there exists a flat diagonal metric
\be
\label{flat_met}
g=k^{-1} d(X^1)^2+f^{-1} d(X^2)^2
\ee
on a surface $U$ with local coordinates $(X^1, X^2)$
such that
\be
\label{eq1k}
\p_2 k+2Ak=0, \quad\p_1 f +2B f=0,
\ee
where $f, k$ are functions of $(X^1, X^2)$, and $(A, B)$ are given by
(\ref{functionsAB}). Flatness of the metric $g$ yields
\be
\label{eq2k}
(\p_2 A+A^2)f+(\p_1 B+B^2)k+\frac{1}{2}A\p_2 f+\frac{1}{2}B\p_1 k=0.
\ee
We verify that  equations (\ref{eq1k}) and (\ref{eq2k}) are equivalent
to  the Killing equations
(\ref{killing})
for an affine torsion-free connection $\nabla$ on $U$ defined by
(\ref{connection_{AB}}) where $K_1=Af, K_2=Bk$.
\koniec
Computing the relative invariants $I_N$ and $I_S$ gives explicit but 
complicated (albeit perfectly manageable by MAPLE) obstructions given in terms
of $(\lambda^1, \lambda^2)$ and their derivatives of order up to $6$.
These obstructions, together with the tensor (\ref{tensor_obs}) and the 
Cotton tensor of $\nabla$
characterise
Hamiltonian, bihamiltonian and trihamiltonian systems of hydrodynamic type. The trihamiltonian systems have been previously characterised by Ferapontov in \cite{Ferapontov1991} in terms of  two differential forms he called $\omega$ and $\Omega$. We shall now show how Ferapontov's formalism relates to our connection (\ref{connection_{AB}}). We shall find that $\Omega$ is proportional to the skew-symmetric part of the Ricci tensor of $\nabla$, and $\omega$ is the volume form of  the (generically) unique Lorentzian metric on $U$ which shares its unparametrised geodesics with $\nabla$.

We say that a symmetric affine connection $\nabla$ is {\em metric} if it is the Levi-Civita connection of some (pseudo)-Riemannian metric. Recall that an affine connection $\nabla$ is {\em metrisable} if it shares its unparametrised geodesic with some metric connection. Thus in the metrisable case there exists a one-form $\Upsilon$ and a metric $h$ such that the Levi-Civita connection of $h$ is given by
\be
\label{projective_change}
{\Gamma}^a_{bc}+\delta^a_b\Upsilon_c+\delta^a_c\Upsilon_b,
\ee
where ${\Gamma}^a_{bc}$ are the Christoffel  symbols of $\nabla$. Not all affine connections on a surface are metrisable. The necessary and sufficient conditions for metrisability have been found in \cite{Bryant2009} and in Corollary \ref{corMet}. 
\begin{prop}
The connection (\ref{connection_{AB}})  from Theorem \ref{theo_2} is generically not metric but is metrisable by the metric
\be
\label{metric_h}
h=AB (dX^1)\odot (dX^2).
\ee
\end{prop}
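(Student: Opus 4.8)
The plan is to verify directly that the connection $\nabla$ with Christoffel symbols (\ref{connection_{AB}}) is projectively equivalent, via a transformation of the form (\ref{projective_change}), to the Levi-Civita connection of the metric $h = AB\, (dX^1)\odot(dX^2)$, and to observe that $\nabla$ is generically not itself metric because its Ricci tensor has non-vanishing skew part (equivalently $\beta \neq 0$ generically). First I would compute the Levi-Civita connection $\widetilde\Gamma$ of $h$. Writing $h_{12} = h_{21} = \tfrac12 AB$ and $h_{11} = h_{22} = 0$, the inverse metric is $h^{12} = h^{21} = 2/(AB)$, $h^{11}=h^{22}=0$, and the standard formula $\widetilde\Gamma^a_{bc} = \tfrac12 h^{ad}(\partial_b h_{dc} + \partial_c h_{db} - \partial_d h_{bc})$ gives, after a short calculation, $\widetilde\Gamma^1_{11} = \partial_1\ln(AB)$, $\widetilde\Gamma^1_{12} = 0$, $\widetilde\Gamma^1_{22} = 0$, and symmetrically $\widetilde\Gamma^2_{22} = \partial_2\ln(AB)$, $\widetilde\Gamma^2_{12}=0$, $\widetilde\Gamma^2_{11}=0$. (The vanishing of the mixed and ``wrong-index'' components is forced by $h$ being off-diagonal with null diagonal.)

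Next I would exhibit the one-form $\Upsilon$. Comparing (\ref{connection_{AB}}) with $\widetilde\Gamma$, the difference $\widetilde\Gamma^a_{bc} - \Gamma^a_{bc}$ must equal $\delta^a_b\Upsilon_c + \delta^a_c\Upsilon_b$. Looking at the $(a,b,c)=(1,1,2)$ component forces $\Upsilon_2 = -\Gamma^1_{12} = \tfrac12\partial_2\ln A + A$, and the $(2,1,2)$ component forces $\Upsilon_1 = -\Gamma^2_{12} = \tfrac12\partial_1\ln B + B$; so the candidate is $\Upsilon = \big(\tfrac12\partial_1\ln B + B\big)dX^1 + \big(\tfrac12\partial_2\ln A + A\big)dX^2$. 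The remaining check is consistency: one must verify that with this $\Upsilon$ the $(1,1,1)$ and $(2,2,2)$ components also match, i.e. $\widetilde\Gamma^1_{11} - \Gamma^1_{11} = 2\Upsilon_1$ and $\widetilde\Gamma^2_{22}-\Gamma^2_{22} = 2\Upsilon_2$. For the first this reads $\partial_1\ln(AB) - (\partial_1\ln A - 2B) = 2\big(\tfrac12\partial_1\ln B + B\big)$, i.e. $\partial_1\ln B + 2B = \partial_1\ln B + 2B$, which holds identically; the second is symmetric. Since all other components of $\Gamma$ vanish and the corresponding $\delta$-terms also vanish (e.g. for $(1,2,2)$ the right side is $\delta^1_2\Upsilon_2 + \delta^1_2\Upsilon_2 = 0$), the identity (\ref{projective_change}) holds, so $\nabla$ is metrisable by $h$.

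Finally, for the ``generically not metric'' claim I would compute the skew part of the Ricci tensor of $\nabla$ (equivalently $\beta = B_{ab}\epsilon^{ab}$ from (\ref{formula_for_rho})). A symmetric affine connection is metric iff it is Levi-Civita of some metric, which in particular forces the Ricci tensor to be symmetric ($\beta = 0$). Using the curvature decomposition and the explicit form (\ref{connection_{AB}}), $\beta$ reduces to a first-order differential expression in $A$ and $B$ — concretely, proportional to $\partial_1(\,\cdot\,) - \partial_2(\,\cdot\,)$ of the trace parts of the connection — which does not vanish for generic $\lambda^1, \lambda^2$ (one can also invoke Corollary \ref{corMet} or simply exhibit one explicit choice of $\lambda$'s for which $\beta \neq 0$). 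I expect the main obstacle to be bookkeeping: confirming that the single one-form $\Upsilon$ simultaneously reconciles all components of the connection difference — i.e. that (\ref{connection_{AB}}) was constructed precisely so that $\widetilde\Gamma - \Gamma$ is pure trace — and carefully evaluating $\beta$ to confirm it is generically nonzero rather than identically zero.
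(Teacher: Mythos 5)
Your proposal is correct and follows essentially the same route as the paper: you identify the same one-form $\Upsilon=\big(\tfrac12\partial_1\ln B+B\big)dX^1+\big(\tfrac12\partial_2\ln A+A\big)dX^2$ realising the projective change to the Levi-Civita connection of $h$ (whose only nonzero Christoffel symbols are indeed $\partial_1\ln(AB)$ and $\partial_2\ln(AB)$), and you establish non-metricity via the skew part of the Ricci tensor, which the paper computes as $3\,d\Upsilon$. The only cosmetic difference is that you verify the projective equivalence component-by-component at the level of Christoffel symbols, whereas the paper phrases it through the coincidence of the unparametrised geodesic ODEs; both checks go through.
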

{\bf Proof.} 
The connection is generically not metric, as its Ricci tensor $R_{ab}$ is in general not symmetric. The skew part of $R_{ab}$  is given by
$$
(R_{21}-R_{12})dX^1\wedge dX^2=3d\Upsilon,
$$
where
\begin{equation}\label{upsilon}
\Upsilon=\Big(\frac{1}{2}\partial_1 \ln{B}+B\Big)dX^1+\Big(\frac{1}{2}\partial_2 \ln{A}+A\Big)dX^2.
\end{equation}
The unparametrised geodesics of this connection are integral curves of a 2nd order ODE
\be
\label{ode}
Y''=(\p_X Z)Y'-(\p_Y Z) (Y')^2, \quad \mbox{where}\quad Z=\ln{(AB)},
\ee
and $(X^1, X^2)=(X, Y)$. The ODE (\ref{ode}) is also the equation for unparametrised geodesics of the pseudo-Riemannian metric (\ref{metric_h}) (it can be found directly by solving the metricity equations as in \cite{CD16}). The Levi-Civita connection of $h$ is given by (\ref{projective_change}), where $\Upsilon$ is given by (\ref{upsilon}). Therefore, connection (\ref{connection_{AB}}) is projectively equivalent to a metric connection.
\koniec
{\bf Remarks}
\begin{itemize}
\item The pseudo-Riemannian metric (\ref{metric_h}) 
depends only on the product $AB$, so the transformation
$(A\rightarrow \gamma A, B \rightarrow \gamma^{-1}B)$,
where $\gamma=\gamma(X^a)$ is a non-vanishing function,
does not change 
unparametrised geodesics. It corresponds to a projective change of connection 
(\ref{projective_change}) by a one-form
\[
\Upsilon=
\Big((1-\gamma^{-1})B+\frac{1}{2}\p_1\ln{\gamma}\Big)dX^1
+\Big((1-\gamma)A-\frac{1}{2}\p_2\ln{\gamma}\Big)dX^2.
\]
This transformation can be used to set $R_{[ab]}$ to zero, but it does not 
preserve  (\ref{hydro_system}).
\item As the Ricci tensor $R_{ab}$ is in general not symmetric,
the connection (\ref{connection_{AB}}) does not annihilate any volume form on $\Sigma$ which is parallel w.r.t $\nabla$. Therefore the Killing equations (\ref{killing}) do not imply the existence of a Killing vector for the metric $h$.
\item
The two-form $\Omega$ in Theorem $9$ of \cite{Ferapontov1991} equals $2d\Upsilon$, while $\omega$ in \cite{Ferapontov1991} is given by the volume form of $h$. In the trihamiltonian case the connection $\nabla$ is projectively flat. Equivalently  the metric (\ref{metric_h}) has constant Gaussian curvature, i.e. 
\be
\label{liouville}
(AB)^{-1}\p_1\p_2\ln{(AB)}=\mbox{const}.
\ee 
This is the Liouville equation from Section 5 in \cite{Ferapontov1991}.
\item If $n\geq 3$, there is always a discrepancy between the number of equations for a Killing tensor of any given rank and a number of conditions for a hydrodynamic-type system to admit a Hamiltonian formulation. Therefore Theorem \ref{theo_2} does not generalise to higher dimensions in any 
straightforward way.
\end{itemize}
\section{Examples}
\label{sec_ex}
In the examples below we set $X^1=X, X^2=Y$.
\subsection*{Example 1} Consider an affine connection (\ref{connection_{AB}}) corresponding to a system of hydrodynamic type with 
\[
A=cX+Y,  \quad B=X+cY,\quad \mbox{where}\quad c=\mbox{const.}
\]
This connection admits a parallel volume form iff $c=0$ or $c=1$.
If $c=0$ then the connection is projectively flat, and so the
system of hydrodynamic type is trihamiltonian.
Calculating the obstruction (\ref{tensor_obs}) yields
\begin{align*}
T&=\frac{8c^2(c^2-9)}{9(cX+Y)^3(X+cY)^3}\Big(dY\otimes\p_Y-dX\otimes\p_X+\\
&+\frac{X+cY}{cX+Y}dY\otimes \p_X-\frac{cX+Y}{X+cY} dX\otimes \p_Y\Big).
\end{align*}
Therefore, if $c=3$ or $c=-3$ then the connection admits precisely two linear first integrals, so the system is bihamiltonian. Finally for any $c$ not equal to $0, \pm 3$ the system admits a unique Hamiltonian.
\subsection*{Example 2} 
One dimensional non-linear elastic medium is governed by the system of PDEs \cite{OlNut1988,Sheftel1986}
\[
u_t=h^2(v)v_x, \quad v_t=u_x,
\]
where $h(v)$ is a function characterising the type of fluid. This system is Hamiltonian with ${\mathcal H}=u^2/2+F(v)$, where $F''=h^2$. We find the Riemann invariants $(X, Y)$ such that
\[
u=X+Y, \quad  v=G(X-Y),
\]
where $G'h(G)=1$ and $\lambda^1=-\lambda^2=\frac{1}{G'}$. Therefore
$
A=-B=-G''/(2G')
$
and we find $\beta=0$, so that the Ricci tensor of the associated connection (\ref{connection_{AB}}) is symmetric. In particular, Theorem \ref{theo_1} implies that the system can not admit precisely two Hamiltonian structures.

The projective flatness (\ref{liouville}) of the connection (\ref{connection_{AB}}) reduces to $(\ln{A^2})''=\mbox{const}.A^2$ which can be solved explicitly, and leads to a four-parameter family of trihamiltonian systems. The singular solution $A=1/(2z)$ corresponds to the Toda equation $v_{tt}=(\ln{v})_{xx}$.
\subsection*{Example 3} 

We consider the system of hydrodynamic type (\ref{hydro_system}) with
\[
\lambda^1= -\lambda^2=(X-Y)^n (X+Y)^m.
\]
Examining the conditions of Theorem \ref{theo_1} for the resulting connection (\ref{connection_{AB}}) we find that this system is always bihamiltonian. It is trihamiltonian iff $nm(n^2-m^2)=0$.

\subsection*{Example 4. Frobenius manifolds}

A basic introduction to Frobenius manifolds and the corresponding hydrodynamic-type systems is given in Appendix \ref{appeFrobMan}, where it is shown that systems of the form (\ref{HTsysFrob}) are trihamiltonian corresponding to a $3$-parameter family of metrics for (\ref{eqSympSt}) given by (\ref{trimetric}).

\subsection*{Example 5. Zoll connections}
A Riemannian metric $h$ on a surface $\Sigma$ is {\em Zoll} if all
geodesics are simple closed curves of equal length. A two-dimensional sphere
admits a family of axisymmetric Zoll metrics given by
\be
\label{zoll_1}
h=(F(X)-1)^2dX^2+\sin^2{X}dY^2,
\ee
where $(X, Y)$ are spherical polar coordinates on $\Sigma=S^2$, and $F:[0, \pi]\rightarrow [0, 1]$ is any function such that $F(0)=F(\pi)=0$ and $F(\pi-X)=-F(X)$. A projective structure $[\nabla]$ on $\Sigma$ is {\em Zoll} if its unparametrised geodesics are  simple closed curves. The general projective structure admitting a projective vector field, and close to the flat structure of the round sphere is given by the second order ODE \cite{LeMason2002}
\begin{equation}
\label{zoll_2}
Y''=A_3 (Y')^3+A_2(Y')^2+A_1Y',
\end{equation}
where
\begin{align*}
A_1&=\frac{F'}{F-1}-2\cot{X},\\
A_2&=\frac{H'\sin{X}\cos{X} -2H}{\cos{X}(F-1)},\\
A_3&=-\frac{(H^2+1)\sin{X}\cos{X}}{(F-1)^2},
\end{align*}
where $F=F(X)$ is as before, and $H=H(X)$ satisfies $H(0)=H(\pi)=H(\pi/2)=0$,
and $H(\pi-X)=H(X)$. The metric case (\ref{zoll_1}) arises 
if $H=0$. A general 
connection $\nabla$ in this projective class with $\beta\neq 0$ will not admit
even a single first integral. We use Theorem \ref{theo_1} together with
(\ref{projective_ode}) to verify that the following choice of the 
representative connection 
\begin{equation}\label{zoll_3}
\Gamma_{11}^1=A_1, \quad \Gamma_{22}^1=A_3, \quad \Gamma_{12}^1=\Gamma_{21}^1=\frac{1}{2}A_2 
\end{equation}
admits a first integral for any $F$ and $H$. 
To find a (necessarily non-metric) Zoll connection 
with precisely two linear first integrals we 
use Proposition (\ref{prop_rank1}) and match the connection 
(\ref{zoll_3}) with the connection
(\ref{class_2}) (with the roles of $X$ and $Y$ reversed). 
This, for any given $H$, leads to a one-parameter family of examples
\[
F=1+c(H^2+1)\cot{X}
\]
which does not satisfy the boundary conditions. The existence of a non-metric Zoll structure on $S^2$ with precisely two first integrals is an interesting open problem.

\chapter{Concluding remarks}

This thesis presented six main results. In chapter \ref{chapSDYM}, we show that the vortex equations proposed by Manton are symmetry reductions of ASDYM equations with various symmetry and gauge groups. Two immediate consequence are the characterisation of twistor-integrable cases and the calculation of the metric on a submanifold of the moduli space by the Samols' localisation method. This moduli space is actually comprehensive for Taubes vortices but not necessarily for the other types. A determination of the full moduli space would require analysis of the general solution of Manton's equation upon suitable boundary conditions. The main difficulty consists of the fact that Taubes' analysis on the ordinary vortex equations relied on the convexity of the energy function and thus is not immediately generalisable to the other cases.

The second main result is presented in chapter \ref{chapVorLike}, where a modified version of the Ginzburg--Landau theory is presented as a means to produce more integrable cases besides Taubes vortices on hyperbolic surfaces. The existence of such cases was evidenced by two vortex solutions presented by Dunajski, which are included in the model proposed. A detailed analysis to seek Painlev\'e-integrable cases was carried out in a particular class of equations and showed the existence of four integrable vortex-like solitons. One of them is the Taubes vortex on hyperbolic space itself, plus both solutions proposed by Dunajski. The fourth case is new and, as in Dunajski's cases, gives rise to solutions involving the third Painlev\'e transcendent. Existence of solutions to the model proposed in general is not established. In fact, the explicit solutions come from the assumption that the vortices are radially symmetric. Existence of non-symmetric solitons is an open problem and boils down to proving existence of solution of the sinh-Gordon and Tzitzeica PDEs under suitable boundary conditions -- basically, requiring finiteness of the energy functional, isolation of the zeroes of the Higgs field and integrality of its winding number around each zero, that is to say, existence of a holomorphic gauge.

In chapter \ref{chapPainMet}, two main results were presented. Firstly, it is shown that the projective structures defined by the Painlev\'e equations are metrisable if and only if the equation admits a first integral linear in first derivatives for PIII and PV or if the projective structure is flat, which includes a special case of PVI. It is also shown that the projective structures of all six Painlev\'e equations admit a degenerate solution to their metrisability equations. In the second half of the chapter, we show that the dimension of the space of degenerate solutions to the metrisability equations in $n$ dimensions, once the kernel is fixed, is $n(n-1)/2$. It is believed, though not yet proved, that the condition of fixed kernel is not necessary for the result to hold. It is shown that the kernel of a degenerate solution defines a Frobenius-integrable distribution and the induced submanifold is totally geodesic with respect to the original projective structure, thus defining an induced structure. The analysis of deformations of totally geodesic submanifolds of projective structures and their induced structure might provide geometrical tools to the generalisation of the result.

Chapter \ref{chapterKilling} introduced the last two main results of this work. We solve, locally, the problem of existence of Killing forms for general affine connections on the tangent space of a $2$-manifold and show that the existence depends only on the vanishing of two scalars, while further conditions control the number of Killing forms admitted. Moreover, it is shown that the problem of existence of a Hamiltonian description of hydrodynamic-type systems with two components boils down to the problem of existence of Killing forms of a particular type of affine connection constructed from the data of the hydrodynamic-type system. Consequently, all the conditions for the existence of Hamiltonian descriptions can be derived from the conditions for the existence of Killing forms for that connection. This chapter connects with chapter \ref{chapPainMet} in that a solution of the Killing equations for special connections implies the existence of a degenerate solution of the metrisability equations associated to the corresponding projective structure. This allowed us to add two simple conditions -- consisting of the above-mentioned  two scalars -- for the existence of non-degenerate solutions to the metrisability equations in $2$ dimensions.

\newpage
\thispagestyle{empty}
\vspace*{\fill}
\newpage

\appendix

\chapter{Construction of symmetric gauge fields}\label{secConstSym}
In this appendix we summarise a general procedure to construct local gauge fields that are invariant under the action of a given symmetry group. This procedure is described in more details in \cite{ForgMan1980}. We start by setting some notation. Let $S$ and $G$ be Lie groups corresponding to symmetry and gauge groups, respectively. Let $M$ be a $d$-dimensional manifold admitting an action of the group $S$ with orbits of codimension $d'<d$. We assume that there exist local coordinates $x^\mu$, $\mu=1,\dots,d$ such that the submanifold $x^i=constant$, $i=1,\dots,d'$ are invariant under the action of $S$ and that this action is transitive, this means that these submanifolds are locally homeomorphic to a homogeneous space $S/R$, where $R$ is the little group of a chosen point $p_0$ in the submanifold. We use indices $i,j,k,\dots$ for the range of values $1,\dots,d'$ and $\mu,\nu,\sigma,\dots$ for the range $1,\dots,d$.

Let $A_\mu$ be a gauge field with gauge group $G$, locally a $G$-valued $1$-form. We say that $A$ is $S$-equivariant if the lifted action of $S$ on $A$ preserves $A$ up to gauge transformations. Infinitesimally, this is stated in the following form: given a vector field $\eta$ generator of the action $S$ on $M$, there exists a scalar $\mathfrak g$-valued function $W$ such that
\begin{equation}\label{equivdef}
\mathcal L_{\eta}A_\mu=D_\mu W,
\end{equation}
where $\mathcal L$ is the Lie derivative and $D$ is the covariant derivative defined by $A$.

In this appendix, we are going to describe, without details, the procedure used in \cite{ForgMan1980} to solve, for $A$, the system formed of (\ref{equivdef}) for all symmetry generators $\eta$. Basically, the procedure consists in extending the components of (\ref{equivdef}) corresponding to $S/R$ to the whole symmetry group $S$. Fixing the $x^i$'s and looking just at the coordinates of $S/R$, this allows the consistency conditions of the system to be solved by rather simple expressions for the $W$'s that can be even gauged out as fields on $S$, symplifying the extended equations (\ref{equivdef}) over $S$ for $A$. Once the extended equations for $A$ are solved, one has to make sure that $A$ -- which is now a field over $S$ -- can be interpreted as a gauge field on $S/R$. This means that the dependency of $A$ on the complementary coordinates of $S/R$ over $S$ (the $y^\omega$ below) can be gauged away. The conditions for this to be possible are the so called \textit{consistency equations} introduced later in a convenient form.    

As mentioned above, a submanifold $x^i=constant$, now denoted by $\mathcal H$, is locally homeomorphic to $S/R$ and this allows us to relate the coordinates of $S/R$ to the coordinates $x^{\alpha=d'+1,\dots,d}$ of $\mathcal H$. Let $y^{\omega}$ (late Greek indices) denote coordinates on $R$ and $y^{\alpha}$ (early Greek indices) denote coordinates on $S/R$, so that $(y^\alpha,y^\omega)$ forms a coordinate system of $S$. If $S$ has dimension $N$ and $R$ codimension $N'$, then $\alpha=1,\dots,N'$ and $\omega=N'+1,\dots,N$. In practise, this is realised as follows. For any $s\in S$, let $s_0(y^\alpha)\in Rs(y^\alpha)$ be a fixed element in the right coset $Rs$ varying smoothly with $y^\alpha$.  Then, $s$ can be uniquely written in the form $s=r(y^\omega)s_0(y^\alpha)$ for some $r\in R$. Let $p_0$ be a fixed point in $\mathcal H$ having $R$ as stabiliser. By transitivity, for any $p\in\mathcal H$, there exists $s\in S$ mapping $p$ to $p_0$. Clearly, this map lifts to the coset $Rs(y^\alpha)$ and thus we can associate to $p$ the coordinate $y^\alpha$ of the coset  $Rs$. This is how we endow $\mathcal H$ with coordinates $y^\alpha$. Notice that $d'+N'=d$.

Let $J_m$, $m=1,\dots,N$, be the generators of the Lie algebra of $S$ with structure constants $f_{mnp}$ so that $[J_m,J_n]=f_{mnp}J_p$. We assume that $J_m$ for $m>N'$ generate the Lie algebra of $R$. 

Construct the right-invariant $1$-forms denoted by $\xi_m$ and defined by the equation
$$
-ds s^{-1}=\xi_m J_m.
$$

Let $\Phi_n$, $n=1,\dots, N$ be local $\mathfrak g$-valued functions on $M$ so that $\Phi_{n>N'}$ are constants and $\Phi_{n\leq N'}$ depend only on $x^{i\leq d'}$. These Higgs fields correspond to the projection of the symmetric gauge potential on $S$ along the symmetry generators (c.f. equation (\ref{eqDefHiggs}) below). Solve the following constraint equations which, in a suitable gauge, take the form
\begin{align}\label{AppeConstEq}
&[A_i,\Phi_n]=0,\; i=1,\dots,\;D', n>N'\nonumber\\
&f_{mnp}\Phi_p+[\Phi_m,\Phi_n]=0, \; m=1,\dots,N,\; n>N'.
\end{align}
The second equation means that the $-\Phi_{n>N'}$ generate an $R$ Lie subalgebra in the gauge group $G$ while the first equation means that the $A_i$'s are gauge fields with gauge group being the little group of $R$ in $G$ and thus henceforth only gauge transformations in this little group are allowed.

Start by defining a gauge potential on $S$ as 
\begin{equation}\label{eqDefHiggs}
A_\omega=\Phi_m\left(\xi_m\right)_\omega \text{ and } A_\alpha=\Phi_m\left(\xi_m\right)_\alpha.
\end{equation}
By construction of the constraint equations, the components $A_\omega$ will be pure gauge of the form $A_\omega=\partial_\omega r r^{-1}$, for some $r=r(y^\omega)$ in the above mentioned little group. Moreover, under a gauge transformation by $r^{-1}$, $A_\omega$ will be set to zero while $A_\alpha$ will become $y^\omega$-independent. Now we can consistently define a gauge potential on $R/S$ and thus in the whole $M$ as
\begin{align*}
A_i=a_i(x^1,\dots,x^{d'}), \; i=1,\dots, d',\\
A_\alpha=r^{-1}\Phi_m r \left(\xi_m\right)_\alpha,\; \alpha=1,\dots,N'=d-d',
\end{align*}
where $a_i$ are arbitrary components of a gauge field that are constant along the $\mathcal H$'s. This is the most general ansatz for a gauge potential $A$ satisfying the $S$-equivariant condition (\ref{equivdef}).

\chapter{The group $G_{C_0}$}
\label{appeGroup}

In this appendix, we denote by $J_m^{C_0}$ ($m=1,2,3$) the generators of the Lie-algebra $\mathfrak g_{C_0}$ given by (\ref{lie_alg_rep}) with $C$ replaced by $C_0$. Obviously, all properties of $G_{C_0}$ are valid for $G_C$, and vice-versa, upon changing $C$ into $C_0$.

 A parametrisation of $G_{C_0}$ is given by 
$$
\hspace{-0.2in}K \!\!  = \!\! \left(\begin{array}{c c}
  \!\!\! e^{i(\kappa_3-\kappa_2)/2}\cos(\sqrt{-C_0}\,\kappa_1/2) & \! -\frac{1}{\sqrt{-C_0}}e^{i(\kappa_3+\kappa_2)/2}\sin(\sqrt{-C_0}\,\kappa_1/2) \\ 
  \! \sqrt{-C_0}e^{-i(\kappa_3+\kappa_2)/2}\sin(\sqrt{-C_0}\,\kappa_1/2) & \! e^{-i(\kappa_3-\kappa_2)/2}\cos(\sqrt{-C_0}\,\kappa_1/2)
 \end{array}\right),
$$
where $0\leq\kappa_3\leq 4\pi$, $0\leq\kappa_2\leq 2\pi$, $0\leq\kappa_1<\pi/\sqrt{-C_0}$ if $C_0<0$ and $\kappa_1\geq 0$ if $C_0\geq 0$.

The coordinate $\kappa_3$ parametrises the $U(1)$ fibres of the fibration
$G_{C_0}\rightarrow N=G_{C_0}/U(1)$. In the proof of Proposition \ref{prop1} we  need expressions relating the
local coordinates $(z, \ov{z})$ on $N$ to $(\kappa_1, \kappa_2)$ on 
$G_{C_0}/U(1)$. Let $p\in N$ be a point corresponding
to the coordinate $z=0$. Consider the group action (\ref{actionC}) such that the RHS is $0$. This
gives a system of two equations for $(z, \ov{z})$ with a solution
\begin{equation}\label{zk_trans}
z=\frac{1}{\sqrt{-C_0}}\tan(\sqrt{-C_0}\, \kappa_1/2)e^{i\kappa_2}.
\end{equation}
Note that $\frac{1}{\sqrt{-C_0}}\tan(\sqrt{-C_0}\, \kappa_1/2)\geq 0$ regardless of the sign of $C_0$. 
The formula (\ref{zk_trans}) is  well defined for $C_0=0$ upon taking the limit $C_0\to 0$. The coordinate $\kappa_3$ of $G_{C_0}$ parametrises the stabiliser of $p\in N$, which is a $U(1)$ subgroup generated by $J_3^{C_0}$.

The right-invariant one-forms $\chi_1,\chi_2,\chi_3$  such that 
$$
(dK) K^{-1}+\sum_{m=1}^3 \chi_m \otimes  {J_m^{C_0}}=0,
$$ 
and the left-invariant vector fields $\eta^1, \eta^2, \eta^3$ on $G_{C_0}$ are given by
\begin{align}
\label{l_forms}
\chi_1=&\left(\frac{1}{\sqrt{-C_0}}\sin(\sqrt{-C_0}\, \kappa_1)\cos\kappa_3\, d\kappa_2+\sin\kappa_3\, d\kappa_1\right),\nonumber\\
\chi_2=&\left(-\cos\kappa_3 d\kappa_1+\frac{1}{\sqrt{-C_0}}\sin\kappa_3 \sin(\sqrt{-C_0}\,\kappa_1) d\kappa_2\right),\nonumber\\
\chi_3=&\left(-d\kappa_3+\cos(\sqrt{-C_0}\,\kappa_1)d\kappa_2\right)
\end{align}
and
\begin{align}
\label{vect_appendix}
\eta_1&=-\sin{\kappa_2}\partial_{\kappa_1}-\frac{\sqrt{-C_0}}{\tan(\sqrt{-C_0}\,\kappa_1)}\cos\kappa_2 \partial_{\kappa_2}-\frac{\sqrt{-C_0}}{\sin(\sqrt{-C_0}\,\kappa_1)}\cos\kappa_2\partial_{\kappa_3}, \nonumber\\
\eta_2&=\cos\kappa_2\partial_{\kappa_1}-\frac{\sqrt{-C_0}}{\tan(\sqrt{-C_0}\,\kappa_1)}\sin\kappa_2 \partial_{\kappa_2}-\frac{\sqrt{-C_0}}{\sin(\sqrt{-C_0}\,\kappa_1)}\sin\kappa_2\partial_{\kappa_3}, \nonumber\\
\eta_3&=-\partial_{\kappa_2}.
\end{align}

\chapter{Frobenius manifolds and hydrodynamic-type systems}\label{appeFrobMan}

In this appendix we present the concept and basic definitions of Frobenius manifolds and how they give rise to hydrodynamic-type systems.

\begin{defn}[Frobenius algebra]
A commutative associative $\mathbb C$-algebra $(A,+,\cdot)$ with unity $e$ is a \textit{Frobenius algebra} if it is endowed with a bilinear symmetric non-degenerate inner product $<\, ,\,> : A \times A\to\mathbb C$ such that $<a\cdot b,c>=<a,b\cdot c>$.
\end{defn}

\begin{defn}[Frobenius manifold]
A smooth manifold $M$ is a Frobenius manifold if each fibre of the tangent bundle $TM$ is endowed with a Frobenius algebra structure such that
\begin{itemize}
\item[1.] The inner product $<\, ,\,>$ is a smooth flat metric $\eta$ on $M$. Its Levi-Civita connection will be denoted by $\nabla$.
\item[2.] The unity $e$ is a smooth section of $TM$ and is parallel: $\nabla e=0$.
\item[3.] The symmetric $3$-tensor $c(u,v,w)=<u\cdot v,w>$ is smooth and its  covariant derivative $\nabla c$ is a symmetric $4$-tensor.
\item[4.] There exists a conformal Killing vector field $E$ such that $\nabla\nabla E=0$, $\mathcal L_Ee=-e$ and $\mathcal L_E c=c$.
\end{itemize}
\end{defn}
Condition $4.$ means that the $1$-parameter group of diffeomorphisms generated by $E$ acts as \textit{rescalings} on the Frobenius algebras: $u\cdot v\mapsto k u\cdot v$, $e\mapsto k^{-1} e$, $k\in\mathbb C^*$.

Commutativity along with condition $3.$ implies that locally there exists a complex function $F$, called \textit{free energy} or \textit{prepotential}, such that $c=\nabla\nabla\nabla F$. Notice that $F$ is determined up to an arbitrary quadratic function. 

Let $(M,\cdot,e,E,\eta)$ be an $n$-dimensional Frobenius manifold, where $e$ is the unit element of the Frobenius algebra, $E$ is the Euler vector field and $\eta$ is a flat metric. By definition, in \textit{flat coordinates} $\{t^1,\dots,t^n\}$ -- whose indices will be denoted by $a,b,c,\dots$ -- the components of the metric $\eta^{ab}$ are constant and, by condition 2., we can set $e=\dfrac{\partial}{\partial t^1}$. This implies $\eta_{ab}=c_{1ab}$ by condition 3. We shall use the notation $\partial_a\equiv\dfrac{\partial}{\partial t^a}$.

Condition $4.$ implies that there exists a constant matrix $Q=(q^a_b)$ and a constant vector $r^a$ such that $E=(q^a_b t^b+r^a)\partial_a$. If $Q$ is diagonalisable with eigenvalues $d_1,\dots,d_n$, then we can perform a linear transformation to rewrite the Euler vector field as
\begin{equation}
E=\sum_ad_a t^a \partial_a+\sum_{a|d_a=0}r^a\partial_a
\end{equation}
without changing the above properties. By condition 4., $d_1=1$.

Since $E$ is a conformal Killing vector (condition $4.$) and $\eta_{ab}=c_{1ab}=\partial_1\partial_a\partial_b F$, we conclude that $F$ should rescale with a fixed weight $d_F$, namely
\begin{equation}\label{resF}
\mathcal L_E F(t)=d_F F(t)+A_{ab}t^a t^b+B_a t^a+C,
\end{equation}
where $A_{ab}$, $B_a$ and $C$ are arbitrary constants. The quadratic terms play no role when we take third order derivatives.
 
The one parameter group of transformations generated by $E$ acts on $\eta_{ab}dt^a dt^b=\partial_1\partial_a\partial_b F dt^a dt^b$ as $\eta_{ab}dt^a dt^b\mapsto \eta_{ab}dt^a dt^b+\epsilon (d_F-1)\eta_{ab}dt^adt^b+O(\epsilon^2)$, so that
\begin{equation}
\mathcal L_E \eta_{ab}=(d_F-1)\eta_{ab}.
\end{equation}

Associativity of $\cdot$ means that $F$ should satisfy the so called \textit{associativity equation} in flat coordinates
\begin{equation}\label{aseq}
0=c^c_{ab}c_{cde}-c^c_{ad}c_{cbe}=\partial_1\partial_f\partial_b F \, \eta^{fc} \,\partial_c\partial_d\partial_e F-\partial_f\partial_1\partial_d F\, \eta^{fc}\, \partial_c\partial_b\partial_e F.
\end{equation}

Equations (\ref{resF}) and (\ref{aseq}) together are known as the \textit{WDVV equations}.

\begin{defn}[Dubrovin connection]
The \textit{Dubrovin connection} is the following deformation of $\nabla$
$$
\tilde\nabla^{(z)}_u v=\nabla_u v+z u\cdot v,
$$
for arbitrary $z\in\mathbb C$. 
\end{defn}
Flatness of this connection for any $z$ is equivalent to associativity of $\cdot$ (coming from the $z^2$ terms of the curvature) and condition $3.$ (from the $z$ terms of the curvature).

The flat coordinates $\tilde t^a$ of $\tilde\nabla^{(z)}$ are the solutions to the overdetermined system of differential equations $\tilde\nabla^{(z)}_{\partial_{\tilde t^a}}d\tilde t=0$, i.e.,
\begin{equation}\label{flatdefcoord}
\partial_a\partial_b\tilde t=z c^c_{ab}\partial_c\tilde t.
\end{equation}

The consistency conditions of this system correspond to the vanishing of the curvature of $\tilde\nabla^{(z)}$, so that it forms a Lax pair of (\ref{aseq}) with spectral parameter $z$. Once the consistency conditions are satisfied, there exists a fundamental set of solutions $\tilde t^1_a=\partial_a\tilde t^1,\dots,\tilde t^n_a=\partial_a\tilde t^n$ of (\ref{flatdefcoord}).

\section{Intersection form}

We can define another metric $g$ on $M$ by the inner product
$$
(\omega,\sigma)^*\equiv\iota_E(\omega\cdot \sigma).
$$
The ${}^*$ indicates that it is a product defined on $T^*M$ and $\iota_E$ is the contraction of the vector field $E$ with the $1$-form $\omega\cdot\sigma$. The product $\cdot$ of two $1$-forms is induced by the product of their dual vector fields through $\eta$. In flat coordinates, the components of $(\; ,\; )$ are
\begin{equation}\label{intformflat}
g^{ab}=(dt^a,dt^b)^*=E^c c_c^{ab}.
\end{equation}
It is worth pointing out here that we still raise and lower indices using $\eta$, unless explicitly stated.

\begin{defn}
$g$ is called the \textit{intersection form}.
\end{defn}

Until the end of this section, the indices $i,j,k,\dots$ will denote abstract indices.

\begin{defn}[Contravariant Levi-Civita connection]
If $g$ is a metric and its Levi-Civita connection is given by the Christoffel symbols $\Gamma^i_{jk}$ then the contravariant Levi-Civita connection are the symbols $\Gamma^{ji}_k$ given by
$$
\Gamma^{ji}_k\equiv -g^{jl}\Gamma^i_{lk}.
$$
\end{defn}

\begin{defn}[Flat pencil]
We say that two metrics $g_1$ and $g_2$ form a \textit{flat pencil} if the metric $g^{ij}=g^{ij}_1+\lambda g^{ij}_2$ is flat for arbitrary $\lambda$ and its contravariant Levi-Civita connection $\Gamma^{ij}_k$ is given by
$$
\Gamma^{ij}_k=\Gamma^{ij}_{1k}+\lambda \Gamma^{ij}_{2k},
$$
where $\Gamma^{ij}_{\alpha k}=-g_\alpha^{il}\Gamma^j_{\alpha lk},\, \alpha=1,2,$ and $\Gamma^j_{\alpha lk}$ is the Levi-Civita connection of the metric $g_\alpha$.
\end{defn}
Let the covariant derivatives corresponding to $g$ and $g_\alpha$ be $\nabla$ and $\nabla_\alpha$, respectively and let $\nabla^i\equiv g^{ij}\nabla_j$ and $\nabla^j_{\alpha}\equiv g^{ij}_\alpha\nabla_{\alpha j}$; then forming a flat pencil amounts to saying that $\nabla^i _{\alpha}=\nabla^i _{1}+\lambda \nabla^i_{2}$.

\begin{lemma}\label{lemmainteta}\cite{Dubrovin1996}
The intersection form and the flat metric $\eta$ form a flat pencil.
\end{lemma}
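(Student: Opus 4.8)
The plan is to work in flat coordinates $t^1,\dots,t^n$ of $\eta$, which exist since $\eta$ is flat. In these coordinates $\eta^{ab}$ is constant, the Christoffel symbols of $\eta$ vanish so $\Gamma^{ab}_{\eta\,c}=0$, the structure constants come from the prepotential as $c_{abc}=\partial_a\partial_b\partial_c F$, and the intersection form is $g^{ab}=E^c c_c^{ab}$ with $c_c^{ab}=\eta^{a\mu}\eta^{b\nu}c_{\mu\nu c}$. Also $g+\lambda\eta$ is nondegenerate for all but finitely many $\lambda$, since $\det(g+\lambda\eta)$ is a degree‑$n$ polynomial in $\lambda$ with leading coefficient $\det\eta\neq0$. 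By the definition of a flat pencil two things must be proved: (i) the contravariant Levi‑Civita connections satisfy $\Gamma^{ab}_{g+\lambda\eta\,c}=\Gamma^{ab}_{g\,c}+\lambda\,\Gamma^{ab}_{\eta\,c}=\Gamma^{ab}_{g\,c}$, i.e.\ they are $\lambda$‑independent; and (ii) $g+\lambda\eta$ is flat for every $\lambda$.

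For (i) I would first produce an explicit $\lambda$‑free candidate. Writing $E=(q^a_b t^b+r^a)\partial_a$ and using the quasi‑homogeneity consequences of condition 4, namely $\mathcal{L}_E c_{abc}=c_{abc}$ and $\mathcal{L}_E\eta^{ab}=-(d_F-1)\eta^{ab}$, together with the fact that $\partial_\gamma c_\epsilon^{\alpha\beta}$ is symmetric in $\gamma\leftrightarrow\epsilon$ (fourth partials of $F$), one rewrites $E^\epsilon\partial_\epsilon c_\gamma^{\alpha\beta}$ as a Lie derivative plus connection‑type corrections and finds, after the terms $q^\epsilon_\gamma c_\epsilon^{\alpha\beta}$ cancel,
\[
\partial_\gamma g^{\alpha\beta}=(3-2d_F)\,c_\gamma^{\alpha\beta}+q^\alpha_\epsilon\,c_\gamma^{\epsilon\beta}+q^\beta_\epsilon\,c_\gamma^{\alpha\epsilon},
\]
which suggests
\[
\Gamma^{\alpha\beta}_{g\,\gamma}:=\tfrac12(3-2d_F)\,c_\gamma^{\alpha\beta}+q^\beta_\epsilon\,c_\gamma^{\alpha\epsilon}.
\]
I would then verify that $\Gamma^{\alpha\beta}_{g\,\gamma}$ is the contravariant Levi‑Civita connection of $g+\lambda\eta$ for every $\lambda$: by uniqueness of the Levi‑Civita connection it suffices to check metric compatibility, $\partial_\gamma(g+\lambda\eta)^{\alpha\beta}=\Gamma^{\alpha\beta}_{g\,\gamma}+\Gamma^{\beta\alpha}_{g\,\gamma}$ (immediate from the display and $\eta$ constant), and the torsion‑free condition that $(g+\lambda\eta)^{\gamma s}\Gamma^{\alpha\beta}_{g\,s}$ be symmetric under $\gamma\leftrightarrow\alpha$. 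The latter splits into a $g$‑piece and an $\eta$‑piece: the $\eta$‑piece is $\tfrac12(3-2d_F)c^{\alpha\beta\gamma}+q^\beta_\epsilon c^{\alpha\epsilon\gamma}$ with $c^{\alpha\beta\gamma}=\eta^{\alpha\mu}\eta^{\beta\nu}\eta^{\gamma\rho}c_{\mu\nu\rho}$ totally symmetric, hence visibly symmetric in $\gamma\leftrightarrow\alpha$; the $g$‑piece is handled by a short computation using that the associativity (WDVV) relation makes $c^s{}_{\rho\delta}c_{s\mu\nu}$ totally symmetric in $\rho,\delta,\mu,\nu$. This proves (i) and, at $\lambda=0$, identifies $\Gamma^{ab}_{g\,c}$ as the genuine Levi‑Civita connection of $g$.

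For (ii), since the contravariant connections add linearly the contravariant curvature tensor of $g+\lambda\eta$ is at most affine in $\lambda$, and I would show its $\lambda$‑linear coefficient vanishes by the same totally‑symmetric‑fourth‑derivative mechanism used for (i) (it reduces to $\eta^{is}\partial_s\Gamma^{jk}_{g\,l}$ being symmetric in $i\leftrightarrow j$). That reduces (ii) to the single assertion that $g$ itself is flat. Equivalently, one may invoke the standard criterion for flat pencils: given (i), what remains is to exhibit a potential vector field for the pencil — built from the Euler field and $\mathcal{L}_E F$ using quasi‑homogeneity — and to verify flatness of $g$.

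The flatness of the intersection form $g$ is the crux and the main obstacle. I would substitute the explicit $\Gamma^{\alpha\beta}_{g\,\gamma}$ into the curvature tensor of $g$: the derivative terms produce fifth derivatives of $F$ contracted with $\eta$ and $Q$, while the quadratic terms produce products $c_\gamma^{\alpha s}c_s^{\beta\delta}$, and it is precisely the associativity equations $c^s_{ab}c_{scd}=c^s_{ac}c_{sbd}$ that let the two groups cancel. An alternative route is to note $g^{ab}=\mathcal{U}^a_c\eta^{cb}$ where $\mathcal{U}=E\,\cdot$ is the Euler multiplication operator, so that $g+\lambda\eta=(\mathcal{U}+\lambda)\eta$, and to deduce flatness of the whole pencil from the flatness of the deformed connection $\tilde\nabla^{(z)}$ for all $z$ (which encodes associativity together with condition 3) and the recursion $\nabla\nabla E=0$ for the Euler field. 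Either way this is a bounded but lengthy tensor computation; by contrast the $\lambda$‑linearity of the connection and of the curvature is essentially formal, following from quasi‑homogeneity and the symmetry of $\partial^4 F$ alone.
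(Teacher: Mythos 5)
The paper itself offers no proof of this lemma --- it is quoted from \cite{Dubrovin1996} --- so the only fair comparison is with the standard argument there, which is exactly the route you take: pass to $\eta$-flat coordinates, use quasi-homogeneity to exhibit a $\lambda$-independent contravariant Christoffel symbol proportional to $c^{\alpha\beta}_\gamma$, check metric compatibility and the torsion-free/symmetry conditions via total symmetry of $\partial^4F$ and the WDVV relations, and reduce everything to the flatness of $g$. Your identification of where associativity enters and why the $\lambda$-dependence of the pencil is ``formal'' is correct, and the nondegeneracy-for-generic-$\lambda$ remark is fine.

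Two concrete problems remain. First, your displayed formulas carry a wrong constant: with the paper's normalisation $\mathcal{L}_E\eta_{ab}=(d_F-1)\eta_{ab}$ (so $d=3-d_F$), one finds $\partial_\gamma g^{\alpha\beta}=(2-d_F+d_\alpha+d_\beta)\,c^{\alpha\beta}_\gamma$ and hence $\Gamma^{\alpha\beta}_{g\,\gamma}=\bigl(\tfrac{d-1}{2}+d_\beta\bigr)c^{\alpha\beta}_\gamma$, which is the $\mathcal{R}^c_d\,c^{bd}_a$ with $\mathcal{R}^c_b=\tfrac{d-1}{2}\delta^c_b+\nabla_bE^c$ recorded later in this appendix; your coefficient $\tfrac12(3-2d_F)$ differs from $\tfrac12(2-d_F)$ unless $d_F=1$. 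The discrepancy comes from applying ``$\mathcal{L}_Ec=c$'' to the $(0,3)$ components $c_{abc}$ rather than to the multiplication tensor $c^a_{bc}$; the former would force $d_F=1$. Second, the flatness of $g$ --- which you rightly call the crux, and which is the actual content of the lemma --- is only asserted to follow from ``a bounded but lengthy tensor computation'' and is never carried out; one must substitute the corrected $\Gamma^{\alpha\beta}_{g\,\gamma}$ into the contravariant curvature and watch the fifth-derivative terms cancel against the quadratic terms via WDVV and $\nabla\nabla E=0$. As written, then, this is a correct outline of the right proof with a fixable normalisation error, not yet a proof.
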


\section{Hierarchy of hydrodynamic type}

Define the quantities $h_a(t,z)=\eta_{ab}\tilde t^b$ from the flat coordinates of the Dubrovin connection. In order to recover the flat coordinates $t$ when $z=0$, we choose the following normalisation $h_a(t,0)=t_a$. We formally expand these functions as power series in $z$
$$
h_a(t,z)=\sum_{p\geq 0} h_{a,p}(t)z^p,
$$
so that the components $h_{a,p}$ are determined recursively from (\ref{flatdefcoord}):
\begin{equation}\label{rech}
\partial_b\partial_c h_{a,p+1}(t)=c_{bc}^d\partial_d h_{a,p}(t)
\end{equation}
along with 
\begin{equation}\label{hin}
h_{a,0}(t)=t_a.
\end{equation}

Now we consider the coordinates $t^a$ as functions in the loop space of $M$, $\mathcal L(M)=\{t:S^1\to M\}$. We parametrise $S^1$ by a variable $X\in [0,2\pi]$. Thus, $t^a=t^a(X)$.

From the flat metric $\eta^{ab}$, we define the Poisson bracket of hydrodynamic type by 
$$
\{t^a(X),t^b(Y)\}=\eta^{ab}\delta^\prime(X-Y).
$$
\begin{prop}\label{propPencil}
Two flat metrics $g_1$ and $g_2$ define, as above, compatible Poisson brackets of hydrodynamic type $\{\, ,\,\}_1$ and $\{\, ,\,\}_2$ if and only if they form a flat pencil.
\end{prop}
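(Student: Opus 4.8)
The plan is to reduce the statement to the Dubrovin--Novikov characterisation of Poisson brackets of hydrodynamic type and then extract the pencil condition by expanding in the deformation parameter. Recall first that a bracket of the form
$$
\{u^i(X),u^j(Y)\}=g^{ij}(u)\,\delta'(X-Y)+b^{ij}_k(u)\,u^k_X\,\delta(X-Y),\qquad \det(g^{ij})\ne 0,
$$
is skew-symmetric if and only if $g^{ij}=g^{ji}$ and $\partial_k g^{ij}=b^{ij}_k+b^{ji}_k$, and, given skew-symmetry, satisfies the Jacobi identity if and only if the connection with Christoffel symbols $\gamma^j_{ik}:=-g_{il}b^{lj}_k$ is the Levi-Civita connection of $g$ and $g$ is flat (Dubrovin--Novikov \cite{DubNov1983}; see also \cite{Dubrovin1996}). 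In the notation of this appendix this says precisely $b^{ij}_k=\Gamma^{ij}_k$, the contravariant Levi-Civita connection of $g$. Hence each flat metric $g_\alpha$ ($\alpha=1,2$) defines a genuine Poisson bracket $\{\,,\,\}_\alpha$ of hydrodynamic type with data $(g_\alpha^{ij},\Gamma_\alpha{}^{ij}_k)$, which in flat coordinates for $g_\alpha$ reduces to $\{u^i,u^j\}=g_\alpha^{ij}\,\delta'(X-Y)$; so the two objects being compared make sense.

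Next I would note that $\{\,,\,\}_1$ and $\{\,,\,\}_2$ are compatible exactly when $\{\,,\,\}_\lambda:=\{\,,\,\}_1+\lambda\{\,,\,\}_2$ is Poisson for all $\lambda$, and since the Jacobiator of $\{\,,\,\}_\lambda$ is the polynomial $J_1+\lambda J_{12}+\lambda^2 J_2$ with $J_1=J_2=0$, this amounts to the vanishing of the single mixed Jacobiator $J_{12}$. The bracket $\{\,,\,\}_\lambda$ is again of hydrodynamic type, with leading coefficient $g_\lambda^{ij}=g_1^{ij}+\lambda g_2^{ij}$ and subleading coefficient $b_\lambda{}^{ij}_k=\Gamma_1{}^{ij}_k+\lambda\,\Gamma_2{}^{ij}_k$. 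Applying the Dubrovin--Novikov criterion to $\{\,,\,\}_\lambda$: skew-symmetry is automatic, because $g_\lambda$ is symmetric and $\partial_k g_\lambda^{ij}=b_\lambda{}^{ij}_k+b_\lambda{}^{ji}_k$ is the $\lambda$-combination of the two identities for $g_1,g_2$; and the Jacobi identity for every $\lambda$ holds if and only if $g_\lambda$ is flat for every $\lambda$ and the connection $\gamma_\lambda{}^j_{ik}=-g_{\lambda\,il}b_\lambda{}^{lj}_k$ is the Levi-Civita connection of $g_\lambda$, i.e. $b_\lambda{}^{ij}_k$ equals the contravariant Levi-Civita connection of $g_\lambda$. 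But $b_\lambda{}^{ij}_k=\Gamma_1{}^{ij}_k+\lambda\Gamma_2{}^{ij}_k$ by construction, so the Jacobi condition for $\{\,,\,\}_\lambda$ is literally the requirement that $g_1+\lambda g_2$ be flat with contravariant Levi-Civita connection $\Gamma_1{}^{ij}_k+\lambda\Gamma_2{}^{ij}_k$ -- that is, the flat-pencil property in the sense of the definition preceding Lemma \ref{lemmainteta}. Reading this equivalence in the two directions proves both implications of the proposition.

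The substantive input is the Dubrovin--Novikov theorem itself, whose proof is the distributional Jacobi-identity computation (collecting coefficients of $\delta''$, $\delta'$, $\delta$ and of the jet variables); this is the step I would regard as the main technical obstacle if one wanted a self-contained argument rather than a citation. Beyond that, the only care needed is (i) checking that the Jacobi conditions for $\{\,,\,\}_\lambda$ unpacked at the $\lambda^0$ and top-degree orders merely recover the known flatness and metricity of $g_1,g_2$, while the cross term reproduces exactly the two clauses of the flat-pencil definition (here one uses the symmetry of the metrics and the identification $b=\Gamma$ repeatedly), and (ii) dealing with the finitely many $\lambda$ for which $g_\lambda$ degenerates -- harmless, since the Jacobi identity is polynomial in $\lambda$, so vanishing on the open dense set of admissible $\lambda$, where the nondegenerate criterion applies, forces it everywhere.
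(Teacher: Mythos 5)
Your argument is correct. Note, however, that the paper states Proposition \ref{propPencil} without proof: it is quoted as a known result of Dubrovin (cf.\ \cite{Dubrovin1996}, and the underlying Dubrovin--Novikov theorem in \cite{DubNov1983}), so there is no in-paper proof to compare against. What you have written is precisely the standard derivation that the citation stands in for: take the Dubrovin--Novikov characterisation of a nondegenerate hydrodynamic Poisson bracket as given, observe that the pencil bracket $\{\,,\,\}_1+\lambda\{\,,\,\}_2$ has leading symbol $g_1+\lambda g_2$ and subleading symbol $\Gamma_1{}^{ij}_k+\lambda\Gamma_2{}^{ij}_k$, and read off that its Jacobi identity is verbatim the flat-pencil condition as defined just before Lemma \ref{lemmainteta}. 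Your two points of care are the right ones: the reduction of compatibility to the single mixed Jacobiator via the quadratic dependence $J_1+\lambda J_{12}+\lambda^2 J_2$ with $J_1=J_2=0$, and the handling of the finitely many $\lambda$ where $g_1+\lambda g_2$ degenerates by polynomiality of the Jacobiator in $\lambda$ (which is also needed to make sense of the phrase ``flat for arbitrary $\lambda$'' in the definition of a flat pencil). The only genuinely nontrivial input is the Dubrovin--Novikov theorem itself, which you correctly flag as the step one would either cite or prove by the distributional computation; with that granted, your proof is complete.
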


Consider the functionals 
$$
H_{a,p}(t)=\int_0^{2\pi} h_{a,p+1}(t(X))dX.
$$
We introduce an infinite sequence of ``times" $T^{a,p}$ in order to define the flows of hydrodynamic type
\begin{equation}\label{hierHT}
\partial_{T^{a,p}}t^b=\{t^b(X),H_{a,p}\}.
\end{equation}

From (\ref{rech}) and (\ref{hin}), it is natural to mark the variable $T^{1,0}=X$ so that the flow $(a,p)=(1,0)$ is trivial.

By using associativity (\ref{aseq}) one can show \cite{Dub93}
\begin{equation}\label{flow0}
\partial_{T^{a,0}}t^b=c^b_{ac}(t)\partial_X t^c
\end{equation}
and
\begin{equation}
\partial_{T^{a,p}}t^b=\nabla^ch_{a,p}\partial_{T^{c,0}}t^b.
\end{equation}
This relation implies that if all the flows for $p=0$ admit the same Riemann invariants, then all the Hierarchy admits the same Riemann invariants. In particular, this is true for systems with $n=2$ components.

\begin{lemma}\cite{Dub93}
$\{H_{a,p},H_{b,q}\}=0$.
\end{lemma}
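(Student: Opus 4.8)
The plan is to reduce the claim to a total-$X$-derivative identity satisfied by the generating functions $h_a(t;z)=\sum_{p\ge 0}h_{a,p}(t)z^p$, using only the linear system (\ref{flatdefcoord}) (equivalently the recursion (\ref{rech}) together with the normalisation (\ref{hin})) and the total symmetry of $c_{abc}=\partial_a\partial_b\partial_c F$; associativity will not be needed in the argument itself.

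First I would record the Poisson bracket of two functionals whose densities depend on the $t^a$ alone. From $\{t^a(X),t^b(Y)\}=\eta^{ab}\delta'(X-Y)$ one gets, for $F=\int f(t(X))\,dX$ and $G=\int g(t(X))\,dX$,
$$
\{F,G\}=\int \partial_a f\,\eta^{ab}\,\partial_X\!\big(\partial_b g\big)\,dX=\int \partial_a f\,\eta^{ab}\,\partial_b\partial_e g\,\partial_X t^e\,dX .
$$
Next I introduce the two-point quantity
$$
\Theta_{ab}(t;z,w):=\partial_c h_a(t;z)\,\eta^{cd}\,\partial_d h_b(t;w),
$$
and differentiate it with respect to $t^e$, substituting $\partial_e\partial_c h_a(z)=z\,c^f_{ec}\,\partial_f h_a(z)$ and $\partial_e\partial_d h_b(w)=w\,c^f_{ed}\,\partial_f h_b(w)$ from (\ref{flatdefcoord}). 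After raising and relabelling indices the two resulting terms become equal — this is precisely the step that uses the symmetry of $c_{abc}$ — so that
$$
\partial_e\Theta_{ab}=(z+w)\,\eta^{cd}\,c^f_{ed}\,\partial_c h_a(z)\,\partial_f h_b(w).
$$
This is the key identity. Combining it with the bracket formula applied to $f=h_a(\cdot;z)$ and $g=h_b(\cdot;w)$, one obtains for $z+w\neq 0$
$$
\{H_a(z),H_b(w)\}=\frac{w}{z+w}\int \partial_e\Theta_{ab}\,\partial_X t^e\,dX=\frac{w}{z+w}\int \partial_X\!\big[\Theta_{ab}(t(X);z,w)\big]\,dX=0,
$$
the last equality holding because $X\mapsto t(X)$ is a loop in $M$, so the boundary contributions at $X=0$ and $X=2\pi$ cancel on $\mathcal L(M)$. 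Since $\{H_a(z),H_b(w)\}$ is a formal power series in $z$ and $w$, it therefore vanishes identically, and extracting the coefficient of $z^pw^q$ yields $\{H_{a,p},H_{b,q}\}=0$.

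I do not expect a genuine obstacle. The one point requiring care is the bookkeeping between the densities $h_{a,p}$ and the Hamiltonians $H_{a,p}=\int h_{a,p+1}\,dX$ — the shift $p\mapsto p+1$, which reflects that $T^{1,0}=X$ labels the trivial flow and which must be tracked when matching $\{H_{a,p},H_{b,q}\}=0$ with the Taylor coefficients of $\{H_a(z),H_b(w)\}$ — together with the harmless assumption that the boundary terms vanish (automatic on the loop space $\mathcal{L}(M)$, and otherwise ensured by suitable decay of $t(X)$). The role of associativity is confined to guaranteeing the existence of solutions $h_{a,p}$ of the recursion (\ref{rech}) underlying the hierarchy (\ref{hierHT}); the commutation computation above uses nothing beyond the existence of the $h_{a,p}$, the recursion, and the symmetry of $c_{abc}=\partial_a\partial_b\partial_c F$.
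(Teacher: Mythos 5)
Your proof is correct; the thesis itself gives no argument for this lemma (it is quoted from Dubrovin's paper), and the bilinear generating-function identity you use — $\partial_e\bigl(\partial_c h_a(z)\,\eta^{cd}\,\partial_d h_b(w)\bigr)=(z+w)\,\eta^{cd}c^f_{ed}\,\partial_c h_a(z)\,\partial_f h_b(w)$, which turns the bracket density into a total $X$-derivative on the loop space — is exactly the standard proof in the cited reference. The only cosmetic point is that $\tfrac{w}{z+w}$ is not itself a formal power series, so it is cleaner to state the conclusion as $(z+w)\{H_a(z),H_b(w)\}=w\int\partial_X\Theta_{ab}\,dX=0$ and cancel the non-zero-divisor $(z+w)$ in $\C[[z,w]]$; your observation that associativity enters only through the solvability of (\ref{flatdefcoord}), not the commutation computation, is also accurate.
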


\begin{theo}\cite{Dub98}
The hydrodynamic-type systems (\ref{flow0}) from Fro\-benius manifolds are bihamiltonian and the corresponding hydro\-dy\-na\-mic-type Poisson brackets are induced by the flat metric $\eta$ and by the intersection form.
\end{theo}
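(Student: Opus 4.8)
The plan is to produce the second Hamiltonian operator explicitly and then to show that the hierarchy (\ref{flow0}) sits inside a Lenard--Magri recursion relating it to the first. First I would invoke Lemma \ref{lemmainteta}: the intersection form $g$ and the flat metric $\eta$ form a flat pencil. Hence by Proposition \ref{propPencil} the two operators
\[
P_1^{ab}=\eta^{ab}\,\frac{d}{dX},\qquad
P_2^{ab}=g^{ab}\,\frac{d}{dX}+\Gamma^{ab}_c\,\partial_X t^c,
\]
where $\Gamma^{ab}_c$ denotes the contravariant Levi-Civita connection of $g$, define a \emph{compatible} pair of Poisson brackets of hydrodynamic type on the loop space $\mathcal L(M)$. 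Non-degeneracy of $g^{ab}=E^cc_c^{ab}$ (cf.\ (\ref{intformflat})), which holds on a dense open subset of $M$, makes $P_2$ a genuine Hamiltonian operator; on the complementary locus one simply restricts. By construction the flows (\ref{hierHT}), and in particular (\ref{flow0}), are Hamiltonian with respect to $P_1$ with densities $h_{a,p+1}$.

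The core step is the bi-Hamiltonian recursion
\[
P_2\,dH_{a,p-1}=c_{a,p}\,P_1\,dH_{a,p}
\]
for suitable nonzero constants $c_{a,p}$, equivalently $\Gamma^{bc}_d\,\partial_X t^d\,\partial_c h_{a,p}+g^{bc}\,\partial_X\partial_c h_{a,p}=c_{a,p}\,\eta^{bc}\,\partial_X\partial_c h_{a,p+1}$. To prove this I would first write $\Gamma^{ab}_c$ in the flat coordinates $t^a$ of $\eta$ from Dubrovin's flat-pencil formula, using $g^{ab}=E^cc_c^{ab}$ together with $\nabla e=0$, the symmetry of $\nabla c$ (condition~3) and $\nabla\nabla E=0$, $\mathcal L_Ec=c$ (condition~4). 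Substituting into $P_2$ acting on $h_{a,p}$ and invoking the defining recursion (\ref{rech}) and the associativity equation (\ref{aseq}) --- exactly the identities used to deduce (\ref{flow0}) --- one recovers $P_1$ acting on $h_{a,p+1}$ up to a scalar fixed by the scaling weights $d_a$ (eigenvalues of $Q$) and $d_F$. Granting the recursion, compatibility of $P_1,P_2$ together with the involutivity $\{H_{a,p},H_{b,q}\}=0$ established above forces the same bracket to vanish for $P_2$ by the standard Magri argument, so the two structures are jointly consistent and the hierarchy (\ref{flow0}) is bihamiltonian with Poisson brackets induced by $\eta$ and by the intersection form, as claimed.

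I expect the explicit verification of this Lenard--Magri recursion to be the main obstacle: it requires manipulating $\Gamma^{ab}_c$ against the WDVV identities (\ref{resF}) and (\ref{aseq}) and, above all, tracking the scaling constants $c_{a,p}$, which depend on $d_F$ and the $d_a$ and are the part of the computation most prone to error. A secondary, more conceptual point is the behaviour on the discriminant locus where $g^{ab}=E^cc_c^{ab}$ degenerates: there $P_2$ is defined only after restriction, and one should note that the flows (\ref{flow0}) themselves extend smoothly across it while only the second Poisson interpretation is lost.
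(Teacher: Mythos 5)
Your first step coincides exactly with the paper's: Lemma \ref{lemmainteta} plus Proposition \ref{propPencil} give the compatible pair of hydrodynamic Poisson brackets associated to $\eta$ and to the intersection form. Where you diverge is in showing that the flows (\ref{flow0}) are actually Hamiltonian for the \emph{second} bracket. You propose to do this by exhibiting an explicit Lenard--Magri recursion $P_2\,dH_{a,p-1}=c_{a,p}\,P_1\,dH_{a,p}$ and verifying it against the WDVV identities; this is essentially Dubrovin's original route and, carried out carefully, it proves more than the statement asks (it gives the second Hamiltonian explicitly and propagates bihamiltonianity up the whole hierarchy). The paper instead takes a shortcut: having the compatible pair, it only needs \emph{existence} of a Hamiltonian for $\{\,,\,\}_2$ generating (\ref{flow0}), and it gets this by checking (using associativity of $c$) the criterion of Lemma 1 of \cite{Ferapontov1991}, with no recursion and no computation of the densities. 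Your approach buys an explicit second Hamiltonian and the recursion constants; the paper's buys brevity at the cost of outsourcing the existence argument to Ferapontov's criterion.

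Two caveats on your route. First, at the level relevant to (\ref{flow0}) (the $p=0$ flows) your recursion must be seeded below the hierarchy: the object playing the role of $H_{a,-1}$ is the Casimir $\int t_a\,dX$ of $P_1$, which is not among the $H_{a,p}$, $p\ge 0$, defined in the text, so you should state explicitly that (\ref{flow0}) is obtained by applying $P_2$ to these Casimirs. Second, the constants $c_{a,p}$ are of the form $p+\tfrac12+\mu_a$ (with $\mu_a$ determined by the scaling spectrum) and can vanish in resonant cases, at which point the pure recursion you wrote acquires correction terms; for the mere claim of bihamiltonianity of (\ref{flow0}) this is harmless, but your ``standard Magri argument'' should not silently assume all $c_{a,p}\neq 0$.
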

\begin{proof}
From Lemma \ref{lemmainteta} and Proposition \ref{propPencil}, we just need to show that there exists a Hamiltonian associated to the Poisson bracket $\{\,,\,\}_2$ of the intersection form. It can be checked explicitly, by using associativity of $c$, that this Poisson bracket satisfies the conditions of Lemma 1 of \cite{Ferapontov1991} and thus admits a Hamiltonian.
\end{proof}

\section{2D Frobenius manifolds}

In $2$ dimensions, the prepotential $F$ automatically satisfies the associativity condition (\ref{aseq}) and can be determined solely from (\ref{resF}). It is possible to classify all Frobenius manifolds according to $6$ families.

\begin{theo}\label{thm2dFM}\cite{Dubrovin1996}
A $2$-dimensional Frobenius manifold with diagonalisable $Q$ admits flat coordinates such that its prepotential is one of the following
\begin{align*}
F(t_1,t_2)&=\frac{1}{2}t_1^2 t_2+K t_2^k, \; k=\frac{3-d}{1-d}=\frac{3-d}{d_2},\; d\not=-1,1,3,\\ 
F(t_1,t_2)&=\frac{1}{2}t_1^2 t_2+K t_2^2\ln t_2, \; d_2=2 , \\
F(t_1,t_2)&=\frac{1}{2}t_1^2 t_2+K \ln t_2, \; d_2=-2 , \\
F(t_1,t_2)&=\frac{1}{2}t_1^2 t_2+K e^{\frac{2}{r}t_2}, \; d_2=0 ,\;r\not=0, \\
F(t_1,t_2)&=\frac{1}{2}t_1^2 t_2, \; d_2=0  ,\;r=0, \\
F(t_1,t_2)&=\frac{1}{2}t_1^2 t_2+\frac{c}{6}t_1^3+\frac{K}{6}t_2^3, \; d_2=1 ,
\end{align*}
where in the fourth and fifth cases the Euler vector field is $E=t_1\partial_1+r\partial_2$ and in the others, $E=t_1\partial_1+d_2 t_2\partial_2$. $K$ and $c$ are arbitrary constants.
\end{theo}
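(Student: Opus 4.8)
The plan is to build on the observation, recorded just above equation (\ref{aseq}), that in two dimensions the associativity (WDVV) equation (\ref{aseq}) holds identically; consequently a $2$-dimensional Frobenius manifold is determined by a constant flat metric $\eta$, the unit vector field $e=\partial_1$, and the quasi-homogeneity condition (\ref{resF}), and the classification reduces to integrating a single scalar ordinary differential equation for the prepotential $F$. I work throughout in flat coordinates $(t_1,t_2)$ with $e=\partial_1$, so $\eta_{ab}=\partial_1\partial_a\partial_b F$, and I use the residual linear freedom that preserves $\partial_1$ (the changes $t_1\mapsto t_1+\beta t_2+\mathrm{const}$ and $t_2\mapsto\gamma t_2+\mathrm{const}$) to bring $\eta$ to the normal form $\eta_{12}=\eta_{21}=1$, $\eta_{22}=0$; note that $\eta_{11}$ cannot be altered by such a change, so it survives as a residual constant.

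First I impose constancy of $\eta$. Integrating $\partial_1^3 F=\eta_{11}$, $\partial_1^2\partial_2 F=1$, $\partial_1\partial_2^2 F=0$ and discarding an additive quadratic polynomial --- which does not affect $c=\nabla\nabla\nabla F$ and hence nothing in the structure --- gives
\[
F=\tfrac12 t_1^2 t_2+a\,t_1^3+f(t_2),
\]
where $a=\eta_{11}/6$ is constant, $f$ is an unknown function of $t_2$ alone, and non-degeneracy of $\eta$ is automatic ($\det\eta=-1$). Next I impose (\ref{resF}). Condition $4$ forces $E=t_1\partial_1+(d_2 t_2+r)\partial_2$ (with $d_1=1$, and with $r=0$ unless $d_2=0$); substituting and comparing in $\mathcal L_E F=d_F F+(\text{quadratic})$ the coefficient of $t_1^2 t_2$ gives $d_F=2+d_2$, the coefficient of $t_1^3$ gives $a(d_2-1)=0$ --- so the cubic-in-$t_1$ term is admissible precisely when $d_2=1$, which is the source of the extra constant $c$ in the last family of Theorem \ref{thm2dFM} --- and what remains is the linear inhomogeneous ODE
\[
(d_2 t_2+r)\,f'(t_2)=(2+d_2)\,f(t_2)+P(t_2),\qquad \deg P\le 2 .
\]
Integrating it: for $d_2\neq0$ one translates $r$ away, finds the homogeneous solution $t_2^{(2+d_2)/d_2}$, and, away from the resonant weights, discards a quadratic particular solution to obtain $f=K t_2^{k}$ with $k=(2+d_2)/d_2=(3-d)/d_2$ (first family, which at $d_2=1$ combines with the surviving $a t_1^3$ term to give the sixth family); for $d_2=0$, $r\neq0$ one gets $f=K e^{2t_2/r}$ (fourth family); and for $d_2=r=0$ one gets $2f+P=0$, so $f$ is quadratic, discardable, leaving $F=\tfrac12 t_1^2 t_2$ (fifth family).

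The step requiring genuine care --- and the one I expect to be the main obstacle --- is the treatment of the resonant weights of this ODE, namely the values of $d_2$ at which the indicial exponent $(2+d_2)/d_2$ equals the degree of an admissible inhomogeneous polynomial term. These are $d_2=2$ and $d_2=-2$ (equivalently $d=-1$ and $d=3$), where the power solution degenerates and must be replaced by $f=K t_2^2\ln t_2$ and $f=K\ln t_2$ respectively (second and third families); one must check carefully that no additional terms survive the discarding of quadratics and that the logarithmic solutions are genuinely forced. Getting the bookkeeping of these two cases right, and keeping clear which ambiguities (the additive quadratic in $F$, the free origin and scale of $t_2$) are being absorbed into the displayed constants $K$ and $c$, is the delicate part. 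To finish I would verify that every $F$ in the resulting list satisfies conditions $1$--$4$ --- associativity being automatic in $2D$, $\eta$ constant and non-degenerate by construction, $e=\partial_1$ parallel since it has constant components, $\nabla c$ symmetric since in flat coordinates it is $\partial^4 F$, and $E$ quasi-homogeneous by construction --- and that the six families are pairwise inequivalent under the residual linear freedom combined with rescalings of $F$. Finally I would note that the hypothesis that $Q$ be diagonalisable is precisely what excludes the one remaining two-dimensional possibility, in which $Q$ is a single Jordan block and the Euler vector field carries a logarithmic term, giving a seventh prepotential that lies outside the stated list.
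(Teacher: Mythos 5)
The paper does not prove this theorem: it is imported verbatim from \cite{Dubrovin1996} with no argument supplied, so there is no in-paper proof to compare against. Your derivation is the standard one (essentially Dubrovin's own), and the computations check out: the reduction to $F=\tfrac12 t_1^2t_2+a\,t_1^3+f(t_2)$, the identities $d_F=2+d_2$ and $a(d_2-1)=0$, the exponent $k=(2+d_2)/d_2=(3-d)/(1-d)$, and the resonance analysis are all correct. I verified the resonant cases explicitly: for $d_2=2$ the condition is that $2t_2f'-4f$ be quadratic, whose general solution modulo quadratics is $Kt_2^2\ln t_2$; for $d_2=-2$ one needs $t_2f'$ quadratic, giving $K\ln t_2$ modulo quadratics; no extra terms survive, so the part you flagged as delicate does close up.

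The one step you should make airtight is the simultaneous normalisation of $\eta$ and $E$. You first shear $t_1\mapsto t_1+\beta t_2$ to achieve $\eta_{22}=0$ and \emph{then} write $E=t_1\partial_1+(d_2t_2+r)\partial_2$; but a shear with $\beta\neq0$ generally destroys the diagonal form of $Q$ unless $d_2=1$, so as written the two normalisations have not been shown to be compatible. The clean resolution is to reverse the order: put $E$ in diagonal form first (possible by the diagonalisability hypothesis, using the shear to kill the off-diagonal entry of $Q$ when $d_2\neq1$ and a translation to kill $r^1$), and then observe that quasi-homogeneity already forces the remaining normalisation. Indeed, comparing the coefficients of $t_1^3$ and $t_1t_2^2$ in $\mathcal{L}_EF=d_FF+(\text{quadratic})$ with $d_F=2+d_2$ gives $(d_2-1)\eta_{11}=0$ and $(d_2-1)\eta_{22}=0$, so for $d_2\neq1$ one has $\eta_{11}=\eta_{22}=0$ automatically and only the harmless rescaling of $t_2$ is needed, while for $d_2=1$ one has $Q=\mathrm{id}$ and every linear change preserves $E$, so the shear is freely available there. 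With that reordering your argument is complete; the remaining items you list (sufficiency of each family, the Jordan-block exclusion) are routine and correctly identified.
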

In such coordinates $\eta_{ab}=\delta_{a+b, 3}$, except in the last case, in which $\eta_{11}=c$.
Let us construct the first flow of the hierarchy (\ref{hierHT}). Let us write the prepotentials of the first five cases in the general form $F(t_1,t_2)=\frac{1}{2}t_1^2 t_2+f(t_2)$. After suppressing the index ${}^{(2,0)}$, the flow $(a,p)=(2,0)$ is given by,
\begin{align}\label{HTsysFrob}
\partial_T t^1=f'''(t^2)\partial_X t^2 \nonumber\\
\partial_T t^2=\partial_X t^1.
\end{align}
The characteristic velocities are $\lambda^1=\sqrt{f'''(t^2)}$ and $\lambda^2=-\sqrt{f'''(t^2)}$ and the Riemann invariants, 
\begin{equation}\label{Rieminv2D}
R^i= t^1+\int\lambda^i dt^2.
\end{equation} 
Using Theorems \ref{theo_1} and \ref{theo_2}, it is possible to show that such systems are trihamiltonian. The most general metric giving rise to Hamiltonian structures is given by the general solution of the Killing equations for the connection defined in Theorem \ref{theo_2},
\begin{align}\label{trimetric}
&\left[\left(C_1+C_2 R^1 +C_3 \left(R^1\right)^2\right)\lambda^1\right]^{-1}dR^1dR^1+\nonumber \\
+&\left[\left(C_1+C_2 R^2+C_3 \left(R^2\right)^2\right)\lambda^2\right]^{-1} dR^2dR^2,
\end{align}
where $C_i$ are arbitrary constants. One can check that the three fundamental metrics in this solution form a flat pencil. 

The metrics given by $C_2=C_3=0$ and $C_1=C_3=0$ are $\eta$ and the intersection form, respectively. In fact, the intersection form is given by (\ref{intformflat}). 
Notice that we can choose Riemann invariants such that the Euler vector field is $E=R^1 \partial_{R^1}+R^2 \partial_{R^2}$ (c.f. Lemma \ref{lemmaRiemInv} along with the fact that these Riemann invariants are canonical coordinates). In such coordinates, the intersection form reads $g^{ij}=2\lambda^i R^i \delta^{ij}$, which is the metric corresponding to $C_1=C_3=0$ above, while the inner product $\eta$ becomes $\eta^{ij}=4\lambda^i\delta^{ij}$, corresponding to $C_2=C_3=0$.

\textit{Remark}: we have not studied hydrodynamic-type systems coming from the two last Frobenius manifolds of Theorem \ref{thm2dFM} because the flows constructed from the fifth one have two identical characteristic velocities while the last one has vanishing $A$ and $B$ for every flow
. These cases are not taken into account in the present work.

\section{The third Hamiltonian structure}

The metric corresponding to $C_1=C_2=0$ in (\ref{trimetric}) is given in terms of the Frobenius data by $h^{ab}=g^{ac}g^b_c=g^{ac}g^{bd}\eta_{dc}$. As mentioned below equation (\ref{trimetric}), this metric is flat in that context. 

However, given an arbitrary Frobenius manifold with flat metric $\eta^{ab}$ and intersection form $g^{ab}$, it still makes sense to define another metric by $h^{ab}$ as above and whose curvature we briefly analyse now. Its contravariant Levi-Civita connection $\tilde\Gamma^{bc}_a$ is the solution to
\begin{align*}
\partial_a h^{bc}=\tilde \Gamma^{bc}_a+\tilde \Gamma^{cb}_a\\
h^{ad}\tilde\Gamma^{bc}_d=h^{bd}\tilde\Gamma^{ac}_d,
\end{align*}
which reads
\begin{equation}\label{contconnh}
\tilde\Gamma^{bc}_a=g^c_d\Gamma^{bd}_a+g^b_d\Gamma^{dc}_a=E^e\left(c^c_{ef}\mathcal R^f_g+c^d_{eg}\mathcal R^c_d\right)c_a^{bg},
\end{equation}
where $\Gamma^{bc}_a=\mathcal R^c_d c^{bd}_a$ is the Levi-Civita connection of the intersection form $g^{ab}$ and $\mathcal R^c_b\equiv\left(\frac{d-1}{2}\delta^c_b+\nabla_bE^c\right)$. We have used associativity in the second equality.

The curvature of this connection does not vanish in general. In fact, it does not vanish in $2$ dimensions when the identity $e$ is not null, i.e. $\eta_{11}\not=0$ (sixth case of Theorem \ref{thm2dFM}), but it vanishes otherwise and forms a flat pencil with $\eta$ and $g$ yielding the trihamiltonian structure mentioned in the previous section. Let us understand why.

By raising the last three indices of its Riemann tensor $\tilde R$ with $h$, we can write it in terms of the contravariant Levi-Civita connection
$$
\tilde R_a\,^{bcd}=h^{be}\left(\partial_e\tilde\Gamma^{dc}_a-\partial_a\tilde\Gamma^{dc}_e\right)+\tilde\Gamma^{bd}_e\tilde\Gamma^{ec}_a-\tilde\Gamma^{bc}_e\tilde\Gamma^{ed}_a.
$$
From (\ref{contconnh}) and associativity the last two terms cancel out, however the first two terms
\begin{equation}\label{fistterms}
\partial_e\tilde\Gamma^{dc}_a-\partial_a\tilde\Gamma^{dc}_e
\end{equation} 
do not necessarily vanish. For instance, for the last Frobenius manifold of Theorem \ref{thm2dFM}, we have 
$$
\partial_2\tilde\Gamma_1^{12}-\partial_1\tilde\Gamma_2^{12}=K c,
$$
so $h^{ab}$ is not flat. But for the other five cases, $h^{ab}$ is flat and forms a flat pencil with $\eta^{ab}+\lambda g^{ab}$ for any $\lambda$.

Keeping the same notation, we state the following theorem.
\begin{theo}
The tensor $h^{ab}+\lambda \eta^{ab}+\mu g^{ab}$ does not degenerate in any open set and its contravariant Levi-Civita connection is $\tilde\Gamma^{ab}_c+\mu \Gamma^{ab}_c$, for any $\lambda, \mu$.

The metric $h^{ab}$ is flat (and forms a flat pencil with $\lambda \eta^{ab}+\mu g^{ab}$) if and only if (\ref{fistterms}) vanishes. This is precisely the case for $2$-dimensional Frobenius manifolds of prepotential $F(t_1,t_2)=\frac{1}{2}t_1^2 t_2+f(t_2)$.
\end{theo}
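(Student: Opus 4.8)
Throughout I work in the flat coordinates $t^a$ of $\eta$, so that $\eta^{ab}$ is constant, $c_{abc}=\partial_a\partial_b\partial_c F$ is totally symmetric, associativity (\ref{aseq}) reads $c^e_{ab}c^f_{ec}=c^e_{ac}c^f_{eb}$, and I use the intersection form $g^{ab}=E^c c^{ab}_c$ of (\ref{intformflat}) together with its contravariant Levi-Civita symbols $\Gamma^{bc}_a=\mathcal R^c_d c^{bd}_a$, where $\mathcal R=\tfrac{d-1}{2}\,\mathrm{Id}+\nabla E$ as an endomorphism, and with Dubrovin's Lemma \ref{lemmainteta} that $(\eta,g)$ is a flat pencil, so that $\eta^{ad}\Gamma^{bc}_d$ is symmetric in $a,b$ in these coordinates. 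For the non-degeneracy I note that $h=g\eta^{-1}g$ in operator form, hence $\eta^{-1}(h+\lambda\eta+\mu g)=U^2+\mu U+\lambda\,\mathrm{Id}$ with $U:=\eta^{-1}g$, so the degeneracy locus of the pencil is the set where an eigenvalue of $U$ is a root of $t^2+\mu t+\lambda$; since the eigenvalues of $U$ are non-constant on a Frobenius manifold (in $2$D they are the characteristic speeds $\lambda^1,\lambda^2$), this locus is a proper analytic subset, which is the meaning of ``does not degenerate in any open set''.

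The substance of the first assertion is then that $\hat\Gamma^{bc}_a:=\tilde\Gamma^{bc}_a+\mu\Gamma^{bc}_a$, with $\tilde\Gamma^{bc}_a=g^c_d\Gamma^{bd}_a+g^b_d\Gamma^{dc}_a$ the Levi-Civita symbols of $h$ from (\ref{contconnh}), is the contravariant Levi-Civita connection of $P^{ab}:=h^{ab}+\lambda\eta^{ab}+\mu g^{ab}$. The metricity equation $\partial_a P^{bc}=\hat\Gamma^{bc}_a+\hat\Gamma^{cb}_a$ is immediate by linearity and $\partial_a\eta^{bc}=0$, so the whole content lies in the torsion-free condition $P^{ad}\hat\Gamma^{bc}_d=P^{bd}\hat\Gamma^{ac}_d$. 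Expanding and collecting powers of $\lambda$ and $\mu$, this reduces to the symmetry in $a,b$ of $h^{ad}\tilde\Gamma^{bc}_d$ and $g^{ad}\Gamma^{bc}_d$ (automatic, since $\tilde\Gamma$ and $\Gamma$ are torsion-free for $h$ and $g$), of $\eta^{ad}\Gamma^{bc}_d$ (from the $(\eta,g)$ flat pencil), and of the two cross-terms $\eta^{ad}\tilde\Gamma^{bc}_d$ and $h^{ad}\Gamma^{bc}_d+g^{ad}\tilde\Gamma^{bc}_d$. These last two identities — the cross-compatibility of $h$ with $\eta$ and with $g$ — I would establish by substituting the closed form of $\tilde\Gamma$, writing $g^{ad}=E^e c^{ad}_e$ and $\Gamma^{bc}_d=\mathcal R^c_e c^{be}_d$, and reducing the resulting products of structure tensors with the bilinear associativity identity and the total symmetry of $\partial_a c_{bcd}$. \emph{This is the technical heart of the theorem and the step I expect to be the main obstacle}; once it is in place, the rest is soft analysis or finite computation.

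For the second assertion I would use the curvature identity set up just before the theorem: with indices raised by $h$, the Riemann tensor of $\tilde\Gamma$ is $\tilde R_a{}^{bcd}=h^{be}(\partial_e\tilde\Gamma^{dc}_a-\partial_a\tilde\Gamma^{dc}_e)+\tilde\Gamma^{bd}_e\tilde\Gamma^{ec}_a-\tilde\Gamma^{bc}_e\tilde\Gamma^{ed}_a$, the last two terms cancelling by associativity; since $h$ is non-degenerate, $h$ is flat if and only if the tensor (\ref{fistterms}) vanishes. It then remains to see that flatness of $h$ upgrades to flatness of the whole pencil. The curvature of $\hat\Gamma=\tilde\Gamma+\mu\Gamma$ is a polynomial in $\mu$ whose $\mu^2$- and $\mu^0$-coefficients are the curvatures of $g$ and of $h$, while its $\mu^1$-coefficient is the mixed curvature of $\tilde\Gamma$ and $\Gamma$; the latter is forced to vanish by flatness of $g$ together with the symmetry conditions of the first part, by the standard argument for compatible flat metrics. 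Conversely, flatness of the pencil contains flatness of its member $h$, hence of (\ref{fistterms}). Therefore $h+\lambda\eta+\mu g$ is a flat pencil if and only if (\ref{fistterms}) vanishes.

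For the two-dimensional statement I would go through the classification of Theorem \ref{thm2dFM}. In the first five normal forms $\eta$ is anti-diagonal and $F=\tfrac12 t_1^2 t_2+f(t_2)$; inserting $c_{abc}=\partial^3 F$ into $\tilde\Gamma^{bc}_a$ and differentiating shows that (\ref{fistterms}) vanishes identically, equivalently $h$ is flat, which produces the trihamiltonian family (\ref{trimetric}). For the sixth form $F=\tfrac12 t_1^2 t_2+\tfrac c6 t_1^3+\tfrac K6 t_2^3$ with $\eta_{11}=c$, the same computation gives $\partial_2\tilde\Gamma^{12}_1-\partial_1\tilde\Gamma^{12}_2=Kc$, which is non-zero exactly when $c\neq0$ — precisely the case in which $F$ is not of the form $\tfrac12 t_1^2 t_2+f(t_2)$ (for $c=0$ the sixth form coincides with the first for $k=3$). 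This is an explicit finite check posing no difficulty beyond bookkeeping, which completes the proof.
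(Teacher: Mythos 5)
Your proposal follows essentially the same route as the paper's proof: metricity of $\tilde\Gamma+\mu\Gamma$ for $P=h+\lambda\eta+\mu g$ is immediate from linearity and $\partial_c\eta^{ab}=0$, the torsion-free condition is reduced to a short list of symmetry identities settled by associativity, flatness of $h$ is read off from the curvature identity displayed just before the theorem (the quadratic terms cancel by (\ref{aseq}), leaving (\ref{fistterms})), and the $2$D statement is a finite check against Theorem \ref{thm2dFM}. The one step you defer --- the cross-compatibility of $h$ with $\eta$ and with $g$ --- is precisely the step the paper itself dispatches in a single line (``is obtained by associativity''), and your proposed method (substitute $\tilde\Gamma^{bc}_a=g^c_d\Gamma^{bd}_a+g^b_d\Gamma^{dc}_a$ and $\Gamma^{bc}_d=\mathcal R^c_e c^{be}_d$, then reduce with (\ref{aseq}) and the total symmetry of $\partial_a c_{bcd}$) is exactly the intended computation, so there is no divergence of approach, only of confidence; carrying it out would complete the argument at the same level of rigour as the paper. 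Two small corrections: the paper's non-degeneracy argument is more elementary than yours --- it writes $g^{ab}=t_1\eta^{ab}+\tilde g^{ab}(t_2,\dots,t_n)$, so that $\det P$ is a polynomial in $t_1$ whose leading coefficient is $\det\eta\neq0$, avoiding any appeal to the eigenvalues of $\eta^{-1}g$ (which, incidentally, are the canonical coordinates $u^i$, not the characteristic speeds $\pm\sqrt{f'''}$, though both are non-constant so your version also works); and in the sixth normal form the computed obstruction is $Kc$, which vanishes when $K=0$ as well as when $c=0$, so ``non-zero exactly when $c\neq0$'' should read ``non-zero exactly when $Kc\neq0$''.
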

\begin{proof}
Let $P^{ab}=h^{ab}+\lambda \eta^{ab}+\mu g^{ab}$. Let us first prove that $P^{ab}$ is non-degenerate on an open dense subset of the Frobenius manifold. We can write $g^{ab}=E^c c_c^{ab}=t_1 \eta^{ab}+ \tilde g^{ab}(t_2,\dots,t_n)$, for some symmetric tensor $\tilde g^{ab}$. Thus
$$
h^{ab}+\lambda \eta^{ab}+\mu g^{ab}=\eta^{ab}\left(t_1^2+\mu t_1+\lambda\right)+2 t_1\tilde g^{ab}+\mu\tilde g^{ab}+\lambda \tilde g^{ac}\tilde g^{ad}\eta_{cd}.
$$
Since it is a polynomial in $t_1$ and $\eta^{ab}$ is non-degenerate, it does not degenerate in an open set for any fixed values of $\lambda$ and $\mu$.

Now, by the definition of the contravariant connections of $\eta^{ab}$ and $g^{ab}$,
$$
\partial_c P^{ab}=\tilde\Gamma^{ab}_c+\mu \Gamma^{ab}_c+\tilde\Gamma^{ba}_c+\mu \Gamma^{ba}_c.
$$
The relation
$$
P^{dc}\left(\tilde\Gamma^{ab}_c+\mu \Gamma^{ab}_c\right)=P^{ac}\left(\tilde\Gamma^{db}_c+\mu \Gamma^{db}_c\right)
$$
is obtained by associativity. This concludes the proof of the first statement.

Flatness of $h^{ab}$ was discussed above. The fact that it implies that $h^{ab}$ and $\lambda \eta^{ab}+\mu g^{ab}$ form a flat pencil is a matter of calculation similar to what was done above, bearing in mind the associativity property (\ref{aseq}) and flatness of $\lambda \eta^{ab}+\mu g^{ab}$ (c.f. Lemma \ref{lemmainteta}).
\end{proof}

\section{Riemann invariants}\label{AppeRiemInv}
We explain here how to calculate Riemann invariants for a hydrodyna\-mic-type system. This will also prove that hydrodynamic-type systems with $2$ components always admit Riemann invariants.

Consider the hydrodynamic-type system
$$
w^i_T=v^i_jw^j_X.
$$
Under a change of coordinates $w^i=w^i(t)$ it transforms as
$$
t^i_T=\left(J^{-1}\right)^i_j v^j_k J^k_l t^l_X.
$$
where $J^i_j=\partial_{t^j}w^i$,

We immediately conclude that for a system to admit Riemann invariants it must have a diagonalisable matrix $v^i_j$ and the so called characteristic velocities $\lambda^i$ are its eigenvalues. Now suppose that $v$ is diagonalisable so that $P^i_j v^j_k \left(P^{-1}\right)^k_l$ is a diagonal matrix. The Riemann invariants $R^i$, if they exist, are given as solutions of 
\begin{equation}\label{calcR}
\partial_{w^j}R^i=P^i_j
\end{equation}
up to redefining $P$ in a way that we explain now. In fact, if we define the $1$-forms $\omega^i=P^i_j dw^j$, then it is necessary and sufficient that
\begin{equation}\label{Riemanncond}
\omega^i\wedge d\omega^i=0
\end{equation} 
in order to be able to define local coordinates $R^i$ such that $\omega^i=f^i(R) dR^i$, for some non-vanishing function $f^i(R)$. Notice that we have the freedom to rescale the lines of the matrix $P$ so that $f^i=1$ and thus we find (\ref{calcR}).

The existence of Riemann invariants for hydrodynamic-type systems of $n=2$ components is trivial since (\ref{Riemanncond}) is identically satisfied. 

\textit{Remark}: define the vector fields $V_i=\left(P^{-1}\right)^k_i\partial_{w_k}$, which are the eigenvectors of $v$. The existence of Riemann invariants means that the distribution defined by any set of $n-1$ vector fields $V_i$ is integrable. In other words, the flow of any such distribution generates a submanifold $R^i(w)=\text{constant}$.

\subsection*{Canonical coordinates}
We say that a point $t\in M$ in a Frobenius manifold $M$ is \textit{semisimple} if the Frobenius algebra $T_tM$ is semisimple (i.e. has no nilpotents). This is an open property: every semisimple point admits a neighbourhood of semisimple points.

\begin{lemma}\cite{Dubrovin1996}
In a neighbourhood of a semisimple point, there exist local coordinates $u^1,\dots,u^n$ such that
$$
\partial_i\cdot \partial_j=\delta_{ij}\partial_i\, , \quad \text{ where }\quad \partial_i=\partial_{u^i}.
$$
\end{lemma}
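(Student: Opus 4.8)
The plan is to obtain the canonical coordinates as the simultaneous "straightening" coordinates of the field of primitive idempotents of the Frobenius algebras $T_tM$; the only genuine content will be that this field of frames is holonomic, which I will deduce from the symmetry of $\nabla c$ (condition 3 in the definition of a Frobenius manifold).

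First I would recall the purely algebraic fact: a finite-dimensional commutative associative unital $\mathbb{C}$-algebra with no nilpotents is isomorphic, as an algebra, to $\mathbb{C}^n$ with componentwise multiplication, and hence possesses a basis $\pi_1,\dots,\pi_n$ of primitive idempotents, unique up to permutation, with $\pi_i\cdot\pi_j=\delta_{ij}\pi_i$ and $\sum_i\pi_i=e$. Semisimplicity is an open condition on the structure constants, so $T_tM$ is semisimple for $t$ in a neighbourhood $U$ of a semisimple point. Moreover each solution of the algebraic equation $x\cdot x=x$ is isolated, since the linearisation $h\mapsto 2\pi_i\cdot h-h$ at an idempotent is invertible (it is the identity on that idempotent's line and minus the identity on the complementary ones); by the implicit function theorem the primitive idempotents therefore depend smoothly on $t$, defining $n$ smooth, pointwise linearly independent vector fields $\pi_1,\dots,\pi_n$ on $U$ with $\pi_i\cdot\pi_j=\delta_{ij}\pi_i$. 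If I can show $[\pi_i,\pi_j]=0$ for all $i,j$, then the straightening theorem for commuting, pointwise independent vector fields yields, on a possibly smaller $U$, coordinates $u^1,\dots,u^n$ with $\partial_{u^i}=\pi_i$, and then $\partial_i\cdot\partial_j=\pi_i\cdot\pi_j=\delta_{ij}\pi_i=\delta_{ij}\partial_i$ is exactly the claim.

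To prove $[\pi_i,\pi_j]=0$, I would first extract from condition 3 a Leibniz-type identity: writing $c(X,Y,Z)=\eta(X\cdot Y,Z)$ and using $\nabla\eta=0$ one gets $(\nabla_W c)(X,Y,Z)=\eta\big(\nabla_W(X\cdot Y)-(\nabla_W X)\cdot Y-X\cdot\nabla_W Y,\ Z\big)$, and the symmetry of $\nabla c$ under $W\leftrightarrow X$ together with torsion-freeness of $\nabla$ yields, for all vector fields $W,X,Y$,
$$
\nabla_W(X\cdot Y)-\nabla_X(W\cdot Y)=[W,X]\cdot Y+X\cdot\nabla_W Y-W\cdot\nabla_X Y.
$$
I would then specialise to $W=\pi_i$, $X=Y=\pi_j$ with $i\neq j$, use $\pi_j\cdot\pi_j=\pi_j$ and $\pi_i\cdot\pi_j=0$, and expand $\nabla_{\pi_i}\pi_j=\sum_k\gamma^k_{ij}\pi_k$ in the idempotent frame. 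Matching the coefficients of each $\pi_k$ forces $\gamma^k_{ij}=0$ for $k\notin\{i,j\}$, $\gamma^i_{ij}=-\gamma^i_{jj}$ and $\gamma^j_{ij}=\gamma^j_{ji}$; running the same argument with $i$ and $j$ interchanged gives in addition $\gamma^i_{ij}=\gamma^i_{ji}$. Substituting into $[\pi_i,\pi_j]=\sum_k(\gamma^k_{ij}-\gamma^k_{ji})\pi_k$ makes every coefficient vanish, so $[\pi_i,\pi_j]=0$, completing the proof.

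The main obstacle is exactly this last step — establishing that the idempotent frame is holonomic; everything else (the algebraic normal form, smooth dependence on $t$, and the straightening of commuting fields) is routine. I would note in passing that neither $\nabla e=0$ nor flatness of $\eta$ is used here, and the Euler field $E$ enters only if one wants the sharper statement identifying the $u^i$ with the eigenvalues of the multiplication operator $v\mapsto E\cdot v$, which additionally requires those eigenvalues to be pairwise distinct at the point.
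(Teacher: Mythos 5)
Your proof is correct, and it is genuinely a proof, whereas the paper offers none: the lemma is simply attributed to Dubrovin, followed by the remark that canonical coordinates arise as solutions of the first-order system (\ref{cancoord}), i.e.\ as Riemann invariants built from the eigenforms of the multiplication operator, whose integrability condition (\ref{Riemanncond}) the appendix only verifies in the case $n=2$ (where it is automatic). Your route is the dual one: instead of seeking closed (up to scale) eigen-one-forms, you work with the eigenvector side --- the field of primitive idempotents $\pi_i$ --- and reduce everything to showing $[\pi_i,\pi_j]=0$, which you correctly extract from the symmetry of $\nabla c$ together with $\nabla\eta=0$ and torsion-freeness; the computation of the $\gamma^k_{ij}$ and the use of both orderings $(i,j)$ and $(j,i)$ to kill the commutator checks out, as do the routine points (Wedderburn normal form, isolation and smooth dependence of idempotents via the invertible linearisation $h\mapsto 2\pi_i\cdot h-h$, and the straightening theorem). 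What your approach buys is a complete argument valid for all $n$ and an explicit identification of which axioms are actually needed (only semisimplicity, metric compatibility, torsion-freeness and the potentiality of $c$ --- not flatness of $\eta$, not $\nabla e=0$, not the Euler field); what the paper's formulation buys is the operational link, used in the application, between canonical coordinates, the PDE (\ref{cancoord}), and the Riemann invariants and characteristic velocities of the associated hydrodynamic-type system, a link your proof does not by itself exhibit. The two are reconciled by noting that $du^i$ is precisely the eigenform dual to $\pi_i=\partial_{u^i}$, so your commutativity statement is exactly the integrability the paper's sketch would need for $n>2$.
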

Such coordinates are called \textit{canonical} and can be found as independent solutions of the system
\begin{equation}\label{cancoord}
\partial_d u c^d_{ab}=\partial_au\partial_bu.
\end{equation}
This equation means that $\partial_au$ is an eigenvalue of the matrix $\left(c_{ab}^d\right)_{b,d=1,\dots,n}$ with eigenvector $\left(\partial_d u\right)_{d=1,\dots,n}$. From what has been said in the first part of this appendix, we conclude that the canonical coordinates are actually the Riemann invariants for the hydrodynamic-type system (\ref{flow0}) and the characteristic velocities are $\partial_a u^i$. Notice that this is consistent with (\ref{Rieminv2D}).

\begin{lemma}\label{lemmaRiemInv}\cite{Dubrovin1996}
There exist canonical coordinates in a neighbourhood of a semisimple point such that the Euler vector field $E$ reads $E=\sum_{i=1}^n u^i \partial_i$.
\end{lemma}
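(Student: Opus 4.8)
The plan is to combine the existence of canonical coordinates (the preceding Lemma) with the infinitesimal form of the scaling property of the Euler field (Condition~4 of the definition of a Frobenius manifold). First I would recall that near a semisimple point the primitive idempotents $\pi_1,\dots,\pi_n$ of the fibres of $TM$ are smooth, pairwise $\cdot$-orthogonal vector fields with $\pi_i\cdot\pi_i=\pi_i$, and that by the preceding Lemma they are coordinate vector fields, $\pi_i=\partial_{u^i}$, for a canonical coordinate system $(u^1,\dots,u^n)$ (this being an open condition). Since equation (\ref{cancoord}) only involves $du$, and the primitive idempotents of a semisimple commutative algebra are unique, any two canonical coordinate systems differ by a permutation of the $u^i$ together with additive constants; this last freedom is exactly what I will use at the end.

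Next I would translate Condition~4 — the flow of $E$ rescales the Frobenius multiplication — into the tensorial identity $[E,X\cdot Y]=[E,X]\cdot Y+X\cdot[E,Y]+X\cdot Y$ for all vector fields $X,Y$. Applying this with $X=Y=\partial_{u^i}$ and using $\partial_{u^i}\cdot\partial_{u^i}=\partial_{u^i}$, $\partial_{u^i}\cdot\partial_{u^j}=0$ for $i\neq j$, the left side becomes $[E,\partial_{u^i}]$ and the right side becomes $2\big([E,\partial_{u^i}]\cdot\partial_{u^i}\big)+\partial_{u^i}$. Writing $[E,\partial_{u^i}]=\sum_j b^j_i\,\partial_{u^j}$, the term $[E,\partial_{u^i}]\cdot\partial_{u^i}$ equals $b^i_i\,\partial_{u^i}$, so comparing components forces $b^j_i=0$ for $j\neq i$ and $b^i_i=2b^i_i+1$, i.e. $b^i_i=-1$. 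Hence $[E,\partial_{u^i}]=-\partial_{u^i}$ for every $i$. (As a by-product, feeding $e\cdot\partial_{u^i}=\partial_{u^i}$ into the component count gives $e=\sum_i\partial_{u^i}$.)

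Then I would expand $E=\sum_j E^j\,\partial_{u^j}$ in the canonical chart. Because the $\partial_{u^j}$ commute, $[E,\partial_{u^i}]=-\sum_j(\partial_{u^i}E^j)\,\partial_{u^j}$, and equating with $-\partial_{u^i}$ gives $\partial_{u^i}E^j=\delta^j_i$. Therefore $E^j=u^j+c_j$ for constants $c_j$, so $E=\sum_j(u^j+c_j)\,\partial_{u^j}$. Finally, replacing $u^j$ by $u^j+c_j$ leaves equation (\ref{cancoord}) and the vector fields $\partial_{u^j}$ unchanged, hence produces another canonical coordinate system in which $E=\sum_{i=1}^n u^i\,\partial_{u^i}$, which is the assertion.

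The hard part will be making precise the step from Condition~4 to the identity $\mathcal L_E(X\cdot Y)=(\mathcal L_E X)\cdot Y+X\cdot(\mathcal L_E Y)+X\cdot Y$: in the statement of the definition the symbol ``$c$'' in Condition~4 stands for the multiplication regarded as a $(1,2)$-tensor (the ``rescaling of the Frobenius algebras''), not for the symmetric $3$-tensor of Condition~3, and the two readings are related by a contraction with $\eta$ together with $\mathcal L_E e=-e$. Verifying that the weights match — most safely by writing everything in flat coordinates, where $E=(q^a_b t^b+r^a)\partial_a$ and $c^k_{ab}=\eta^{kd}\partial_a\partial_b\partial_d F$ — is the only genuinely delicate point; the remainder is linear algebra in the tangent fibres plus the uniqueness of canonical coordinates up to constants.
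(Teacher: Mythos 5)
The paper does not actually prove this lemma; it is quoted from \cite{Dubrovin1996} with no argument supplied, so there is no ``paper proof'' to compare against. Your proposal is a correct, self-contained proof and is essentially the standard argument from Dubrovin's lectures: the uniqueness of the primitive idempotents of a semisimple commutative algebra pins down the $\partial_{u^i}$ and hence the $u^i$ up to permutation and additive constants, the infinitesimal rescaling property forces $[E,\partial_{u^i}]=-\partial_{u^i}$, and integrating $\partial_{u^i}E^j=\delta^j_i$ gives $E^j=u^j+c_j$, after which a translation (which preserves (\ref{cancoord})) removes the constants. The one point you flag as delicate also checks out against the conventions of this chapter: from (\ref{resF}) one gets $\mathcal L_E c_{abc}=d_F\,c_{abc}$ for the $(0,3)$-tensor $c_{abc}=\partial_a\partial_b\partial_c F$, and the paper records $\mathcal L_E\eta_{ab}=(d_F-1)\eta_{ab}$, so the multiplication $c^c_{ab}=\eta^{cd}c_{dab}$ carries weight $d_F-(d_F-1)=1$; this is exactly the identity $\mathcal L_E(X\cdot Y)=(\mathcal L_E X)\cdot Y+X\cdot(\mathcal L_E Y)+X\cdot Y$ that your component computation uses. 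A minor remark: your diagonal case $X=Y=\partial_{u^i}$ already determines $[E,\partial_{u^i}]$ completely, and the off-diagonal case $i\neq j$ is then an automatic consistency check, so nothing further is needed there.
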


\bibliography{BiblioArticle}
\bibliographystyle{plain}

 

\end{document}